\newcommand*{\addFileDependency}[1]{
  \typeout{(#1)}
  \@addtofilelist{#1}
  \IfFileExists{#1}{}{\typeout{No file #1.}}}
\newcommand*{\myexternaldocument}[1]{
\externaldocument[][nocite]{#1}
\addFileDependency{#1.tex}
\addFileDependency{#1.aux}}
\newtheorem{theorem}{Theorem}
\newtheorem{lemma}{Lemma}
\newtheorem{proposition}{Proposition}
\newtheorem{definition}{Definition}
\newtheorem{corollary}{Corollary}
\newcommand{\AD}{\text{AD}}
\newcommand{\SW}{\text{SW}}
\newcommand{\ws}{Weierstra{\ss}}
\newcommand{\Oi}{\Omega_{\mathbb Q(i)}}
\newcommand{\Oa}{\Omega_{\mathbb Q(\alpha)}}
\newcommand{\ima}{\text{Im}}
\newcommand{\ttu}{\mathtt{u}}
\newcommand{\ubp}{u_{\text{bp}}}
\newcommand{\tbp}{\tau_{\text{bp}}}
\newcommand{\ybp}{y_{\text{bp}}}
\newcommand{\mad}{m_{\text{AD}}} 
\newcommand{\uad}{u_{\text{AD}}}
\newcommand{\ord}{\text{ord}\,}
\newcommand{\pt}{\mathbb P^2} 
\newcommand{\spinc}{\text{Spin}^c} 
\newcommand{\Res}{\text{Res}} 
\newcommand{\psl}{\text{PSL}(2,\mathbb Z)} 
\newcommand{\slz}{\text{SL}(2,\mathbb Z)}
\newcommand{\nstar}{\CN=2^*}
\newcommand{\jt}{\vartheta}
\newcommand{\tn}{\tau_0} 
\newcommand{\chih}{\chi_{\rm h}}
\newcommand{\tuv}{\tau_{\text{\tiny{UV}}}}
\newcommand{\bea}{\begin{equation} \begin{aligned}}
\newcommand{\eea}{\end{aligned} \end{equation}}
\newcommand{\be}{\begin{equation}} 
\newcommand{\ee}{\end{equation}} 
\newcommand{\bes}{\begin{equation*}}
\newcommand{\ees}{\end{equation*}}
\newcommand{\sgn}{\mathrm{sgn}}
\newcommand{\im}{i}
\newcommand{\CA}{\mathcal{A}} 
\newcommand{\CB}{\mathcal{B}} 
\newcommand{\CC}{\mathcal{C}}  
\newcommand{\cD}{\mathcal{D}}  
\newcommand{\CE}{\mathcal{E}}  
\newcommand{\CF}{\mathcal{F}}
\newcommand{\CI}{\mathcal{I}}
\newcommand{\CJ}{\mathcal{J}}
\newcommand{\CK}{\mathcal{K}}
\newcommand{\CL}{\mathcal{L}} 
\newcommand{\CM}{\mathcal{M}}  
\newcommand{\CN}{\mathcal{N}}
\newcommand{\CO}{\mathcal{O}} 
\newcommand{\CQ}{\mathcal{Q}}
\newcommand{\CS}{\mathcal{S}}
\newcommand{\CT}{\mathcal{T}}
\newcommand{\BZ}{\mathbb{Z}}
\newcommand{\fz}{\mathfrak{z}}
\newcommand{\bfb}{{\boldsymbol b}}
\newcommand{\bfm}{{\boldsymbol m}} 
\newcommand{\bfn}{{\boldsymbol n}}
\newcommand{\bfx}{{\boldsymbol x}}
\newcommand{\bfrho}{{\boldsymbol \rho}}
\newcommand{\bfk}{{\boldsymbol k}}
\newcommand{\bfz}{{\boldsymbol z}}
\newcommand{\bfnull}{{\boldsymbol 0}}
\newcommand{\bfmu}{{\boldsymbol \mu}}
\title{Topological twists of massive SQCD, Part II}
\abstract{
This is the second and final part of ``Topological twists of massive SQCD''. Part I is available at \href{https://arxiv.org/abs/2206.08943}{arXiv:2206.08943}. 
In this second part, we evaluate the contribution of the Coulomb branch to topological path integrals for $\mathcal{N}=2$ supersymmetric QCD with $N_f\leq 3$ massive hypermultiplets on compact four-manifolds.  Our analysis includes the decoupling of hypermultiplets, the massless limit and the merging of mutually non-local singularities at the Argyres-Douglas points. We give explicit mass expansions for the four-manifolds $\mathbb{P}^2$ and $K3$. For $\mathbb{P}^2$, we find that the correlation functions are polynomial as function of the masses, while infinite series and (potential) singularities occur for $K3$. The mass dependence corresponds mathematically to the integration of the equivariant Chern class of the matter bundle over the moduli space of $Q$-fixed equations. We demonstrate that the physical partition functions agree with mathematical results on Segre numbers of instanton moduli spaces. 
\\
\\
 \noindent\today
}
\author{Johannes Aspman${}^\sharp$, Elias Furrer${}^\flat$, Jan Manschot${}^\natural$ \\
\vspace{5pt}
{${}^\sharp\ $\it Department of Computer Science, Czech Technical University in Prague, Karlovo nam. 13, Prague 2, Czech Republic \\ \vspace{5pt}
${}^\flat\ $ \it School of Mathematics, University of Birmingham,\\
\it Watson Building, Edgbaston, Birmingham B15 2TT, United Kingdom \\ \vspace{5pt}
${}^\natural\ $ \it School of Mathematics, Trinity College, Dublin 2, Ireland\\
${}^\natural\ $ \it Hamilton Mathematical Institute, Trinity College, Dublin 2 \\
${}^\natural\ $ \it School of Natural Sciences, Institute for Advanced Study, 1 Einstein Drive, Princeton, NJ 08540 USA
\vspace{20pt} 

}}
\preprint{}
\begin{document}
\maketitle
\setcounter{figure}{8}
\setcounter{section}{6}

This is the second and final part of ``Topological twists of massive SQCD''. Part I is available as preprint at \href{https://arxiv.org/abs/2206.08943}{arXiv:2206.08943} \cite{Aspman:2022sfj}. The numbering of sections is consecutive to that of Part I, while each part contains its own reference list. Since Part II has developed to a larger text than anticipated, the following interlude provides a complementary and extended introduction to Part II.
A combined document with part I and part II can be found on \href{https://www.maths.tcd.ie/~manschot/}{here}.

\section{Interlude} 
\label{Interlude}

In this part II, we study topological partition functions for four-dimensional  $\CN=2$ supersymmetric QCD with $N_f=0,\dots, 3$ massive hypermultiplets.  The low-energy theory in flat space has a rather rich structure: The $2+N_f$ singular vacua move on the Coulomb branch smoothly as a function of the masses, which we denote by $\bfm=(m_1,\dots, m_{N_f})$. The vacua can collide in two distinct ways, depending on the Kodaira type of the corresponding singular fibre in the Seiberg-Witten geometry. If $r$ $I_1$ singularities for $r$ mutually local dyons merge, they form a new singularity of Kodaira type $I_r$. When singularities corresponding to mutually non-local dyons collide, they rather lead to Kodaira type $II$, $III$ or $IV$ singularities, which give rise to superconformal Argyres-Douglas (AD) theories \cite{Argyres:1995xn,Argyres:1995jj}. In general, if two or more masses of the hypermultiplets align, the flavour symmetry enhances and a Higgs branch opens up. 

It is an interesting question how this singularity structure is reflected in the topological theory. For the mass deformation of $\CN=4$ Yang-Mills, the $\CN=2^*$ theory, this has been analysed in \cite{Labastida:1998sk,Manschot:2021qqe}, which connects Vafa-Witten and Donaldson-Witten invariants. The structure for SQCD bears much resemblance to that case, yet the multiple masses and AD singularities give rise to richer structure with more intricacies. Before discussing our findings and results, we give an overview of previous literature, including part I.

\subsection{Literature overview}
For a generic compact four-manifold $X$, the topological partition function of SQCD takes the form of a sum of a $u$-plane integral $\Phi_\bfmu^J$ and a Seiberg-Witten (SW) contribution \cite{Moore:1997pc},
\be \label{Z_contr}
Z_\bfmu^J(\bfm)=\Phi_\bfmu^J(\bfm)+\sum_{j=1}^{2+N_f} Z^J_{\text{SW},j,\bfmu}(\bfm).
\ee 
The partition function depends on three distinct collections of parameters: The masses $\bfm$, the metric $J$ and  a set of fluxes $\bfmu$ for the theory (such as a 't Hooft flux for the gauge bundle and background fluxes for the flavour group). Geometrically, the mass dependence of  $Z_\bfmu^J(\bfm)$ contains  information on intersection numbers of Chern classes on gauge theoretic moduli spaces \cite{LoNeSha}.

The $u$-plane integral $\Phi^J_\bfmu(\bfm)$ vanishes for manifolds with $b_2^+>1$ \cite{Moore:1997pc}. Manifolds with $b_2^+=0,1$ are therefore of special interest, since they have the right topology to probe the full Coulomb branch. We will restrict to $X$ with $b_2^+\geq 1$. For $b_2^+=1$, the SW contribution can be found from the $u$-plane integral by   wall-crossing as a function of the metric $J$. While the $u$-plane integral  depends on $X$ only through its intersection form on $H^2(X,\mathbb Z)$, the SW invariants can distinguish between homeomorphic manifolds with distinct smooth structures.

In part I, we have defined the  $u$-plane integral $\Phi_\bfmu^J(\bfm)$ of massive SQCD. For fixed fluxes $\bfmu$ on a given four-manifold $X$, it is essentially determined by the SW solution for the Coulomb branch or $u$-plane of the theory. The fibration of the SW curve  over the $u$-plane has been identified as a rational elliptic surface (RES) $\CS(\bfm)$, which is also known as the Seiberg-Witten surface 
\cite{Sen:1996vd,Banks:1996nj,Minahan:1998vr,Eguchi:2002fc,Malmendier:2008yj,Caorsi:2018ahl,Caorsi:2019vex,Closset:2021lhd,Magureanu:2022qym}.\footnote{See \cite{Closset:2021lhd} for a recent review on Seiberg-Witten geometry.} This geometry encodes much of the data of the supersymmetric low-energy effective theory.
The analytical structure of the $u$-plane integral is therefore to a great extent determined by that of the surface $\CS(\bfm)$. As explained in part I, the $u$-plane integral can be mapped to a fundamental domain $\CF(\bfm)$ associated with the elliptic surface $\CS(\bfm)$, and collapses to a finite sum over cusps, elliptic points and interior singular points of the fundamental domain (see \eqref{integrationresult}). In terms of the SW surface, we calculate the sum of the $u$-plane integrand over the singular fibres of $\CS(\bfm)$, which fall into Kodaira's classification. The possible configurations of singularities for rational elliptic surfaces have been classified as well \cite{Persson:1990,Miranda:1990}.

A notable intricacy for the evaluation  is the fact that the mass dependence of the surface $\CS(\bfm)$ is not globally smooth, which gives rise to  branch points and branch cuts for $N_f\geq 1$ \cite{Aspman:2021vhs}. This requires a careful regularisation of the fundamental domain: It must be chosen to not cross any branch points in the renormalisation of the integral. Moreover, as the masses are varied, the singular fibres in $\CS(\bfm)$ can split or merge. In the limit where an Argyres-Douglas (AD) point emerges, the fundamental domain is `pinched' at the AD point and it splits into two \cite{Aspman:2021vhs}. See e.g. Figure \ref{fig:nf1ADlimit}.

As mentioned above, the SW contribution $Z_{\text{SW},j,\bfmu}$ can be determined by a wall-crossing argument from their corresponding cusps of the $u$-plane integral. Due to their application to Donaldson invariants in the pure $N_f=0$ theory, they have been studied predominantly for singularities of type $I_1$, corresponding to one massless monopole or dyon. The generalisation to SQCD proceeds analogously, since in such configurations all singularities are of type $I_1$ as well \cite{Moore:1997pc}. Partition functions for the massless theories are determined in \cite{Malmendier:2008db,Kanno:1998qj}.

The partition functions of Argyres-Douglas theories on four-manifolds have been studied from various perspectives \cite{Nishinaka:2012kn,Kimura:2020krd,Fucito:2023txg,Fucito:2023plp,Marino:1998uy,Marino:1998uy_short,Gukov:2017,Moore:2017cmm,Dedushenko:2018bpp,moore_talk2018,Marino:1998bm}.
While the $u$-plane integrand is regular at any smooth point on the Coulomb branch, it can diverge at the elliptic AD points. In contrast to the strong coupling singularities of type $I_r$, their contribution to correlators exhibits continuous metric dependence rather than discrete wall-crossing. Besides, the expansion of the integrand at  elliptic points has a very different flavour than at cusps, and has been largely unexplored in the literature. The study of such elliptic points is also of interest due to other types of singularities, such as the Minahan--Nemeschansky SCFTs \cite{Minahan:1996fg,Minahan:1996cj}.

 Other intriguing connections between theories can be realised by compactification, which relates invariants associated with geometries of different dimensions. This connects for instance the Donaldson invariants, Floer homology, Gromov-Witten invariants and K-theoretic versions \cite{Taubes1994,Bershadsky:1995vm,Gottsche:2006,Nakajima:2005fg,Harvey_1995,donaldson1995floer,kim2023}, and allows to conjecture QFTs themselves as invariants \cite{Gadde:2013sca,Dedushenko:2017tdw}.

\subsection{Summary of results}
 In order to study the analytical structure of topological partition functions explicitly, we focus on two manifolds: the complex projective plane $\mathbb{P}^2$ and $K3$ surfaces. For $\mathbb{P}^2$, only the $u$-plane integral contributes, while for $K3$ there is only the SW contribution. The dependence on the masses can be studied in various special limits, such as large and small masses, and limits to AD theories. In part I, we argued that the twisted theory can be coupled to background fluxes for the flavour group. In this part II, by explicit computation we demonstrate that this indeed provides a refined family of theories with nonzero partition functions.

As announced in part I, we evaluate $u$-plane integrals using mock modular forms and Appell--Lerch sums. For $\mathbb{P}^2$, various choices of mock modular forms have appeared in the literature, which all differ by an integration `constant', in this case a holomorphic modular form. Since the anti-derivative of the integrand must transform under all possible monodromies on the $u$-plane of SQCD with arbitrary masses, this singles out a specific $\slz$ mock modular form: It is the $q$-series $H^{(2)}$ of Mathieu moonshine \cite{Eguchi:2010ej, Dabholkar:2012nd}, which relates the dimensions of irreducible representations of the sporadic group $M_{24}$ to the elliptic genus of the $K3$ sigma model with ${\CN=(4,4)}$ supersymmetry. Including either surface observables or nontrivial background fluxes, this function generalises to an $\slz$ mock Jacobi form, giving an interesting refinement.

For four-manifolds with $b_2=1$, the weak coupling cusp contributes to all correlation functions, while the strong coupling cusps never contribute. For all four-manifolds with $b_2^+>0$ that admit a Riemannian metric of positive scalar curvature, the SW invariants are zero due to a well-known vanishing theorem \cite{Witten:1994cg,Taubes1994,bryan1996, Dedushenko:2017tdw}. Hence by SW wall-crossing, the strong coupling contributions to the $u$-plane integral are expected to vanish as well. We confirm this by an analysis of the $u$-plane integrand at the singularities for such manifolds, including the del Pezzo surfaces $dP_n$.  Furthermore, we prove that in absence of background fluxes for the flavour group, the branch points never contribute to $u$-plane integrals.

Our calculations for $\mathbb P^2$ agree with previous results in the literature, which were available for massless SQCD \cite{Malmendier:2008db}. A consistency check available only for \emph{massive} SQCD is the infinite mass decoupling limit, which precisely matches with that of the proposed form of correlation functions in the UV theory. The limit of the $u$-plane integral takes the form as given in \eqref{IR-decoupling}, and we use it to check our explicit results for $\mathbb P^2$: If all hypermultiplets are decoupled, one recovers the Donaldson invariants of $\mathbb P^2$. Our results agree precisely with \cite{ellingsrud1995wall,Gottsche:1996} for $N_f=0$ and \cite{Malmendier:2008db} for massless $N_f=2$ and $N_f=3$. 
The UV formula provides another consistency check in the form of a selection rule  for observables. For instance, correlation functions of point observables on $\mathbb P^2$ with canonical 't Hooft flux are valued in the polynomial ring of the masses, where the virtual rank and  degree of the Chern class of the matter bundle as well as the virtual dimension of the instanton moduli space can be read off from the exponents of the masses and dynamical scale. The coefficients are then (rational) intersection numbers on the moduli space of solutions to the $\CQ$-fixed equations.

Coupling the hypermultiplets to background fluxes for the flavour group allows to formulate the theory for arbitrary 't Hooft fluxes. We determine the couplings to the background fluxes for $N_f=1,2$ by  integration of the SW periods, and evaluate the correlation functions on $\mathbb P^2$.
For nontrivial background flux, the results depend on the expansion point, i.e. small or large masses. 
This is due to the pole structure of the (elliptic) mock Jacobi form $H^{(2)}$, which we determine precisely.

As discussed in part I, the superconformal Argyres-Douglas theories present themselves in the fundamental domain of massive SQCD as elliptic points. We expand the $u$-plane integrand around any singularity of type $II$, $III$ and $IV$. The anti-derivative of the photon path integral is a non-holomorphic modular form, which we evaluate at elliptic points using the Chowla--Selberg formula. This formula expresses the value of modular forms at elliptic points as products of the Euler gamma function at rational numbers. Interestingly, elliptic points are all zeros of the  function $H^{(2)}$. Together with the holomorphic expansion of the measure factor, whose order of vanishing at any elliptic point we determine,
we show that for four-manifolds with odd intersection form and canonical 't Hooft flux the $u$-plane integrand is regular and thus there is no contribution from AD points in those cases. Our results for the expansion at elliptic points can be readily generalised to other $u$-plane integrals containing elliptic points.

Further, we derive the general form of SW contributions for SU(2) SQCD and evaluate correlation functions of point observables for $X=K3$.
If the masses are large, $N_f$ singularities move on the $u$-plane to infinity, while two converge to $\pm \Lambda_0^2$, giving the SW singularities of the pure $N_f=0$ theory. This allows to attribute the $N_f$ singularities at large $|u|$ to the monopole component of the moduli space of $\CQ$-fixed equations, while the union of the monopole, dyon and weak coupling contribution corresponds to the instanton component. See Fig. \ref{fig:branches}. Note the distinction between the monopole contribution to the instanton component and the monopole component. 

\begin{figure}[ht]\centering
	\includegraphics[scale=1]{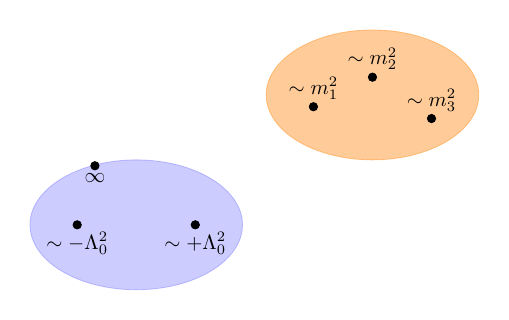}
	\caption{The singularity structure on the Coulomb branch with $N_f$ heavy hypermultiplets consists of two singularities approaching the monopole and dyon points $\pm\Lambda_0^2$ of pure SU(2) SYM, while $N_f$ singularities are asymptotically large. We associate the two singularities of $\CO(\Lambda_0^2)$ and weak coupling $u=\infty$ to the instanton component (blue), while the other $N_f$ singularities are attributed to the abelian (or monopole) component (orange). In contrast to the case of small masses, for large masses the two components are well separated.}
 \label{fig:branches}
\end{figure}

The contributions of the `instanton' singularities to point observables on $K3$ are Laurent series in the inverse mass $\frac 1m$, which turn out to be generating functions of \emph{Segre numbers}. Segre classes first appeared in the context of moduli of vector bundles in an article by Tyurin \cite{Tyurin1994}. They were later generalised for higher rank bundles over projective surfaces in \cite{Gottsche:2020ass}. Recently, the correspondence between higher rank Segre numbers on moduli spaces of stable sheaves on surfaces and their Verlinde numbers \cite{Losev:1995cr, LoNeSha, Gottsche:2006} has been proven \cite{Oberdieck_2022}. 
We establish the relation between the physical partition functions and these geometric invariants by an explicit mapping using the SW geometry. The coefficients of the `monopole' contribution lack such a mathematical interpretation. However, combining it with the instanton contribution eliminates the (infinite) principal part of the series, resulting in a polynomial in the masses. See for instance Table \ref{PtObsNf1K3}. 

 For $I_r$ singularities with $r\geq 2$, the SW invariants are not readily well-defined, since the moduli space can be non-compact and the integrals require regularisation \cite{Dedushenko:2017tdw,bryan1996,Moore:1997dj}. The SW invariants for $I_r$ singularities with $r>1$ are invariants for the multi-monopole equations, and 
require higher order corrections in the local variables \cite{Kanno:1998qj}. 
  The SW invariants are in this case non-vanishing for nonzero $\spinc$ structures.  We calculate the simplest nontrivial case, which is an $I_2$ singularity in $N_f=2$ with equal masses on $K3$. An apparent feature is the potentially divergent behaviour of the SW partition function near the superconformal Argyres-Douglas points. We propose that these divergences are  rendered finite by sum rules for the  $I_r$ SW invariants for different $r$ (see \eqref{SWhyp}), generalising earlier results for sum rules for $I_1$ SW invariants  
\cite{Marino:1998uy,Marino:1998uy_short,Moore:2017cmm}.   A calculation in $N_f=3$ with the same type $III$ AD point as in $N_f=2$ reproduces the same sum rules for the $I_1$ and $I_2$ SW invariants. This suggests that the constraints on the topology from the regularity at any AD point is determined completely by the universality class of the superconformal theory. We note that the collision of $I_1$ points to an $I_r$ point also exhibits a mass singularity. 
This singularity is expected to be related to the appearance of a non-compact Higgs branch \cite{Dedushenko:2017tdw,AFMM:future}, and therefore does not give rise to new sum rules.

 Imposing the sum rules, in all cases we study the correlation functions are then polynomials in the masses. The type $III$ AD point can also be approached away from the equal mass locus in $N_f=2$, where three $I_1$ singularities collide rather than an $I_2$ and an $I_1$. The two limits agree up to a divergent term $\sim(m_1-m_2)^{-2}$, which is a consequence of a non-compact Higgs branch appearing as $m_1\to m_2$ \cite{AFMM:future,Dedushenko:2017tdw,LoNeSha}. The $I_2$ SW contribution thus naturally regularises the singular limit of colliding $I_1$ singularities. 
 
 Organising the SW contributions of the pure $N_f=0$ theory to a correlation function with exponentiated observables, the generating function in many cases satisfies an ordinary differential equation with respect to the point observable. Four-manifolds whose SW invariants enjoy this property are said to be of (generalised) simple type. An example are the $K3$ surfaces, where the corresponding generating functions for massless SQCD have been studied \cite{Kanno:1998qj}. We generalise this analysis to the massive theories, and show that for generic masses the differential equation is determined by the physical discriminant associated with the massive theory (see \eqref{SWST_Nf}). When mutually local singularities collide (as is the case in massless $N_f=2,3$),  zeros of the discriminant collide, and give rise to higher order zeros of the characteristic polynomial of the ODE. Such general structure results on generating functions are also of interest regarding the asymptology of correlation functions for many fields \cite{Korpas:2019cwg}. Due to the rich phase structure of the SQCD Coulomb branch, it is not obvious if the generating function of correlation functions as a formal series is well-defined, that is, if it defines an entire function on the homology ring $H_*(X,\mathbb C)$.
 \\

\subsection{Outline of Part II}
This part II is organised as follows. In section \ref{sec:further_aspects}, we discuss various aspects of the $u$-plane integral of massive SQCD, such as the contributions from singular points, the measure factor including gravitational couplings, and the decoupling limit.
In section \ref{sec:behaviour_special_points} we calculate auxiliary expansions of various Coulomb branch functions near special points, such as weak coupling, strong coupling cusps, and branch points. 
In section \ref{sec:uplane_int_mock} we formulate $u$-plane integrals over fundamental domains in the presence of branch points, analyse the components of the integrand in detail and derive conditions on the cusps to contribute to correlation functions.
In section \ref{sec:p2}, we calculate $u$-plane integrals of massive SQCD on the complex projective plane $\mathbb P^2$, with $N_f\leq 3$ arbitrary masses and with nontrivial background fluxes.
In section \ref{sec:SWcontributions} we rederive the SW contributions for $I_1$ singularities by a wall-crossing argument at the strong coupling cusps. We furthermore propose the form of SW contributions for $I_2$ singularities,  calculate point correlators on $K3$, discuss the AD limit and the relation to Segre invariants. Finally, we propose generalised simple type conditions for generic as well as coincident masses.
In section \ref{sec:ADcontribution} we discuss the  contributions of AD points to the $u$-plane integral. We conclude with a brief discussion in section \ref{sec:discussion}. 
Various useful expansions, derivations, proofs and formulas can be found in the appendices \ref{app:modularforms2}, \ref{app:mass_expansions}, \ref{sec:bkgFluxes} and \ref{sec:Kodaira_invariant}.

\section{Further aspects of topological path integrals}\label{sec:further_aspects}
This section discusses further aspects and preliminaries of topological path integrals. Subsection \ref{sec:Contributions} discusses the different contributions to the topological path integral. Subsection \ref{sec:measure_factor} discusses the measure of the $u$-plane contribution. In Subsection \ref{sec:scalinglimit} we study the behaviour of the path integral under decoupling of hypermultiplets in the infinite mass limit.

\subsection{General structure}
\label{sec:Contributions}
For a generic compact four-manifold, the topological partition function of SQCD takes the form (\ref{Z_contr}). 
As discussed before, the $u$-plane integral $\Phi^J_\bfmu$ receives contributions from the weak coupling cusp, $\tau\to i\infty$, and the $N_f+2$ strong coupling singularities, such that we can express $Z_\bfmu^J$ as
\be \label{Z_contributions_generic}
Z_\bfmu^J(\bfm)=\Phi_{\bfmu,\infty}^J(\bfm)+\sum_{j=1}^{2+N_f} \Phi^J_{\bfmu,j}(\bfm)+Z^J_{\text{SW},\bfmu,j}(\bfm).
\ee 
In Sections \ref{sec:uplane_int_mock} and \ref{sec:p2}, we will discuss and calculate the $u$-plane integral for generic and specific four-manifolds. In Section \ref{sec:SWcontributions} we will derive the action $Z_{\text{SW},j}$ for the theory near $u_j^*$ from the $u$-plane integral using wall-crossing. The reason for this is that wall-crossing of the total partition function can only be due to the non-compact direction in field space, i.e. $|u|\to \infty$ \cite{Moore:1997pc}.  Thus the wall-crossing of the strong coupling $u$-plane contributions $\Phi^J_{\bfmu,j}(\bfm)$ must cancel that of $Z^J_{\text{SW},\bfmu,j}(\bfm)$.

If the masses are tuned to an AD point, the partition function naturally splits into a contribution from a small neighbourhood of the AD-point, and its complement in the $u$-plane \cite{Moore:2017cmm}. This works out rather nicely when lifted to domains in the $\tau$-plane. On the AD mass locus,  the fundamental domain splits into a component including the original weak coupling regime, and a strong coupling component associated to the vicinity of the AD point in the $u$-plane \cite{Aspman:2021vhs}. 
The strong coupling singularities $\{1,\dots ,2+N_f \}$ accordingly split in two sets $S$ and $S'$, with $S\cup S'=\{1,\dots ,2+N_f \}$ and the singularities in $S'$ merging in the AD point. Furthermore, the fundamental domains in $\tau$-space include the elliptic points $e$ in $S$ and its complement $e'$ in $S'$. Schematically, we arrive at the following 
\be \label{Z_decomp_AD}
\begin{split} 
Z_\bfmu^J(\bfm_{\text{AD}})&=\Phi_{\bfmu,\infty}^J(\bfm_{\text{AD}})+\Phi_{\bfmu,e}^J(\bfm_{\text{AD}})+\lim_{\bfm\to \bfm_{\text{AD}}}\sum_{j\in S} \Phi^J_{\bfmu,j}(\bfm)+Z_{\text{SW},\bfmu,j}(\bfm)\\
&\quad +\Phi_{\bfmu,e'}^J(\bfm_{\text{AD}})+\lim_{\bfm\to \bfm_{\text{AD}}}\sum_{j'\in S'} \Phi^J_{\bfmu,j'}(\bfm)+Z_{\text{SW},\bfmu,j'}(\bfm).
    \end{split} 
\ee 
The limit on the right hand side occurs since each summand for fixed $j$ can diverge. The sum over $j$ may  remain finite as a consequence of sum rules \cite{Marino:1998uy}. The terms on the second line correspond to the vicinity of the AD point in the $u$-plane,  
\be
\label{ZmAD}
Z_{\widetilde{\text{AD}},\bfmu}^J=\Phi_{\bfmu,e'}^J(\bfm_{\text{AD}})+\lim_{\bfm\to \bfm_{\text{AD}}}\sum_{j'\in S'} \Phi^J_{\bfmu,j'}(\bfm)+Z_{\text{SW},\bfmu,j'}(\bfm),
\ee 
where the tilde on $\widetilde{\text{AD}}$ indicates that $Z_{\widetilde{\text{AD}},\bfmu}^J$ is the contribution to the partition function of SQCD from the neighbourhood of the AD point, rather than the partition function of the  intrinsic AD theory.

\begin{figure}[ht]\centering
	\includegraphics[scale=1.3]{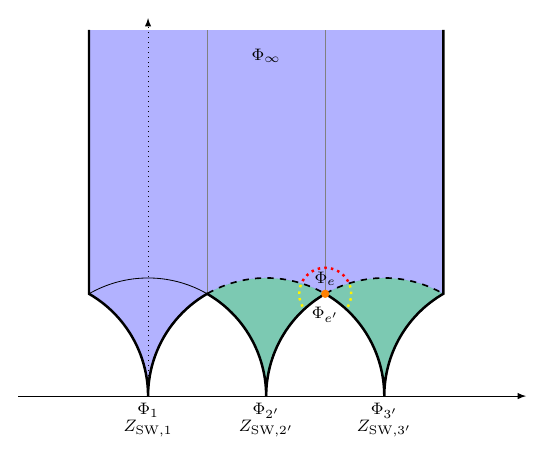}
	\caption{When the mass for the $N_f=1$ theory is tuned to the AD value, the partition function  \eqref{Z_decomp_AD} receives contributions from two disjoint regions. The first region (purple) is precisely the fundamental domain in the limit $m\to\mad$, while the other one (green) is the `zoomed in' domain, as studied for instance in  \cite{Argyres:1995xn,Moore:2017cmm,Aspman:2021vhs,Marino:1998uy}. The two domains are connected through the AD point (orange), whose boundary arcs (red and yellow) both have angle $\frac{2\pi}{3}$.}
 \label{fig:nf1ADlimit}
\end{figure}

The $2+N_f$ singularities of \eqref{Z_contributions_generic} are split up into the sum over $j$ and $j'$. 
We illustrate this in Fig. \ref{fig:nf1ADlimit}. In order for the limit $Z_\bfmu^J(\bfm) \to Z_\bfmu^J(\bfm_{\text{AD}})$ to be smooth, it is natural to expect that $\Phi_{\bfmu,e}^J(\bfm)=-\Phi_{\bfmu,e'}^J(\bfm)$. We will discuss the contributions from the AD points in detail in Section \ref{sec:ADcontribution}. It is an interesting question how to extract from $Z_{\widetilde{\text{AD}},\bfmu}^J$ (\ref{ZmAD}) the partition function $Z_{\AD,\bfmu}^J$ of the superconformal AD theory based on the `zoomed in' AD curve \cite{Argyres:1995xn}. The latter partition function is determined for instance in \cite{Moore:2017cmm}. Roughly, $Z_{\AD,\bfmu}^J$ is the leading term of $Z_{\widetilde{\text{AD}},\bfmu}^J$. 
We make some further comments in Section \ref{sec:results_K3}, and leave a more thorough analysis for future work \cite{AFMM:future}.

We close this subsection with some further notation. We will often omit the mass $\bfm$ from the argument of $Z_\bfmu^J$. Moreover, we denote the insertion of observables by straight brackets,
\be 
\label{Zcorr}
Z_\bfmu^J[\CO]=\left< \CO \right>^J_\bfmu,
\ee 
and similarly for the terms on the rhs of (\ref{Z_decomp_AD}). Two common observables are the exponentiated point and surface observables $e^{2pu/\Lambda_{N_f}^2}$ and $e^{I(\bfx)}$. For these observables, we also use $p$ and $\bfx$ as arguments of $Z_\bfmu^J$,
\be 
Z_{\bfmu}^J(\bfm,p,\bfx)=Z_{\bfmu}^J(p,\bfx)=Z_\bfmu^J[e^{2pu/\Lambda_{N_f}^2+I(\bfx)}].
\ee 
See Sec. \ref{sec:contaccterm} in part I for more details.

\subsection{Measure factor}\label{sec:measure_factor}
We recall that the metric-independent part of the path integral is the measure factor \eqref{measurefactornf}. It contains the topological couplings 
\begin{equation}\label{P2:ABtopological}
	A=\alpha_{N_f}\left(\frac{du}{da}\right)^{1/2},\qquad B=\beta_{N_f}\Delta_{N_f}^{1/8}
\end{equation}
of the theory to the Euler characteristic $\chi$ and the signature $\sigma$ of the four-manifold $X$. While the functions $\alpha_{N_f}$ and $\beta_{N_f}$ are independent on $\tau$, they can be functions of other moduli such as the masses $\bfm$ and the dynamical scale $\Lambda_{N_f}$, or the UV coupling $\tuv$ for $N_f=4$ or $\CN=2^*$. The functions $A$ and $B$ essentially do not change in form by including matter because the kinetic terms of the hypermultiplets have no explicit $\tau$-dependence \cite{Moore:1997pc}. This suggests that $\alpha_{N_f}$ and $\beta_{N_f}$ do not have a strong dependence on $N_f$. 

Furthermore, $\alpha_{N_f}$ and $\beta_{N_f}$  cannot  depend on any masses since otherwise the path integral would have additional global mass singularities, which are not physically motivated \cite{Marino:1998uy}. This argument includes the conformal fixed points \cite{Argyres:1995xn,Argyres:1995jj}, as these singularities occur only for special values and not on the whole $u$-plane.  Thus $\alpha_{N_f}$ and $\beta_{N_f}$ depend only on the scale $\Lambda_{N_f}$. This furthermore agrees with the fact that the gravitational factors need to reproduce the anomaly associated to the fields that have been integrated out, which eliminates a possible dependence of $\alpha$ and $\beta$ on $\bfm$ \cite{Marino:1998uy}.

 Since the couplings $A$ and $B$ are contained in the low-energy effective action as $\chi \log A+\sigma\log B$,  both $A$ and $B$  are necessarily dimensionless. With $\left[\frac{du}{da}\right]=1$ and $[\Delta_{N_f}]=2(N_f+2)$, this fixes the dimensionality 
\begin{equation}
\label{alphabetaNf}
\begin{aligned}
    \alpha_{N_f}&=\alpha_{N_f,0}\Lambda_{N_f}^{-\frac12},\\
    \beta_{N_f}&=\beta_{N_f,0}\Lambda_{N_f}^{-\frac14(N_f+2)},
    \end{aligned}
\end{equation}
where $\alpha_{N_f,0},\beta_{N_f,0}\in\mathbb C$ are dimensionless numbers. These gravitational couplings have been recently calculated for several families of  theories \cite{Manschot:2019pog,Closset:2021lhd,John:2022yql,Ashok:2023tsm}.

Using the decoupling limits, we find for the normalisation \eqref{measurefactornf} and the constants $\alpha$ and $\beta$,  
\begin{equation}\begin{aligned}\label{numerical_constants_Nf}
  \CK_{N_f,0}=\frac 1\pi, \qquad  \alpha_{N_f,0}=2^{\frac14}, \qquad \beta_{N_f,0}=2\,e^{\frac{\pi i}{8}N_f}.
\end{aligned}\end{equation}
The phase in $\beta_{N_f,0}$ originates from the decoupling of the discriminant, as we will discuss momentarily.

The effective gravitational couplings appear in the $u$-plane integrand as a product  $\CK_{N_f}\alpha_{N_f}^\chi\beta_{N_f}^\sigma$. Due to the fact that $\chi+\sigma=4$ for manifolds with $(b_1,b_2^+)=(0,1)$,  there is a normalisation ambiguity \cite{Moore:2017cmm}
\begin{equation}
(\CK_{N_f},\alpha_{N_f},\beta_{N_f})\sim (\kappa^{-4}\CK_{N_f},\kappa\alpha_{N_f},\kappa\beta_{N_f}),
\end{equation}
giving the same result for any $\kappa\in\mathbb C$.
In particular, in the $u$-plane integral only the ratio
\begin{equation}
    \frac{\beta_{N_f}}{\alpha_{N_f}}=\frac{\beta_{N_f,0}}{\alpha_{N_f,0}}\Lambda_{N_f}^{-\frac{N_f}{4}}
\end{equation}
is fixed. This agrees with  \cite{Manschot:2019pog} for $N_f=0,\dots 4$ and the general considerations in \cite{Marino:1998uy}. From our result \eqref{numerical_constants_Nf}, we find 
\begin{equation}
    \frac{\beta_{N_f,0}}{\alpha_{N_f,0}}=2^{\frac34}e^{\frac{\pi\im}{8} N_f}.
\end{equation}
For $N_f=0$, the unambiguous ratio $ \frac{\beta_{0,0}}{\alpha_{0,0}}=2^{\frac34}$ agrees with \cite{Korpas:2019cwg}, and matches with explicit computations of Donaldson invariants.

Since the $u$-plane integral computes intersection numbers on the moduli space, it should be properly normalised to be dimensionless. With $A$ and $B$ dimensionless, the only dimensionful quantity in the measure factor is the Jacobian $\frac{da}{d\tau}$. Thus
\begin{equation}
\label{defKNf}
    \CK_{N_f}=\CK_{N_f,0}\Lambda_{N_f}^{-1}=\frac{1}{\pi\,\Lambda_{N_f}}
\end{equation}
produces a dimensionless $u$-plane integral, with $\CK_{N_f,0}\in\mathbb C$ the number in \eqref{numerical_constants_Nf}.

Combining all the scales fixed by dimensional analysis and using $\chi+\sigma=4$, we have 
\begin{equation}\label{CNNf}
    \CN_{N_f}\coloneqq \CK_{N_f}\alpha_{N_f}^\chi\beta_{N_f}^\sigma=\CK_{N_f,0}\alpha_{N_f,0}^\chi\beta_{N_f,0}^\sigma\Lambda_{N_f}^{-(3+\sigma N_f/4)}. 
\end{equation}
This total normalisation factor $\CN_{N_f}$ will be important in the decoupling of the $u$-plane integral, as we will study in the subsequent subsection.

With the above analyses, we can now present the measure $\nu(\tau,\{\bfk_j\})$ (\ref{measurefactornf}) in a more tangible fashion. Let us first consider $\bfk_j=0$ for all $j$ and set $\nu(\tau,\{\bf 0\})=\nu(\tau)$. Substituting Eq. \eqref{ABtopological} for $A$ and $B$, and \cite[Eq. (3.13)]{Aspman:2021vhs}, we find
\be 
\nu(\tau)=-\frac{16\pi i}{4-N_f}\CN_{N_f} \frac{1}{P^{\rm M}_{N_f}}\left( \frac{du}{da} \right)^{- \sigma/2-1}\Delta_{N_f}^{1+\sigma/8},
\ee 
where the polynomials $P^{\rm M}_{N_f}$ appearing in  generalizations of the Matone relation \cite{Matone:1995rx} are given in \cite[Eq. (3.15)]{Aspman:2021vhs}, and $\CN_{N_f}$ is given by  \eqref{CNNf_explicit}. We can further substitute \cite[Eq. (3.9)]{Aspman:2021vhs} for $\Delta_{N_f}$, to express $\nu(\tau)$ as 
\be
\begin{split}
\label{nutauNf}
\nu(\tau)=\CN_{N_f}\frac{(-1)^{N_f+1}\pi i}{4-N_f} 2^{-\frac 34\sigma-2} e^{-\pi iN_f \frac \sigma8} \Lambda_{N_f}^{(2N_f-8)(\sigma/8+1)}   \frac{\eta(\tau)^{3\sigma+24}}{P^{\rm M}_{N_f}}  \left(\frac{du}{da} \right)^{\sigma+11}.
\end{split}
\ee 
These substitutions significantly simplify explicit calculations.

\subsection{Decoupling limit}\label{sec:scalinglimit}
This section will discuss the decoupling limit of the $u$-plane integral and compare with the decoupling (\ref{UV-decoupling}) from the UV in Section \ref{corrfunctions}. 
The $u$-plane integral (\ref{generaluplaneintegral}) reads
\begin{equation}\label{P2:generaluplaneintegral}
\begin{aligned}
&\Phi_{\bfmu, \{\bfk_j\}}^{J,N_f}(p,\bfx, \bfm_{N_f}, \Lambda_{N_f})=\\
&\qquad \CN_{N_f}\,\int_{\CF(\bfm_{N_f})}\!\!\!\!\!\!\!\!d\tau\wedge d\bar \tau \, \frac{da}{d\tau} \,\left(\frac{du}{da}\right)^{\frac\chi2} \Delta_{N_f}^{\frac\sigma8}\prod_{i,j=1}^{N_f}e^{-\pi i w_{ij}B(\bfk_i,\bfk_j)}\,\Psi_\bfmu^J(\tau,\bfz)\,e^{2p\tfrac{u}{\Lambda_{N_f}^2}+\bfx^2 G_{N_f}},
\end{aligned}
\end{equation}
where $\bfz$ is given in \eqref{bfz_def}. Here, we have combined the $N_f$-dependent normalisation factors in $\CN_{N_f}$ \eqref{CNNf}, which facilitates the decoupling analysis.

In the scaling limit \eqref{decouplimit} the curve \eqref{eq:curves} for given $N_f$ flows to the curve with $N_f-1$. The flow of some of the ingredients of the $u$-plane integrand has been determined in \cite{Ohta_1997,Ohta:1996hq,Aspman:2021vhs}. We summarise them in Table \ref{Decoupling_uplane}.

\begin{table}[ht]\begin{center}
\renewcommand{\arraystretch}{1}
\begin{tabular}{|>{$\displaystyle}Sl<{$}|>{$\displaystyle}Sl<{$}|}
		\hline
		(m_1,\dots, m_{N_f}) &  	(m_1,\dots, m_{N_f-1}) \\
		\hline
		\CF(\bfm_{N_f}) & \CF(\bfm_{N_f-1}) \\
		u & u \\
		\tfrac{du}{da} & \tfrac{du}{da} \\
		\tfrac{da}{d\tau} &  \tfrac{da}{d\tau}\\
		-\Delta_{N_f}/m_{N_f}^2  & \Delta_{N_f-1} \\
		\Lambda_{N_f}^2G_{N_f} & \Lambda_{N_f-1}^2G_{N_f-1} \\
		\hline
		\end{tabular}
		\caption{Decoupling flow of components of the $u$-plane integrand.} \label{Decoupling_uplane}\end{center}
\end{table}
The formulation of the $u$-plane integral in the presence of background fluxes introduces the further couplings $v_j$ and $w_{jk}$, which we defined in \eqref{Defvw} \cite{Manschot:2021qqe,Aspman:2022sfj} . While these are difficult to determine in general, we can study their behaviour under decoupling hypermultiplets using the semi-classical prepotential \eqref{prepotential}. If we send $m_{N_f}\to \infty$ while keeping $m_{N_f}\Lambda_{N_f}^{4-N_f}=\Lambda_{N_f-1}^{4-(N_f-1)}$ and $a$ fixed, we find that
\begin{equation}\label{expansionsfromF}
    \begin{aligned}
        &v^{j}_{N_f}\to v^j_{N_f-1},\\
        &v^{N_f}_{N_f}\to -\frac{1}{2}(n_{N_f}+1),\\
        &w^{jk}_{N_f}\to w^{jk}_{N_f-1}-\delta_{jk}\frac{i}{\pi}\log\left(\frac{\Lambda_{N_f-1}}{\Lambda_{N_f}}\right),\\
        &w^{N_fN_f}_{N_f}\to -\frac{i}{\pi}\log\left(\frac{m_{N_f}}{\Lambda_{N_f}}\right),\\
        &\tau_{N_f}\to \tau_{N_f-1},
    \end{aligned}
\end{equation}
for all $1\leq j< N_f$ and $1\leq k\leq  N_f$. The off-diagonal components of $w^{jk}$ only receive contributions from higher order terms in the large $a$ expansion of the prepotential, and we leave a determination of their decoupling limit for future work.

The point and surface observables $p$ and $\bfx$ are multiplying the dimensionless quantities $\frac{u}{\Lambda_{N_f}^2}$ and $G_{N_f}$. However, as is apparent from Table \ref{Decoupling_uplane}, they both rather need to be multiplied by $\Lambda_{N_f}^2$ in order to enjoy a well-defined scaling limit. This can be achieved by multiplying $p$ by $\left(\tfrac{\Lambda_{N_f}}{\Lambda_{N_f-1}}\right)^2$ and multiplying $\bfx$ by $\tfrac{\Lambda_{N_f}}{\Lambda_{N_f-1}}$.\footnote{Such redefinitions of the point and surface observables are familiar from superconformal rank one theories, such as the $\nstar$ theory \cite{Manschot:2021qqe}, the SU(2) $N_f=4$ theory \cite{moore_talk2018} and the $(A_1,A_2)$ Argyres-Douglas theory \cite{Moore:2017cmm}.} Then the resulting exponentials will simply flow to the ones for the theory with $N_f-1$ flavours.

With the above kept in mind, it is now straightforward to decouple every term in the expression \eqref{P2:generaluplaneintegral} separately. By multiplying it with the inverse normalisation $\CN_{N_f}^{-1}$, it becomes dimensionful, however all components decouple as given in Table \ref{Decoupling_uplane} and the result is $\CN_{N_f}^{-1}\Phi^{N_f-1}$. 
The decoupling $-m_{N_f}^{-2}\Delta_{N_f}\to \Delta_{N_f-1}$ tells us that we need to multiply $\Phi^{N_f}$ also by $(-m_{N_f}^{-2})^{\frac\sigma 8}$, which combines with the discriminant $\Delta_{N_f}^{\sigma/8}$ to have a well-defined limit. The minus sign is then absorbed in $\beta_{N_f,0}$, see \eqref{numerical_constants_Nf}. Using the definition of the double scaling limit $m_{N_f}\Lambda_{N_f}^{4-N_f}=\Lambda_{N_f-1}^{4-(N_f-1)}$, we find the useful relation which holds in the limit,
\begin{equation}\label{decoupling_prefactor}
    (-m_{N_f}^{-2})^{\frac\sigma 8}\frac{\CN_{N_f-1}}{\CN_{N_f}}=\left(\frac{\Lambda_{N_f}}{\Lambda_{N_f-1}}\right)^{3+\sigma}, 
\end{equation}
where the  exponent on the rhs for general four-manifolds is $3+\sigma=\frac14(7\sigma+3\chi)$.
This can be confirmed using the definition \eqref{CNNf}, with the numerical constants \eqref{numerical_constants_Nf} inserted, giving the normalisation factor $\CN_{N_f}$ for all $N_f$ and all $\sigma$,
\be \label{CNNf_explicit}
\CN_{N_f}=\frac{2^{1+3\sigma/4}\,e^{\pi i \sigma N_f/8}}{\pi\,\Lambda_{N_f}^{3+N_f\sigma /4}}.
\ee 
From \eqref{expansionsfromF} we also see that we have non-trivial decouplings of the couplings $v_{N_f}$ and $w_{N_f}^{jk}$. From the double scaling limit we have another useful formula
\begin{equation}
\label{dscalinglimit}
    \frac{\Lambda_{N_f}}{m_{N_f}}=\left(\frac{\Lambda_{N_f}}{\Lambda_{N_f-1}}\right)^{5-N_f},
\end{equation}
which combined with the decoupling limits \eqref{expansionsfromF} tells us that, to leading order ($C_{jk}=e^{-\pi iw_{jk}}$), 
\begin{equation}
    \begin{aligned}
        &C_{jj}^{(N_f)}\to \frac{\Lambda_{N_f}}{\Lambda_{N_f-1}}C_{jj}^{(N_f-1)},\quad j\neq N_f, \\
        &C_{N_fN_f}^{(N_f)}\to \left(\frac{\Lambda_{N_f}}{\Lambda_{N_f-1}}\right)^{5-N_f}.
    \end{aligned}
\end{equation}
The dependence on $v^j$ is only through the elliptic variable of the theta function \eqref{psi}, and from \eqref{expansionsfromF} we see that we pick up an extra phase
\begin{equation}\label{couplingfromv}
    e^{\pi i(n_{N_f}+1)B(\bfk_{N_f},\bfk)},
\end{equation}
in the decoupling of $v_{N_f}^{N_f}$. In Section \ref{sec:uplane_integral}  and in particular in  \eqref{u_plane_contraint} we concluded that the $u$-plane integral is only well-defined if the magnetic winding numbers $n_j$ satisfy $n_j\equiv -1\mod 4$. 
Using that $n_{N_f}+1\equiv 0\mod 4$, one finds that the phase \eqref{couplingfromv} equals 1, such that the decoupling does not introduce any additional phases in the theta function.

Combining everything, the decoupling limit of the full $u$-plane integral now reads 
\begin{equation}\boxed{\begin{aligned}\label{IR-decoupling}
    \lim_{\substack{m_{N_f}\to\infty \\ \Lambda_{N_f}\to 0}} \left(\frac{\Lambda_{N_f}}{\Lambda_{N_f-1}}\right)^{-\alpha} &\Phi^{J,N_f}_{\bfmu,\{\bfk_j\}_{N_f}}\left(\left(\tfrac{\Lambda_{N_f}}{\Lambda_{N_f-1}}\right)^2p,\tfrac{\Lambda_{N_f}}{\Lambda_{N_f-1}} \bfx,\bfm_{N_f},\Lambda_{N_f}\right) \\
    =\, \, \, &  \Phi^{J,N_f-1}_{\bfmu,\{\bfk_j\}_{N_f-1}}\left(p,\bfx,\bfm_{N_f-1},\Lambda_{N_f-1}\right),
\end{aligned}}\end{equation}
where we repeat the exponent
\be 
\alpha=\frac{1}{4}\Big(-3\chi-7\sigma+(5-N_f)\,c_1(\CL_{N_f})^2+\sum_{j=1}^{N_f-1}c_1(\CL_j)^2 \Big).
\ee 
from \eqref{alphadef}. We see that the decoupling matches precisely with the UV calculation \eqref{UV-decoupling}. In the case where all the $c_1(\CL_j)=0$, this reproduces the result \cite[(2.10)]{Marino:1998uy}.

\subsubsection*{Remarks about the phase of the partition function}
Our convention \eqref{decouplingConv} for the weak coupling limit is favourable since it is valid for all $N_f$ \cite{Aspman:2021vhs}. On the other hand, it differs from previous literature. Notably for $N_f=0$, $u$ differs by a sign, such that the monopole and dyon singularity are interchanged. 
As a consequence, the partition functions determined here differ by a phase compared to the literature. In particular,  $Z_{\bfmu}[e^{2pu}]$ differs from  $\left< e^{p \CO} \right>_z$ of \cite[Eq. (2.17)]{Witten:1994cg} by a phase,
\be
\label{relationOursLit}
Z_{\bfmu}[e^{2pu}]=-e^{\pi i \lambda/2}\,e^{2\pi i \bfmu^2} \left< e^{p \CO} \right>_z
\ee 
with $z=2\bfmu$, and $\alpha_a=0$. 

A related aspect is the choice of fundamental domain.
Reference \cite{Aspman:2021vhs} described a framework for mapping out the fundamental domain for $\CN=2$ SQCD with generic masses. Yet there is some ambiguity in the choice of this domain. In this brief subsection, we study this ambiguity here for $N_f=0$ and connect it to characteristic classes.

In the decoupling limit, it is important to choose a consistent frame for $\tau\to i\infty$ such that the decoupling does not involve shifts. The frame found in \cite{Aspman:2021vhs} differs from the one in the broad literature by $T^2$, or alternatively by the action of $r$, with $r$ the generator of the unbroken $\mathbb{Z}_4$ $R$-symmetry for non-vanishing $a$. Let us thus study the effect of this transformation on the $u$-plane integral. 

Let us denote by $I_\bfmu(\tau,p,\bfx)$ the integrand of \eqref{P2:generaluplaneintegral}, such that we have $\Phi_\bfmu(p,\bfx) =\int_{\CF_0}I_{\bfmu}(\tau,p,\bfx)$. Assuming that we are integrating over the 'standard' choice $\CF_0$ (see Fig. \ref{fig:fundgamma0(4)}), we can simply determine the difference between $I_{\bfmu}(\tau,p,\bfx)$ and $I_{\bfmu}(\tau+2,p,\bfx)$. Under $T^2$, we have the following transformations for $N_f=0$:
\begin{equation}\label{T2transformation}
   \begin{aligned}
    T^2:\begin{cases}
    \frac{da}{d\tau}&\to -i\frac{da}{d\tau} \\
    \frac{du}{da}&\to -i \frac{du}{da}
   \\
    \Delta^{1/8}&\to e^{-\pi i/4}\Delta^{1/8} \\
    u&\to -u \\
    \Psi_{\bfmu}^J(\tau,\bar \tau)\!\!\!\!\!\! &\to e^{-2\pi i \bfmu^2} \Psi_{\bfmu}^J(\tau, \bar \tau) \\
    \end{cases}
     \end{aligned} 
\end{equation}
Then, using $\chi+\sigma=4$ we find that the integrand $I_\bfmu$ of the general $u$-plane integral \eqref{P2:generaluplaneintegral} transforms as
\begin{equation}
\label{changeconvs}
    I_\bfmu(\tau+2,p)=ie^{-2\pi i \bfmu^2}I_\bfmu(\tau,-p).
\end{equation}
Thus, the correlation function $\Phi_\bfmu^J(p)$ computed with some frame $\tau$ and $\widetilde\Phi_\bfmu(p)=\int_{\CF_0}I_{\bfmu}(\tau+2,p)$ computed with a frame $\tau+2$ relative to the first, differ by a factor
\begin{equation}\label{PhiT2equiv}
\Phi_\bfmu^J(p)=ie^{-2\pi i \bfmu^2}\widetilde \Phi_\bfmu^J(-p)= ie^{-\frac{\pi i}{2} \int_X P_2(w_2(E)) } \widetilde \Phi_\bfmu^J(-p).
\end{equation}
with $P_2$ the Pontryagin square, $P_2:H^2(X,\mathbb{Z}_2)\to H^4(X,\mathbb{Z}_4)$.
This is the mixed anomaly between the $U(1)_R$ symmetry and the $\mathbb{Z}_2$ 1-form symmetry of the $N_f=0$ theory \cite{Cordova:2018acb}. Eq. (\ref{PhiT2equiv}) demonstrates that the shift $\tau\to \tau+2$ in the integrand couples the theory to an invertible TQFT \cite{Aharony:2013hda, Gaiotto:2014kfa}. It is straightforward to also include the dependence on the surface observable $\bfx$ here. Its transformation is $\bfx\to -i\,\bfx$.

We note that for $N_f>0$, the theories with fundamental matter do not have a $\mathbb{Z}_2$ 1-form symmetry. Instead given the background fluxes for the flavor symmetry group, $\{\bfk_j\}$, the 't Hooft flux $\bfmu\in (L/2)/L$ is fixed. A sum over $\bfmu\in (L/2)/L$ as occurs in gauging of the 1-form symmetry is thus not meaningful.

\section{Behaviour near special points}\label{sec:behaviour_special_points}
This section collects various data of the ingredients of the $u$-plane integral near special points, such as weak coupling, strong coupling, and branch points.  Readers mainly interested in the results of the evaluation can skip this section.

\subsection{Behaviour at weak coupling} 
\label{weakcoupexp}
The evaluation of $u$-plane integrals requires the expansion of  various quantities at weak coupling. We will concentrate on either small or large mass expansions. In the large mass expansion, we express various quantities in terms of the order parameter $u_0$ of the $N_f=0$ theory, or $u_{N_f}$ of the theory with $N_f$ flavours. 

For example, for $N_f$ large and equal masses $m$, we can find the exact coefficients of $u_{N_f}$ as functions of $u_{0}$ by making an ansatz $u_{N_f}=\sum_{n}f_n(u_{0})\,m^{-n}$ and iteratively find $f_n$ by satisfying the relation $\CJ_{N_f}(u_{N_f})=\CJ_{0}(u_{0})$ order by order in $m^{-1}$. We list the results below and in Appendix \ref{app:mass_expansions} for $N_f=1,2,3$. For the evaluation of the correlation functions in Sections \ref{sec:p2} and \ref{sec:SWcontributions} higher order terms  than presented are required.
\\
\\
{\it {\bf $\boldsymbol{N_f=1}$}}\\
We can consider the $q$-expansion of $u_1(\tau,m)$ as in \cite[Eq. (4.18)]{Aspman:2021vhs}. The coefficients of the $q$-series are polynomials in the mass. While they are easily determined to all orders, the modular properties are not manifest in this expansion. They are more apparent if we consider an expansion in the mass $m$. We find for the large mass expansion of $u_1$,
\begin{equation}\label{u1to0}
\begin{split}
    u_1(\tau,m)&=u_0(\tau)-\frac{1}{16}(4u_0(\tau)^2 -3\Lambda_0^4)m^{-2} \\
    &\quad -\frac{1}{2^7}u_0(\tau)(4u_0(\tau)^2-3\Lambda_0^4)m^{-4}+\CO(m^{-6}),
    \end{split}
\end{equation}
with $u_0$ as in (\ref{nf0parameter}). We observe that this expansion for $u_1$ obviously reduces to $u_0$ in the $m\to \infty$ limit. The expression is left invariant under $\Gamma^0(4)$ transformations since $u_0$ is a Hauptmodul for $\Gamma^0(4)$. Moreover, these terms are polynomial in $u_0$, such that $\mathbb{H}/\Gamma_0(4)$ with $\text{Im}\, \tau \ll \infty$ is a good fundamental domain for this regime. Further subleading terms are given in Table \ref{Nf1_large_m} in Appendix \ref{app:large_masses}.

For $(da/du)_{N_f=1}$, we find using the definition (\ref{dadu_def}) the following large mass expansion,
\be 
\left( \frac{da}{du}\right)_1= \left( \frac{da}{du}\right)_0\left(1+\frac{u_0}{8\,m^2}+\frac{10u_0^2+3\Lambda_0^4}{256\,m^4}+\CO(m^{-6})\right),
\ee 
where $(da/du)_0$ is the corresponding period for $N_f=0$. Further subleading terms are given in Table \ref{duda_large_m}.

In the presence of background fluxes, we also need the couplings $v$ and $w$. We determine expansions for these couplings from the prepotential $\CF(a, m)$. To this end, we determine using the Matone relation (\ref{Matone})  expansions for $a_1$ in terms of small $q=e^{2\pi i\tau}$ and large $m$. The $q$-series for fixed powers of $m$ can be identified with a quasi-modular form for the group $\Gamma^0(4)$. We find for the first few terms
\be 
\begin{split}
a_1(\tau,m)&=-i\Lambda_0 \left( \textstyle{\frac{\vartheta_2(\tau)^4+\vartheta_3(\tau)^4+2E_2(\tau)}{6\vartheta_2(\tau)\vartheta_3(\tau)}} \right)\\
&\quad  - \frac{i\Lambda_0^3}{288m^2}\left( \textstyle{ \frac{7\vartheta_2(\tau)^8+7\vartheta_3(\tau)^8-10\vartheta_2(\tau)^4\vartheta_3(\tau)^4+2(\vartheta_2(\tau)^4+\vartheta_2(\tau)^4)E_2(\tau)}{\vartheta_2(\tau)^3\vartheta_3(\tau)^3} }\right) +\dots.
\end{split}
\ee 
The leading term corresponds to the one for  $N_f=0$ \cite{Moore:1997pc}. Using this expression, we can verify the identities of observables derived from the SW curve, and from the prepotential \eqref{prepotential}, such as Eq. (\ref{udlambdadF}).
For the prepotential, we use the expansion of \cite{Ohta:1996hq} up to $a^{-18}$. As a result, expansions are valid up to about $\CO(q^{3})$.

Substitution of $a_1(\tau,m)$ in the couplings $w_1$ and $v_1$ provides large mass expansions for the couplings $C=C_{11}=e^{-\pi i w_{11}}$ \eqref{cij} and $e^{2\pi i v_1}$ \eqref{Defvw}. For the coupling $C$, we find 
\be
\label{Cseries}
\begin{split}
C&=\left(\frac{\Lambda_0}{m} \right)^{4/3} \left( 1-\left(\frac{\Lambda_0}{m}\right)^2\frac{2\vartheta_2(\tau)^4+2\vartheta_3(\tau)^4+E_2(\tau)}{12\vartheta_2(\tau)^2\vartheta_3(\tau)^2}+O(m^{-4})\right).
\end{split}
\ee 
For $v_1$, we find
\be 
\label{vseries}
v_1=-\frac{1}{\sqrt{2}\,\pi}\frac{\Lambda_0}{m}\frac{1}{\vartheta_2(\tau)\vartheta_3(\tau)}+\CO(m^{-2}),
\ee 
such that
\be
\label{wseries}
e^{2\pi i v_1}=1-\frac{2i}{\sqrt{2}}\frac{\Lambda_0}{m}\frac{1}{\vartheta_2(\tau)\vartheta_3(\tau)}+\CO(m^{-2}).
\ee
Using modular transformations, these expansions also provide large mass expansions for the couplings near the strong coupling singularities. 

Alternatively, we can make expansions for small $m$. Making only the $m$-dependence manifest, $u_1(\tau,m)$, we have
\be 
u_1(\tau, m)=u_1(\tau, 0)+\frac{3\Lambda_1^3}{8u_1(\tau, 0)}m-\left(\frac{1}{3}+\frac{9}{128\, u_1(\tau, 0)^2} \right)m^2+\CO(m^3),
\ee 
with $u_1(\tau, 0)$ given in Eq. (\ref{nf1u}). Here we see that this expansion is left invariant under the monodromy group which leaves $u_1(0)$ invariant \cite{Aspman:2020lmf,Aspman:2021vhs}. On the other hand $u_1(\tau, 0)$ vanishes for $\tau=\alpha=e^{2\pi i/3}$, such that this expansion is not a good function on the full domain. It would be interesting to understand the nature of these poles, which we leave for future work.
\\
\\
{\it {\bf $\boldsymbol{N_f=2}$}}\\
For $N_f=2$ with equal masses, $\bfm=(m,m)$, we consider first the large mass expansion, relevant for the decoupling $N_f=2\to 0$. From the exact expression for the order parameter \eqref{umm}, we find
\begin{equation}
\begin{split}
    u_2(\tau,\bfm)&=u_0(\tau)-\frac18(4u_0(\tau)^2-3\Lambda_0^4)m^{-2}+\frac18u_0(\tau)(u_0(\tau)^2-\Lambda_0^4)m^{-4}\\
    &\quad  -\frac{1}{2^7}u_0(\tau)(u_0(\tau)^2-\Lambda_0^4)^2m^{-8}+\CO(m^{-12}).
\end{split}
\end{equation}
Further subleading terms are listed in Table \ref{Nf2_large_m}.

We can also consider the mass $\bfm=(0,m)$, which is relevant for the decoupling limit from $N_f=2$ to $N_f=1$. For this choice, we find
\begin{equation}
\begin{split}
    u_2(\tau,\bfm)&=u_1(\tau,0)-\frac{2^7u_1(\tau,0)^3+3^3\Lambda_1^6}{3 \cdot 2^7u_1(\tau,0)}m^{-2}\\
    &\quad -\frac{\Lambda_1^6(2^7u_1^3(\tau,0)+3^3\Lambda_1^6)}{3\cdot 2^{15}u_1(\tau,0)^3}m^{-4}+\CO(m^{-6}).
    \end{split}
\end{equation}
This expansion is again singular for $\tau=\alpha$ since $u_1(\alpha,0)=0$. 
\\
\\
{\it {\bf $\boldsymbol{N_f=3}$}}\\
We can similarly determine large mass expansions for $N_f=3$. For equal masses $\bfm=(m,m,m)$, we have
\begin{equation}
\begin{split}
    u_3(\tau,\bfm)&=u_0(\tau)-\frac{3}{16}(4u_0(\tau)^2-3\Lambda_0^4)m^{-2}\\
    &\quad +\frac{3}{2^7}u_0(\tau)(20u_0(\tau)^2-19\Lambda_0^4)m^{-4}+\CO(m^{-6}).
\end{split}
\end{equation}
Further subleading terms are given in Table \ref{Nf3_large_m}.
Finally, for the large $m$ expansion of $\bfm=(0,0,m)$, we find
\begin{equation}
\begin{split}
    u_3(\tau,\bfm)&=u_2(\tau,0)-\frac{1}{2^7}(2^6u_2(\tau,0)^2-\Lambda_2^4)m^{-2}\\
    &\quad +\frac{u_2(\tau,0)}{2^9}(2^6u_2(\tau,0)^2-\Lambda_2^4)m^{-4} \\
    &\quad -\frac{u_2(\tau,0)}{2^{19}}(2^6u_2(\tau,0)^2-\Lambda_2^4)^2m^{-8} +\CO(m^{-10}).
\end{split}
\end{equation}

\subsubsection*{Singularities for \texorpdfstring{{\bf $\boldsymbol{N_f=1}$}}{Nf1}}
We also list expansions for the strong coupling singularities. For $N_f=1$, these are the roots of the discriminant which is a cubic equation, and can be determined explicitly. Their large and small mass expansions are
\be \label{nf1_sing_exp}
\begin{split} 
u^*_1&=\left\{ \begin{array}{r} \displaystyle \hspace{2cm} -  \Lambda_0^2 -\frac{1}{16} \frac{\Lambda_0^4}{m^2}+\frac{1}{128} \frac{\Lambda_0^6}{m^4}+\CO(m^{-6}),\\
\\
\displaystyle -\frac{3}{2^{8/3}}\Lambda_1^2-\frac{1}{2^{1/3}}\Lambda_1m+\frac{m^2}{3}+\CO(m^3),
\end{array}\right.
\\
&\\
u^*_2&= \left\{\begin{array}{r}\displaystyle \qquad \Lambda_0^2 -\frac{1}{16} \frac{\Lambda_0^4}{m^2}-\frac{1}{128} \frac{\Lambda_0^6}{m^4} +\CO(m^{-6}),\\
\\
\displaystyle -e^{4\pi i/3} \frac{3}{2^{8/3}}\Lambda_1^2-e^{2\pi i/3}\frac{1}{2^{1/3}}\Lambda_1m+\frac{m^2}{3}+\CO(m^3),
\end{array}\right.
\\
&\\
u^*_3&=\left\{ \begin{array}{r} \displaystyle m^2+\frac{1}{8}\frac{\Lambda_0^4}{m^2}+\CO(m^{-6}),\\
\displaystyle -e^{2\pi i/3} \frac{3}{2^{8/3}}\Lambda_1^2-e^{4\pi i/3}\frac{1}{2^{1/3}}\Lambda_1m+\frac{m^2}{3}+\CO(m^3).
\end{array}\right.
\end{split}
\ee 
The large  mass expansions for $u^*_1$ and $u^*_2$ agree with the expansion \eqref{u1to0} for $u_0\to \pm \Lambda_0^2$.

These singularities have two special properties. First, by Vieta's formula, $u_1^*+u_2^*+u_3^*=m^2$.  For general $N_f=0,\dots, 3$, we have 
\begin{equation}\label{Vieta_sing}
\sum_{j=1}^{2+N_f}u^*_j=\sum_{j=1}^{N_f}m_j^2+\frac{\Lambda_3^2}{2^8}\delta_{N_f,3}.
\end{equation}
More generally, if $P$ is a polynomial, then $\sum_{j=1}^{2+N_f}P(u_j^*)$ is a symmetric function in the $u_j^*$, and by the fundamental theorem of symmetric polynomials can be written as a rational function of the coefficients of the polynomial $\Delta_{N_f}$.

Furthermore, for $N_f=1$ we have a special case that the curve depends only on $\Lambda_1^3$. This means that the discriminant locus $\{u_1^*,u_2^*,u_3^*\}$ can only depend on $\Lambda_1^3$, while the individual $u_j^*$ depend explicitly only on $\Lambda_1$. This symmetry forces the $u_j^*=u_j^*(\Lambda_1)$ to depend on $\Lambda_1$ in a $\mathbb Z_3$ symmetric fashion,
\begin{equation}\label{Nf1_Z3sym}
    u_j^*(\alpha\Lambda_1)=u^*_{j+1}(\Lambda_1),
\end{equation}
with $\alpha=e^{2\pi i/3}$, and the labels $j$ being modulo 3. This holds as long as the mass $m$ is finite and generic. For instance, the expansions around $m=0$ in  \eqref{nf1_sing_exp} obey this symmetry. If we pick a specific mass, for instance $m=\mad=\frac34\Lambda_1$, this symmetry is broken. Furthermore, expanding around $m=\infty$ singles out the singularity $u_3^*$, which goes as $u_3^*\sim m^2$, while $u_1^*$ and $u_2^*$ are related under $\Lambda_1\mapsto \alpha \Lambda_1$. Thus  the infinite mass expansion \eqref{nf1_sing_exp} does not obey the $\mathbb Z_3$ symmetry  \eqref{Nf1_Z3sym}.

\subsection{Behaviour near strong coupling singularities}\label{sec:behaviour_strong_c_s}

We list various general formulas near the strong coupling singularities $u_j^*$, $j=1,\dots, N_f+2$. Similar formulas have also appeared for example in \cite{Marino:1998uy}. To analyze the behaviour of $u_{N_f}$ near $u_j^*$, we introduce a ``local" order parameter $u_{N_f,j}$, which is a function of the local coupling $\tau_j$. For example for $j=1$, $u_{N_f,1}(\tau_1)=u_{N_f}(-1/\tau_1)$. We let $u_1^*$ be the monopole singularity for $\tau\to 0$, $u_2^*$ the dyon singularity for $\tau\to 2$, and $j\geq 3$ label the additional hypermultiplet singularities.

From the  invariant $\CJ$ of the SW curve, we deduce that near a strong coupling singularity $u_j^*$ of Kodaira type $I_1$, $u_{N_f,j}$ reads
\be 
u_{N_f,j}(\tau_j)=u_j^*+(-1)^{N_f} \Lambda_{N_f}^{2N_f-8}\frac{12^3\,g_2(u^*_j)^3}{\prod_{\ell\neq j} (u^*_j-u^*_\ell)}\,q_j+\CO(q_j^2),\qquad q_j=e^{2\pi i \tau_j},
\ee 
where $u_{N_f,j}(\tau_j)\to u_j^*$ as $\tau_j\to i \infty$ (see \cite{Aspman:2021vhs} for details). The product in the denominator has $N_f+1$ terms.

We define the coupling $(da/du)_{N_f,1}$ near $u_1^*$ in terms of the weak coupling period  $(da/du)_{N_f}$ as
\be
\left( \frac{da}{du}\right)_{N_f,1}(\tau_1)=\tau_1^{-1} \left( \frac{da}{du}\right)_{N_f}(-1/\tau_1),
\ee 
and analogously near the other singularities. 
From \eqref{dadu_def} it follows that  near the strong coupling singularity $u_j^*$, the local expansion $(da/du)_{N_f,j}$  reads
\be 
\label{daduj}
\begin{split}
\left( \frac{da}{du}\right)_{N_f,j}(\tau_j)&=\frac{1}{6}\sqrt{\frac{g_2(u_j^*)}{g_3(u_j^*)}}+\CO(q_j)\\
&=\frac{1}{2\sqrt{3}}\,s_j^{1/2}\,g_3(u_j^*)^{-1/6}+\CO(q_j),
\end{split}
\ee 
where we introduced the phase $s_j$ as
\be
s_j\coloneqq s_j(\bfm)\coloneqq\frac{g_2(u_j^*(\bfm))}{3g_3(u_j^*(\bfm))^{\frac23}}.
\ee
The period $\frac{da}{du}$ evaluates thus to a constant at any $I_n$ singularity $u_j^*$.

The $s_j$ are locally constant functions, with phase transitions at AD points. For $N_f=1$ for instance, the phase changes depending on whether the ratio of $g_2$ and $g_3$ is calculated as a series with $m<\mad$ or $m>\mad$. More generally, this function is locally constant on $\mathbb R^{N_f}\setminus \CL^{\text{AD}}_{N_f}$, where $\CL^{\text{AD}}_{N_f}$ is the locus in mass space where AD points emerge on the $u$-plane (see \cite[Section 2.3]{Aspman:2021vhs}). 
 This is because $g_2$ and $g_3$ are strictly nonzero away from the AD locus $ \CL^{\text{AD}}_{N_f}$, and $s_j(\bfm)^3=1$ by definition. Thus for any $j$, 
\begin{equation}\label{s_j_phases}
    s_j: \mathbb R^{N_f}\setminus \CL^{\text{AD}}_{N_f}\longrightarrow \mathbb Z_3=\{1,\alpha,\alpha^2\}
\end{equation}
is a smooth function on the finite union $\mathbb R^{N_f}\setminus \CL^{\text{AD}}_{N_f}=\bigcup_i U_i$ of open sets, valued in $\mathbb Z_3$ and thus locally constant. 
For $N_f=2$ for instance, this partitions the real mass space into three regions, on which the phases $s_j$ \eqref{s_j_phases} are constant. In Fig. \ref{fig:DDlocus2} of Part I, these are the three regions separated by the AD locus (blue).
We list values of the $s_j$ in the  Tables \ref{sj_phasesLight} and \ref{sj_phasesHeavy} for $N_f=0,1$ and equal mass $N_f=2$.

\begin{table}
\parbox{.45\linewidth}{
\centering
\begin{tabular}{|l|r|r|r|}
\hline 
\diagbox[width=1.3cm, height=1cm]{$j$}{ $N_f$} & 0 & 1 & 2   \\
\hline 
$s_1$ & 1 & 1 & 1 \\
$s_2$ & $\alpha^2$ & $\alpha$ & 1 \\
$s_3$ & & $\alpha^2$ & $\alpha^2$\\
$s_4$ & &  & $\alpha^2$ \\
\hline
\end{tabular}
\caption{Table with values of $s_j$, $j=1,\dots, N_f+2$ for $N_f=0,1,2$, for small equal masses, $\alpha=e^{2\pi i/3}$.}
\label{sj_phasesLight}
}
\hfill
\parbox{.45\linewidth}{
\centering
\begin{tabular}{|l|r|r|r|}
\hline 
\diagbox[width=1.3cm, height=1cm]{$j$}{ $N_f$} & 0 & 1 & 2   \\
\hline 
$s_1$ & 1 & 1 & 1 \\
$s_2$ & $\alpha^2$ & $\alpha^2$ & $\alpha^2$ \\
$s_3$ & & 1 & 1\\
$s_4$ & &  & 1 \\
\hline
\end{tabular}
\caption{Table with values of $s_j$, $j=1,\dots, N_f+2$ for $N_f=0,1,2$, for large equal masses, $\alpha=e^{2\pi i/3}$.}
\label{sj_phasesHeavy}
}
\end{table}

The local coordinate $a_j=a-m_j/\sqrt{2}$ vanishes near the  singularity $u_j^*$. From \eqref{Matone} and \eqref{dadtau}, we obtain
\be \label{dadtaun=1}
\frac{da_j}{d\tau_j}= (-1)^{N_f+1} \Lambda_{N_f}^{2N_f-8} \frac{16\pi i}{4-N_f} \frac{2^3 g_2(u^*_j)^3}{P_{N_f}^{\rm M}(u^*_j)} \left(\frac{g_2(u^*_j)}{g_3(u^*_j)} \right)^{3/2}\,q_j+\CO(q_j^2).
\ee 
Therefore $a_j$ behaves as
\be \label{aj_I1}
\begin{split}
a_j&=(-1)^{N_f+1} \Lambda_{N_f}^{2N_f-8} \frac{64}{4-N_f} \frac{g_2(u^*_j)^3}{P_{N_f}^{\rm M}(u^*_j)} \left(\frac{g_2(u^*_j)}{g_3(u^*_j)} \right)^{3/2}\,q_j+\CO(q_j^2)\\
&=(-1)^{N_f+1} \Lambda_{N_f}^{2N_f-8}\frac{64}{4-N_f} s_j^{3/2} \frac{\left(27g_3(u^*_j)\right)^{3/2}}{P_{N_f}^{\rm M}(u^*_j)} \,q_j+\CO(q_j^2).
\end{split}
\ee 
Note that since $s_j$ is a third root of unity, $s_j^{3/2}=\pm 1$.

We can  generalise this to $u_j^*$ being an $I_n$ singularity, where for SU(2) $\CN=2$ SQCD the four cases  $n=1,2,3,4$ are possible.  Then we consider the expansions around $u_j^*$. The discriminant reads
\begin{equation}\begin{aligned}\label{discriminant_exp}
    \Delta_{N_f}(u)&=\frac{1}{n!}\Delta^{(n)}(u_j^*)(u-u_j^*)^n+\dots,
\end{aligned}\end{equation}
where $\Delta^{(n)}(u_j^*)=n!\prod_{l\neq j}(u_j^*-u_l^*)$.
The expansion of $u(\tau)$ we can read off from 
\begin{equation}\label{CJ}
    \CJ(u)=N_{N_f}\frac{g_2(u)^3}{\Delta_{N_f}(u)},
\end{equation}
with $N_{N_f}=12^3(-1)^{N_f}\Lambda_{N_f}^{2(N_f-4)}$. It is 
\begin{equation}\label{u_j_exp}
u=u_j^*+\left(n!N_{N_f}\frac{g_2(u_j^*)^{3}}{\Delta^{(n)}(u_j^*)}\right)^{\frac 1n}q_j^{\frac1n}+\dots,
\end{equation}
where we used \eqref{discriminant_exp}. Note that the coefficient of $q_j^{\frac{1}{n}}$ has an ambiguity by an $n$'th root of unity. We refrain from introducing another symbol for this ambiguity, but this should be kept in mind here and in the formulae below. The exact solution of the theory fixes the ambiguity, such as for $N_f=2$ with equal masses.

The discriminant $\Delta_{N_f}$ has leading term $q_j^1$ near each strong coupling singularity $u_j^*$, which can be read off form \eqref{CJ},
\begin{equation}
    \Delta_{N_f}=N_{N_f}g_2(u_j^*)^3 q_j+\dots. 
\end{equation}
This holds for any value of $n$.
In \cite{Aspman:2021vhs} we showed that $\Delta_{N_f}/P^{\text M}_{N_f}=\widehat \Delta_{N_f}/\widehat P^{\text M}_{N_f}$, where if $u_j^*$ is an $n$-th order zero of $\Delta_{N_f}$ of multiplicity $n$, then its multiplicity in $\widehat \Delta_{N_f}$ is 1. In $P^{\text M}_{N_f}$, it has multiplicity $n-1$, and therefore it is not a root of  $\widehat P^{\text M}_{N_f}$. Therefore, we can write $P^{\text M}_{N_f}(u)=(u-u_j^*)^{n-1}\widehat P^{\text M}_{N_f}(u)$ as a polynomial. 

Finally, the period $\frac{da}{du}$ evaluates to a constant for any $I_n$ singularity $u_j^*$. Using Matone's relation \eqref{Matone}, we then compute
\begin{equation}\label{dadtau1}
    \frac{da_j}{d\tau_j}=-\frac{2\pi i}{27(4-N_f)}\frac{N_{N_f}^{\frac1n}}{(n!)^{\frac{n-1}{n}}}\frac{g_2(u_j^*)^{\frac 3n+\frac 32}}{g_3(u_j^*)^{\frac32}}\frac{\Delta_{N_f}^{(n)}(u_j^*)^{\frac{n-1}{n}}}{\widehat P^{\text M}_{N_f}(u_j^*)}q_j^{\frac 1n}+\dots. 
\end{equation}
 This agrees exactly with the earlier result \eqref{dadtaun=1} for $n=1$.
Instead of using Matone's relation, we can also calculate $\frac{du}{d\tau}$ from \eqref{u_j_exp} directly. This gives a simpler result,
\begin{equation}\label{dadtau2}
    \frac{da_j}{d\tau_j}=\frac{2\pi i}{6n}(n!)^{\frac 1n}N_{N_f}^{\frac1n}\frac{g_2(u_j^*)^{\frac 3n+\frac 12}}{g_3(u_j^*)^{\frac12}}\frac{1}{\Delta_{N_f}^{(n)}(u_j^*)^{\frac1n}}q_j^{\frac 1n}+\dots.
\end{equation}
Identifying both leading terms \eqref{dadtau1} and \eqref{dadtau2}, we find the interesting relation
\begin{equation}\label{phat_Deltan}
    \widehat P^{\text M}_{N_f}(u_j^*)=-\frac{2}{9(4-N_f)(n-1)!}\frac{g_2(u_j^*)}{g_3(u_j^*)}\Delta_{N_f}^{(n)}(u_j^*).
\end{equation}
We checked this relation for various mass configurations with $n=1,2,3,4$. 
It is important to stress that it only holds on the discriminant locus $\Delta_{N_f}(u_j^*)=0$. 

Using \eqref{dadtau2} and eliminating $g_2(u_j^*)$ as above, this gives for the local coordinate
\begin{equation}\begin{aligned}\label{a_j_local}
    a_j&=\frac{(3s_j)^{\frac3n+\frac12}}{6}(n!\, N_{N_f})^{\frac1n}\frac{g_3(u_j^*)^{\frac 2n-\frac 16}}{\Delta_{N_f}^{(n)}(u_j^*)^{\frac1n}}\, q^{\frac 1n}+\dots \\    
    &=\frac{(3s_j)^{\frac 12}}{6 g_3(u_j^*)^{\frac 16}}\left(n!\,N_{N_f}\frac{g_2(u_j^*)^{3}}{\Delta^{(n)}(u_j^*)}\right)^{\frac 1n}q_j^{\frac1n}+\dots
\end{aligned}\end{equation}

This agrees precisely with \eqref{aj_I1}  for $n=1$.
It also agrees with an explicit calculation of the asymptotics at the $I_2$ singularity in $N_f=2$ with  $\bfm=(m,m)$, again keeping in mind the $n$'th root of unity ambiguity. 
The $n$-dependence of the leading term in \eqref{a_j_local} is in fact the same as that of $u-u_j^*$, \eqref{u_j_exp}, as can be seen from the second line:  Up to the $g_3$ prefactor, $a_j\sim u-u_j^*$. 

This concludes our analysis of Coulomb branch functions near strong coupling singularities. Such expansions are relevant for the contributions of the  singularities to the $u$-plane integral as well as the SW functions. For the latter,  in some cases subleading corrections are required, for instance for the SW contributions of $I_n$ singularities, as we discuss in Subsection \ref{sec:SWI2}. These corrections can in principle be determined by a perturbative analysis similar to the above. In some examples, exact expressions of CB functions are available, and we can use the previous calculation for consistency checks.

\subsection{Behaviour near branch points} \label{sec:behaviour_bp}

The fundamental domain for $N_f$ generic masses contains $N_f$ pairs of branch points, connected by branch cuts \cite{Aspman:2021vhs}. In section \ref{sec:integrationFD} of part  I, we demonstrated that  branch points do not contribute to the $u$-plane integral, based on the assumption \eqref{branch_point_singularity} that the integrand is sufficiently regular near a given branch point.

In this subsection, we provide explicit evidence for this assumption, in the rather generic example of $N_f=2$ with equal masses $m$. For this configuration, the full integrand (without the couplings to the $\spinc$ structure) can be expressed as a modular form, which facilitates the analysis. We assume in the following that  $m>0$ with $m\neq \mad=\frac{\Lambda_2}{2}$, such that we are strictly away from the AD locus where the branch points collide and annihilate each other. 

After the exact analysis of the equal mass case, we then formulate the asymptotics of the general integrand near a generic branch point, and prove that the assumption \eqref{branch_point_singularity} is always satisfied and thus branch points never contribute to the integrals.

\subsubsection*{Branch points of $u$}
Consider the equal mass case in $N_f=2$. We list the relevant modular forms in Appendix \ref{sec:equal_mass_expansion}.
In \cite{Aspman:2021vhs} we found that the effective coupling of the branch point $\ubp=2m^2-\frac{\Lambda_2^2}{8}$ is determined by $f_{2\text{B}}(\tbp)=0$, such that \footnote{The definition of $f_{2\text{B}}$ differs slightly from \cite{Aspman:2021vhs}.}
\begin{equation}
\frac{u(\tbp)}{\Lambda_2^2}=-\frac{f_{2\text{B}}(\tbp)+16}{128}, \qquad f_{2\text{B}}(\tau)=\frac{16\jt_4(\tau)^8}{\jt_2(\tau)^4\jt_3(\tau)^4}.
\end{equation}
Then $u(\tbp)=\ubp$ is solved by 
\begin{equation}\label{f2btbp}
\tbp\in f_{2\text{B}}^{-1}\left(-2^8\tfrac {m^2}{\Lambda_2^2}\right)\cap \CF_2(m,m),
\end{equation}
that is, we find all preimages of the branch point on the fundamental domain $\CF_2(m,m)$.
Since $f_{2\text{B}}$ is a Hauptmodul for the index 3 group $\Gamma_0(2)$, inside the index 6 domain $\CF_2(m,m)$ there are consequently two distinct points $\tbp$. Using \eqref{f2btbp}, we can eliminate all but one of the Jacobi theta functions $\jt_j$ from \eqref{nf2mmquantities} to find
\begin{equation}\label{dadutbp}
\frac{da}{du}(\tbp)=-\frac{\im}{4}\frac{\jt_4(\tbp)^2}{\sqrt m\,(m^2-\mad^2)^{\frac14}}.
\end{equation}
Since $\jt_4$ is holomorphic and nowhere vanishing on $\mathbb H$, $\frac{da}{du}(\tbp)$ is never zero or infinite. 
   
From Matone's relation  \eqref{Matone}
we see that $\frac{du}{d\tau}$ diverges as $\CO\left((u-\ubp)^{-1}\right)$, since $\widehat\Delta(\tbp)=4m^2(m^2-\mad^2)^2$ remains finite. For $\tau$ near $\tbp$, we can integrate this equation to find
\begin{equation}
\frac{du}{d\tau}(\tau)=e^{\frac{\pi\im}{4}}\sqrt {\pi m}(m^2-\mad^2)^{\frac34}\frac{\jt_4(\tbp)^2}{\sqrt{\tau-\tbp}}+\CO(1)
\end{equation}
for $\tau\to\tbp$.
This is sufficient to study the $u$-plane integrand  near $\tbp$.
From $\frac{da}{d\tau}=\frac{da}{du}\frac{du}{d\tau}$ and $\sigma+\chi=4$, we have that $\nu=\frac{du}{d\tau}\left(\frac{da}{du}\right)^{-1+\frac\sigma2}\Delta^{\frac\sigma8}$. The discriminant $\Delta(\tbp)=4m^2(m^2-\mad^2)^3$ is regular and nonzero. Thus the power series of $\nu$ at $\tbp$ reads\footnote{We ignore the constants $\alpha_{N_f}$ and $\beta_{N_f}$  here since they are irrelevant to the analysis.}
\begin{equation}
\nu(\tau)=(-1)^{\frac{3+7\sigma}{4}}2^{2-\frac{3\sigma}{4}}\sqrt\pi \,m(m^2-\mad^2)^{1+\frac\sigma4}\frac{\jt_4(\tbp)^\sigma}{\sqrt{\tau-\tbp}}+\CO(1).
\end{equation}

Regarding the photon path integral, let us assume that we can express $ \Psi(\tau,\bar\tau,\bfz,\bar\bfz)=\partial_{\bar\tau}\widehat G(\tau,\bar\tau,\bfz,\bar\bfz)$, 
then the function 
\begin{equation}
\widehat h(\tau,\bar\tau)=\nu(\tau) \widehat G(\tau,\bar\tau,\bfz,\bar\bfz) \,e^{2pu(\tau)/\Lambda_{N_f}^2+\bfx^2 G_{N_f}(\tau)}
\end{equation}
provides an anti-derivative of the integrand, as required in Section \ref{sec:integrationFD}. 
The other factors of the integrand are regular: Due to \eqref{dadutbp}, $\bfz(\tbp)$ and thus $\widehat G(\tbp,\bfz(\tbp))$ are regular. The contact term \eqref{contactterm} becomes a constant  as well, for the same reason. Thus, up to constants, we find
\begin{equation}\label{tbp_expansion_Imu}
\widehat h(\tau)\sim  m(m^2-\mad^2)^{1+\frac\sigma4}\widehat G(\tbp,\bfz(\tbp))e^{2p\ubp+\bfx^2G(\tbp)}\frac{\jt_4(\tbp)^\sigma}{\sqrt{\tau-\tbp}}+\CO(1),
\end{equation}
in the notation \eqref{generaluplaneintegral}. This shows that the integrand $\widehat h$ diverges at $\tbp$, however in a subcritical fashion  $\sim (\tau-\tbp)^{-\frac12}$. Equation \eqref{tbp_expansion_Imu} also suggests that the integrand is not single-valued at $\tbp$. However, a small circular path around $\tbp$ in $\CF_2(m,m)$ describes a curve of angle $4\pi$ or winding number $2$, as is clear for example from Fig. \ref{fig:nf2domain}. Since $\widehat h(\tau)$ has a Laurent series in $\sqrt{\tau-\tbp}$, it is single-valued around such a path. Section \ref{sec:integrationFD} in part I then guarantees that the branch point does not contribute to the $u$-plane integral.

\subsubsection*{Branch points of $\frac{da}{du}$}

Another potential source of branch points is the period $\frac{da}{du}$.
Even if the masses $\bfm$ are such that $u(\tau)$ is modular for a congruence subgroup, $\frac{da}{du}$ is  in general not modular. This is due to the square root $\frac{da}{du}\sim \sqrt{\frac{g_2}{g_3}\frac{E_6}{E_4}}$ and the possible roots in $u$. Let us study if the square root in $\frac{da}{du}$ introduces another branch point, in the example of $\bfm=(m,m)$.  From \eqref{nf2mmquantities} we find that any solution to $\jt_2^4+\jt_3^4+\sqrt{f_2}=0$ is a branch point of $\frac{da}{du}$. Necessarily but not sufficiently, $(\jt_2^4+\jt_3^4)^2=f_2$, whose only solution is in fact independent on $\tau$, it is $m=\mad$. Since we exclude the case $m=\mad$ to study the branch points, the denominator of $\frac{da}{du}$ is never zero in $\mathbb H$. This agrees with the observation \cite{Aspman:2021vhs} that zeros of $\frac{du}{da}$ in $\mathbb H$ are AD points, since there are none on $(0,\infty)\backslash \{\mad\}$. The other radicand in $\frac{da}{du}$ is $f_2$, whose zeros are studied above. From \eqref{dadutbp} we know that $\jt_2^4(\tbp)+\jt_3^4(\tbp)$ is nonzero, otherwise $\frac{da}{du}$ would have a pole at $\tbp$. We have shown that $\tbp$ is the only branch point (of a square root) for $\frac{da}{du}$, i.e. $\frac{da}{du}$  has a regular series in $\sqrt{\tau-\tbp}$.

\subsubsection*{Branch points of the integrand}
With the intuition from the equal mass case, we can formulate the behaviour of the general $u$-plane integrand around a branch point. As pointed out in \cite[Section 3.3]{Aspman:2021vhs}, for $N_f$ generic masses there are $N_f$ pairs of branch points connected by  branch cuts. The branch points correspond in all cases to a \emph{square root} of $u(\tau)$. Let us assume that $\ubp=u(\tbp)$ is a branch point that is not simultaneously an AD point.\footnote{This only excludes the case in $N_f=3$ where the branch point is also an AD point of type $IV$. Our argument works regardless, as the integrand becomes less singular in that case.}
The expansion of $u(\tau)$ at $\tau=\tbp$ thus reads
\begin{equation}
    u(\tau)=\ubp+ c_{\text{bp}}(\tau-\tbp)^{\frac12}+\dots.
\end{equation}
Then it is clear that 
\begin{equation}
    u'(\tau)\sim(\tau-\tbp)^{-\frac12}.
\end{equation}
On the other hand, from $\eta^{24}\sim \left(\frac{da}{du}\right)^{12}\Delta_{N_f}$ \cite{Aspman:2021vhs} it is clear from $\eta(\tbp)\neq 0$ and $\Delta_{N_f}(\ubp)\neq 0$ that $\frac{da}{d\tau}(\tbp)\neq 0$ is a nonzero constant. Thus $\frac{da}{d\tau}=\frac{da}{du}\frac{du}{d\tau}$ has the same asymptotics at $\tbp$ as $\frac{du}{d\tau}$. 
Excluding the couplings to the background fluxes, from \eqref{measurefactornf} it is then clear that
\begin{equation}
    \nu(\tau)\sim (\tau-\tbp)^{-\frac12}.
\end{equation}
Since $\frac{du}{da}(\tbp)\neq 0$ we have $\bfz(\tbp)\neq0$ and we thus expand the non-holomorphic modular form  $ \widehat G$ at a regular point. It can accidentally vanish, but by varying $\tbp$ slightly the value is generically nonzero. In either case, we have 
\begin{equation}\label{bp_bound}
    \widehat h(\tau,\bar\tau)\sim (\tau-\tbp)^n, \qquad n\geq -\frac12.
\end{equation}
This justifies the assumption made in Section \ref{sec:integrationFD} and demonstrates that branch points never contribute to $u$-plane integrals. 

The bound $n\geq -\frac12$ is not sharp, indeed, as long as $n>-1$ the branch point will not contribute. Consider for instance a theory which includes  branch points of a $k$-th root of $u$, with $k\in\mathbb N$. In that case, there is no contribution either, since $n=\frac1k-1>-1$. 

Finally, since we lack modular expressions for the extra couplings  $v_j$ and $w_{jk}$, we leave it for future work to determine whether those couplings have branch points or singularities.

\section{\texorpdfstring{$u$}{u}-plane integrals and (mock) modular forms}\label{sec:uplane_int_mock}
We proceed by discussing the evaluation of the $u$-plane integrals near the different special points. 
Below we consider  $N_f$ generic masses, and in particular $\bfm \not\in \CL_{N_f}^{\text{AD}}$. To explicitly evaluate the $u$-plane integral we make use of the theory of (mock) modular forms \cite{Donaldson90,Malmendier:2012zz,Griffin:2012kw,Korpas:2017qdo, Korpas:2019cwg}, as discussed in Section \ref{sec:integrationFD}. As before, we specialise to four-manifolds with $(b_1,b_2^+)=(0,1)$.

\subsection{Fundamental domains}
\label{sec:funddom}
In the vicinity of special points, the fundamental domains simplify. We will consider here two cases, namely the large mass expansion and the small mass expansion. 

Verifying the IR decoupling limit \eqref{IR-decoupling} through the $u$-plane integral requires  a  precise definition of the integral \eqref{P2:generaluplaneintegral}. Specifically, the integrand as well as the integration domain must be determined in a region which is compatible with the decoupling limit. 
As found in \cite{Aspman:2021vhs}, when $m_{N_f}\to \infty$ there is always a branch point $\tbp$ whose imaginary part $\ybp=\ima(\tbp)$ grows as a function of $m_{N_f}$.  If $m_{N_f}$ is large, we can take expansions of the Coulomb branch parameters in two regions. For $\CN=2$ SQCD, implicit, yet exact, expressions for $\tbp$ have been determined in \cite{Aspman:2021vhs} (see \eqref{f2btbp} for an example). In the region with $\ima(\tau)>\ybp$, the order parameter $u(\tau)$ has periodicity $4-N_f$. For $\ima(\tau)<\ybp$ rather, the periodicity is that of the decoupled theory, which is $4-(N_f-1)$. Since in the limit $m_{N_f}\to\infty $ the periodicity at $u=\infty$ is $4-(N_f-1)$, we need to choose a cutoff $Y_-<\ybp$ for the fundamental domain $\CF(\bfm_{N_f})$ in order to find the consistent limit. 

In order to integrate over the whole fundamental domain, we must take the cutoff $Y\to\infty$. If we choose $\ybp> Y\to\infty$, then necessarily $m_{N_f}\to\infty$. In other words, choosing the cutoff $Y<\ybp$ is only a consistent choice in the decoupling limit. For a finite mass $m_{N_f}$, we can choose $Y>\ybp$, such that for $Y\to\infty$ we do not cross the branch point(s). This is illustrated in Fig. \ref{fig:nf1cutoff} for the example of the decoupling in $N_f=1$. In making a large mass expansion, we will assume that $1/m$ is infinitesimally small, such that $\ybp\to \infty$, and disappears from the fundamental domain. 

\begin{figure}[ht]\centering
	\includegraphics[scale=1.3]{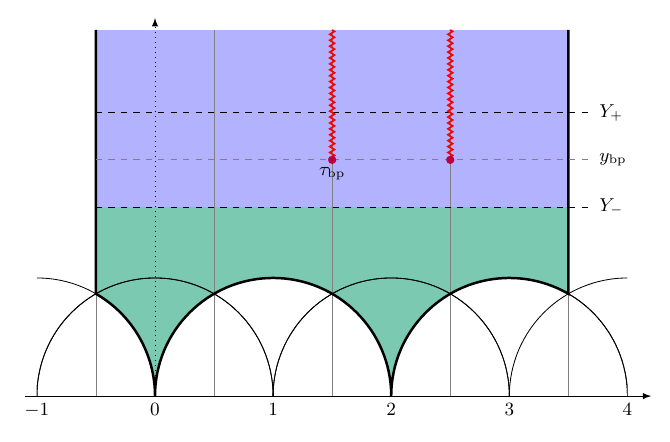}
	\caption{For the definition of the $u$-plane integral over the fundamental domain $\CF(m)$ in $N_f=1$ with a large mass $m$, there are two different choices $Y_\pm$ for the cutoff, with $Y_+>\ybp$ or $Y_-<\ybp$ and $\ybp=\ima(\tbp)$ the imaginary part of the branch point. The integration requires taking the limit $Y\to\infty$. The green region is the one we choose for the decoupling limit. }\label{fig:nf1cutoff}
\end{figure}

Let us denote the regulated fundamental region by
\begin{equation}
    \CF_Y(\bfm)=\{\tau\in\CF(\bfm)\, |\, \ima(\tau)<Y\}.
\end{equation}
We can choose two different cutoffs $Y_\pm$, with $Y_+>\ybp$ or $Y_-<\ybp$, which serve two different purposes. 
For any finite $\bfm$, we define the integral \eqref{P2:generaluplaneintegral} as  (we suppress most variables for clarity) 
\begin{equation}
\Phi_\bfmu(\bfm)=\lim_{Y_+\to\infty}\int_{F_{Y_+}(\bfm)}I_\bfmu(\tau,\bar\tau),
\end{equation}
and renormalise it as described in \cite{Korpas:2019ava}. As reviewed in Section \ref{sec:integrationFD}, the contribution from the arc at ${\rm Im}(\tau)=Y\to \infty$ is the constant term of the holomorphic part of the anti-derivative $h(\tau)$,\footnote{We use the notation also developed in \cite{kim2023}.} 
\be 
\Phi^{\infty}_\bfmu(\bfm)={\rm Coeff}_{q^0}[h(\tau,\bfm)].
\ee 

In the decoupling limit rather, we make an expansion in $1/m_{N_f}$ of the integrand, and integrate over $\CF_{Y_-(\bfm)}$ with $Y_-\to \infty$ term by term in the expansion. This results in 
\be 
\label{PhiLargeM}
\Phi^{\infty}_\bfmu(\bfm)={\rm Coeff}_{q^0}\,{\rm Ser}_{{m_{N_f}^{-1}}}[h(\tau,\bfm)],\qquad \bfm=(m_{1},\dots,m_{N_f-1},m_{N_f}).
\ee 
One can similarly make mass expansions near other special points in mass space, such as distinct small masses, 
\be 
\label{Phi_specificM}
\Phi^{\infty}_\bfmu(\bfm)={\rm Coeff}_{q^0}\,{\rm Ser}_{\bfm}[h(\tau,\bfm)]
\ee 
or AD mass $\bfm\in\CL^{\text{AD}}_{N_f}$. We will find that these prescriptions agree in many examples. However, in some cases they also lead to different results. To avoid cluttering, we have chosen not to add additional labels to $\Phi^{\infty}_\bfmu(\bfm)$ to specify the evaluation prescription. We will rather specify this when we present the results.

 For $N_f>1$, there are of course generally $N_f>1$ branch points $\tau_{\text{bp},j}$, $j=1,\dots, N_f$. The above analysis then proceeds with $Y_+ > \max_j\text{Im}\, \tau_{\text{bp},j}$ and $Y_- < \min_j\text{Im}\, \tau_{\text{bp},j}$.

\subsection{Factorisation of \texorpdfstring{$\Psi_{\bfmu}^J$}{Psi}}\label{factorisation_psi}
One can split the study of $u$-plane integrals into two classes of four-dimensional manifolds, depending on their intersection form being  even or odd (see Section \ref{4manifolds} for relevant aspects of four-manifolds). We can use the analysis of \cite[Sec. 5]{Korpas:2019cwg}  without much alteration and we simply outline the rough ideas. For simplicity, we only consider the odd lattices, and refer to \cite{Korpas:2019cwg} for the case of even lattices. The first important step is to factorise the indefinite theta function appearing in the $u$-plane integrand.  For odd intersection form we can diagonalise the  quadratic form to 
\begin{equation}\label{oddsignature}
Q=\langle 1\rangle \oplus (b_2-1)\langle-1\rangle.
\end{equation} 
This implies that the components $K_j$ of a characteristic vector $K$ are odd for all $j=1,\dots,b_2$.\footnote{Proof: We have that $\bfk^2+B(K,\bfk)=\sum_{j=1}^{b_2}(2\delta_{j,1}-1)k_j(k_j+K_j)$. Let $k_j=0$ for $j\neq m$. Then $k_m(k_m+K_m)\in2\mathbb Z$. If $k_m$ is odd, then $k_m+K_m$ must be even and therefore $K_m$  is odd.}
The lattice $L$ can be factorised as $L=L_+\oplus L_-$, where $L_+$ is a one-dimensional positive definite lattice and $L_-$ is a $(b_2-1)$-dimensional negative definite lattice. The polarisation corresponding to this decomposition is $J=(1,\boldsymbol 0)$, where $\boldsymbol{0}$ is the $(b_2-1)$-dimensional zero-vector. We will also employ the notation $\bfk=(k_1,\bfk_-)\in L$ where $k_1\in\BZ+\mu_1$, $\bfk_-\in L_-+\bfmu_-$ and $\bfmu=(\mu_1,\bfmu_-)$.

The sum over fluxes (\ref{psi}) now factorises as \cite[Eq.(5.45)]{Korpas:2019cwg}
\begin{equation}
\Psi_\bfmu^J(\tau,\bar\tau,\bfrho,\bar\bfrho)=-\im (-1)^{\mu_1(K_1-1)}f_{\mu_1}(\tau,\bar\tau,\rho_+,\bar\rho_+)\Theta_{L_-,\bfmu_-}(\tau,\bfrho_-),
\end{equation}
where
\begin{equation}
\begin{aligned}\label{ThetaL-}
f_\mu(\tau,\bar\tau,\rho,\bar\rho)=& \, \im e^{\pi \im \mu} e^{-2\pi yb^2}\sum_{k\in\BZ+\mu}\partial_{\bar{\tau}}\left(\sqrt{2y}(k+b)\right)(-1)^{k-\mu}\bar{q}^{k^2/2}e^{-2\pi i\bar\rho k}, \\
\Theta_{L_-,\bfmu_-}(\tau,\bfrho_-)=&\sum_{\bfk_-\in L_-+\bfmu_-}(-1)^{B(\bfk_-,K_-)}q^{-\bfk_-^2/2}e^{-2\pi i B(\bfrho_-,\bfk_-)}.
\end{aligned}
\end{equation}
If the elliptic variable $\rho$ is zero, $\Psi_\bfmu$ vanishes unless $\bfmu=(\tfrac12,\bfnull)\mod \mathbb Z^{b_2}$ \cite{Korpas:2019cwg}. In that case, it evaluates to
\begin{equation}\label{psi120odd}
\Psi_{(\frac12,\bfnull)}(\tau,\bar\tau)=\frac{\im^{K_1+1}}{2\sqrt{2y}} \overline{\eta(\tau)^3} \jt_4(\tau)^{b_2^-}.
\end{equation}
We will also need the dual theta series
\begin{equation}
\begin{aligned}
\Theta_{D,L_-,\bfmu_-}(\tau,\bfrho_{D,-})=\sum_{\bfk_-\in L_-+K_-/2}(-1)^{B(\bfk_-,\bfmu_-)}q^{-\bfk_-^2/2}e^{-2\pi iB(\bfrho_{D,-},\bfk_-)}.
\end{aligned}
\end{equation}

In order to write the integrand as a total anti-holomorphic derivative  one can use the theory of mock modular forms and Appell--Lerch sums \cite{Moore:1997pc, Malmendier:2012zz, Korpas:2017qdo, Korpas:2019cwg, Manschot:2021qqe}. 
An important constraint is that the anti-derivative must be a well-defined function on the fundamental domain for $\tau$, and thus transform appropriately under duality transformations of the theory.

As discussed in Section \ref{weakcoupexp}, in the large mass limit the duality group is $\Gamma^0(4)$, such that we can use results for the $N_f=0$ theory \cite{Korpas:2019cwg}. We write $f_\mu(\tau,\bar\tau,\rho,\bar\rho)$ as 
\be 
f_\mu(\tau,\bar\tau,\rho,\bar\rho)=\frac{\partial \widehat F_{\mu}(\tau,\bar \tau,\rho,\bar \rho)}{\partial \bar \tau},
\ee 
with $\widehat F_\mu$ a specialisation of the Appell--Lerch sum $M$ and its completion, which we define in Appendix \ref{app:mock}.
The holomorphic parts of $\widehat{F}_{\mu}$ are given by \cite[Eqs.(5.51) and (5.53)]{Korpas:2019cwg}
\begin{equation}
\begin{aligned}\label{f12function}
F_{\frac{1}{2}}(\tau,\rho)=&-\frac{w^{\frac{1}{2}}}{\vartheta_4(\tau)}\sum_{n\in\BZ}\frac{(-1)^nq^{n^2/2-\frac{1}{8}}}{1-wq^{n-\frac{1}{2}}}, \\
F_{0}(\tau,\rho)=&\frac{\im}{2}-\frac{\im}{\vartheta_4(\tau)}\sum_{n\in\BZ}\frac{(-1)^nq^{n^2/2}}{1-wq^n},
\end{aligned}
\end{equation}
where $w=e^{2\pi i \rho}$.

To evaluate the contributions from the strong coupling cusps, we introduce furthermore the ``dual" functions $F_{D,\mu}$ \cite[Equations (5.63) and (5.64)]{Korpas:2019cwg},
\begin{equation}\label{f12functionD}
F_{D,\mu}(\tau,\rho)=-\frac{w^{1/2}}{\vartheta_2(\tau)}\sum_{n\in\BZ}\frac{q^{n(n+1)/2}}{1-(-1)^{2\mu}w q^n}.
\end{equation}
We note that $F_{\frac{1}{2}}$ has a finite limit for $\rho \to 0$, 
\be 
F_{\frac{1}{2}}(\tau)=\lim_{\rho\to 0}F_{\frac{1}{2}}(\tau,\rho).
\ee
If the subscript $\mu$ is clear from the context, we will occasionally drop it and denote $F=F_{\frac12}$.
The first terms of the $q$-series are
\be 
F_{\frac{1}{2}}(\tau)=2q^{\frac38}\left(1+3q^{\frac12}+7q+14q^{\frac32}+\CO(q^4)\right).
\ee
This $q$-series is proportional to the McKay-Thompson series $H^{(2)}_{1A,2}$ \cite{Cheng:2012tq}. See also the OEIS sequence A256209. 

The duality groups are different for small masses, or other special points in the mass space. For such cases, other anti-derivatives are required. The most widely applicable anti-derivative will transform under $\slz$. As we review in detail in Appendix \ref{app:mock}, anti-derivatives are not unique one can add an integration constant, i.e. a weakly holomorphic function of $\tau$. There are in fact three well-known mock modular forms with precisely the same shadow $\sim \eta^3$, namely $F$, $\frac{1}{24}H$ and $\frac12 Q^+$.\footnote{We may view the relation between $H$, $F$ and $Q^+$ as follows. Following \cite[Section 5]{zagier2009}, there is a short exact sequence
\begin{equation}
0\longrightarrow M_k^! \longrightarrow\mathbb M_k\overset{\mathscr S}{\longrightarrow}  M_{2-k},
\end{equation}
where $M_k$ is the space of all classical modular forms of weight $k$, $M_k^!$ is the larger space of weakly holomorphic modular forms of weight $k$, and $\mathbb M_k$ is the space of mock modular forms of weight $k$. The `shadow map' $\mathscr S$ associates the shadow $\mathscr S[h]$ to each mock modular form $h\in \mathbb M_k$. This shadow map is surjective, but clearly not injective, since $\mathscr S[F_{\frac12}]=\mathscr S[\frac{1}{24}H]=\mathscr S[\frac12Q^+]\propto\eta^3$. We are thus considering the preimage $\mathscr S^{-1}[\eta^3]$, which contains $F_{\frac12}$, $H$ and $Q^+$. As we will discuss in more detail in Section \ref{sec:AD_photon_path_integral}, their differences can be understood as integration `constants'. Roughly speaking, we can understand $\mathscr S^{-1}[\Psi]$ as the space of anti-derivatives of $\Psi$.}

 Their completions are non-holomorphic modular functions for $\Gamma^0(2)$, $\slz$ and $\Gamma^0(2)$, respectively.  In \cite{Korpas:2019cwg} it was shown that for $N_f=0$ either of these three functions can be used for the evaluation of $u$-plane integrals, and they give the same result. This is possible because all three functions transform well under the monodromies on the $u$-plane. For $N_f=1$ and $N_f=3$ on the other hand, $F$ and $Q^+$ do not have the right monodromy properties, since they do not transform under $T^3$ or $T$. This singles out the function $H$, which transforms under all possible monodromies for all $N_f$. 

The function $H$ is related to $F_{\frac{1}{2}}$ as
\be 
\label{HtauDef}
H(\tau)=24\,F_\frac{1}{2}(\tau)-2\frac{\vartheta_2(\tau)^4+\vartheta_3(\tau)^4}{\eta(\tau)^3},
\ee 
with $F_{\frac{1}{2}}$ as above. 
This function is well known as the generating function of dimensions of representations of the Mathieu group \cite{Eguchi:2010ej, Dabholkar:2012nd},
\be
H(\tau)=2\,q^{-1/8}\left(-1+45\,q+231\,q^2+770\,q^3+\CO(q^4) \right),
\ee 
and transforms under $\slz$.

Including either surface observables or the coupling to the background fluxes requires an elliptic generalisation, which has to transform under $\slz$ in order to be applicable to $u$-plane integrals with small masses. In Appendix \ref{app:mockjacobi}, we construct such an $\slz$ mock Jacobi form $H(\tau,\rho)$, and discuss the relation to $F_{\frac12}(\tau,\rho)$. 
 Deriving a similar expression related to $F_0$ which transforms under $\slz$ is more involved, since $F_0(\tau,\rho)$ has a pole at $\rho=0$ (due to  the term $n=0$ in the sum \eqref{f12function}). We leave it for future work to find such an elliptic generalisation of $F_0$. In Appendix \ref{app:mock}, we study further properties of  the above mock modular forms in great detail, while their elliptic generalisations including  zeros and poles are discussed in Appendix \ref{app:mockjacobi}.

\subsection{Constraints for contributions from the cusps}
\label{sec:ContCusps}

In this subsection, we consider the $u$-plane integrals with vanishing external fluxes, $\bfk_j=0$.  We consider the leading behaviour of the integrand near cusps, and determine selection rules for the cusps to have potential nonzero contributions. 

\subsubsection*{Point observables}
Let us first assume that the intersection form of $X$ is odd. If we turn off the surface observable $\bfx$, we can evaluate the $u$-plane integral \eqref{integrationresult} for generic masses. As found above, if  $J=(1,\boldsymbol 0_{b_2^-})$  then $\Psi_{\frac K2}^J(\tau,0)$ vanishes whenever $b_2^->0$. This gives the result 
\begin{equation}\label{pointobservableb2m>0}
\Phi_{\frac K2}^{J,N_f}[e^{2pu}]=0, \qquad b_2^->0.
\end{equation}
Let us therefore proceed with $b_2^-=0$, such that $\sigma=1$ and $\chi=3$. We calculate $u$-plane integrals for such manifolds in great detail in Section \ref{sec:p2}.

In this case, the Siegel-Narain theta function \eqref{psi120odd} becomes 
\begin{equation}\label{PsiK2P2}
\Psi_{\frac K2}^J(\tau,\bar\tau)=-\im^{K_1}\frac{-\im}{2\sqrt{2y}}\overline{\eta(\tau)^3}.
\end{equation}
We can use the fact that $\frac{-\im}{2\sqrt{2y}}\overline{\eta(\tau)}^3=\partial_{\bar\tau}\widehat F_{\frac12}(\tau,\bar\tau)$ is the shadow of the mock modular form $F\coloneqq F_{\frac12}$, defined in \eqref{f12function}.

As discussed in Section \ref{sec:integrationFD}, the $u$-plane integral can then be expressed as a sum over $q^0$-coefficients of the integrand evaluated near the cusps, labelled by $j$,
\begin{equation}\label{pointobservableodd}
\Phi_{\frac K2}^{J,N_f}[e^{2 pu}]=-\frac{\im^{K_1}}{24}\sum_{j} w_j {\rm Coeff}_{q_j^0}\left[ \nu(\alpha_j\tau)H(\alpha_j\tau)\,e^{2pu(\alpha_j\tau)}\right].
\end{equation}
Here, $\alpha_j\in \slz$ give the cosets in the fundamental domain \eqref{fundamental_domain}, and   $w_j$ is the width of the cusp $j$.

Let us study which cusps contribute to the sum \eqref{pointobservableodd}. We have that $H(\tau)=\CO(q^{-\frac18})$ for $\tau\to\im\infty$. One furthermore finds
\begin{equation}\begin{aligned}
u(\tau)&=\CO(q^{-\frac{1}{4-N_f}}), \quad \frac{du}{da}(\tau)=\CO(q^{-\frac{1}{2(4-N_f)}}), \\
\frac{da}{d\tau}&=\CO(q^{-\frac{1}{2(4-N_f)}}), \quad \Delta(\tau)=\CO(q^{-\frac{2+N_f}{4-N_f}}).
\end{aligned}\end{equation}
in the weak coupling frame. 
Then the measure factor goes as 
\begin{equation}
\nu(\tau)=\CO(q^{-\frac{12+\sigma N_f }{8(4-N_f)}}),
\end{equation}
which holds for generic masses and $b_2^+=1$. For $\sigma=1$ the exponent is $\frac18-\frac{2}{4-N_f}\leq -\frac38$, where equality holds for $N_f=0$. Combining with the exponent $-\frac18$ of $H$, the exponent of the leading term in the $q$-expansion of $\nu H u^\ell$ \eqref{pointobservableodd} is
\begin{equation}
    \nu(\tau)\, H(\tau)\, u(\tau)^\ell =\CO(q^{-\frac{2+\ell}{4-N_f}}).
\end{equation}
Since both $4-N_f>0$ and $2+\ell>0$, this exponent is strictly negative.  We confirm through explicit calculations in Section \ref{sec:p2} that indeed for generic masses also the $q^0$ term is present. 
This shows that the cusp $\im\infty$ generally contributes to $\Phi_{\frac K2}^{J,N_f}[e^{2 pu}]$ to all orders in $p$, for all $N_f\leq 3$.

This is not true for the strong coupling cusps, $j\in\mathbb Q$. These cusps are in fact simpler to analyse, since the measure $\nu$ at strong coupling becomes a constant. In order to see this, recall that $u_D(\tau)=\CO(1)$ and   $\left(\frac{du}{da}\right)_D=\CO(1)$ (see Section \ref{sec:behaviour_strong_c_s} for more details). We also have $\Delta_D(\tau)=\CO(q)$, such that we are left with studying $\frac{da}{d\tau}=\frac{da}{du}\frac{du}{d\tau}$. Near a singularity $u_j^*$, the local coordinate reads $u_D(\tau)-u_j^*=\CO(q^{\frac{1}{n}})$, where $n$ is the width of the cusp $j$ (corresponding to an $I_n$ singularity). For asymptotically free SQCD, the possibilities are $n=1,2,3,4$. 
Therefore, we have that $\left(\frac{du}{d\tau}\right)_D=\CO(q^{\frac{1}{n}})$ and thus
\begin{equation}
\nu_D(\tau)=\CO(q^{\frac{1}{n}+\frac \sigma8}).
\end{equation}
Since $H$ is mock modular for $\slz$, also $H_D(\tau)=\CO(q^{-\frac 18})$. Thus we find that the lowest $q$-exponent of the contribution to an $I_n$ cusp is
\begin{equation}\label{strong_coupling_exponent}
    \nu_D(\tau)\, H_D(\tau)\, u_D(\tau)^\ell =\CO(q^{\frac{1}{n}+\frac\sigma 8-\frac 18}).
\end{equation}
For our choice of period point $J=(1,\boldsymbol 0_{b_2^-})$, we can set $\sigma=1$. Then the leading exponent is $\frac 1n>0$, such that the $q^0$ coefficient vanishes.  Thus the for manifolds with $\sigma=1$ the strong coupling cusps never contribute to correlation functions $\Phi_{\frac K2}^{J,N_f}[e^{2 pu}]$ of the point observable.

The correlation functions for the point observable on manifolds with odd intersection form then  receives contributions only from weak coupling. Since the width of the cusp at infinity is $w_{\im\infty}=4-N_f$, we can simplify \eqref{pointobservableodd} substantially,
\begin{equation}\label{pointobservableoddResult}
\Phi_{\frac K2}^{J,N_f}[e^{2 pu}]=-\frac{\im^{K_1}(4-N_f)}{24}\,{\rm Coeff}_{q^0}\left[ \nu(\tau)\,H(\tau)\,e^{2pu(\tau)}\right].
\end{equation}
In \cite{Korpas:2019cwg} it is observed that for $N_f=0$, correlation functions of point observables are (up to an overall dependence on the canonical class) universal for any four-manifold with odd intersection form and given period point $J$. The reason for this is that the topological dependence of the measure factor $\nu \sim\jt_4^{-b_2}$ cancels precisely with the holomorphic part  of the Siegel-Narain theta function $\Psi_\bfmu^J\supset \jt_4^{b_2}$. This is not true for $N_f>0$, which one may also  see by comparing \eqref{pointobservableb2m>0} with \eqref{pointobservableoddResult}.

\subsubsection*{Surface observables}
 We can also consider correlation functions of surface observables $\bfx\in H_2(M)$ supported on the compact four-manifold $X$. Following  Section \ref{factorisation_psi}, $\Psi_{\frac K2}^J$ for the choice $J=(1,\boldsymbol 0_{b_2^-})$ factorises as
 \begin{equation}\label{Psi_surface}
 \Psi_{\frac K2}^J(\tau,\bfrho)=-\im^{K_1}f_{\frac12}(\tau,\rho_1)\Theta_{L_-,\bfmu_-}(\tau,\bfrho_-),
 \end{equation}
 with $\bfmu=(\mu_+,\bfmu_-)\equiv(\frac12,\frac12,\dots,\frac12) \mod \mathbb Z^{b_2}$. Due to \eqref{ThetaL-}, we have that 
 \begin{equation}\label{Psi_surface_L-}
 \Theta_{L_-,\bfmu_-}(\tau,\bfrho_-)=\prod_{k=2}^{b_2}-\im^{K_k+1}\jt_1(\tau,\rho_k),
 \end{equation}
 where $\bfrho_-=(\rho_2,\cdots,\rho_{b_2})$. The function $f_{\frac12}(\tau,\rho)$ is the shadow of the mock modular form $\tfrac{1}{24}H(\tau,\rho)$, as in \eqref{H12Def}. This allows to evaluate \eqref{integrationresult}, where we also include the point observable,
\begin{equation}\label{PhiODD}
\begin{split}
& \Phi_{\frac K2}^{J,N_f}[e^{I_-(\bfx)}]=\frac{-\im^{K_1}}{24}\!\!\sum_{j} w_j \\
 &\qquad \times {\rm Coeff}_{q_j^0}\left[\nu(\tau_j)e^{2pu(\tau_j)/\Lambda_{N_f}^2+\bfx^2 G(\tau_j)}H(\tau,\rho_{1,j})\prod_{k=2}^{b_2}(-\im^{K_k+1})\jt_1(\tau_j,\rho_{k,j})\right],
 \end{split}
 \end{equation}
where we calculate the local $q_j$ series around the cusps $j$ and extract the constant term.
 Let us check that the case $\bfx=0$ is consistent with the previous result. Consider thus that $\bfrho=0$ in above formula. If $b_2^->0$ and consequently $b_2\geq2$, then all factors in the product vanish, since $\jt_1(\tau,0)\equiv0$. This reproduces \eqref{pointobservableb2m>0}. If $b_2^-=0$ on the other hand, then the product is over an empty set and therefore equal to $1$. By construction $H(\tau,0)=H(\tau)$, and the limit to \eqref{pointobservableodd} is obvious.

\subsubsection*{When do strong coupling cusps contribute?}

In \eqref{strong_coupling_exponent} it was found that the contribution of strong coupling cusps to the $u$-plane integral depends on an intricate way on the four-manifold $X$ and on the type of cusp. For instance, let $u_j^*$ be an $I_n$ singularity, such that the local expansion reads $u_D(\tau)=u_j^*+\CO(q^{\frac1n})$. Then the smallest exponent in the $q$-series of the measure factor $\nu_D$ whose coefficient is strictly non-zero is $\tfrac1n+\tfrac\sigma 8$, independent of the mass configuration $\bfm$ giving rise to that $I_n$ singularity.
Consider now  an SQCD mass configuration containing singularities of type $I_n$ and $I_m$ which can be merged by colliding  some masses. If the signature $\sigma$ is such that the smallest exponent of the $q$-series of the integrand is positive, then their  individual contributions  vanish. However, if the $I_n$ and $I_m$ singularities merge to an $I_{n+m}$ singularity, the lowest exponent can become non-positive and there can be a contribution to the $u$-plane integral. The simplest example would be two $I_1$ singularities colliding to an $I_2$ singularity. 

For the complex projective plane $X=\pt$ this does not occur, since for any $I_n$ singularity and any $N_f$ the smallest exponent is strictly positive, $\frac 1n$. This is in agreement with the theorem that for $N_f=0$ and for four-manifolds with $b_2^+(X)>0$ that admit a Riemannian metric of positive scalar curvature the Seiberg-Witten invariants vanish \cite{Witten:1994cg,Taubes1994}.\footnote{This is a consequence of the Bochner-Lichnerowicz-Weitzenb{\"o}ck formula, which relates the Dirac operator to the scalar curvature via the connection Laplacian. If a given metric has positive curvature, the kernel of the Dirac operator is empty. See \cite{yanez2023} for an overview.} The theorem has been shown to generalise also to $N_f>0$ \cite{bryan1996, Dedushenko:2017tdw}. See \cite{MANTIONE2021566} for a survey on four-manifolds with positive scalar curvature.

To test whether this vanishing theorem also holds for the multi-monopole SW equations, we can calculate $u$-plane integrals for manifolds $X$ of small signature  that admit metrics with positive scalar curvature. 
Such a class of four-manifolds are the del Pezzo surfaces $dP_n$. They are blow-ups of the complex projective plane at $n$ points, where $n=1,\dots, 8$. For $n=9$, it is known as $\frac12\text{K3}$. These surfaces have $b_2^+(dP_n)=1$ and signature $\sigma(dP_n)=1-n$. The canonical class of $dP_n$ is $K=-3H+E_1+\dots+E_n$, with $E_i$ the exceptional divisors of the blow-up, and $H$ the pullback of the hyperplane class from $\pt$. The intersection form can be brought to the form  
\begin{equation}
Q_{dP_n}=\begin{pmatrix}1&0\\0&-\mathbbm 1_n
\end{pmatrix},
\end{equation}
with $\mathbbm 1_n$ the $n\times n$ identity matrix. From this it follows that $K_{dP_n}^2=9-n$, which is the degree of $dP_n$.
As explained in Section \ref{factorisation_psi}, for manifolds with odd intersection form, the components $K_j$ of the characteristic vector $K$ are odd for all $j=1, \dots, b_2$. Without external fluxes $\bfk_j=0$, the $u$-plane integrals are well-defined if $\bfmu=\frac 12 K\mod L$. On the other hand, the Siegel-Narain theta function for $J=(1,0,\dots,0)$ vanishes identically unless $\bfmu=(\tfrac12,0,\dots, 0) \mod \mathbb Z^{b_2}$. This shows that without surface observables and without external fluxes, the $u$-plane integrals necessarily vanish for the del Pezzo surfaces $dP_n$ with $n\geq 1$.

If we include surface observables, the $\theta_1(\tau,\rho_k)$ in Eq. (\ref{PhiODD}) transform to $\theta_1(\tau_j,\rho_{k,j})$ with the leading term in the $q_j$ expansion of $\rho_{k,j}$ a non-vanishing constant (where we use $\left(\frac{du}{da}\right)_D=\CO(1)$ and the $S$-transformation \eqref{jt_S-transformation}). The leading term of $\jt_1$ is $\theta_1(\tau_j,\rho_{k,j})\sim q_j^{1/8}$, such that the product over $b_2-1$ of these gives $q_j^{-(\sigma-1)/8}$. As a result, the $\sigma$ dependence of the measure is cancelled by $\Psi_\bfmu^J$, and the local asymptotics is $\CO(q^{\frac1n})$ for any $\sigma$. We conclude that the strong coupling cusps do not contribute after inclusion of surface observables. It would be interesting to explore if non-vanishing background fluxes affect this conclusion.

\subsection{Wall-crossing}\label{sec:wall-crossing}
An intrinsic feature of $u$-plane integrals for $b_2^+=1$ is the metric dependence and the wall-crossing associated with it.  
The metric dependence of the Lagrangian is encoded in the period point $J\in H^2(X,\mathbb R)$, which generates the space $H^2(X,\mathbb R)^+$ of self-dual two-cohomology classes and is normalised as $Q(J)=1$. It depends on the metric through the self-duality condition $*J=J$. Using a period point $J$, we can project some vector $\bfk\in L$ to the positive and negative subspaces $H^2(X,\mathbb R)^\pm$ using $\bfk_+=B(\bfk,J)J$ and $\bfk_-=\bfk-\bfk_+$. 

Even when including the background fluxes, the dependence of the $u$-plane integrand on the metric is only through the Siegel-Narain theta function $\Psi_\bfmu^J$. The metric dependence is then captured through the difference $\Phi_{\bfmu}^J-\Phi_{\bfmu}^{J'}$ for  two period points $J$ and $J'$, which we aim to evaluate. To this end, we note that the difference 
\begin{equation}
    \Psi_\bfmu^J(\tau,\bfz) -\Psi_\bfmu^{J'}(\tau,\bfz)=\partial_{\bar\tau}\widehat\Theta_\bfmu^{J,J'}(\tau,\bfz)
\end{equation}
is a total derivative to $\bar\tau$, with 
\be
\label{hatTheta}
\begin{split}
\widehat \Theta^{JJ'}_{\bfmu}\!(\tau,\bar \tau,\bfz,\bar \bfz)&=\sum_{\bfk\in L+\bfmu}
\tfrac{1}{2}\left( E(\sqrt{2y}\,B(\bfk+\bfb,
  J))-\sgn(B(\bfk, J'))\right)\\
&\quad \times e^{\pi i B(\bfk,K)} q^{-\bfk^2/2}e^{-2\pi
  i B(\bfk,\bfz)}
\end{split}
\ee
and 
\begin{equation}
\label{Eerrorfunction}
E(u) = 2\int_0^u e^{-\pi t^2}dt = \text{Erf}(\sqrt{\pi}u)
\end{equation}
a rescaled error function $E:\mathbb{R}\to (-1,1)$. We have under the $S$- and $T$-transformations,
\be
\begin{split} 
&\widehat \Theta^{JJ'}_{\bfmu}(-1/\tau,-1/\bar \tau,\bfz/\tau,\bar\bfz/\bar \tau)= i (-i\tau)^{b_2/2}\exp(-\pi i \bfz^2/\tau)\\
&\qquad \qquad \times e^{\pi i B(\bfmu,K)}\,\widehat \Theta^{JJ'}_{K/2}(\tau,\bar \tau,\bfz-\bfmu+K/2,\bar\bfz-\bfmu+K/2),\\
& \widehat \Theta^{JJ'}_{\bfmu}(\tau+1,\bar \tau+1,\bfz,\bar \bfz)=e^{\pi i (\bfmu^2-B(K,\bfmu))}\widehat \Theta^{JJ'}_{\bfmu}(\tau,\bar \tau,\bfz+\bfmu-K/2,\bar \bfz+\bfmu-K/2).
\end{split}
\ee

Since the couplings to the background fluxes are holomorphic, the inclusion of the latter is not affected by a total $\bar\tau$ derivative. This allows to express $\Phi_{\bfmu}^J-\Phi_{\bfmu}^{J'}$ as an integral of the form $\CI_f$ (defined in \eqref{CI_f}) for some function $f$ satisfying $\partial_{\bar\tau}\widehat h=y^{-s}f$, where we can read off 
\begin{equation}
    \widehat h(\tau,\bar \tau)= \nu(\tau;\{\bfk_j\})\,\widehat\Theta_\bfmu^{JJ'}(\tau,\bar\tau,\bfz,\bar\bfz)\,e^{2pu/\Lambda_{N_f}^2+\bfx^2 G_{N_f}}.
\end{equation}
Then, according to \eqref{CIfStokes}, we can write 
\begin{equation}
    \Phi_{\bfmu}^J-\Phi_{\bfmu}^{J'}=-\int_{\partial \CF(\bfm)} d\tau\,\widehat h(\tau,\bar \tau),
\end{equation}
which  may be evaluated using the methods in Section \ref{sec:integrationFD}. In particular, it can be evaluated using the indefinite theta function $\Theta_\bfmu^{JJ'}$, which is the holomorphic part of $\widehat\Theta_\bfmu^{JJ'}$. The contribution from the singularity at infinity is 
\be 
\label{WCPhi}
\begin{split}
\left[ \Phi_{\bfmu}^{J}-\Phi_\bfmu^{J'} \right]_\infty&= \CK_{N_f} \int_{-1/2+iY}^{1/2+iY} d\tau\,\frac{da}{d\tau} A^\chi B^\sigma \prod_{j,k=1}^{N_f} C_{ij}^{B(\bfk_i,\bfk_j)}\\
&\quad \times \widehat \Theta_\bfmu^{JJ'}(\tau,\bar \tau, \sum_j \bfk_j v_j, \sum_{j} \bfk_j \bar v_j),
\end{split}
\ee 
where we left out the observables. The contributions from the strong coupling singularities follow from modular transformations and will be discussed in Section \ref{sec:SWcontributions}.

\section{Example: four-manifolds with \texorpdfstring{$b_2=1$}{b2=1}}\label{sec:p2}
Let us study in detail the $u$-plane integrals \eqref{pointobservableoddResult} for the point observable on four-manifolds with $b_2=1$. 

The complex projective plane $\mathbb P^2$ is the most well-known example. The complex projective plane has   $\sigma=1$, $\chi=3$ and thus $b_2=1$, furthermore  $K=K_1=3$. With the exact results from \cite{Aspman:2022sfj,Korpas:2019cwg}, condensed in Section \ref{sec:uplane_int_mock}, it is straightforward to evaluate \eqref{pointobservableoddResult} for arbitrary masses.

In this Section, we compute the vev $\Phi_{\frac12}^{N_f}[u^\ell]$ for the point observable, which in the notation of  \cite{Korpas:2019cwg} is related to the exponentiated observable as
\begin{equation}
\Phi_{\bfmu, \{\bfk_j\}}^{N_f}[e^{2p u/\Lambda_{N_f}^2}]\coloneqq \left\langle e^{2p u/\Lambda_{N_f}^2}\right\rangle_{\bfmu, \{\bfk_j\}}^{N_f}=\sum_{\ell=0}^\infty \frac{1}{\ell!}\left(\frac{2p}{\Lambda_{N_f}^2} \right)^\ell \Phi_{\bfmu, \{\bfk_j\}}^{N_f}[u^\ell].
\end{equation}

If the background fluxes $\bfk_j=0$ on $X=\mathbb P^2$ are vanishing, in part I we argued that the theory is consistent only if we restrict to $\bar w_2(X)\equiv \bar w_2(E) \mod 2L$. For $N_f=1$ for instance, we can choose $\mu=1/2$. For this flux, we can also turn on any integral background flux $k_1\in\mathbb Z$. If we turn off the 't Hooft flux $\mu=0$ rather, the consistent formulation on $\mathbb P^2$  requires half-integer background fluxes $k_1\in \mathbb Z+1/2$. 

In the following two subsections, we consider the large mass and small mass calculations for $\Phi_{1/2}^{N_f}$ with $k_j=0$, while in subsection \ref{sec:nonvanishing_bckg_flux} we turn on non-vanishing background fluxes $k_j$ for both $\mu=0$ and $\mu=1/2$.

\subsection{Large mass expansion with vanishing background fluxes}\label{sec:large_mass_P2}

We first consider the large mass expansion for equal masses $m_i\eqqcolon m$ for $N_f=1,2,3$, in the absence of background fluxes.
This allows to normalise the integral, by requiring that the decoupling limit $m\to \infty$ for $\Phi_{1/2}[1]$ reproduces the $N_f=0$ result. We will demonstrate that with this normalisation, the decoupling limit of other observables also matches with $N_f=0$ as expected. As shown in Section \ref{sec:ContCusps}, there are no contributions from the strong coupling cusps for $N_f\leq 3$ and $X=\mathbb{P}^2$.  
Since the holomorphic part of the integrand is a function of $u$,  large $m$ expansion of the latter can be determined as described in Section \ref{weakcoupexp}. 

From \eqref{decouplimit}, we find for the  decoupling formula for equal masses
\begin{equation}\label{decoupling_equal}
    \Lambda_0^4=m^{N_f}\Lambda_{N_f}^{4-N_f}.
\end{equation}
In the large $m$ limit, the domain is a truncated $\mathbb{H}/\Gamma^0(4)$ domain for all $N_f$, as discussed in Section \ref{sec:funddom}. Combining the measure factor \eqref{nutauNf} applied to $X=\mathbb{P}^2$ with \eqref{pointobservableoddResult} and \eqref{decoupling_equal}, we find in the notation of \eqref{PhiLargeM},
\be 
\begin{split}
&\Phi_{1/2}\big[e^{2pu/\Lambda_{N_f}^2}\big]=(-1)^{N_f}\frac{2}{4-N_f}\frac{m^{3N_f}}{\Lambda_{N_f}^{N_f}} \frac{1}{\Lambda_0^{12}}\\
&\qquad \times {\rm Coeff}_{q^0} {\rm Ser}_{m^{-1}} \left[ \left(\frac{du}{da}\right)^{12}\frac{\eta(\tau)^{27}}{P_{N_f}^{\text{M}}}\,F_{\frac12}(\tau) \, e^{2pu/\Lambda_{N_f}^2}\right].
\end{split}
\ee 
Here,  we have  used the holomorphic part $F_{\frac12}$ of the anti-derivative $\widehat F_{\frac12}$. It is straightforward to check that other choices of anti-derivative, such as $\frac{1}{24}\widehat H$ give the same result.

We first present the series in a form which makes the decoupling limit manifest. To this end, we list the coefficients of $(p/\Lambda_1^2)^\ell$ as function of $\Lambda_0$ and $m$, up to the overall prefactor $(m/\Lambda_{N_f})^{N_f}$. For $N_f=0$, we have
\be
\label{Nf0Eu}
\begin{split} 
\Phi_{1/2}\big[e^{2pu/\Lambda_0^2}\big]&=1 +\frac{19}{32}\Lambda_0^4 \frac{p^2}{\Lambda_0^4} +\frac{85}{768} \Lambda_0^8 \frac{p^4}{\Lambda_0^8}+\CO(p^5).
\end{split}
\ee 
For $N_f=1$, we then find
\be
\label{Nf1Eu}
\begin{split} 
\Phi_{1/2}\big[e^{2pu/\Lambda_1^2}\big]&=\frac{m}{\Lambda_1} \left(1-\frac{7}{32} \frac{\Lambda_0^{4}}{m^2}\,\frac{p}{\Lambda_1^2} +\frac{19}{32}\Lambda_0^4 \frac{p^2}{\Lambda_1^4} \right. \\
&\left.-\frac{7}{64} \frac{\Lambda_0^8}{m^2} \frac{p^3}{\Lambda_1^6}+\left(\frac{85}{768} \Lambda_0^8+\frac{1093}{393216}\frac{\Lambda_0^{12}}{m^4}\right) \frac{p^4}{\Lambda_1^8}+\CO(p^5,m^{-8})\right).
\end{split}
\ee 
For $N_f=2$, we find 
\be 
\label{Nf2Eu}
\begin{split} 
\Phi_{1/2}\big[e^{2pu/\Lambda_2^2}\big]&=\frac{m^2}{\Lambda_2^2}\left(1+\frac{3}{2^6} \frac{\Lambda_0^4}{m^4}+\left(-\frac{7}{2^4}\frac{\Lambda_0^4}{m^2} \right)\frac{p}{\Lambda_2^2}\right.\\
&+ \left(\frac{19}{32} \Lambda_0^4+\frac{23}{256}\frac{\Lambda_0^8}{m^4}+\frac{53}{2^{17}}\frac{\Lambda_0^{12}}{m^8}\right)\frac{p^2}{\Lambda_2^4} \\
&-\left(\frac{7}{32} \frac{\Lambda_0^8}{m^2} + \frac{421}{49152} \frac{\Lambda_0^{12}}{m^6}\right)\frac{p^3}{\Lambda_2^6}\\
& +\left( \frac{85}{768}\Lambda_0^8+\frac{2421}{65536}\frac{\Lambda_0^{12}}{m^4}+\frac{2161}{3145728}\frac{\Lambda_0^{16}}{m^8}\right)\frac{p^4}{\Lambda_2^8}\\
&\left.+\CO(p^5,m^{-8})\right).
\end{split}
\ee 
Finally, for  $N_f=3$ we find 
\be 
\label{Nf3Eu}
\begin{split} 
\Phi_{1/2}\big[e^{2pu/\Lambda_3^2}\big]&=\frac{m^3}{\Lambda_3^3}\left(1+\frac{9}{2^6} \frac{\Lambda_0^4}{m^4}+\frac{5}{2^{18}}\frac{\Lambda_0^{12}}{m^{12}}\right. \\
&-\left(\frac{21}{2^5}\frac{\Lambda_0^4}{m^2} +\frac{3}{2^{6}}\frac{\Lambda_0^8}{m^6}\right)\frac{p}{\Lambda_3^2}\\
&+ \left(\frac{19}{2^5} \Lambda_0^4+\frac{69}{2^8}\frac{\Lambda_0^8}{m^4}+\frac{1659}{2^{17}}\frac{\Lambda_0^{12}}{m^8}\right)\frac{p^2}{\Lambda_3^4} \\
&-\left(\frac{21}{2^6} \frac{\Lambda_0^8}{m^2}+\frac{3305}{3\times 2^{14}} \frac{\Lambda_0^{12}}{m^6} \right) \frac{p^3}{\Lambda_3^6}\\
& +\left(\frac{85}{768}\Lambda_0^{8}+\frac{13433}{2^{17}}\frac{\Lambda_0^{12}}{m^4}+\frac{13397}{2^{20}}\frac{\Lambda_0^{16}}{m^8}\right) \frac{p^4}{\Lambda_3^8}\\
&\left.+\CO(p^5,m^{-{9}})\right).
\end{split}
\ee 
For the $m\to \infty$ decoupling limit \eqref{IR-decoupling}, we multiply by the factor $(\Lambda_0/\Lambda_{N_f})^\alpha$ with $\alpha=-4$ for $\mathbb{P}^2$. Eliminating $\Lambda_0$ by (\ref{dscalinglimit}), this removes precisely the prefactors in the expressions \eqref{Nf1Eu}, \eqref{Nf2Eu} and \eqref{Nf3Eu}. We thus find a consistent  decoupling limit \eqref{IR-decoupling} to the $N_f=0$ result (\ref{Nf0Eu}) for all three cases. 
\begin{center}
\begin{table}[ht!]
\renewcommand{\arraystretch}{2}
\centering
\begin{tabular}{|Sl|Sr|}
\hline 
$\ell$ & $\displaystyle\Phi_{1/2}[u^\ell]$ for $N_f=1$\\ 
\hline 
0 & $\displaystyle\frac{m}{\Lambda_1}$\\ 
1 & $\displaystyle -\frac{7}{2^6}\frac{\Lambda_1^2m^2}{m^2}$ \\
2 & $\displaystyle \frac{19}{2^6}\frac{\Lambda_1^2m^2}{m^0}$\\
3 & $\displaystyle  -\frac{21}{2^8}\frac{\Lambda_1^5m^3}{m^2}$ \\
4 & $\displaystyle \frac{85}{2^9}\frac{\Lambda_1^5m^3}{m^0} +\frac{1093}{2^{18}}\frac{\Lambda_1^8m^4}{m^4}$ \\
\hline 
\end{tabular}
\caption{Table with values of $\Phi_{1/2}[u^\ell]$ for $N_f=1$. }
\label{PtObsNf1}
\end{table} 
\end{center}
\begin{center}
\begin{table}[ht!]
\centering
\renewcommand{\arraystretch}{2}
\begin{tabular}{|Sl|Sr|}
\hline 
$\ell$ & $\displaystyle\Phi_{1/2}[u^\ell]$ for $N_f=2$\\ 
\hline 
0 & $\displaystyle\frac{m^2}{\Lambda_2^2}+\frac{3}{64}\frac{m^4}{m^4}$\\ 
1 & $\displaystyle -\frac{7}{2^5}\frac{m^4}{m^2}$ \\
2 & $\displaystyle \frac{19}{2^6}m^4+\frac{23}{2^7}\frac{\Lambda_2^2 m^6}{m^4}+\frac{53}{2^{18}}\frac{\Lambda_2^4m^8}{m^8}$\\
3 & $\displaystyle  -\frac{21}{2^7} \Lambda_2^2\frac{m^6}{m^2}-\frac{421}{2^{16}}\frac{\Lambda_2^4m^8}{m^6}$ \\
4 & $\displaystyle \frac{85}{2^8}\Lambda_2^2m^6 +\frac{7263}{2^{17}}\frac{\Lambda_2^4m^8}{m^4}+\frac{2161}{2^{21}}\frac{\Lambda_2^{6}m^{10}}{m^8}+\frac{1811}{2^{30}}\frac{\Lambda_2^{8}m^{12}}{m^{12}}$ \\
\hline 
\end{tabular}
\caption{Table with values of $\Phi_{1/2}[u^\ell]$ for $N_f=2$. }
\label{PtObsNf2}
\end{table} 
\end{center}
\begin{center}
\begin{table}[ht!]
\centering
\renewcommand{\arraystretch}{2}
\begin{tabular}{|Sl|Sr|}
\hline 
$\ell$ & $\displaystyle\Phi_{1/2}[u^\ell]$ for $N_f=3$\\ 
\hline 
0 & $\displaystyle\frac{m^3}{\Lambda_3^3\,m^0}+\frac{9}{2^6}\frac{m^6}{\Lambda_3^2\,m^4}+\frac{5}{2^{18}}\frac{m^{12}}{m^{12}}$\\ 
1 & $\displaystyle -\frac{21}{2^{6}}\frac{m^6}{\Lambda_3^{2}\,m^2} -\frac{3}{2^7}\frac{m^9}{\Lambda_3\,m^6}-\frac{15}{2^{17}}\frac{m^{12}}{m^{10}}+\CO(m^0)$ \\
2 & $\displaystyle \frac{19}{2^6}\frac{m^6}{\Lambda_3^{2}}+\frac{69}{2^9}\frac{m^9}{\Lambda_3\,m^4}+\frac{1659}{2^{18}}m^4+\frac{55}{2^{21}}\frac{\Lambda_3m^{15}}{m^{12}}+\CO(m^2)$\\
3 & $\displaystyle  -\frac{63}{2^{8}} \frac{m^9}{\Lambda_3\,m^2}-\frac{3305}{2^{16}} \frac{m^{12}}{m^6}-\frac{477}{2^{18}}\frac{\Lambda_3 m^{15}}{m^{10}}+\CO(m^4)$ \\
4 & $\displaystyle \frac{85}{2^9} \frac{m^9}{\Lambda_3} +\frac{40299}{2^{18}}\frac{m^{12}}{m^4}+\frac{40191}{2^{21}}\frac{\Lambda_3 m^{15}}{m^8}+\frac{150471}{2^{28}}\frac{\Lambda_3^{2}m^{18}}{m^{12}}+\CO(m^5)$ \\
\hline 
\end{tabular}
\caption{Table with values of $\Phi_{1/2}[u^\ell]$ for $N_f=3$ and equal masses $m_j=m$. We leave the $\CO(\dots)$ terms undetermined due to large running time of the Mathematica notebook. As explained in the main text, we expect these terms to match with the expressions in Table \ref{PtObsNf3G} determined using the small mass expansion.}
\label{PtObsNf3}
\end{table} 
\end{center}
To facilitate comparison of these results with the UV expression \eqref{UVpartfunc}, we have presented the data in an alternative form in the Tables \ref{PtObsNf1}, \ref{PtObsNf2} and \ref{PtObsNf3}. Here, correlators $\Phi_{1/2}[u^\ell]$ are listed as functions of $m$ and $\Lambda_{N_f}$. The monomial in $m$ and $\Lambda_{N_f}$ is expressed as
\be\label{UV_scale_ratio}
\frac{\Lambda_{N_f}^{{\rm vdim}(\CM^{Q,N_f}_{k})}m^{-\sum_j {\rm rk}(W^j_k)}}{m^r}.
\ee
The exponent of the mass $m$ is the virtual rank \eqref{rkW_k^j} of the matter bundle, that is, the rank of the obstruction bundle $W_k^j$. The exponent of the scale $\Lambda_{N_f}$ can be identified with the (complex) virtual dimension \eqref{vdimML} of the moduli space (see also \eqref{UVpartfunc}). The exponent $r$ is the degree of the Chern class of the matter bundle. For $N_f=1$, we have
\be 
\begin{split}
&{\rm dim}(\CM^Q_k)=3k-4,\\
&{\rm dim}(\CM^{\rm i}_k)=4k-3,\\
&{\rm rk}(W_k)=-k-1.\\
\end{split} 
\ee 
With $s$ the exponent of $p$, the data in the Tables \ref{PtObsNf1}, \ref{PtObsNf2} and \ref{PtObsNf3} satisfy the selection rule (\ref{selrule}). Moreover, since the integration is over the instanton moduli space, we have the selection rule $4p+2r={\rm dim}_{\mathbb{R}}(\CM_k^{\rm i})$.  
If the obstruction bundle is a proper bundle rather than a sheaf, we have $r\leq -\sum_j {\rm rk}(W^j_k)$. We find that this is the case for all data in these tables. Thus even though the evaluation of the $u$-plane integral was performed in terms of a $1/m$ expansion, the results in the tables  have a good $m\to 0$ limit. We will discuss this in more detail in the following Section \ref{smallmassP2} on the small mass expansion. It is furthermore noteworthy that for fixed $\ell$ the coefficients have the same sign. The results of the next few sections show only a few exceptions to this.

From the large powers of 2 in the denominators, we deduce that the normalisation is not precisely consistent with integral Chern classes in the large $m$ expansion. We discuss this in more detail in the next subsection, where we give results for generic masses $m_j$. Mathematically, these invariants are known as (virtual) Segre numbers of $X$ \cite{Gottsche:2020ass}. We comment more on this connection in the section \ref{sec:SWcontributions} on four-manifolds with $b_2^+>1$.

\subsection{Small mass expansion with vanishing background fluxes}
\label{smallmassP2}
For small masses, the integration domains are now naturally the small mass perturbations of the domains for the massless theories. See e.g. Fig. \ref{fig:fundnf1} and \ref{fig:nf2massless} for the massless $N_f=1$ and $N_f=2$ domains. 
The regularised integration domains suitable for the integration are described in Section \ref{sec:funddom}, and the weak coupling cusps have width $4-N_f$. For the anti-derivative of $\Psi_{1/2}$, we take the $\slz$ mock modular  form $H(\tau)$ \eqref{HtauDef}, which transforms consistently on any of these domains.

With the normalisation determined above, we have for the small mass result
\be 
\Phi_{1/2}\big[e^{2pu/\Lambda_{N_f}^2}\big]=\frac{(-1)^{N_f}}{2\,\Lambda_{N_f}^{12-2N_f}} {\rm Coeff}_{q^0} {\rm Ser}_m\left[ \left(\frac{du}{da}\right)^{12} \frac{\eta(\tau)^{27}}{P_{N_f}^{\text{M}}}\frac{H(\tau) }{24} e^{2pu/\Lambda_{N_f}^2}\right].
\ee 

For the massless case, Table  \ref{tableulP2} gives the first 8 nonzero intersection numbers for massless $N_f=0,1,2,3$. For $N_f=0$, the results match precisely with \cite{ellingsrud1995wall}. The results for $N_f=2,3$ are in agreement with the results for $\Phi^{N_f,H,1}_{k,m,n}$ in Ref. \cite{Malmendier:2008db}.\footnote{The characteristic classes of the matter bundle $\omega_{j,\text{MO}}$  in \cite{Malmendier:2008db} are normalised such that $2\omega_{j,\text{MO}}$ is an integral class. Moreover, the Tables in \cite{Malmendier:2008db} are for point class insertions which correspond to $2u$ in our notation.} We observe that the intersection numbers grow quickly as function of $\ell$. It would be interesting to study the asymptotic behaviour of these series similar to the case $N_f=0$ in \cite{Korpas:2019cwg}, and leave this for future work.

\begin{table}[ht]\small\begin{center}
\renewcommand{\arraystretch}{1.2}
		\begin{tabular}{|>{$\displaystyle}Sl<{$}|>{$\displaystyle}Sr<{$}|>{$\displaystyle}Sr<{$}|>{$\displaystyle}Sr<{$}|>{$\displaystyle}Sr<{$}|}
		\hline
			\ell&    2^{6\ell}
   \,\Phi_{\frac12}^0[(\tfrac{u}{\Lambda_0^2})^{2\ell}]& 2^{4(3\ell+1)+2}\,\Phi_{\frac12}^1[(\tfrac{u}{\Lambda_1^2})^{3\ell+1}]&
			2^{12\ell+6}\,\Phi_{\frac12}^2[(\tfrac{u}{\Lambda_2^2})^{2\ell}] &2^{12\ell+18}\,\Phi_{\frac12}^3[(\tfrac{u}{\Lambda_3^2})^{\ell}]\\
			\hline
 0 & 1 & -7 & 3 & 5 \\
 1 & 19 & 1093 & 53 & 45 \\
 2 & 680 & -276947 & 1811 & 489 \\
 3 & 29557 & 83085225 & 76089 & 5843 \\
 4 & 1414696 & -27198001581 & 3545549 & 73944 \\
 5 & 71741715 & 9392089537432 & 175864168 & 972531 \\
 6 & 3781276680 & -3362682187522317 & 9096101133 & 13149681 \\
 7 & 204876497145 & 1235776027296239921 & 484817292641 & 181538893 \\
 8 & 11333072742376 & -463189292656525103411 & 26431317935635 &
   2547290047 \\
 \hline
		\end{tabular}
		\caption{List of non-vanishing $\Phi_{\frac 12}^{N_f}[u^{\ell}]$ for the massless $N_f=0,1,2,3$ theories on $\mathbb P^2$. The first column for $N_f=0$ reproduces the results of \cite{ellingsrud1995wall}, while the columns for $N_f=2,3$  
  reproduce results of \cite[p. 35/36]{Malmendier:2008db}. The column for $N_f=1$ is new to our knowledge.}\label{tableulP2}\end{center}
\end{table}
We notice that in the massless $N_f=0,1,2$ theories  there are constraints for  $\Phi_{\frac 12}^{N_f}[u^{\ell}]$ to be nonzero:
\begin{equation}\begin{aligned}\label{constraints_point_obs_p2}
N_f=0:& \quad \ell \equiv 0 \mod 2, \\
N_f=1:& \quad \ell \equiv 1 \mod 3, \\
N_f=2: &\quad \ell \equiv0 \mod 2.
\end{aligned}\end{equation}
This matches with the virtual dimensions of the moduli space for $\mathbb{P}^2$,
\be
{\rm vdim}(\CM^Q_{k})=(4-N_f)k-3-N_f/4.
\ee 
If ${\rm vdim}(\CM^Q_{k})$ is even for $\mu=1/2$, is precisely of the form in \eqref{constraints_point_obs_p2}.

To treat generic masses, we introduce for $N_f=2,3$ the mass combinations,
\be 
M_{2l}=\sum_{j=1}^{N_f} m_j^{2l},\qquad M_4'=\sum_{i<j}m_i^2m_j^2, \qquad P_{N_f}= \prod_{j=1}^{N_f} m_j.
\ee 
For $N_f=2,3$, we then find Tables \ref{PtObsNf2G} and  \ref{PtObsNf3G}, which agree with the large mass calculation in Tables \ref{PtObsNf2} and \ref{PtObsNf3} by setting $m_i=m$.
\begin{center}
\begin{table}[ht!]
\centering
\renewcommand{\arraystretch}{2}
\begin{tabular}{|Sl|Sr|}
\hline 
$\ell$ & $\displaystyle\Phi_{1/2}[u^\ell]$ for $N_f=2$\\ 
\hline 
0 & $\displaystyle\frac{P_2}{\Lambda_2^2}+\frac{3}{2^6}$\\ 
1 & $\displaystyle -\frac{7}{2^{6}} M_2$ \\
2 & $\displaystyle \frac{19}{2^6} N_2^2+\frac{23}{2^{9}}\Lambda_2^2N_2+\frac{53}{2^{18}}\Lambda_2^4$\\
3 & $\displaystyle  -\frac{421}{2^{17}}\Lambda_2^2M_2-\frac{21}{2^8}\Lambda_2^2 M_2N_2$ \\
4 & $\displaystyle \frac{85}{2^9} \Lambda_2^2 N_2^3 +\frac{1093}{2^{18}}\Lambda_2^4 M_4 +\frac{3085}{2^{16}}\Lambda_2^4N_2^2+\frac{2161}{2^{21}}\Lambda_2^6 N_2+\frac{1811}{2^{30}}\Lambda_2^8$ \\
\hline 
\end{tabular}
\caption{Table with values of $\Phi_{1/2}[u^\ell]$ for $N_f=2$ and generic masses $m_j$.}
\label{PtObsNf2G}
\end{table} 
\end{center}
\begin{center}
\begin{table}[ht!]
\centering
\renewcommand{\arraystretch}{2}
\begin{tabular}{|Sl|Sr|}
\hline 
$\ell$ & $\displaystyle\Phi_{1/2}[u^\ell]$ for $N_f=3$\\ 
\hline 
0 & $\displaystyle\frac{P_3}{\Lambda_3^3}+\frac{3}{2^6}\frac{M_2}{\Lambda_3^2}+\frac{5}{2^{18}}$\\ 
1 & $\displaystyle -\frac{7}{2^{6}}\frac{M_4'}{\Lambda_3^{2}} -\frac{3}{2^7}\frac{P_3}{\Lambda_3}-\frac{5}{2^{17}}M_2+\frac{45}{2^{30}}\Lambda_3^2$ \\
2 & $\displaystyle \frac{19}{2^6}\frac{P_3^2}{\Lambda_3^{2}}+\frac{23}{2^9}\frac{M_2P_3}{\Lambda_3}+\frac{53}{2^{18}}M_4+\frac{125 M_4'}{2^{16}}+\frac{55}{2^{21}}\Lambda_3 M_3+\frac{25}{2^{30}}\Lambda_3^2M_2+\frac{489}{2^{42}}\Lambda_3^4$\\
\hline 
\end{tabular}
\caption{Table with values of $\Phi_{1/2}[u^\ell]$ for $N_f=3$ and generic masses $m_j$. }
\label{PtObsNf3G}
\end{table} 
\end{center}

The negative powers of 2 can also be understood as follows. An insertion of $u^\ell$, gives rise to a factor $2^{3\ell}$ in the denominator since $2u$ corresponds to a 2nd Chern character. Then, the factors of 2 in the tables suggest that the class $c_{l,j}$ in (\ref{UVpartfunc}) is not an integral class, but that $2^{3l/2} c_{l,j}$ is. Thus each power of $u$ gives rise to a factor of $2^{-3}$, while that of the matter bundle is $2^{-{3\, \rm rk}(W_k^{N_f})}$,

As a result, we find that in the massless case
\begin{equation}
    2^{d_{N_f}(\ell)}\Phi_{\frac12}^{N_f}\big[(\tfrac{u}{\Lambda_{N_f}^2})^{\ell}\big]\in \mathbb Z,
\end{equation}
where $d_{N_f}$ is given by\footnote{It turns out that $d_{N_f}$ is not linear or quadratic in $N_f$. One formula that fits all four values of $N_f$ is \begin{equation}
    d_{N_f}(\ell)=3\ell+\frac{N_f}{2}(2+\ell)(N_f^2-2N_f+3).
\end{equation}}
\begin{equation}\label{denominators_p}
    d_{N_f}(\ell)=\begin{cases} 3\ell, \quad &N_f=0 \\
    4\ell+2, \quad &N_f=1, \\
    6\ell+6, \quad &N_f=2, \\
    12\ell+18, \quad &N_f=3. \end{cases}
\end{equation}
In the massive cases, the vevs $\Phi_{\frac12}^{N_f}[(\tfrac{u}{\Lambda_{N_f}^2})^{\ell}]$ are dimensionless, and so can only depend on the dimensionless ratios $\mu_j=m_j/\Lambda_{N_f}$. By the above argument, the negative powers of 2 are maximal for the top Chern class. Therefore,
\begin{equation}
    2^{d_{N_f}(\ell)}\Phi_{\frac12}^{N_f}\big[(\tfrac{u}{\Lambda_{N_f}^2})^{\ell}\big]\in \mathbb Z[\mu_1,\dots,\mu_{N_f}]
\end{equation}
are valued in the polynomial ring in the masses $\mu_j$ over the integers, with the same denominators $d_{N_f}(\ell)$  as in the massless case.

\subsection{Non-vanishing background fluxes}\label{sec:nonvanishing_bckg_flux}
As described in Part I and above, we can introduce non-trivial background fluxes $\bfk_1,\dots, \bfk_{N_f}\in H^2(X,\mathbb{Z}/2)$.

For $X=\mathbb{P}^2$ and $N_f=1$, the consistent formulation of the theory on $\mathbb{P}^2$ requires that $k_1\in \mathbb{Z}+1/2$ for $\mu=0$, while $k_1\in \mathbb{Z}$ for $\mu=1/2$. We first determine the series for the large mass expansion, and using the mock Jacobi form $F_\mu(\tau,\rho)$ (\ref{f12function}). The exponentiated point correlator then reads
\be 
\label{PhimuBG}
\begin{split}
&\Phi_{\mu}[e^{2pu/\Lambda_{1}^2}]=-\frac{2}{3}\frac{m^{3}}{\Lambda_{1}} \frac{1}{\Lambda_0^{12}}\\
&\qquad \times {\rm Coeff}_{q^0} {\rm Ser}_{m^{-1}} \left[ \left(\frac{du}{da}\right)^{12}\frac{\eta(\tau)^{27}}{P_{1}^{\text{M}}}\,C^{k_1^2}\,F_{\mu}(\tau, k_1 v) \, e^{2pu/\Lambda_{1}^2}\right].
\end{split}
\ee 
For the couplings $C\coloneqq C_{11}$ \eqref{cij} and $v\coloneqq v_1$ \eqref{Defvw}, we substitute the expansions (\ref{Cseries}) and (\ref{vseries}) to sufficiently high order. The result for $\Phi_0[u^\ell]$ is listed in Table \ref{nf1backgroundul0} and for $\Phi_{1/2}[u^\ell]$ in Table \ref{nf1backgroundul1}. The results are consistent with the decoupling limit \eqref{IR-decoupling}.

\begin{table}[ht]\small\begin{center}
\renewcommand{\arraystretch}{2}
\begin{tabular}{|>{$\displaystyle}Sl<{$}| >{$\displaystyle}Sr<{$}|>{$\displaystyle}Sr<{$}|>{$\displaystyle}Sr<{$}|}
\hline
\ell& k_1=\frac12 & k_1=\frac32 & k_1=\frac52\\
\hline
  0& -\frac{3}{4\sqrt{2}} \frac{m^1}{m^1} & -\frac{9}{4\sqrt{2}} \frac{\Lambda_1^2}{m^2} & -\frac{15}{4\sqrt{2}} \frac{\Lambda_1^6}{m^6} \\
  1& 0 & -\frac{31}{64\sqrt{2}} \frac{\Lambda_1^5}{m^3}& -\frac{155}{64\sqrt{2}} \frac{\Lambda_1^9}{m^5}  \\
 2& -\frac{13}{64\sqrt{2}} \Lambda_1^3m & -\frac{39}{64\sqrt{2}}\frac{\Lambda_1^5}{m}- \frac{567}{2^{12}\sqrt{2}} \frac{\Lambda_1^8}{m^4} & -\frac{65}{64\sqrt{2}} \frac{\Lambda_1^9}{m^5}  \\
 3& \frac{113}{2^{13}\sqrt{2}} \Lambda_1^6+\frac{50175}{2^{23}\sqrt{2}}\frac{\Lambda_1^9}{m^3} & -\frac{867}{2^{12}\sqrt{2}}\frac{\Lambda_1^8}{m^2} & -\frac{1225}{2^{10}\sqrt{2}}\frac{\Lambda_1^{12}}{m^6}\\
 4 & -\frac{879}{2^{13}\sqrt{2}} \Lambda_1^6m^2 & -\frac{2637}{2^{13}\sqrt{2}} \Lambda_1^8-\frac{7305}{2^{17}\sqrt{2}} \frac{\Lambda_1^{11}}{m^3} & -\frac{4395}{2^{13}\sqrt{2}} \frac{\Lambda_1^{12}}{m^4}\\
 \hline
\end{tabular}
\caption{List of the first few $\Phi_{0}[u^{\ell}]$ for  $N_f=1$ on $\mathbb P^2$ with non-vanishing background fluxes, $k_1=\frac12, \frac{3}{2}, \frac{5}{2}$. The expansion is determined up to $\CO(m^{-6
})$, $\CO(m^{-6
})$ and $\CO(m^{-{10}
})$, respectively.} \label{nf1backgroundul0}\end{center}
\end{table}

\begin{table}[ht]\small\begin{center}
\renewcommand{\arraystretch}{2}
\begin{tabular}{|>{$\displaystyle}Sl<{$}|  >{$\displaystyle}Sr<{$}| >{$\displaystyle}Sr<{$}| >{$\displaystyle}Sr<{$}| }
\hline
\ell& k_1=1 & k_1=2 & k_1=3 \\
\hline
  0& 1 & \frac{\Lambda_1^3}{m^3}+\frac{15}{64}\frac{\Lambda_1^6}{m^6} & \frac{\Lambda_1^8}{m^8}+\frac{45}{32}\frac{\Lambda_1^{11}}{m^{11}}\\
  1& 0 & \frac{21}{64}\frac{\Lambda_1^6}{m^4} & \frac{7}{8}\frac{\Lambda_1^{11}}{m^9} \\
 2& \frac{19}{64}\frac{\Lambda_1^3 m}{m^0} & \frac{19}{64}\frac{\Lambda_1^6}{m^2} & \frac{19}{64}\frac{\Lambda_1^{11}}{m^7}+\frac{201}{256}\frac{\Lambda_1^{14}}{m^{10}} \\
 3& -\frac{11}{2^9} \Lambda_1^6 & 0 & \frac{237}{2^9} \frac{\Lambda_1^{14}}{m^8}\\
 4& \frac{85}{2^9} \Lambda_1^6 m^2 & \frac{85}{2^9} \frac{\Lambda_1^9}{m} & \frac{85}{2^9   } \frac{\Lambda_1^{14}}{m^6}+\frac{64775}{2^{17}} \frac{\Lambda_1^9}{m}\\
 \hline
\end{tabular}
\caption{List of the first few $\Phi_{1/2}[u^{\ell}]$ for large mass $N_f=1$ on $\mathbb P^2$ with background flux, $k_1=1,2,3$. The expansion is determined up to $\CO(m^{-6})$, $\CO(m^{-10})$ and $\CO(m^{-16})$ respectively. } \label{nf1backgroundul1}\end{center}
\end{table}

As discussed in the previous section, a small or generic mass calculation requires an anti-derivative that transforms under $\slz$ rather than under a subgroup. In Appendix \ref{sec:bkgFluxes}, we perform the calculation for $N_f=1$ and $N_f=2$ with generic masses,  using the mock Jacobi form $H(\tau,\rho)$ \eqref{H12Def} for $\mu=\frac12$ rather than $F_{\frac12}(\tau,\rho)$. For specific choices of the background fluxes, the naive evaluation using this function gives different results depending on the evaluation point: For large masses, the point correlators have a well-defined $m\to\infty$ limit, for small masses they have a good massless limit $m\to0$, while for generic masses, i.e. without expanding in the masses at all, the correlators do not have either of these limits. Possible obstructions for this involve a pole $\tn$ of the anti-derivative $H(\tau,\rho(\tau))$, which is not related to the branch point but rather due to solutions of $v_j(\tn)=\frac12$ inside the fundamental domain. In Appendix \ref{sec:bkgFluxes}, we analyse this issue in some detail and discuss possible resolutions.

\section{Contributions from strong coupling singularities}\label{sec:SWcontributions}
In this section, we analyse the mass dependence of the contributions from strong coupling singularities, or SW contributions, by analysing the wall-crossing of the $u$-plane integral. The general form of these contributions was determined in \cite{Moore:1997pc}, and studied in various cases, for instance for the massless theories on specific manifolds in  \cite{Kanno:1998qj, Labastida:1995gp} and for generic masses in \cite{Marino:1998uy,moore_talk2018,Dedushenko:2017tdw}.

The type of a strong coupling singularity is determined by the Kodaira classification of singular fibres (see Table \ref{tab:kodaira}). The  monopole and dyon singularities of the pure $N_f=0$ theory are examples of $I_1$ singularities.\footnote{This viewpoint depends on the global form of the theory, in this case that of pure $\CN=2$ SYM with gauge algebra $\mathfrak{su}(2)$ \cite{Argyres:2022kon}. In particular, the SW curves for the various global forms are related by compositions of isogenies, which do not leave the type of $I_n$ singular fibres invariant \cite{Argyres:2015ffa,Closset:2023pmc}. We will comment on this further in the conclusions (Section \ref{sec:discussion}).} The collision of $n$ mutually local $I_1$ singularities gives rise to an $I_n$ singularity. If rather $n=2,3,4$ mutually non-local singularities collide, we get type $II$, $III$ or $IV$ Argyres-Douglas points. Both types of collisions can have nontrivial consequences for the partition functions in the limit.

Most of this section will deal with the $I_1$ SW contributions for generic four-manifolds.
Under various assumptions, we also generalise the arguments to $I_2$ contributions. For $K3$, we calculate the SW contributions in various examples in $N_f=1,2,3$, and study the limit to the AD mass locus in some detail. In Section \ref{sec:segrenumbers}, we relate the contributions from the instanton component in some examples to Segre numbers \cite{Gottsche:2020ass,Gottsche:2021ihz}. In Section \ref{gen_sim_type}, we discuss the general structure of SW partition functions for arbitrary configurations, extending the notion of \emph{generalised simple type} conditions familiar from the pure $N_f=0$ case.

\subsection{SW contributions of \texorpdfstring{$I_1$}{I1} singularities}\label{sec:SWI1}
Let us first study $I_1$ singularities. This is the case for generic masses with  arbitrary number $N_f$ of flavours.
The Seiberg-Witten contribution from the strong coupling cusp $j=1,\dots,N_f+2$ reads \cite{Moore:1997pc},
\be 
\label{SWcont}
Z^J_{\text{SW},j,\bfmu}=\sum_c \Lambda_{N_f}^{n(c)}\,\SW(c;J)\,  \mathop{\mathrm{Res}}_{a_j=0}\,\left[ \frac{e^{-S_{SW,j}}}{a_j^{1+n(c)}}\right]
\ee 
where $n(c)$ is the complex dimension of the monopole moduli space
\be 
n(c)=\frac{1}{8}(c^2-2\chi-3\sigma),
\ee
and the exponentiated action takes the form 
\be 
\label{SWuniversal}
e^{-S_{SW,j}}=\kappa_j\, \CA_j^\chi \,\CB_j^\sigma\, \CC_j^{\bfk^2_1}\,\cD_j^{B(\bfk_1,c)}\,\CE_j^{c^2}\, \CF_{\bfmu,j},
\ee 
where we specialise to $N_f=1$ for simplicity. The general form for the contribution of the $j$-th singularity with general $N_f$ contains a product $\prod_{i,k=1}^{N_f}\CC_{ik,j}^{B(\bfk_i,\bfk_k)}$ and similar for $\mathcal D_j$, as explained in part I. 
Here, $\CF_{\bfmu,j}$ is short hand for the four couplings,
\be 
\CF_{\bfmu,j}=f_{1,j}^{\bfmu^2} f_{2,j}^{B(K,\bfmu)}f_{3,j}^{B(c,\bfmu)}f_{4,j}^{B(\bfk_1,\bfmu)}.
\ee
For four-manifolds with $b_2^+=1$, SW-invariants are metric dependent \cite{Witten:1994cg,Kronheimer1994}. If $B(c,J^+)>0$ and $B(c,J^-)<0$, then \cite{li_liu_1995,Morgan1995,PARK2004}
\be
\label{SWwall}
\SW(c;J^+)-\SW(c;J^-)=-(-1)^{n(c)},
\ee 
The SW invariant furthermore satisfies \cite{Witten:1994cg,Morgan1995}
\be 
\label{SWsign}
\SW(-c)=(-1)^{\chi_{\rm h}} \SW(c),
\ee 
with $\chi_{\rm h}=\frac{\chi+\sigma}{4}$ the holomorphic Euler characteristic.

Changing the known result \cite{Witten:1994cg}  to our convention using (\ref{changeconvs}), the result for $N_f=0$ is 
\be
\begin{split}
Z_\bfmu[e^{2pu}]&=-2^{1-\chi_{\rm h}+\lambda}e^{\pi i \lambda/2} \sum_{c} \SW(c)\,(-1)^{B(c-K,\bfmu)}\\
&\quad \times \left(e^{\pi i \chi_{\rm h}/2} e^{2pu_1^*}+e^{2\pi i  \bfmu^2}e^{2pu_2^*} \right)
\end{split}
\ee
with $\lambda=2\chi+3\sigma$.

For instance for $X=K3$, we have $\sigma=-16$, $\chi=24$, $\chi_h=2$, $K=0$ and thus $\lambda=2\chi+3\sigma=0$. Furthermore, ${\rm SW}(c)=1$ for $c=0$, and vanishes otherwise.
The result for $K3$  and $N_f=0$ is then \cite{OGrady1992DonaldsonsPF,Witten:1994ev}
\be 
\label{SWK3}
Z_{\bfmu}[e^{2pu}]=\frac{1}{2} \left( e^{2pu_1^*}-e^{2\pi i \bfmu^2} e^{2pu_2^*}\right).
\ee 

From the relation in Section \ref{sec:wall-crossing} between the wall-crossing of the $u$-plane integral and SW contributions, we have for the $j$-th singularity, 
\be 
\begin{split}
\left[\Phi_\bfmu^{J^+} -\Phi^{J^-}_\bfmu \right]_j&=\kappa_j \sum_c (-1)^{n(c)} \tfrac{1}{2} (\sgn(B(c,J^+))-\sgn(B(c,J^-)))\\
&\times \Lambda_{N_f}^{n(c)}\,\mathop{\mathrm{Res}}_{a_j=0}\left[ a_j^{-1-n(c)} \CA_j^\chi \,\CB_j^\sigma\, \CC_j^{\bfk^2_1}\,\cD_j^{B(\bfk_1,c)}\,\CE_j^{c^2}\, \CF_{\bfmu,j}\right].
\end{split}
\ee  
We substitute $\chi=4-\sigma$ in the right hand side, such that 
\be
\label{SWWC}
\begin{split} 
&\left[\Phi_\bfmu^{J^+} -\Phi^{J^-}_\bfmu \right]_j=-\kappa_j \sum_c (-1)^{(c^2-\sigma)/8}\tfrac{1}{2} (\sgn(B(c,J^+))-\sgn(B(c,J^-)))\\
&\times \Lambda_{N_f}^{(c^2-\sigma)/8-1} \mathop{\mathrm{Res}}_{a_j=0}\left[ a_j^{-(c^2-\sigma)/8} \CA_j^4 \,(\CB_j/\CA_j)^\sigma\, \CC_j^{\bfk^2_1}\,\cD_j^{B(\bfk_1,c)}\,\CE_j^{c^2}\, \CF_{\bfmu,j}\right].
\end{split}
\ee 
\subsubsection*{Contribution from monopole cusp $u_1^*$}
The left hand side is the contribution from the $u$-plane integral for $j=1$, which is obtained using the substitution $\tau=-1/\tau_1$. We introduce
\be 
\begin{split}
&A_1(\tau_1)=\tau_1^{1/2}\, A(-1/\tau_1),\\
&\quad \quad = 2^{3/4}\Lambda_{N_f}^{-1/2}s_1^{-1/4}\,(27g_3(u_1^*))^{1/12}+\dots,\\
&B_1(\tau_1)=B(-1/\tau_1),\\
&C_1(\tau_1)=e^{-\pi i v_1^2/\tau_1}\, C(-1/\tau_1),\\
&\left(\frac{da}{d\tau}\right)_1(\tau_1)=\tau_1^{-3} \left(\frac{da}{d\tau} \right)(-1/\tau_1),\\
&v_1(\tau_1)=\tau_1\, v(-1/\tau_1).
\end{split}
\ee 
The substitution brings the analogue of the integral (\ref{WCPhi}) near singularity $j=1$ to the form 
\be 
\label{uWC}
\begin{split} 
\left[\Phi_\bfmu^{J^+} -\Phi^{J^-}_\bfmu \right]_1&= - e^{\pi i \sigma/4}\CK_{N_f} \lim_{Y\to \infty}\int_{-1/2+iY}^{1/2+iY} d\tau_j \left(\frac{da}{d\tau}\right)_1 A_1^\chi B_1^\sigma C_1^{\bfk_1^2}\, e^{\pi i B(\bfmu,K)}\\
& \times \sum_{\bfk\in L+K/2} \tfrac{1}{2} (\sgn(B(c+\bfb_1,J^+))-\sgn(B(c+2\bfb_1,J^-))) \\
&\times e^{\pi i B(\bfk,K)}\,q_1^{-\bfk^2/2}e^{-2\pi i B(\bfk,\bfk_1 v_1)} e^{-2\pi i B(\bfk,K/2-\bfmu)}\\
&=-2\pi i\, e^{\pi i \sigma/4}\,\CK_{N_f} \\
&\times \sum_c  \tfrac{1}{2} (\sgn(B(c+2\bfb_1,J^+))-\sgn(B(c+2\bfb_1,J^-)))\\
& \times \mathop{\mathrm{Res}}_{a_1=0}\left[ A_1^4 (B_1/A_1)^\sigma C_1^{\bfk_1^2}\, q_1^{-c^2/8}e^{-\pi i B(c,\bfk_1 v_1)}e^{\pi i B(c+K,\bfmu)}\right]
\end{split}
\ee
where we substituted $c=2\bfk$ and $\bfb_1=\bfk_1\,{\rm Im}(v_1)$. We assume that $v_1=\CO(q_1)$ for $\tau_1\to 0$. We hope to check this in the future.

Comparing \eqref{SWWC} and \eqref{uWC}, we can read off the couplings for singularity $j=1$,
\be 
\begin{split} 
&\kappa_1 \CA_1^4/\Lambda_{N_f}=2\pi i\,\CK_{N_f} A_1^4,\\ 
& e^{-\pi i/8} (a_1/\Lambda_{N_f})^{1/8} \CB_1/\CA_1=e^{\pi i/4} B_1/A_1,\\
& \CC_1=C_1,\\
& \cD_1=e^{-\pi i v_1},\\
& (-a_1/\Lambda_{N_f})^{-1/8}\CE_1=q_1^{-1/8},\\
&\CF_{\bfmu,1}=e^{\pi i B(c+K,\bfmu)}.
\end{split}
\ee 

In the large mass expansion, the leading behavior for $\CC_1$ and $\cD_1$ for $N_f=1$ can be determined from \eqref{Cseries} and \eqref{wseries} by letting $\tau\to 0$. For the other couplings, we substitute
 \eqref{CNNf},
\be 
\label{CABE}
\begin{split}
&\CA_1= 2^{1/2} e^{\pi i/8} \kappa_1^{-1/4} \Lambda_{N_f}^{-1/2} \left( \frac{du}{da} \right)_1^{1/2} \\
&\quad = 2\, e^{\pi i/8} \kappa_1^{-1/4} s_1^{-1/4}\Lambda_{N_f}^{-1/2} (27g_3(u_1^*))^{1/12},\\
&\CB_1=2^{1/4}e^{\pi i/2}\kappa_1^{-1/4}\Lambda_{N_f}^{1/8} \frac{B_1}{a_1^{1/8}}\\
& \quad = 2^{5/4} e^{\pi i(N_f+5)/8} \kappa_1^{-1/4} 
s_1^{-3/16}\Lambda_{N_f}^{-(2N_f+3)/8} \\
&\qquad \times \left( (4-N_f) P_{N_f}^{\rm M}(u_1^*)\,(27g_3(u_1^*))^{1/2} \right)^{1/8},\\
&\CE_1=\left( (-1)^{N_f}s_1^{-3/2} \Lambda_{N_f}^{2N_f-9}\frac{64}{4-N_f} \frac{\left(27g_3(u_1^*)\right)^{3/2}}{P_{N_f}^{\rm M}(u_1^*)}  \right)^{1/8}.
\end{split}
\ee 
Combining all the factors for $\bfk_1=0$, we find assuming the simple type condition,
\be 
\label{ZSW1gen}
\begin{split}
Z_{\text{SW},1,\bfmu}&=\kappa_1^{1-\chi_{\rm h}}2^{2\chi_h+\lambda} e^{\frac{\pi i}{2}(\chih+\lambda)}(-1)^{N_f\chih}s_1^{-3\chih/2-\lambda/8} \Lambda_{N_f}^{(2N_f-3)\chi_{\rm h}-\lambda} \\
&\quad \times \left((4-N_f)P_{N_f}^{\rm M}(u_1^*) \right)^{-\chih}(27g_3(u_1^*))^{\chih/2+\lambda/6}\\
&\quad \times  \sum_c \SW(c)\,e^{\pi i B(c+K,\bfmu)}.
\end{split}
\ee 
In order to  fix $\kappa_1$, we apply this to $X=K3$ with $\bfk_1=0$, 
\be \label{Z1K3}
Z_{\text{SW},1,\bfmu}=-\frac{2^4}{\kappa_1} \Lambda_{N_f}^{-6+4N_f} (4-N_f)^{-2}
\, (P^{\rm M}_{N_f}(u_1^*))^{-2}\,27g_3(u_1^*), 
\ee 
where we used that $s_1$ is a third root of unity. Comparing with \eqref{SWK3} for $N_f=0$, we then find $\kappa_1=-2$. This holds for generic $N_f$ since this number is independent of the masses.
Substitution in (\ref{ZSW1gen}) and using (\ref{SWsign}) and $s_1=1$ gives
\be 
\label{ZSW1gen2}
\boxed{
\begin{aligned}
Z_{\text{SW},1,\bfmu}&=-2^{1+\chi_h+\lambda} e^{\frac{\pi i}{2}(\chih+\lambda)}(-1)^{N_f\chih}\Lambda_{N_f}^{(2N_f-3)\chi_{\rm h}-\lambda} C_1^{\bfk_1^2}\\
&\quad \times \left((4-N_f)P_{N_f}^{\rm M}(u_1^*) \right)^{-\chih}(27g_3(u_1^*))^{\chih/2+\lambda/6}\\
&\quad \times  \sum_c \SW(c)\,e^{\pi i B(c-K,\bfmu)}e^{-\pi i v_1 B(\bfk_1,c)}.
\end{aligned}}
\ee 
This agrees for $N_f=0$, with the well-known SW contribution at the monopole cusp \cite[eq. (2.17)]{Witten:1994cg}. Using \eqref{phat_Deltan} for $n=1$, this matches for any $N_f$  with \cite{Moore:1997pc,Marino:1998uy}.

\subsubsection*{Contribution from dyon cusp $u_2^*$}
Let us also consider the contributions from the other cusps.
The local coupling $\tau_2$ reads in terms of the effective coupling $\tau=2-1/\tau_2$, such that $\tau\to 2$ for $\tau_2\to i\infty$. We introduce the following couplings
\be 
\begin{split}
&A_2(\tau_2)=\tau_2^{1/2}\, A(2-1/\tau_2),\\
&\quad = 2^{3/4}\Lambda_{N_f}^{-1/2}\,s_2^{-1/4}\,(27g_3(u_2^*))^{1/12}+\dots,\\
&B_2(\tau_2)=B(2-1/\tau_2),\\
&C_2(\tau_2)=e^{-\pi i v_2^2/\tau_2}\, C(2-1/\tau_2),\\
&\left(\frac{da}{d\tau}\right)_2(\tau_2)=\tau_2^{-3} \left(\frac{da}{d\tau} \right)(2-1/\tau_2),\\
&v_2(\tau_2)=\tau_2\, v(-1/\tau_2).
\end{split}
\ee 
Then we find from the $u$-plane for wall-crossing
\be 
\label{uWC2}
\begin{split} 
\left[\Phi_\bfmu^{J^+} -\Phi^{J^-}_\bfmu \right]_2
&=-2\pi i\,e^{-2\pi i \bfmu^2}\, e^{\pi i \sigma/4}\,\CK_{N_f}\\
&\times \sum_c  \tfrac{1}{2} (\sgn(B(c+2\bfb_1,J^+))-\sgn(B(c+2\bfb_1,J^-)))\\
& \times \mathop{\mathrm{Res}}_{a_2=0}\left[ A_2^4 (B_2/A_2)^\sigma C_2^{\bfk_1^2}\, q_2^{-c^2/8}e^{-\pi i B(c,\bfk_1 v_2)}e^{\pi i B(c+K,\bfmu)}\right].
\end{split}
\ee
Following the same steps as for the monopole cusp, we find for the various couplings
\be
\begin{split}
&\CA_2 = 2\, e^{\pi i/8} \kappa_2^{-1/4} s_2^{-1/4}\Lambda_{N_f}^{-1/2} (27g_3(u_2^*))^{1/12},\\
&\CB_2=  2^{5/4} e^{\pi i(N_f+5)/8} \kappa_2^{-1/4} 
s_2^{-3/16}\Lambda_{N_f}^{-(2N_f+3)/8}\\
&\qquad \times \left( (4-N_f) P_{N_f}^{\rm M}(u_2^*)\,(27g_3(u_2^*))^{1/2} \right)^{1/8},\\
&\CC_2=C_2,\\
&\cD_2=e^{-\pi i v_2}\\
&\CE_2=\left( (-1)^{N_f}s_2^{-3/2} \Lambda_{N_f}^{2N_f-9}\frac{64}{4-N_f} \frac{\left(27g_3(u_2^*)\right)^{3/2}}{P_{N_f}^{\rm M}(u_2^*)}  \right)^{1/8},\\
&\CF_{\bfmu,2}=e^{\pi i B(c+K,\bfmu)-2\pi i \bfmu^2}.
\end{split}
\ee
This then gives for $\bfk_1=0$ (which requires $\bfmu=K/2$ for $N_f=1$),
\be 
\begin{split}
Z_{\text{SW},2,\bfmu}&=\kappa_2^{1-\chi_{\rm h}}2^{2\chi_h+\lambda} e^{\frac{\pi i}{2}(\chih+\lambda)}(-1)^{N_f\chih}s_2^{-3\chih/2-\lambda/8} \Lambda_{N_f}^{(2N_f-3)\chi_{\rm h}-\lambda}\\
&\quad \times \left((4-N_f)P_{N_f}^{\rm M}(u_2^*) \right)^{-\chih}(27g_3(u_2^*))^{\chih/2+\lambda/6}\\
&\quad \times  \sum_c \SW(c)\,e^{\pi i B(c+K,\bfmu)-2\pi i \bfmu^2}.
\end{split}
\ee 
Applying this to $X=K3$, we find
\be
Z_{\text{SW},2,\bfmu}=-\frac{2^4}{\kappa_2}e^{-2\pi i\bfmu^2}\Lambda_{N_f}^{-6+4N_f} (4-N_f)^{-2}
\, (P^{\rm M}_{N_f}(u_2^*))^{-2}\,27g_3(u_2^*).
\ee
For $N_f=0$, we have $27g_3(u_2^*)=-1=e^{\pi i}$. With $s_2=e^{4\pi i/3}$ (see Tables \ref{sj_phasesLight} and \ref{sj_phasesHeavy}),
we thus arrive at $\kappa_2=-2$. The general contribution from this cusp is
\be \label{ZSW1gen2j=2}
\boxed{
\begin{aligned}
Z_{\text{SW},2,\bfmu}&=-2^{1+\chi_h+\lambda} e^{\frac{\pi i}{2}(\chih+\lambda)}(-1)^{N_f\chih}s_2^{-3\chih/2-\lambda/8} \Lambda_{N_f}^{(2N_f-3)\chi_{\rm h}-\lambda}C_2^{\bfk_1^2}\\
&\quad \times \left((4-N_f)P_{N_f}^{\rm M}(u_2^*) \right)^{-\chih}(27g_3(u_2^*))^{\chih/2+\lambda/6}\\
&\quad \times  \sum_c \SW(c)\,e^{\pi i B(c-K,\bfmu)-2\pi i \bfmu^2}e^{-\pi i v_2 B(\bfk_1,c)}.
\end{aligned}}
\ee 
\subsubsection*{Contribution from hypermultiplet cusp $u_3^*$}
We will continue with the contribution from a singularity associated to one of the fundamental hypermultiplets. We label this singularity as  $u_3^*$, and we assume that this singularity is approached for $\tau\to 1\in \mathbb{Q}$. This is case for the hypermultiplet singularity in the $N_f=1$ theory. The contribution from hypermultiplet singularities in other theories, possibly at other points in $\mathbb{Q}$, can be determined similarly. 

The coupling $\tau$  then  reads $\tau=1-1/\tau_3$ in terms of the local coupling $\tau_3$ near the singularity $j=3$. For simplicity, we fix $\bfmu=K/2$ such that we can choose vanishing background fluxes for the flavour symmetry. Then the wall-crossing of the $u$-plane integral at $u_3$ reads
\be 
\label{uWC3}
\begin{split} 
\left[\Phi_{K/2}^{J^+} -\Phi^{J^-}_{K/2} \right]_3
&=-2\pi i\, e^{-\pi i \sigma/2}\,\CK_{N_f}\\
&\times \sum_c  \tfrac{1}{2} (\sgn(B(c+2\bfb_1,J^+))-\sgn(B(c+2\bfb_1,J^-)))\\
& \times \mathop{\mathrm{Res}}_{a_3=0}\left[ A_3^4 (B_3/A_3)^\sigma C_3^{\bfk_1^2}\, q_3^{-c^2/8}e^{-\pi i B(c,\bfk_1 v_3)}e^{\pi i B(c+K,K/2)}\right],
\end{split}
\ee
where we used that $K^2=\sigma \mod 8$.

We then find for the various couplings
\be
\begin{split}
&\CA_3 = 2\, e^{\pi i/8} \kappa_3^{-1/4} s_3^{-1/4}\Lambda_{N_f}^{-1/2} (27g_3(u_3^*))^{1/12},\\
&\CB_3=  2^{5/4} e^{\pi i(N_f+3)/8} \kappa_3^{-1/4} 
s_3^{-3/16}\Lambda_{N_f}^{-(2N_f+3)/8}\\
&\qquad \times \left( (4-N_f) P_{N_f}^{\rm M}(u_3^*)\,(27g_3(u_3^*))^{1/2} \right)^{1/8},\\
&\CC_3=C_3,\\
&\cD_3=e^{-\pi i v_3},\\
&\CE_3=\left( (-1)^{N_f}s_3^{-3/2} \Lambda_{N_f}^{2N_f-9}\frac{64}{4-N_f} \frac{\left(27g_3(u_3^*)\right)^{3/2}}{P_{N_f}^{\rm M}(u_3^*)}  \right)^{1/8},\\
&\CF_{K/2,3}=e^{\pi i B(c+K,K/2)}.
\end{split}
\ee

We assume that the normalization $\kappa_3$ equals $\kappa_1=\kappa_2=-2$. We then arrive for the general expression at 
\be \label{ZSW1gen2j=3}
\boxed{
\begin{aligned}
Z_{\text{SW},3,K/2}&=-2^{1+\chi_h+\lambda} e^{\frac{\pi i}{2}(\chih-\lambda)}(-1)^{N_f\chih}s_3^{-3\chih/2-\lambda/8} \Lambda_{N_f}^{(2N_f-3)\chi_{\rm h}-\lambda}C_3^{\bfk_1^2}\\
&\quad \times \left((4-N_f)P_{N_f}^{\rm M}(u_3^*) \right)^{-\chih}(27g_3(u_3^*))^{\chih/2+\lambda/6}\\
&\quad \times  \sum_c \SW(c)\,e^{\pi i B(c,K/2)}e^{-\pi i v_3B(\bfk_1,c)}.
\end{aligned}}
\ee 
We have thus derived the SW contributions \eqref{ZSW1gen2}, \eqref{ZSW1gen2j=2}, \eqref{ZSW1gen2j=3} for singularities at the cusps $\tau=0,2,1$, as is the case for instance in $N_f=1$ with a generic mass. 

\subsection{SW contributions of \texorpdfstring{$I_2$}{I2} singularities}\label{sec:SWI2}
In the SW curves for $N_f\geq 2$, $I_r$ singularities with $1<r\leq 2N_f-2$ can occur for special values of the masses. Indeed, as shown in \eqref{dDelta}, such higher $I_r$ singularities occur precisely if some of the $N_f$ masses (anti-)align. In such cases, there is a singular point on the Coulomb branch where $r$ mutually local dyons become massless. The `maximal' $I_{2N_f-2}$ singularity occurs precisely in the massless case.\footnote{For $N_f=4$, in the massless limit the six singularities collide in an $I_0^*$ singularity, rather than an $I_6$.}

Let us consider such a configuration with an $I_{r> 1}$ singularity. The (complex) virtual dimension of the $r$-monopole equation reads \cite{bryan1996}
\be \label{vdim_r}
n_r(c)=\frac{r(c^2-\sigma)-2(\chi+\sigma)}{8}.
\ee 
For $r>1$, the moduli spaces $\CM_r$ are non-compact. There is a global symmetry  $SU(r+1)$ which acts on the moduli space of $r$-monopole equations. The non-compactness can be mitigated by studying the equivariant cohomology of $\CM_r$ with respect to the $SU(r+1)$ action \cite{LoNeSha}. This corresponds to deforming the $I_r$ singularities to $I_1$ singularities by making the masses generic, for which the spaces $\CM_1$ are compact.

We will proceed quite heuristically in the following by {\it assuming} the existence of well-defined numerical Seiberg-Witten invariants $\SW_r$ for $I_r$ singularities, and a wall-crossing formula for these invariants similar to (\ref{SWwall}). We will then show for K3 that smoothness of the partition function as function of the mass implies sum rules for $\SW_2$ invariants in terms of ordinary $\SW_1$ invariants. The existence of a smooth limit is non-trivial, even with the freedom to fix $\SW_2(c)$ with $n_2(c)\leq 0$. We therefore consider it worthwhile to include it here.

We will consider $r=2$ in what follows. We let the singularity $u_3^*$ be of type $I_2$. This makes the analysis suited for the case of $N_f=2$ with $m_1=m_2$, as can be seen from Fig. \ref{fig:nf2domain}. 
Analogously to \eqref{SWcont}, the contribution from this singularity then reads
\be 
\label{SWcont_I2}
Z^J_{\text{SW},3,\bfmu}=\sum_c \SW_2(c;J) \Lambda_{N_f}^{n_2(c)}\,\mathop{\mathrm{Res}}_{a_3=0}\,\left[ a_3^{-1-n_2(c)}e^{-S_{SW,3}}\right],
\ee 
where $\SW_2(c;J)$ is a SW invariant for the $r$-monopole equations. These invariants are not expected to depend on the same four-manifold data as the $I_1$ SW-invariants. An alternative way to express this contribution is as the $q_3^0$ term,
\be \label{I2_contr_SW}
Z^J_{\text{SW},3,\bfmu}=\sum_c \SW_2(c;J)\,\Lambda_{N_f}^{n_2(c)}\, {\rm Coeff}_{q_3^0}\left[ \frac{1}{\pi i}\frac{da_3}{d\tau_3} a_3^{-1-n_2(c)}e^{-S_{SW,3}}\right].
\ee 
Moreover, in the absence of background fluxes, the effective action takes the form
\be 
e^{-S_{SW,3}}=\kappa_3\, \CA_3^\chi \,\CB_3^\sigma\, \CE_3^{c^2}\, \CF_{\bfmu,3}.
\ee 
In analogy with (\ref{SWwall}), we {\it assume} that these invariants change under wall-crossing as
\be
\SW_r(c;J^+)-\SW_r(c;J^-)=-(-1)^{n_r(c)}.
\ee 
The relation between the $u$-plane wall-crossing and the SW-contributions then gives
\be
\begin{split}
\left[ \Phi_\bfmu^{J^+} - \Phi_{\bfmu}^{J^-}\right]_3&=\kappa_3 \sum_c (-1)^{n_r(c)} \tfrac{1}{2} (\sgn(B(c,J^+))-\sgn(B(c,J^-)))\\
&\quad \times \Lambda_{N_f}^{n_2(c)}\,\mathop{\mathrm{Res}}_{a_3=0}\left[ a_3^{-1-n_2(c)}\CA_3^\chi \CB_3^\sigma \CE_3^{c^2} \CF_{\bfmu,3}\right].
\end{split} 
\ee 
By substitution of $\chi+\sigma=4$, this becomes
\be
\label{r2SWcont}
\begin{split}
\left[ \Phi_\bfmu^{J^+} - \Phi_{\bfmu}^{J^-}\right]_3&=-\kappa_3 \sum_c (-1)^{(c^2-\sigma)/4} \tfrac{1}{2} (\sgn(B(c,J^+))-\sgn(B(c,J^-)))\\
&\quad \times \Lambda_{N_f}^{(c^2-\sigma)/4-1}\,\mathop{\mathrm{Res}}_{a_3=0}\left[ a_3^{-(c^2-\sigma)/4}\CA_3^4 (\CB_3/\CA_3)^\sigma \CE_3^{c^2} \CF_{\bfmu,3}\right].
\end{split} 
\ee

To proceed, we assume that the $I_2$ singularity is at $\tau\to 1$ in the fundamental domain $\CF(\bfm)$, and choose vanishing background fluxes for simplicity. As a result, the 't Hooft flux is fixed $\bfmu=K/2$. 
We can determine the couplings $\CA_3,\dots , \CF_{\bfmu,3}$ by comparing (\ref{uWC3}) and (\ref{r2SWcont}).
This gives 
\be \label{ABE3}
\begin{split} 
\CA_3&=2^{1/2} e^{\pi i/8}\kappa_3^{-1/4}\Lambda_{N_f}^{-1/2}\left( \frac{du}{da}\right)_3^{1/2},\\
\CB_3&=2^{1/4}\kappa_3^{-1/4}e^{-\pi i/8}\Lambda_{N_f}^{1/4}\frac{B_3}{a_3^{1/4}},\\
\CE_3&=e^{-\pi i/4}\Lambda_{N_f}^{-1/4}\,a_3^{1/4}q_3^{-1/8},\\
\CF_{K/2,3}&=e^{\pi i B(c+K,K/2)},
\end{split}
\ee 
where 
\be 
\left(\frac{da}{du} \right)_3(\tau_3)=\tau_3^{-1} \left(\frac{da}{du} \right)(1-1/\tau_3).
\ee 

\subsubsection*{$\boldsymbol{I_2}$ singularity in $\boldsymbol{N_f=2}$}
We continue by sketching the determination of these couplings for the $I_2$ singularity in the $N_f=2$ theory with equal masses. The functions $u$ and $da/du$ are known exactly for this theory \cite[Section 5.1]{Aspman:2021vhs}, we recall them in  \eqref{nf2mmquantities}.\footnote{\label{ftn:sign}We have corrected the sign of $da/du$ in \eqref{nf2mmquantities} compared to \cite[Eq. (5.4)]{Aspman:2021vhs}, such that $da/du\to i 2^{3/2}\,q^{1/4}$ per our convention \eqref{decouplingConv} for the weak coupling limit.} It is then straightforward to determine couplings near the $I_2$ singularity $\tau=1-1/\tau_3\to 1$. For example, $(da/du)_3$ is given by (\ref{daduI2}), with leading terms in (\ref{dauq3}). 
 
One similarly determines the first terms in the $q_3-$expansions of $u_3$ (\ref{uq3}),  and $a_3$ (\ref{aq3}) at this singularity. Substitution in (\ref{ABE3}) then allows to the determine the contribution from the $I_2$ singularity for the equal mass theory.

\subsection{Results for \texorpdfstring{$K3$}{K3}}\label{sec:results_K3}

Using the above formulas, it is straightforward to explicitly calculate SW contributions for specific four-manifolds, for generic as well as specific masses. We focus in this subsection on evaluations for $X=K3$, and discuss $N_f=1,2,3$ separately.
\subsubsection*{$\boldsymbol{N_f=1}$}
For simplicity, we consider a vanishing background flux, $\bfk_1=0$,  such that $\bfmu=0$ necessarily. The full correlation function (\ref{Zcorr}) then reads,
\be 
Z_0(m)[u^\ell]=\sum_{j=1}^3 Z_{SW,j,0}(m)[u^\ell].
\ee 
We can evaluate this for generic mass $m$ using the topological data mentioned above Equation (\ref{SWK3}). This gives the results listed in Table \ref{PtObsNf1K3}. 

Similarly to the case for the $u$-plane integral for $\mathbb P^2$ without background fluxes,  large mass and small mass expansions of the singularities $u_j^*$ and their contributions give identical results, which demonstrates that these correlation functions are smooth functions of $m$. Given the intricate cubic roots (see discussion around \eqref{nf1_sing_exp}), it is remarkable that the mass expansions terminate at finite order.\footnote{This is possibly a consequence of Vieta's formula \eqref{Vieta_sing}: Both small and large mass expansions of the singularities $u_j^*$ are infinite series, while their sum is $\sum_{j=1}^3u_j^*=m^2$. By the fundamental theorem of symmetric polynomials, a similar statement can be made about sums $\sum_{j=1}^{2+N_f}P(u_j^*)$ with $P$ a polynomial. However, as is clear from the explicit expressions \eqref{ZSW1gen2}, \eqref{ZSW1gen2j=2}, \eqref{ZSW1gen2j=3}, the correlators $Z_0[u^\ell]$ are sums of such kind with $P$ a \emph{rational} function. }
\begin{center}
\begin{table}[ht!]
\centering
\renewcommand{\arraystretch}{2}
\begin{tabular}{|Sl|Sr|}
\hline 
$\ell$ & $\displaystyle Z_{0}(m)[u^\ell]$ for $N_f=1$\\ 
\hline 
0 & $\displaystyle 64\frac{m^2}{\Lambda_1^2}$\\ 
1 & $\displaystyle 64\frac{m^4}{\Lambda_1^2}+8\Lambda_1m$ \\
2 & $\displaystyle 64\frac{m^6}{\Lambda_1^2}+16\Lambda_1m^3+\frac{9}{4}\Lambda_1^4$\\
3 & $\displaystyle 64\frac{m^8}{\Lambda_1^2}+24\Lambda_1m^5+\frac{9}{2}\Lambda_1^4m^2$ \\
4 & $\displaystyle 64\frac{m^{10}}{\Lambda_1^2}+32\Lambda_1m^7+\frac{31}{4}\Lambda_1^4m^4+\frac{27}{16}\Lambda_1^7m$ \\
5 & $\displaystyle 64\frac{m^{12}}{\Lambda_1^2}+40\Lambda_1m^9+12\Lambda_1^4m^6+\frac{45}{16}\Lambda_1^7m^3-\frac{243}{1024}\Lambda_1^{10}$\\
6 & $\displaystyle 64\frac{m^{14}}{\Lambda_1^2}+48\Lambda_1m^{11}+\frac{69}{4}\Lambda_1^4m^8+\frac{9}{2}\Lambda_1^7m^5+\frac{1215}{1024}\Lambda_1^{10}m^2$\\
\hline 
\end{tabular}
\caption{Table with values of $Z_0[u^\ell]$ for $N_f=1$. }
\label{PtObsNf1K3}
\end{table} 
\end{center}
The dimension of the moduli space and the rank of the matter bundle are 
\be 
\begin{split}
&{\dim}(\CM^Q_k)=3k-2,\\
&{\dim}(\CM^{\rm i}_k)=4k-6,\\
&{\rm rk}(W_k)=-k+4.
\end{split}
\ee 
This is positive for small $k$, and as a result the dimension of the moduli space of non-Abelian monopoles is larger than the moduli space of instantons for these values. The point observable $u$ is a 4-form on the moduli space $\CM^Q_k$ of non-Abelian SW equations. 
\vspace{.3cm}\\
{\it Large mass limit and expansion}\\
The decoupling limit $m\to \infty$ (\ref{IR-decoupling}) applied to $Z_0[u^\ell]$ does not exist, since the contribution of $u_3^*$ diverges. There is thus no smooth decoupling limit to the $N_f=0$ result as before for $\mathbb{P}^2$. However motivated by the distinction between the monopole and instanton component, see Figure \ref{fig:branches} in Section \ref{Interlude}, we can consider the partition function for the instanton component as the sum of the contribution of $u_1^*$ and $u_2^*$,
\be 
\label{Zinstbr}
Z^{\rm i}_0(m)[u^\ell]=Z_{SW,1,0}(m)[u^\ell]+Z_{SW,2,0}(m)[u^\ell].
\ee 
Then the decoupling limit applied to $Z^{\rm i}_0(m)[u^\ell]$ is consistent with the $N_f=0$ result, that is to say, it vanishes in the limit. This is due to the dimension of the moduli space not being a multiple of 4 in this case, such that insertions of point observables give a vanishing answer.

 We can consider the {\it large} mass expansion of $Z_0^{\rm i}$. In contrast to $Z_0(m)[u^\ell]$, this is an infinite series. The first few terms are,
\be 
\label{K3instlargemass}
\begin{split}
&Z_0^{\rm i}(m)[u^0]= -\frac{3}{4}\frac{\Lambda_1^4}{m^4}-\frac{5}{16}\frac{\Lambda_1^7}{m^7}-\frac{63}{512}\frac{\Lambda_1^{10}}{m^{10}}-\frac{99}{2048}\frac{\Lambda_1^{13}}{m^{13}}+\dots, \\
&Z_0^{\rm i}(m)[u^1]= -\frac{\Lambda_1^4}{m^2}-\frac{7}{16}\frac{\Lambda_1^7}{m^5}-\frac{175}{1024}\frac{\Lambda_1^{10}}{m^{8}}-\frac{273}{4096}\frac{\Lambda_1^{13}}{m^{11}}+\dots, \\
&Z_0^{\rm i}(m)[u^3]= -\frac{5}{8}\frac{\Lambda_1^7}{m^3}-\frac{245}{1024}\frac{\Lambda_1^{10}}{m^6}-\frac{189}{2048}\frac{\Lambda_1^{13}}{m^{9}}-\frac{4719}{131072}\frac{\Lambda_1^{16}}{m^{12}}+\dots, 
\end{split}
\ee 
The infinite series demonstrate that infinitely many instanton sectors contribute. For example, we deduce from $Z^{\rm i}_0[u^0]$ that the top Chern class has maximal degree $2\dim(\CM^{\rm i}_k)$ and thus exceeds the rank of the matter bundle. In other words, it is actually a matter sheaf rather than a matter bundle. 
\vspace{.3cm}\\
{\it Small mass expansion}\\
Instead of a large mass expansion, one can also make a small mass expansion of the contributions of $u_1^*$ and $u_2^*$. The coefficients turn out to be complex numbers, obscuring their interpretation as intersection numbers. Moreover the expansion involves negative powers of $\Lambda_1$, which suggests that the terms arise from sectors with negative instanton numbers $k$. Altogether, the large $m$ expansion thus seems more physical.
\vspace{.3cm}\\
{\it Limit to AD mass }\\
In section \ref{sec:Contributions}, we argued that in an AD limit $m\to \mad$, we can split the singularities into two sets $S$ and $S'$, where the singularities in $S'$ collide in the AD point, while the ones in $S$ are not involved in the limit. The contribution from $S'$ we denote in \eqref{ZmAD} by $Z_{\widetilde{\text{AD}}}$. In our notation, the singularities $u_2^*$ and $u_3^*$ then give the contribution $Z_{\widetilde{\text{AD}}}$. 

In the limit $m\to m_{\text{AD}}=\tfrac{3}{4} \Lambda_1$ for $N_f=1$, the contributions from $u_2^*$ and $u_3^*$ diverge individually as $(m-\mad)^{-\frac12}$, where the coefficients for $u_2^*$ and $u_3^*$ differ by a minus sign. Their sum is correspondingly regular, and we find a finite limit $Z_{\widetilde{\text{AD}}}$. The conditions for a smooth limit were studied in \cite{Marino:1998uy}, and led to sum rules for SW invariants and the notion of superconformal simple type. For the type $II$ AD point on $K3$ there is no such sum rule, which is indeed not necessary because the partition function itself is regular without any constraints. 

More specifically, we calculate $Z_{\widetilde{\text{AD}}}[1]=\frac{320}{9}$, while the contribution from $u_1^*$ is $\frac{4}{9}$.  All together it reproduces the result  in the Table \ref{PtObsNf1K3} at $m=m_{\text{AD}}$. For $\ell=1$, the first cusp contributes $-\frac{5}{12} \Lambda_1^2$, and the sum of $u_2^*$ and $u_3^*$ is $Z_{\widetilde{\text{AD}}}[u^1]=\frac{80}{3} \Lambda_1^2$. Their sum again agrees with the table.
We can in fact calculate the AD contribution for any $\ell$,\footnote{This implies that $Z_{\widetilde{\text{AD}}}[e^{2pu}]=Z_{\widetilde{\text{AD}}}(1)e^{2p\uad}$. This generating function $Z_{\widetilde{\text{AD}}}^1(p)\coloneqq Z_{\widetilde{\text{AD}}}[e^{2pu}]$ then satisfies $\Delta(\frac12 \partial_p)Z_{\widetilde{\text{AD}}}^1(p)$ with $\Delta(u)=u-\uad$, similar to the structure result in section \ref{gen_sim_type}.}
\begin{equation}\label{ZtildeADNf1K3}
    Z_{\widetilde{\text{AD}}}[u^\ell]=\frac{320}{9}\left(\frac{3\Lambda_1^2}{4}\right)^\ell=Z_{\widetilde{\text{AD}}}[1]\, \uad^\ell,
\end{equation}
where $\uad=\frac34\Lambda_1^2$.
Clearly, all point correlators are nonzero. This is an indication that the selection rule associated with the SCFT is not valid in this case \cite{Shapere:2008zf,Moore:2017cmm,Gukov:2017}. 
This selection rule reads $3\ell=1$ for $K3$ in the absence of surface observables, and so it does not have any solutions for point observables only. We conclude that this selection rule is not valid for the AD point of the SQCD curve. Rather, it holds for the AD curve obtained by promoting the deformation parameters to operators and assigning scaling dimensions \cite{Argyres:1995xn, Moore:2017cmm, Marino:1998uy}. This explains our notation $Z_{\widetilde{\text{AD}}}$, where we keep $Z_{\AD}$ for the partition function calculated from the AD curve. 

Notably, from \eqref{ZtildeADNf1K3} it follows that
\begin{equation}
    Z_{\widetilde{\text{AD}}}[\ttu^\ell]=0, \qquad \ell\geq 1,
\end{equation}
where $\ttu=u-\uad$ is the Coulomb branch parameter with the AD point at its origin. That is, the 0-observable $\ttu$ is a null vector of the $\widetilde{\text{AD}}$ theory, which is in agreement with the selection rule (see also \cite{Moore:2017cmm}). We will discuss the relation to  SW contributions of AD curves in a future work \cite{AFMM:future}.
\subsubsection*{$\boldsymbol{N_f=2}$}
We continue with the $N_f=2$ theory. For distinct masses, $\bfm=(m_1,m_2)$ with $m_1\neq m_2$, this theory has 4 strong coupling singularities of type $I_1$. We find for the partition function,
\be 
\label{NF=2equalmass_exp}
\begin{split}
Z_0(\bfm)[u^0]&= 32\Lambda_2^2\left(\frac{1}{(m_1-m_2)^2} + \frac{1}{(m_1+m_2)^2}\right)\\
&= \frac{32\Lambda_2^2}{(m_1-m_2)^2}+\frac{8\Lambda_2^2}{m_2^2}+\CO(m_1-m_2).
\end{split}
\ee 
The divergence is due to the monopole contributions of the singularities $u_3^*$ and $u_4^*$.  

The singular behaviour near the equal mass limit, $\bfm=(m,m)$, has an interesting interpretation in terms of vertex algebras \cite{Dedushenko:2017tdw}. The singularity is expected to be cancelled by a contribution from the Higgs branch (we will comment on this further below).

The equal mass limit is non-singular for the contributions from $u_1^*$ and $u_2^*$. For the partition function of the instanton component (\ref{Zinstbr}) with equal masses, we find
\be 
\label{K3instlargemassNf2}
\begin{split}
& Z^{\rm i}_0(\bfm)[u^0]=\frac{4\Lambda_2^5}{m^2}\left(\frac{1}{(\Lambda_2+2m)^3}+\frac{1}{(\Lambda_2-2m)^3} \right)\\
&\quad \qquad \quad \,\,\, = -\frac{3}{2}\frac{\Lambda_2^6}{m^6}-\frac{5}{4}\frac{\Lambda_2^8}{m^8}-\frac{21}{32}\frac{\Lambda_2^{10}}{m^{10}}-\frac{55}{512}\frac{\Lambda_2^{14}}{m^{14}}+\dots, \\
&Z^{\rm i}_0(\bfm)[u^1]= -\frac{\Lambda_2^6}{m^4}-\frac{21}{16}\frac{\Lambda_2^8}{m^6}-\frac{25}{32}\frac{\Lambda_2^{10}}{m^{8}}-\frac{91}{256}\frac{\Lambda_2^{12}}{m^{10}}-\frac{9}{64}\frac{\Lambda_2^{14}}{m^{12}}+\dots, \\
&Z^{\rm i}_0(\bfm)[u^3]= -\frac{5}{4}\frac{\Lambda_2^8}{m^4}-\frac{115}{128}\frac{\Lambda_2^{10}}{m^6}-\frac{113}{256}\frac{\Lambda_2^{12}}{m^{8}}-\frac{373}{2048}\frac{\Lambda_2^{14}}{m^{10}}+\dots.
\end{split}
\ee 
\vspace{.3cm}\\
{\it The equal mass case and multi-monopole SW-invariants}\\
Instead of taking the equal mass limit of $Z_0(\bfm)$, one can evaluate the partition function directly at $m_1=m_2=m$. While this gives a finite answer, the theory now involves a strong coupling cusp of type $I_2$. The $Q$-fixed equation is a multi-monopole equation, and will thus involve a generalization of the SW-invariant for $I_1$ singularities as discussed in Section \ref{sec:SWI2}. Even though the limit $m_1\to m_2$ is singular for the sum of the contributions from $u_3^*$ and $u_4^*$, one obtains a finite answer if one works directly in the equal mass theory $m_1=m_2=m$. This was demonstrated for $m=0$ in \cite{Kanno:1998qj}.

To this end, let us recall that for a non-vanishing contribution, the virtual dimension of the monopole moduli space (\ref{vdim_r}) should be non-negative, while at the same time $c_+^2=0$. These two conditions have no solutions for a generic K\"ahler point $J$ except for $c=0$, but can have a finite number of solutions for special choices of $J$. 

For $K3$, the requirement that (\ref{vdim_r}) is non-negative is $c^2\geq 16(\frac 1r-1)$. Together with the condition that $c_+=0$ and that $c$ is a $\spinc$ structure, this gives the following possibilities 
\begin{equation}\begin{aligned}
       r=1&: \quad c_-^2=0  \qquad &&n_1(c)=0\\
    r=2&: \quad c_-^2=0,-8,  \qquad &&n_2(c)=2,0\\
    r=3&: \quad c_-^2=0,-8, \qquad &&n_3(c)=4,1, \\
    r=4&: \quad c_-^2=0,-8, \qquad &&n_4(c)=6,2.
\end{aligned}
\end{equation}

The contributions from different $c$ can have different signs, such that the total contribution can be metric independent, even if individual contributions are metric dependent.

Returning to the case $r=2$, using the exact expressions in Appendix \ref{sec:equal_mass_expansion}, for the contribution from $c=0$ we obtain
\be \label{SW20}
-\SW_2(0) \frac{16 \Lambda_2^2}{\kappa_3} \frac{(57\Lambda_2^4-48\Lambda_2^2m^2+64m^4)}{(\Lambda_2^2-4m^2)^3},
\ee 
and for the contribution from $c^2=-8$
\be \label{SW208}
-\sum_{c,c^2=-8} \SW_2(c) \frac{2\Lambda_2^6}{\kappa_3(\Lambda_2^2-4m^2)^3}.
\ee 
In the massless limit, these contributions \eqref{SW20} and \eqref{SW208} match with \cite[Eq. (4.6)]{Kanno:1998qj} (with $c=2\lambda_{\text{KY}}$). The total result then becomes
\be 
\label{Z0mEq}
\begin{split} 
Z_0(m)[u^0]&=\frac{2\Lambda_2^6}{(\Lambda_2^2-4m^2)^3}\left(\frac{4\Lambda_2^2+48m^2}{m^2}\,\SW_1(0)- \sum_{c,c^2=-8} \frac{\SW_2(c)}{\kappa_3}\right.\\
&\quad -\left. 8(57\Lambda_2^4-48\Lambda_2^2m^2+64m^4)\, \frac{\SW_2(0)}{\kappa_3}\right),
\end{split}
\ee 
where we substituted the contribution from $u_1^*$ plus $u_2^*$ given in Eq. (\ref{K3instlargemassNf2}), and made the dependence on $\SW_1(0)$ explicit.

    While this is finite for generic $m$, we notice a divergence for the individual contributions as the mass approaches the AD mass, $m\to \mad= \Lambda_2/2$. On the other hand, the previous discussion and results for $N_f=1$ give an indication that the combined contribution of the three cusps may become a (Laurent)  polynomial in $m$. In addition to (\ref{Z0mEq}), there is a contribution from the non-compact Higgs branch. It is unclear however, how the Higgs branch dynamics could cancel the divergence for $m\to \mad= \Lambda_2/2$. Although heuristically, we are then led to the idea that the 2-monopole invariants ${\rm SW}_2$ satisfy special relations to ensure smoothness at $\mad= \Lambda_2/2$. This is similar to the constraints on the SW invariants from the AD mass locus in other theories \cite{Marino:1998uy}.

To deduce the relations for ${\rm SW}_2$, note that both contributions from $c=0$ \eqref{SW20} and $c^2=-8$ \eqref{SW208} have a cubic singularity at $m=\mad$. Correspondingly, their sum is regular in the AD limit if three linear combinations of the invariants $\SW_2(c)$ vanish. Remarkably, there is  a linear dependence among these three combinations, and the AD limit is regular if\footnote{If we eliminate $\kappa_3$, we find $\sum_{c:c^2=-8}\SW_2(c)+264 \SW_2(0) =0$. This condition alone is not sufficient for the regularity of the limit.}
\begin{equation}
    \begin{aligned}
    \sum_{c:c^2=-8}\SW_2(c)+392\,\SW_2(0)-64\kappa_3\,\SW_1(0)&=0, \\
\sum_{c:c^2=-8}\SW_2(c)+520\,\SW_2(0)-128\kappa_3\,\SW_1(0)&=0.
    \end{aligned}
\end{equation}
These vanishing combinations are reminiscent of the sum rules for the type $II$ AD theory \cite{Marino:1998uy}. 
The unique solution to these equations, setting $\SW_1(0)=1$, is 
\be
\label{SWhyp}
\frac{\SW_2(0)}{\kappa_3}=\frac{1}{2},\qquad \frac{\sum_{c,\, c^2=-8} \SW_2(c)}{\kappa_3}=-132.
\ee 
We list correlation functions $\left< u^\ell \right>$ for small $\ell$ in Table \ref{PtObsNf2K3} for this solution.
Since the correlation functions are polynomials in $m$, the AD limit $m\to\mad$ is smooth. More generally, we expect that the multi-monopole invariants $\SW_{r>1}$ can be deduced in this way from the $I_1$ SW invariants. 

We do stress though that the analysis of the multi-monopole suggests that ${\rm SW}_2(c)=0$, $c\neq 0$, for generic metric since there are no solutions to $c_+=0$. It is thus an interesting question whether (\ref{SWhyp}) can be derived starting from the multi-monopole equations.
\begin{center}
\begin{table}[ht!]
\centering
\renewcommand{\arraystretch}{2}
\begin{tabular}{|>{$\displaystyle}Sl<{$}|  >{$\displaystyle}Sr<{$}|  }
\hline
\ell &  Z_{0}[u^\ell] \text{ for } N_f=2\\ 
\hline 
0 & \frac{8 \Lambda_2^2}{m^2} \\
 1 & 88 \Lambda_2^2-\frac{\Lambda_2^4}{m^2} \\
 2 & 2 \Lambda_2^4+\frac{\Lambda_2^6}{8 m^2}+232 \Lambda_2^2 m^2 \\
 3 & -\frac{27 \Lambda_2^6}{8}+440 \Lambda_2^2 m^4-\frac{\Lambda_2^8}{64 m^2}+33 \Lambda_2^4 m^2 \\
 4 & -\frac{7 \Lambda_2^8}{16}+712 \Lambda_2^2 m^6+116 \Lambda_2^4 m^4+\frac{\Lambda_2^{10}}{512 m^2}-\frac{23 \Lambda_2^6 m^2}{4} \\
 5 & -\frac{161 \Lambda_2^{10}}{512}+1048 \Lambda_2^2 m^8+275 \Lambda_2^4 m^6+\frac{11 \Lambda_2^6 m^4}{4}-\frac{\Lambda_2^{12}}{4096 m^2}-\frac{115
   \Lambda_2^8 m^2}{32} \\
\hline 
\end{tabular}
\caption{Table with values of $Z_0[u^\ell]$ for $N_f=2$ with equal mass $m$ and $\SW_2$ invariants as in \eqref{SWhyp}.}
\label{PtObsNf2K3}
\end{table} 
\end{center}
\vspace{.3cm}
{\it Comparing different limits} \\
In \cite{Marino:1998uy,Marino:1998uy_short}, the collision of two $I_1$ singularities to a type $II$ AD point has been considered, which is the only existing limit in $N_f=1$. For $N_f\geq 2$ on the other hand, higher type AD points appear, which allow for a larger variety of possible collisions. If we have two masses for instance, we can form a type $III$ AD point in three possible ways (see Fig. \ref{fig:Nf2ADlimits}). Carefully calculating the three possible limits of partition functions should allow us to study the precise form of the contribution from the Higgs branch (see also \cite{Moore:1997dj,losev1998,Dedushenko:2017tdw,Nakatsu:1997jw,Martens:2006hu,LoNeSha,Antoniadis:1996ra}).
\begin{figure}[ht]\centering
	\includegraphics[scale=0.7]{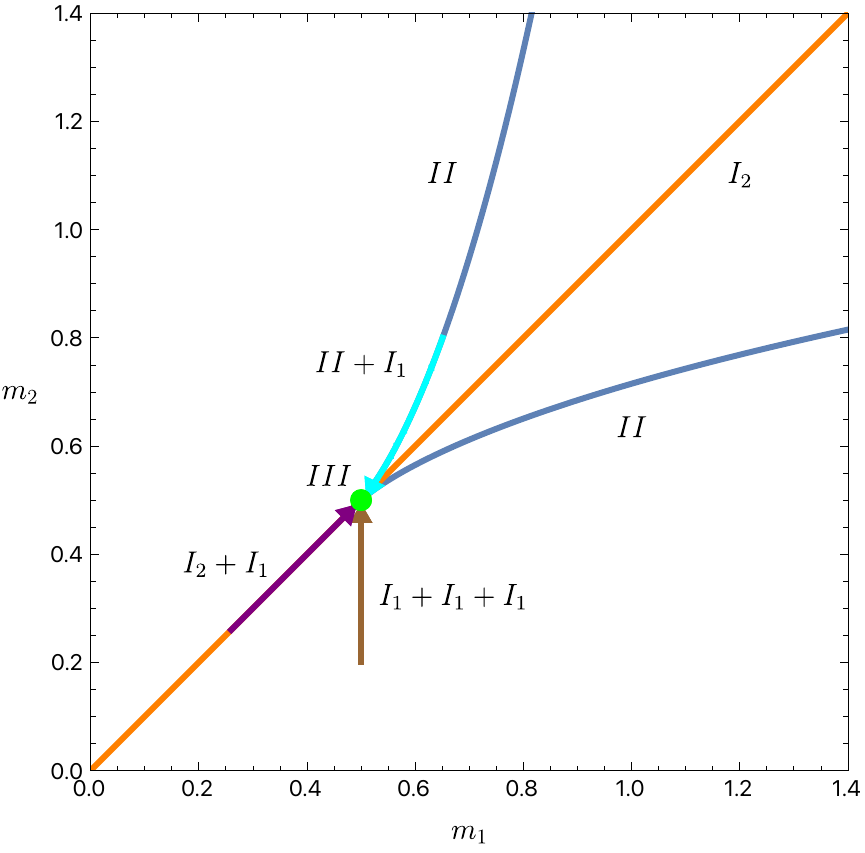}
	\caption{Plot of the mass space $(m_1,m_2)$ in $N_f=2$. The AD locus (blue) consists of two contours giving rise to a Coulomb branch (CB) with each one $II$ AD point. When the loci meet, the CB contains a type $III$ point (green). This point can be approached from three distinct configurations: Away from $m_1=m_2$, the limit is $I_1+I_1+I_1\to III$ (brown). On the line $m_1=m_2$ (orange), there is a $I_2$ singularity, with a limit $I_2+I_1\to III$ (purple). Finally, on the generic AD locus there is a $II$ singular point with a limit $II+I_1\to III$ (cyan). On any point on the $I_2$ line $m_1=m_2$, a Higgs branch with geometry $\mathbb C^2/\mathbb Z_2$ meets the Coulomb branch.}\label{fig:Nf2ADlimits}
\end{figure}
We can now compare the calculation of the SW contribution from the type $III$ Argyres-Douglas point, as illustrated in Fig. \ref{fig:Nf2ADlimits}. As required, the limit $m\to \mad$ for the $I_2+I_1$ contribution is smooth, and from Table \ref{PtObsNf2K3} we find that it equals 32. 
Calculating the limit of $I_1+I_1+I_1$ from any direction away from $m_1=m_2$, we can  use the expansion \eqref{NF=2equalmass_exp} 
for $N_f=2$, where we expand all singularities around $m_1=m_2$. This expansion agrees with setting say $m_1=\mad=\frac12\Lambda_2$ and considering the limit $m_2\to  \mad$, as in Fig. \ref{fig:Nf2ADlimits}. In this limit, the constant part in \eqref{NF=2equalmass_exp} evaluates to 32 as well, such that the two limits $I_1+I_1+I_1\to III$ and $I_2+I_1\to III$ agree precisely, up to the divergent term $32\Lambda_2^2/(m_1-m_2)^2$.  Thus the $\SW_2$ invariant for the $I_2$ contribution naturally regularises the singular limit of colliding two $I_1$ singularities. This is another nontrivial constraint on the relations \eqref{SWhyp}.

\subsubsection*{$\boldsymbol{N_f=3}$}
Another configuration suitable for studying $I_2$ singularities is $\bfm=(m,0,0)$ in $N_f=3$, which for generic mass $m$ is $(I_1^*,2I_2,I_1)$. See for example \cite{Aspman:2021vhs} for more details. The two $I_2$ singularities $u_\pm=\frac18 m\Lambda_3$ become the $I_4$ singularity in the massless limit $m\to 0$, while the $I_1$ singularity $m_*=m^2+\frac{1}{2^8}\Lambda_3^2$ collides with either $u_+$ or $u_-$ for $m\to \pm \mad\coloneqq \frac{1}{16}\Lambda_3$. 
We can calculate the sum over the two $I_2$ SW contributions and the $I_1$ SW contribution. For this, we solve the sextic equation associated with the SW curve \cite{Aspman:2021vhs} and calculate the local expansions $a$, $\frac{du}{da}$, $\frac{da}{d\tau}$ and $\Delta_3$ near the cusps $u_\pm$ using either of the two respective solutions near the cusps. We find again that the AD limit $m\to \mad$ is regular if we impose sum rules on the SW invariants. With the expressions \eqref{ABE3}, we find that the AD limit is regular if and only if \eqref{SWhyp} holds.
This may be viewed as a consequence of the fact that the SW curves for $\bfm=(m,m)$ and $\bfm=(m,0,0)$ are isogenous \cite{ACFMM23}.

We list some values of $Z_0[u^\ell]$ in Table \ref{PtObsNf3K3}. The order 6 pole comes from the mass singularity, since for $N_f=3$ we have $\left((m_1-m_2)(m_2-m_3)(m_3-m_1) \right)^{\chi_h}\sim m^6$ for all masses vanishing. One qualitative difference to $N_f=2$ is that the mass divergence for $m\to 0$ disappears for larger $\ell$, i.e. $Z_0[u^\ell]$ is regular as $m\to 0$ for $\ell\geq 5$. For the $I_2$ singularity in $N_f=2$, the leading term is always $m^{-2}+$ regular, for any $\ell$.
\begin{center}
\begin{table}[ht!]
\centering
\renewcommand{\arraystretch}{2}
\begin{tabular}{|Sl|Sr|}
\hline 
$\ell$ & $\displaystyle Z_{0}[u^\ell]$ for $N_f=3$\\ 
\hline 
0 & $\displaystyle 320\frac{\Lambda_3^6}{m^6}$\\ 
1 & $\displaystyle 192 \frac{\Lambda_3^6}{m^4}$ \\
2 & $\displaystyle 64\frac{\Lambda_3^6}{m^2}-\frac{\Lambda_3^8}{m^4}$\\
3 & $\displaystyle 64\Lambda_3^6-3 \frac{\Lambda_3^8}{m^2}$ \\
4 & $\displaystyle 64m^2\Lambda_3^6+\frac74 \Lambda_3^8+\frac{1}{64}\frac{\Lambda_3^{10}}{m^2}$ \\
5& $\displaystyle 64m^4\Lambda_3^6+2m^2\Lambda_3^8-\frac{401}{2^{10}}\Lambda_3^{10}$ \\
\hline 
\end{tabular}
\caption{Table with values of $Z_0[u^\ell]$ for $N_f=3$ with $\bfm=(m,0,0)$ and $\SW_2$-invariants as in \eqref{SWhyp}.}
\label{PtObsNf3K3}
\end{table} 
\end{center}
\subsection{Relation to results for algebraic surfaces and Segre numbers}\label{sec:segrenumbers}

The coefficients in the mass expansions such as \eqref{K3instlargemass} and \eqref{K3instlargemassNf2} are known in the mathematical literature  as Segre numbers \cite{Gottsche:2021ihz}. Yet, in the UV formulation (see Section \ref{corrfunctions}),  correlation functions are intersection numbers obtained by integrating \emph{Chern} classes on the moduli space. 
In this Section, we aim to explain the connection between Chern classes and Segre classes, and provide a relation between SW partition functions and these geometric invariants.

\subsubsection*{Brief review of Segre classes of moduli spaces}

The (total) Segre class $s(E)$ of a vector bundle is a characteristic class, which is suitable for the analysis of intersection theory in singular settings.\footnote{We refer the reader to \cite{fulton1984} for  definitions, in particular the Sections 3.1 and 3.2.} In this context, the more familiar Chern classes are defined in terms of the Segre classes. To this end, one defines for a vector bundle $E$ the following formal power series,
\be
\begin{split} 
s_t(E)=\sum_{j=0}^\infty s_j(E)\,t^j,\\
\end{split} 
\ee
in terms of the Segre classes $s_j(E)$. We refer for their definition to \cite[Section 3.1]{fulton1984}. The Chern classes $c_j(E)$ are then defined by the inverse of $s_t(E)$,
\be
c_t(E)=\sum_{j=0}^\infty c_j(E)\,t^j=s_t(E)^{-1}.
\ee 
It can be shown that $c_t(E)$ is a polynomial whose degree is bounded by ${\rm rk}(E)$.

Segre classes first appeared in the context of moduli of vector bundles in an article by Tyurin \cite{Tyurin1994}.
Lehn \cite{Lehn:1999} put forward an algorithmic description for the evaluation of top Segre classes for Hilbert schemes of points, which corresponds physically to $U(1)$ gauge theory. His conjecture was recently proven in \cite{Marian:2015}. Reference \cite{gottsche2022} gives proofs for various conjectures relating Segre and Verlinde numbers of Hilbert schemes. Segre classes were introduced in \cite{Gottsche:2020ass} for higher rank bundles over projective surfaces. Reference \cite{Oberdieck_2022} gives proofs for these higher rank conjectures for $K3$ surfaces as well as for the Segre-Verlinde correspondence.  Segre classes have also appeared in a series of work by Feehan and Leness on SW invariants for specific monopole moduli spaces \cite{Feehan:1997vp,Feehan:1997rt,Feehan:2001jc, feehan2001pu}.  See also \cite{Gottsche:2019meh} for a recent survey on recent conjectures on the various correspondences.

\subsubsection*{Comparison between Segre numbers and physical correlation functions}

In order to demonstrate the correspondence between Segre numbers and  physical partition functions, let us first give the mathematical presentation. For simplicity, we fix the gauge group to be SU(2), and will not consider surface observables. Then $\rho=2$, $s=N_f$ and $L=0$ in \cite{Gottsche:2021ihz,Gottsche:2020ass}). Conjecture 2.8 in \cite{Gottsche:2020ass} reads in terms of the universal functions $V,W,\dots$,
\be 
\label{GKSegre}
\begin{split} 
&\sum_k z^{\frac{1}{2}\dim(\CM^{\rm i}_k)} \int_{\CM^{\rm i}_k} c(\alpha_\CM)\,\mu_D(p)^\ell \\
&\qquad = 2^{2-\chi_{\rm h}+\lambda} V^{c_2(\alpha)} \,W^{c_1(\alpha)^2}\,X^{\chi_{\rm h}}\,Y^{c_1(\alpha)\cdot K_X}\,Z^{K_X^2}\,T^\ell\\
& \qquad \quad \times  \sum_c \varepsilon^{a\cdot \bfmu}\, {\rm SW}(K_X-2a)\,Y_1^{c_1(\alpha)\cdot a}\,Z_1^{a^2},
\end{split}
\ee 
where $\alpha_\CM$ is related to $\alpha$ using the universal bundle, and we used that the SW basic class $c$ is related to $a$ by $c=K_X-2a$.
The functions $V, W, X, T$ are explicitly determined in \cite{Gottsche:2021ihz,Gottsche:2020ass} for arbitrary rank. This leads typically to an infinite series in the parameter $z$, which suggests that $\alpha_\CM$ is a sheaf rather than a vector bundle. 

Clearly, the structure of (\ref{GKSegre}) is very similar to that of (\ref{SWuniversal}), and we can relate the  functions $\CA_1,\CB_1,\dots$ to combinations of the functions $V,W,\dots$.

We discuss in the following the agreement of some of the functions in both the physical and mathematical approach.

\subsubsection*{$\boldsymbol{N_f=1}$}
Let us begin with the case of only one mass.  The expansion in $z$ is expressed by introducing an auxiliary variable $t$ through $z=t(1+t/2)^{1/2}$. This relation can be  inverted, which we denote by $t(z)$.\footnote{\label{ftnt_zt}The square of the  relation $z=t(1+t/2)^{1/2}$ is cubic in $t$, such that we find the solution \begin{equation}\label{t(z)solution}
    t(z)=\frac{1}{3} \left(\sqrt[3]{27 z^2+3 \sqrt{3} \sqrt{27 z^4-16 z^2}-8}+\frac{4}{\sqrt[3]{27 z^2+3 \sqrt{3} \sqrt{27 z^4-16 z^2}-8}}-2\right),
\end{equation}
which can be expanded either at $z=0$ or $z=\infty$. 

For instance, we have $t(z)=z+\CO(z^2)$ as $z\to 0$, and $t(z)=2^{\frac13}z^{\frac23}+\CO(z^{-\frac23})$ as $z\to\infty$. In calculations, we must expand $t(z)$ in $z$ first  in order  to obtain the correct phases. }
The universal functions then read as function of $t$ as \cite{Gottsche:2020ass},
\be \label{universal_functions}
\begin{split} 
W(t)&=(1+\tfrac{1}{2}t)^{-1},\\
X(t)&=(1+\tfrac{1}{2}t)^{-3/4}\,(1+\tfrac{3}{4}t)^{-1/2},\\
T(t)&=2t(1+\tfrac{3}{8}t).
\end{split}
\ee 
By comparing (\ref{SWuniversal}) and (\ref{GKSegre}), we deduce that $X$ is proportional to $\CA_1^{12}/\CB_1^8$. Using (\ref{CABE}), we then find the relation
\be
 X\left(t(z)\right)^2=\frac{48\,m^{5/2}}{\Lambda_1^{9/2}} \frac{g_3(u_1^*)}{P^{\text M}_1(u_1^*)^2}.
\ee 
with the identification
\be\label{z_variable_nf1}
z=\frac12\left(\frac{\Lambda_1}{m}\right)^{\frac 32}=\frac{\Lambda_0^2}{2m^2}.
\ee 

Comparison of (\ref{SWuniversal}) and (\ref{GKSegre}) also identifies $T$ and $u_1^*$. Indeed with the relation (\ref{z_variable_nf1}), we arrive at
\be 
T(t(z))= -m^{-2}\, u_1^*. 
\ee 
This equivalence is due to the identity
\begin{equation}\begin{aligned}\label{segre_relations_Nf1}
\Delta_1(-m^2T(t(z)))&=0, \\
\end{aligned}\end{equation}
where $\Delta_1$ is the discriminant of $N_f=1$. These identities are consequences of the SW geometry alone.
For instance, we can prove the first identity using the definition \eqref{universal_functions} and reducing the sextic polynomial $\Delta_1(-m^2T(t(z)))$ in $t(z)$ using $t^3+2t^2-2z^2=0$. 

For $N_f=1$, there is a $\mathbb Z_3$ symmetry \eqref{Nf1_Z3sym} that relates all three singularities $u_j^*$ under $\mathbb Z_3$ rotations of $\Lambda_1$. This would suggest that if one singularity $u_1^*$ is related to universal functions generating the Segre invariants (as in \eqref{segre_relations_Nf1}), then it must be true also for the other singularities. This is however not the case, as can be seen by expanding $t(z)$ as a series in $z$, which is a (regular) Taylor series at $z=0$ (see footnote \ref{ftnt_zt}). If we rotate $\Lambda_1\mapsto \zeta_3\Lambda_1$, the variable $z$ \eqref{z_variable_nf1} is mapped to $-z$. Thus under the change of variables $t\leftrightarrow z$ relating the SW invariants to Segre invariants, the $\mathbb Z_3$ symmetry collapses to a $\mathbb Z_2$ symmetry, relating the contributions $u_1^*$, $u_2^*$ constituting the instanton component. We find, 
\begin{equation}
    u_2^*(\Lambda_1)=u_1^*(\zeta_3\Lambda_1)=-m^2T(t(-z)),
\end{equation}
while $u_3^*$ can not be expressed through $T$. 

If we expand at $m=0$ rather, we have the full $\mathbb Z_3$ symmetry $u_j^*(\zeta_3\Lambda_1)=u^*_{j+1}(\Lambda_1)$ with $j\mod 3$. Expanding $t(z)$ \eqref{t(z)solution} for large $z$ gives a Laurent series in $z^{\frac23}$ at $z=\infty$. Rotating $\Lambda_1\mapsto \zeta_3\Lambda_1$ gives $z\mapsto e^{\pi i}z$ and thus three different solutions $t(z)$, and therefore 
\begin{equation}\begin{aligned}\label{u1,2,3T}
    u_1^*&=-m^2 T(t(z)), \\
    u_2^*&=-m^2 T(t(e^{\pi i}z)), \\
    u_3^*&=-m^2 T(t(e^{-\pi i}z)).
\end{aligned}
\end{equation}
For large masses, the singularity $u_3^*$ corresponds to the hypermultiplet that decouples, and thus is associated with the monopole component. For small masses on the other hand, the three singularities $u_{1,2,3}^*$ are indistinguishable. This can also be seen from the fact that $u_2^*$ and $u_3^*$ merge as we increase the mass  from $m=0$ to $\mad=\frac34\Lambda_1$. Thus the labels of the singularities $u_j^*$ are not meaningful throughout the whole parameter space of the masses. As a consequence, it is not possible to attribute either $u_{1,2,3}^*$ to the monopole component, which enables the identification \eqref{u1,2,3T}. 

To include a background flux for $N_f=1$, we identify $c_1(\alpha)=\bfk_1$ and $c_2(\alpha)=0$. Comparison of (\ref{SWuniversal}) and (\ref{GKSegre}) relates $W$ and $\CC_1$. Indeed, using the large mass expansion (\ref{Cseries}) in terms of modular forms, we can compare the first terms near $u_1^*$, that is, in the limit $\tau_1=-1/\tau\to i\infty$. This matches indeed with the relation
\be 
W=\frac{m}{\Lambda_1}\,\CC_1.
\ee 

Returning to the specific $K3$ geometry, we have $\chi_h=2$, while $K_X=0$. The odd powers of $z$ of (\ref{GKSegre}) agree with the large mass expansions in Eq. (\ref{K3instlargemass}) up to an overall power in $z$. In fact, we can express the contribution from the $u_1^*$ singularity for $K3$ in terms of $X$, $W$ and $T$,
\begin{equation}\label{Z1_segre}
    Z_{0,1}[u_1^{*\ell}]=\frac12(-1)^\ell m^{2\ell}\left(\frac{\Lambda_1}{m}\right)^{\frac 52+\bfk_1^2}X\left(t(z)\right)^2\,W(t(z))^{\bfk_1^2}\,T\left(t(z)\right)^\ell.
\end{equation}

\subsubsection*{$\boldsymbol{N_f=2}$}

Using the equal mass $m_1=m_2=m$ theory for $N_f=2$, we can also check the matching with Segre invariants. For simplicity, we consider the theory only in absence of background fluxes. 

The auxiliary variable $t$ is equal to $z$ in this case. With $s=N_f=2$, one finds for $X$ from \cite{Gottsche:2020ass},
\be
\begin{split}
X(z)&=(1+z)^{-3/2}.
\end{split}
\ee 
Then with the identification
\be
z=\frac{\Lambda_2}{2m}=\frac{\Lambda_0^2}{2m^2},
\ee 
one find the same expansion as for the contribution of the monopole cusp $u_1^*$. On the other hand, we find a slightly different value  when point observables are inserted, $T(z)=2z$, whereas $u_1^*=-\Lambda_2 m -\frac{\Lambda_2^2}{8}$. Of course, both insertions capture the same information.

\subsection{Generalised  simple type}\label{gen_sim_type}
To conclude this Section on SW partition functions for SU(2) SQCD, we study a generalisation of the simple type condition familiar from the pure SU(2) theory.

Let us define the generating functional of Seiberg-Witten invariants ($b_2^+>1$)
\begin{equation}\label{ZSW_generating}
    Z_{\bfmu}^{N_f}[e^{2p u/\Lambda_{N_f}^2}]\coloneqq \sum_{\ell=0}^\infty \frac{1}{\ell!}\left(\tfrac{2p}{\Lambda_{N_f}^2} \right)^\ell Z_\bfmu^{N_f}\left[u^\ell \right],
\end{equation}
where $Z_\bfmu$ is the sum over the contributions of all the  $N_f+2$ strong coupling cusps. In the following, we drop labels such as `SW' and $\bfmu$ to avoid cluttering the notation, and denote the functions by $Z^{N_f}$.

General structure results about such generating functionals are known since the work of Kronheimer and Mrowka \cite{kronheimer1994recurrence,Kronheimer1995, Fintushel1995} (see also \cite{Donaldson1996,Morgan1995}). For instance, a four-manifold is of  \emph{generalised simple type} if there exists an integer $n\geq 0$ such that the generating function satisfies
\begin{equation}\label{wfinite}
   \left(\frac{\partial^2}{\partial p^2}-4\right)^n Z^{0}(p)=0,
\end{equation}
where we focus on the dependence of $Z^{N_f}(p)\coloneqq Z^{N_f}[e^{2p u/\Lambda_{N_f}^2}]$ on $p$ only.
The \emph{order} is the minimum of all $n$ such that \eqref{wfinite} holds. It has been shown that if $b_2^+(X)>1$, then the order of generalised simple type manifolds does not depend on the choice of 't Hooft flux $\bfmu$, that is, it only depends on the four-manifold. In fact, all four-manifolds with $b_2^+>1$ are of finite type. 
A manifold is said to be of Donaldson simple type if it is of simple type with $n=1$. While all four-manifolds with $b_2^+>1$ are of generalised simple type, it is not known if there are simply-connected manifolds with $b_2^+>1$ which are \emph{not} of Donaldson simple type. Donaldson invariants for manifolds which are not of simple type have been studied in \cite{Kronheimer1997,munoz1998basic,munoz2002donaldson}.

For instance, for $X$ a $K3$ surface the result is 
\begin{equation}\label{K3_ODE}
   \left(\frac{\partial^2}{\partial p^2}-4\right)Z^0(p)=0.
\end{equation}
In the pure SU(2) case, acting with the operator $\partial_p^2-4$ on correlation functions is analogous to an insertion of the discriminant $\Delta$ into the $u$-plane integral \cite{Moore:1997pc}. Since at the monopole and dyon cusps, $\Delta_D(\tau)= q+\CO(q^2)$, this increases the overall $q$-exponent of the integrand by one. For arbitrary signature, inserting $\Delta^n$ with large enough $n$ will annihilate the partition function. 

In \cite{Labastida:1995gp,Kanno:1998qj,Labastida:1998hf} it has been observed that  the SW contribution in massless SQCD satisfies a similar, but higher order  differential equation. In the massless $N_f=1$ theory, the generating functional satisfies \cite{Kanno:1998qj}
\begin{equation}\label{ZSW_Nf1_ODE}
   \left(\frac{\partial^3}{\partial p^3}+\frac{27}{32}\right) Z^1(p)=0, 
\end{equation}
which can be directly confirmed from the sum over \eqref{Z1K3}. This differential equation has a three-dimensional solution space which is spanned by the functions $e^{2p\ttu_j^*}$, where $\Delta_1(\ttu_j)=0$ are the three roots of the massless $N_f=1$ discriminant, and $\ttu\coloneqq u/\Lambda_1^2$. The three coordinates can in this case be easily determined, and are just proportional to $\ttu_j^*$. This gives \cite{Kanno:1998qj}
\begin{equation}\label{Nf1_K3_resum}
    Z^1(p)=-\frac{2^6}{3^2}\sum_{j=1}^3 \ttu_j^* e^{2 p\, \ttu_j^*}.
\end{equation}
In the massive case, to the best of our knowledge such a result is not known in the literature. By collecting the results for massive $N_f=1$ on $K3$ (such as in Table \ref{PtObsNf1K3}) into a generating function, we obtain the relation
\begin{equation}\label{ZSW_ODE_massiveNf1}
     \left(\frac{\partial^3}{\partial p^3}-2\mu^2\frac{\partial^2}{\partial p^2}-\frac92\mu\frac{\partial}{\partial p}+8\mu^3+\frac{27}{32}\right) Z^1(p)=0, 
\end{equation}
where $\mu=m/\Lambda_1$ is the dimensionless mass.
Solutions to this equation take a similar form to \eqref{Nf1_K3_resum}, with coefficients now more complicated functions of the dimensionless singular points $\ttu_j^*$.

\subsubsection*{$\boldsymbol{I_1}$ singularities}
We can in fact easily find the simple type condition for SQCD with $N_f$ generic masses. 
If we write the contribution from each cusp as (we again drop some dependence in the notation) 
\begin{equation}
    Z^{N_f}[u^\ell]=\sum_{j=1}^{2+N_f}Z_j^{N_f}(u_j^*) (u_j^*)^\ell,
\end{equation}
we resum it in \eqref{ZSW_generating} to find 
\begin{equation}\label{ZSW_sum_j}
    Z^{N_f}(p)=\sum_{j=1}^{2+N_f}Z_j^{N_f}(\tfrac{u_j^*}{\Lambda_{N_f}^2})e^{2p u_j^*/\Lambda_{N_f}^2}.
\end{equation}
Let us consider the function $f_j(p)=e^{p \, \ttu_j^*}$, where the $\ttu_j^*$ are roots of some polynomial, $\Delta_{N_f}(\ttu_j^*)=0$ (we consider $\Delta_{N_f}$ as a polynomial in $\ttu_j=u/\Lambda_{N_f}^2$). Then these $2+N_f$ functions $f_j$ form the basis of the $2+N_f$-dimensional $\mathbb C$-vector space of solutions to the $2+N_f$-order linear ODE $\Delta_{N_f}(\partial_p)f(p)=0$. This is clear from the fact that 
\begin{equation}
    \Delta_{N_f}(\partial_p)f_j(p)=\Delta(\ttu_j^*)f_j(p)=0.
\end{equation}
Indeed, $\Delta_{N_f}$ is the characteristic polynomial of this differential equation. 
In order to apply this to \eqref{ZSW_sum_j}, we need to multiply $p$ again by 2, due to conventions. It follows that 
\begin{equation}\label{SWST_Nf}\boxed{
    \Delta_{N_f}(\tfrac12\partial_p)Z^{N_f}(p)=0.}
\end{equation}
This relation includes \eqref{K3_ODE}, \eqref{ZSW_Nf1_ODE} and \eqref{ZSW_ODE_massiveNf1} as examples, and it is expected to hold for any configuration only involving $I_1$ singularities. 
Similar to the case $N_f=0$, we can understand \eqref{SWST_Nf} as inserting the discriminant $\Delta_{N_f}$ into the $u$-plane integral. Since $\Delta_{N_f}^D(\tau)=q+\dots$ for any cusp, this increases the exponent of the integrand $q$-series by one, and will thus eliminate the contribution from the strong coupling cusps to the partition function if applied sufficiently many times.

\subsubsection*{$\boldsymbol{I_2}$ singularities}
From \eqref{SWST_Nf} it is possible to extract the behaviour of the generating function $Z^{N_f}$ in the case where some of the singularities collide. Consider for example the case where all $I_1$ singularities collide in an $I_k$ singularity (this is not possible for the SW curves \eqref{eq:curves},  we are merely illustrating the structure). When zeros of the characteristic polynomials merge, the basis of the space of solutions to the ODE seemingly collapses. However, the dimension of the solution space is independent of the coefficients of the characteristic polynomial, as it only depends on its degree. Let for instance $\Delta(u)=(u-u_0)^k$ and we consider the equation $\Delta(\partial_p)f(p)=0$. As an elementary calculation shows\footnote{Let $f_0(p)=P(p)e^{p u_0}$, then $(\partial_p-u_0)f_0(p)=P'(p)e^{pu_0}$ and thus $0=\Delta(\partial_p)f_0(p)=P^{(k)}(p)e^{pu_0}$. Therefore, $P$ is a polynomial of degree $k-1$, and thus any solution to $0=\Delta(\partial_p)f(p)$ is a linear combination of the functions $f_0(p)=e^{pu_0}\sum_{n=0}^{k-1}a_np^n$.}, a basis of solutions is given by the functions $f(p)=P_{k-1}(p)e^{pu_0}$, with $P_{k-1}$ a polynomial of degree $k-1$. 
This structure agrees with the results (4.6) and (4.10) in \cite{Kanno:1998qj}. For the higher $I_k$ singularities, there are contributions from subleading terms, and as mentioned above, only applies to $\Delta_{N_f}^n$ for some power $n$. This follows necessarily also from the structure of the ODE, since the generating function  \eqref{ZSW_sum_j} is never of the form $f(p)=P_{k-1}(p)e^{pu_0}$ unless $k=1$.

For $I_n$ with $n>1$, in particular massless $N_f=2$ and $3$ where $n=2$ and $n=4$ occur, the degree of the differential operator is known to in general exceed the degree of the discriminant. For $X=K3$, it has been shown that in the massless case $D_{N_f}(\tfrac12\partial_p)Z^{N_f}(p)=0$, where $D_{2}$ is a polynomial factor of $\Delta_2^2$, and $D_3$ is a  factor of $\Delta_3^7$ \cite{Kanno:1998qj}.

We can confirm this by an explicit calculation. Let $m_1=m_2=m$ in $N_f=2$, then there are two $I_1$ singularities $u_\pm$ and an $I_2$ singularity $u_*$, in the same notation as  \cite{Aspman:2021vhs}. The physical discriminant accordingly reads $\Delta_2(u)=(u-u_*)^2(u-u_+)(u-u_-)$. 
Using the results from Section \ref{sec:SWI2} (in particular, resumming the values in Table \ref{PtObsNf2K3}), we can show that in this case the multiplicity of the $I_2$ singularity increases to $3$ in the differential equation, 
\begin{equation}
    (\tfrac12\partial_p-u_*)^3(\tfrac12\partial_p-u_+)(\tfrac12\partial_p-u_-)Z^{2}(p)=0.
\end{equation}
In the massless limit $m\to 0$, the singularities $u_\pm$ merge to another $I_2$ cusp $u_0$. Calculating the $I_2$ correlation functions for both $u_*$ and $u_0$, we find that 
\begin{equation}
    (\tfrac12\partial_p-u_*)^3(\tfrac12\partial_p-u_0)^3Z^{2}(p)=0.
\end{equation}
This matches precisely with \cite[(4.8)]{Kanno:1998qj}.
Thus whenever two $I_1$ singularities in the $N_f=2$ theory collide, in the ODE the linear factors are enhanced to cubic factors.

We can get another similar result for the massive configuration $\bfm=(m,0,0)$ in $N_f=3$, which has discriminant $\Delta_3(u)=(u-u_+)^2(u-u_-)^2(u-u_*)$. The point correlators are given explicitly in  Table \ref{PtObsNf3K3}. Resumming all point observable correlators, we find that the generating functions satisfies 
\begin{equation}
    (\tfrac12 \partial_p-u_+)^3(\tfrac12 \partial_p-u_-)^3(\tfrac12 \partial_p-u_*)Z^{3}(p)=0.
\end{equation}

\subsubsection*{$\boldsymbol{I_n}$ singularities}

Higher $I_n$ contributions have also been studied in \cite{Kanno:1998qj}. For the $I_4$ cusp of massless $N_f=3$, the multiplicity of the corresponding factor in the generalised simple type condition is $7$. This motivates the following conjecture: Let $\bfm\in \mathbb C^{N_f}$ be a mass configuration giving rise to a spectrum $\bfn(\bfm)=(n_1,n_2,\dots)$ of massless hypermultiplets at each singularity, i.e. the physical discriminant is $\Delta_{N_f}(u)=\prod_j (u-u_j^*)^{n_j}$. Then the generating function of SW invariants satisfies
\begin{equation}\label{generating_function_SW_ODE}
    \prod_j(\tfrac12 \partial_p-u_j^*)^{2n_j-1}Z^{N_f}(p)=0.
\end{equation}
This in particular implies that $\Delta_{N_f}(\tfrac12\partial_p)^7Z^{N_f}(p)=0$ for all $N_f=0,\dots, 4$ and all mass configurations, since $n_j\leq 4$ for $\CN=2$ SU(2) SQCD \cite{Aspman:2021evt,Aspman:2021vhs}.

The reason for the enhanced exponents is the fact that for $n_j\geq 1$, the residues receive contributions from subleading terms in the expansion. For $n_j=2$ in $N_f=2$ in particular, next-to-next-to-leading order terms of all quantities involved are required for the calculation. Analogous to the above comment, we can thus always express the generating function as 
\begin{equation}
    Z^{N_f}(p)=\sum_{j}P_{2n_j-2}(p)e^{2p u_j^*/\Lambda_{N_f}^2},
\end{equation}
where $u_j^*$ are the singularities with multiplicities $n_j$, and $P_{2n_j-2}(p)$ are polynomials of degree $2n_j-2$ in $p$ with coefficients depending on the masses $\bfm$.

\section{Contribution from AD points}\label{sec:ADcontribution}

As discussed in part I, the superconformal Argyres-Douglas theories present themselves in the fundamental domain as elliptic points and can contribute to $u$-plane integrals. Partition functions of Argyres-Douglas theories have been studied in various contexts, such as in the $\Omega$-background \cite{Nishinaka:2012kn,Kimura:2020krd,Fucito:2023txg,Fucito:2023plp} and in topological theories
\cite{Marino:1998uy,Marino:1998uy_short,Gukov:2017,Moore:2017cmm,Dedushenko:2018bpp,moore_talk2018,Marino:1998bm}. 

As demonstrated in e.g. 
\cite{Moore:1997pc, Moore:2017cmm, Manschot:2021qqe}, the contribution from `interior' points of the $u$-plane, i.e. AD points or the UV point in $N_f=4$ or $\CN=2^*$, exhibits \emph{continuous} metric dependence rather than discrete wall-crossing.\footnote{See for instance  \cite[Equation (7.2)]{Moore:2017cmm} for an example of an AD theory, and \cite[Section 11.3]{Moore:1997pc} for the $N_f=4$ and $\CN=2^*$ theory} Unlike the evaluation of the contribution from the cusps, the contribution from the AD points for a specific choice of period point $J$ of the metric will then not be  constant in the whole chamber of $J$. Rather, under small deformations of $J$ it is expected to vary continuously. Note that this can only occur for $b_2>1$, since for $b_2=1$ and $\mathbb P^2$ in particular, the period point is unique. 

In this section, we focus on the asymptotic behaviour of the $u$-plane integrand and derive conditions for the integral to be sufficiently singular near the AD point to pick up a contribution.  We find that for a large class of manifolds, the contribution from any AD point in any theory vanishes. 

While the following discussion can be applied to a large extent to general SW curves, we want to stress that in this part II we calculate only the contributions from the Coulomb branch of SU(2) SQCD, including the contribution from the AD points to SQCD partition functions. As discussed before in Section \ref{sec:Contributions}, this can differ from the Argyres-Douglas partition functions themselves, for which a similar analysis can be done using the curves in \cite{Argyres:1995xn}.

\subsection{Expansion at AD point}
For generic masses, the SW surface for $N_f$ flavours is a rational elliptic surface with an $I_{4-N_f}^*$ fibre at infinity corresponding to weak coupling,  and   $2+N_f$ isolated $I_1$ singularities at strong coupling.\footnote{See Table \ref{tab:kodaira} in Appendix \ref{sec:Kodaira_invariant} for an explanation of the Kodaira classification of singular fibres} For specific configurations of the masses however, the singularities can collide. For instance, when $m_{N_f}$ is large and the other masses are generic, one $I_1$ fibre merges with the fibre $I_{4-N_f}^*$ to form a new fibre $I_{4-(N_f-1)}^*$ at infinity \cite{Aspman:2021vhs}. When $m_i=\pm m_j$ for $i\neq j$, two $I_1$ singularities collide and become an $I_2$ fibre \cite{Aspman:2022sfj} (see also \eqref{dDelta}). Finally, there is a locus $D^{\text{AD}}_{N_f}=0$ in mass space where fibres of  the type $I_n$ collide and become an additive fibre $\mathscr A\in\{II, III, IV, IV^*, III^*, II^*\}$. For the SW surfaces, the only possibilities are $\mathscr A\in\{II, III, IV\}$.

On important distinction  has to be done between the SQCD curves \eqref{eq:curves} and the curves for the AD theories,  discussed in \cite{Argyres:1995xn} for instance. As recently reviewed in \cite{Closset:2021lhd}, the fibre at infinity essentially determines the field theory by giving it  a `UV definition': For SQCD with $N_f$ hypermultiplets it is $I_{4-N_f}^*$, while for the AD curves it is $II^*$, $III^*$ and $IV^*$ for the $II$, $III$ and $IV$ AD theories (see for instance \cite{Caorsi:2019vex}). On the other hand, the singular types $II^*$, $III^*$ and $IV^*$ themselves are associated with the Minahan--Nemeschansky (MN) SCFTs \cite{Minahan:1996fg,Minahan:1996cj}. These theories do not appear in SU(2) SQCD, however, the below analysis carries over to a large extent to such singularities.
In this section, we focus on the SQCD curves, with masses tuned such that an AD point appears. We plan to discuss the topological theory for the AD curves in a future work \cite{AFMM:future}.

When the SW surface contains such an additive fibre $II$, $III$, $IV$, the functional invariant $\CJ$ becomes either 0 (for $II$ and $IV$) or $12^3$ (for $III$). As we identify  it with the modular $j$-invariant, these correspond to points $\tn$ in the fundamental domain with $j(\tn)$ either $0$ or $12^3$, which means that $\tn=\gamma\alpha$  or $\tn=\gamma i$ for some $\gamma\in \psl$, where $\alpha=e^{2\pi i/3}$ is a cube root of unity. Those are the elliptic points of $\psl$. In general, the fundamental domain $\CF(\bfm)$ for some mass configuration $\bfm$ is not modular, however in some cases $\CF(\bfm)$ is the fundamental domain for a subgroup of $\psl$ and then $\tn$ is an elliptic point for that subgroup \cite{Magureanu:2022qym,Closset:2021lhd,Aspman:2021vhs}.\\

 Let us study the behaviour of the $u$-plane integrand\footnote{We omit the measure $d\tau\wedge d\bar\tau$ the discussion here} 
\begin{equation}
\CI(\tau,\bar\tau)=\nu(\tau) \Psi(\tau,\bar\tau,\bfz,\bar\bfz) e^{2pu(\tau)/\Lambda_{N_f}^2+\bfx^2 G_{N_f}(\tau)}
\end{equation}
of \eqref{generaluplaneintegral} near $\tn$, where we drop some of the dependence for brevity.
We also turn off the background fluxes for now. Recall from \eqref{bfz_def} that the elliptic variable then reads $    \bfz=\frac{\bfx}{2\pi\Lambda_{N_f}}\frac{du}{da}$. 
Let us assume that we can express $ \Psi(\tau,\bar\tau,\bfz,\bar\bfz)=\partial_{\bar\tau}\widehat G(\tau,\bar\tau,\bfz,\bar\bfz)$, 
then the function 
\begin{equation}\label{widehat_h}
\widehat h(\tau,\bar\tau)=\nu(\tau) \widehat G(\tau,\bar\tau,\bfz,\bar\bfz) \,e^{2pu(\tau)/\Lambda_{N_f}^2+\bfx^2 G_{N_f}(\tau)}
\end{equation}
is an anti-derivative of the integrand, as it satisfies  $\partial_{\bar \tau} \widehat h(\tau,\bar\tau)=\CI(\tau,\bar\tau)$. Thus $-\mathrm d(\widehat h \,d\tau)=d\tau\wedge d\bar\tau\, \CI$, such that we can apply Stokes' theorem and find that the integral of $d\tau\wedge d\bar\tau\, \CI(\tau,\bar\tau)$ over $\CF(\bfm)$ is given by $-\int_{\partial \CF(\bfm)}d\tau\,\widehat h(\tau,\bar\tau)$.

As discussed in Section \ref{sec:integrationFD}, near an elliptic point $\tn$ the integrand $\widehat h$ can be  singular, and there can be a contribution from a small contour integral with radius $\varepsilon$ as $\varepsilon\to 0$. 
In order to find the contribution, 
we expand $\widehat h$ around an elliptic point $\tn$ as  
  \be\label{hat_h_expansion}
\widehat h(\tau,\bar \tau)=\sum_{m\gg -\infty,n\geq 0} d_{0}(m,n)\,(\tau-\tau_{0})^m\,(\bar \tau-\bar \tau_{0})^n.
\ee
Recall from part I that the  contribution from the elliptic point $\tn$ is then 
\begin{equation}\label{tn_contr}
    [\CI]_{\tn}=\frac{n_{0}}{k_{0}}\, d_0(-1,0),
\end{equation}
where $n_0$ and $k_0$ are integers explained in Section   \ref{sec:integrationFD}.

It is crucial that $\widehat h$ is analytic in $\bar\tau$ near $\bar\tn$, that is, the non-holomorphic expansion \eqref{hat_h_expansion} only contains non-negative powers of $\bar\tau-\bar\tn$. As shown in \cite{Korpas:2019cwg},  anti-derivatives of $\Psi$ generally have poles in the elliptic variable. However, the poles can be avoided for a certain choice of anti-derivative.

Deriving the non-holomorphic expansion \eqref{hat_h_expansion} of the anti-derivative of $\widehat h$ of the $u$-plane integrand for a generic four-manifold with period point $J$, 't Hooft flux $\bfmu$ and arbitrary AD configuration is a challenging task. Due to the product structure \eqref{widehat_h}, we can study the expansion of the anti-derivative $\widehat G$ separately from the measure $\nu$. We discuss those expansions in the following subsections in as much generality as possible. Our primary interest is to determine in which cases the expansion \eqref{hat_h_expansion} has a nonzero coefficient $d_0(-1,0)$, which is why we mainly focus on the leading exponent rather than the precise (non-zero) leading coefficients.

\subsection{Measure factor}
In this subsection, we discuss the leading behaviour of the $u$-plane measure $\nu$ in the absence of background fluxes. Recall from \eqref{measurefactornf} that in this case the measure factor is proportional to 
\begin{equation}\label{measure_exp}
    \nu\propto\frac{da}{d\tau}\left(\frac{du}{da}\right)^{\frac\chi2}\Delta^{\frac\sigma8}.
\end{equation}
As noticed in \cite[Section 3.1]{Aspman:2021vhs}, AD points $\tn$ are characterised by the property that $\frac{du}{da}(\tn)=0$. In appendix \ref{sec:Kodaira_invariant}, we prove this rigorously for any elliptic surface containing a singular fibre of type $II$, $III$ and $IV$.\footnote{Many of the results easily carry over to singularities of type $IV^*$, $III^*$ and $II^*$ as well.} In order to work without square roots \eqref{dadu_def}, let us define 
\begin{equation}\label{omega_def}
    \omega\coloneqq \left(\frac{du}{da}\right)^2,
\end{equation}
which is proportional to $\frac{g_3}{g_2}\frac{E_4}{E_6}$. Then $\omega$ has an expansion 
\begin{equation}
    \omega(\tau)=d_\CT(\tau-\tn)^{\ord \omega}+\dots 
\end{equation}
at $\tau\to\tn$. The  value of $\ord \omega$ depends on the precise configuration of singular fibres in the elliptic surface,  however, for the SQCD curves it is 1 in most cases. The value of $d_\CT$ can be determined exactly as well (see appendix \ref{sec:Kodaira_invariant}). However, here we are primarily interested in the leading behaviour, for which it is enough that $d_\CT\neq 0$.

In order to determine the expansion of $\nu$ at $\tn$, we define  $u_0=u(\tau_0)$, such that we have an expansion
\begin{equation}\label{u_tn_exp}
    u(\tau)=u_0+c_\CT(\tau-\tn)^{\ord \ttu},
\end{equation}
where we have defined $\ttu=u-u_0$. This is because unless $u_0=0$, $u$ does not vanish at $\tn$ and so does not have a positive order of vanishing.\footnote{In the following, we slightly abuse notation, and define  $\ord f$ as a rational number such that $f(\tau)\sim (\tau-\tn)^{\ord f}+\dots$ for $\tau\to\tn$. This number can in particular also be negative. Furthermore, we distinguish this order of vanishing as a function of $\tau$ from the order of vanishing of say the \ws{} invariants $g_2$ and $g_3$ as a function of $u$, which we denote by $o_2$, $o_3$ etc., particularly in  Appendix \ref{sec:Kodaira_invariant}.}
Again, the order of vanishing $\ord \ttu$ as well as the nonzero coefficient $c_\CT$ can be determined exactly given a singular elliptic surface. The value $\ord \ttu$ turns out to partially characterise the \emph{type} of AD point, i.e. type $II$, $III$ or $IV$.

In the following, we focus only on the leading behaviour, such that it is enough to calculate the orders of vanishing of all holomorphic functions involved. From \eqref{u_tn_exp} we can determine the order of vanishing of $\frac{du}{d\tau}$, it is  
\begin{equation}
    \ord \frac{du}{d\tau}=\ord \ttu -1.
\end{equation}
On the other hand, from \eqref{omega_def} it is clear that 
\begin{equation}
    \ord \frac{da}{du}=-\frac12\ord\omega.
\end{equation}
Thus from $\frac{da}{d\tau}=\frac{da}{du}\frac{du}{d\tau}$ it is clear that 
\begin{equation}
    \ord \frac{da}{d\tau}=-\frac12\ord\omega+\ord \ttu-1.
\end{equation}
The order of vanishing of the physical discriminant depends on the type of AD point as well. We can use the identity $\eta^{24}\propto \Delta /\omega^{6}$ \cite[Equation (3.9)]{Aspman:2021vhs}, and using the fact that $\eta(\tn)\neq 0$, we have 
\begin{equation}
    \ord\Delta=6\, \ord\omega.
\end{equation}
We have thus calculated the order of vanishing of the $u$-plane measure $\nu$ \eqref{measure_exp} for any given elliptic point $\tn$, it is
\begin{equation}
    \ord \nu=\ord \ttu -1 + \frac14\left(3\sigma+\chi-2\right)\ord\omega.
\end{equation}
We can further use that $\chi+\sigma=4$ for manifolds with $b_2^+=1$, since otherwise the $u$-plane integral vanishes. Thus
\begin{equation}\label{ord_nu}
    \boxed{\ord\nu=\ord\ttu-1+\frac{\sigma+1}{2}\ord\omega.}
\end{equation}
It is important to notice that $\ord\ttu$ and $\ord \omega$ are not given uniquely for a type ($II$, $III$ and $IV$) of AD point. Rather, they can differ depending on the configuration containing a given Kodaira singularity. The `degree of freedom' is the undetermined order of vanishing of the \ws{} invariant $g_2$ or $g_3$ on the base $\mathbb P^1$. Since for the SQCD curves $g_2$ and $g_3$ are polynomials of degree $2$ and $3$ however, this order of vanishing is greatly restricted. Thus in practise, for the curves under consideration, both $\ord \ttu$ and $\ord \omega$ can  take at most two values for a given type $II$, $III$ or $IV$. We summarise these values in the following table \ref{orduordomega_values}.\footnote{We do not know if the second possibility for type $III$ is realised, but are also not able to rule it out.}
We stress again that this result \eqref{ord_nu}  is the order of vanishing for any measure factor \eqref{measure_exp} calculated from an arbitrary elliptic surface containing an additive fibre of type $II^{(*)}$, $III^{(*)}$ or $IV^{(*)}$ (including the Minahan--Nemeschansky theories), and will be useful for the study of other $u$-plane integrals. 

\begin{table}%
\renewcommand{\arraystretch}{1}
\centering 
\begin{tabular}{|Sc| Sc|Sc|Sc|Sc| Sc|} %
\hline
$\mathscr A$& $o_2$ & $o_3$ & $o_\Delta$ & $\ord \ttu$ & $\ord\omega$\\  \hline  
$II$ & 1, 2&1&2&3, $\frac34$&1, $\frac14$ \\  
$III$ &1&2, 3& 3& 2, $\frac23$& 1, $\frac13$  \\  
$IV$ &2&2&4&$\frac32$&1\\  \hline
\end{tabular}
\caption{The possible orders of vanishing of the SW curves at AD points $II$, $III$, $IV$. The first three columns are the order of vanishing at $u_0$ of the \ws{} invariants $g_2$, $g_3$ and the discriminant $\Delta$ as a function of $u$, while $\ord\ttu$ and $\ord \omega$ are the orders of vanishing of $\ttu$ and $\omega$ as a function of $\tau$. The values separated by commas give the two different possibilities. For arbitrary elliptic surfaces, the values of $\ord\ttu$ and $\ord\omega$ are given in Appendix \ref{sec:Kodaira_invariant}.}\label{orduordomega_values}
\end{table} 

For the manifolds under consideration, the signature is bounded as $\sigma\leq 1$. Since $\ord\omega>0$, this implies that the measure can become arbitrarily singular for manifolds with large $-\sigma$. Similarly, one can expect selection rules on the signature for the $u$-plane integrand to be sufficiently singular to have a nonzero residue. 

Finally, if we include background fluxes in the theory, the path integral acquires couplings $\prod_{i,j=1}^{N_f}C_{ij}^{B(\bfk_i,\bfk_j)}$ which enter the measure factor in \eqref{measurefactornf}. For $N_f=2$ with equal masses we have expressions of $C_{ij}$ in terms of modular forms, see for instance \eqref{CijNf2}. Since we do not know the coupling $v$ analytically however, it is difficult to study the behaviour of those couplings at a given elliptic point $\tn$.
We hope to come back to this point in future work.

\subsection{Photon path integral}\label{sec:AD_photon_path_integral}
Having discussed the singular behaviour of the measure factor at any given elliptic AD point, we discuss in this subsection the series expansion of the anti-derivative $\widehat G$ of the Siegel-Narain theta function. This function has been determined in \cite{Korpas:2019cwg} for a canonical choice of period point $J\in H^2(X,\mathbb R)$, i.e. either $J=(1,\boldsymbol{0})$ for odd intersection forms on $H^2(X,\mathbb Z)$, or $J=\frac{1}{\sqrt2}(1,1,\boldsymbol{0})$ for even intersection forms. Including surface observables (or elliptic arguments in general), $\widehat G$ takes the form of a non-holomorphic completion of an Appell--Lerch sum.

\subsubsection*{Odd intersection form}

Since it is rather involved to study the general expansion of $\widehat G$ for all 't Hooft fluxes $\bfmu$ and values or $\sigma$ of  $b_2$ simultaneously, let us focus on manifolds $X$ with odd intersection form and a fixed flux. 
Consider the Siegel-Narain theta function for $\bfmu=(\mu_+,\bfmu_-)\equiv(\frac12,\frac12,\dots,\frac12) \mod \mathbb Z^{b_2}$ with $K=(3,1,\dots, 1)$, such that $K^2=9-n$ with $n=b_2^-$ and $\bfmu\equiv K/2\mod L$, as is the case for the del Pezzo surfaces $dP_n$ for instance. From  \eqref{Psi_surface} and \eqref{Psi_surface_L-} we have 
 \begin{equation}
 \Psi_{\bfmu}^J(\tau,\bfrho)=i f(\tau,\rho_1)\prod_{k=2}^{b_2}\jt_1(\tau,\rho_k).
  \end{equation}
Then $\Psi_{\bfmu}^J(\tau,\bfrho)=\partial_{\bar\tau}\widehat G(\tau,\bar\tau,\rho,\bar\rho)$ with 
\begin{equation}\label{psi_H_hat}
\widehat G(\tau,\bar\tau,\rho,\bar\rho)=i \widehat H(\tau,\bar\tau,\rho,\bar\rho)\prod_{k=2}^{b_2}\jt_1(\tau,\rho_k),
\end{equation}
where $\widehat H(\tau,\bar\tau,\rho,\bar\rho)$ is defined in \eqref{hatHdef}.
Since $\rho \to 0$ for $\tau\to\tn$, we consider the limit of \eqref{psi_H_hat} for $\rho\to 0$.
As explained in detail in Appendix \ref{app:mockjacobi}, the function $\widehat H(\tau,\bar\tau,\rho,\bar\rho)$ has a well-defined Taylor series in $(\rho,\bar\rho)$, and for $\rho,\bar\rho\to 0$ we have $\widehat H(\tau,\bar\tau,\rho,\bar\rho)\to \widehat H(\tau,\bar\tau)$. On the level of the Siegel-Narain theta function $\Psi_\bfmu^J(\tau,\rho)$, the limit $\rho\to 0$ exists since $\Psi$ is a well-defined function on $\mathbb H\times \mathbb C^{b_2}$ \cite{ZwegersThesis,zagier2009,Korpas:2017qdo}: For $\rho=0$, $\Psi(\tau,0)$ is a non-holomorphic vector-valued modular form for $\slz$ of weight $(\frac n2,0)$.

We can in fact study the  Taylor series of the  whole anti-derivative \eqref{psi_H_hat} in $\rho$, while we focus on the leading contribution as $\rho\to 0$. For the Taylor series of the elliptic Jacobi theta series \eqref{jt_elliptic}, 
\begin{equation}
\jt_1(\tau,\rho)=\sum_{n=0}^\infty \frac{1}{n!}\partial_\rho^n \jt_1(\tau,0)\rho^n,
\end{equation}
we determine
\begin{equation}
\partial_\rho^n\jt_1(\tau,0)=i (2\pi i)^n\sum_{r\in \mathbb Z+\frac12}(-1)^{r-\frac12}r^n q^{r^2/2}.
\end{equation}
In the limit $\rho\to 0$, we are primarily interested in the first nonzero term,
\begin{equation}
\vartheta_1(\tau,\rho)=-2\pi \eta(\tau)^3 \rho +\CO(\rho^3).
\end{equation}

The Taylor series of $\widehat G$ at $\rho=0$ then reads
\bea\label{hbar_at_rho=0}
\widehat G(\tau,\bar\tau,\rho,\bar\rho)=(-2\pi)^{b_2^-}\widehat H(\tau,\bar\tau)\eta(\tau)^{3b_2^-}\rho_2\dots \rho_{b_2}+\CO(\rho_k^2).
\eea

\subsubsection*{Chowla--Selberg formula}
Let us determine the holomorphic expansion of this function at $\tau=\tn$. The function $\widehat H(\tau,\bar\tau)$ is a non-holomorphic modular form. If its value at the elliptic point $\tn$ is nonzero, it has a regular Taylor expansion in $\tau-\tn$ starting with a constant term. If it vanishes at $\tn$ on the other hand, it has a positive order of vanishing, shifting the holomorphic series of the remainder of the integrand. In order to find the leading term in the expansion, we thus have to evaluate it at $\tau=\tn$. 

In many cases, explicit values of holomorphic modular forms at the elliptic fixed points $i$ and $\alpha$ of $\slz$ are specific combinations of powers of $\pi$ and the Euler gamma function $\Gamma$ evaluated at rational numbers.  This is a consequence of the \emph{Chowla--Selberg formula}, which describes the value of modular forms at complex multiplication points in terms of products of gamma functions.\footnote{See Appendix \ref{chowla_selberg} for a survey on the Chowla--Selberg formula}  The Chowla--Selberg formula expresses the value of any modular form $f$ of weight $k$ at a \emph{complex multiplication} (CM) point $\fz\in\mathbb H$ as 
\begin{equation}
    f(\fz)\in\overline{\mathbb Q}\,\Omega_K^k,
\end{equation}
where $\overline{\mathbb Q}$ is the  field of  all algebraic numbers\footnote{An algebraic  number is a  number which is a root of a non-zero polynomial in one variable with integer coefficients}, and $\Omega_K\in\mathbb C^*$ is a complex number that depends only on the imaginary quadratic field $K$ containing $\fz$. Any  $\tn$ in the $\slz$-orbit of the $\slz$ elliptic points $i$ or $\alpha=e^{2\pi i/3}$ is a CM point, and the period $\Omega$ is computed from the Chowla--Selberg formula (Theorem \ref{thm_CS}) as
\begin{equation}\label{elliptic_periods}
    \begin{aligned}
        \Oi&=\frac{\Gamma(\tfrac14)^2}{4\pi^{\frac32}},\\
        \Oa&=\frac{3^{\frac14}\Gamma(\tfrac13)^3}{4\pi^2},
    \end{aligned}
\end{equation}
where $\mathbb Q(i)$ and $\mathbb Q(\alpha)$ are the quadratic fields generated by the CM points $i$ and $\alpha$.
Since the values \eqref{elliptic_periods} are known, the value $f(\fz)$ of some elliptic point $\fz$ of a modular form $f$ of weight $k$ can now more easily be determined, since $f(\fz)\Omega_{\mathbb Q(\fz)}^{-k}$ is an algebraic number. These can  often   be found using an integer relation algorithm such as the LLL-algorithm (see for instance \cite{Cohen1993}).

The Chowla--Selberg formula is generally stated to apply to holomorphic modular forms  with algebraic Fourier coefficients for finite index subgroups of $\slz$. 
Generalisations to non-holomorphic modular forms \cite{bruinier2013} and mock modular forms \cite{GUERZHOY2010,CHOI2018428}  have been explored in the literature, however, the full range of validity of this formula is possiblly not determined yet.

A somewhat trivial example of the Chowla--Selberg formula concerns the anti-derivative $\widehat H$. Its values at the elliptic points are
\begin{equation}\begin{aligned}\label{Hhat_elliptic_CS}
  \widehat H(i,\bar i)&=0, \\
 \widehat H(\alpha,\bar \alpha)&=0.
\end{aligned}\end{equation}
These can be found from the modular transformations of $\widehat H$ (see \eqref{Hhat_mod_transf}): As it is a non-holomorphic modular form for $\slz$ of weight $(\frac12,0)$, one evaluates the modular $S$ and $ST$ transformation of $\widehat H$ at their fixed points $i$ and $\alpha$, and the multipliers and weight factors differing from one results in $\widehat H(i,\bar i)$ and $\widehat H(\alpha,\bar \alpha)$ vanishing.
The coefficients of higher orders in the Taylor expansion can also be found, see Appendix \ref{app:mockjacobi}, and for instance \eqref{H2coef} and Table \ref{H2hcoefficients}. In particular, $\widehat H$ has a non-holomorphic Taylor series at $(\tau,\bar\tau)=(\tn,\bar\tn)$ with non-zero coefficient of $(\tau-\tn)^1(\bar\tau-\bar{\tau}_0)^0$. Thus the holomorphic order of vanishing is $\ord \widehat H=1$. 

The property that anti-derivatives of the photon path integral vanishes at elliptic points is certainly not true in general. Indeed, anti-derivatives are only defined up to integration constants, which can change properties of distinct anti-derivatives drastically.

\subsubsection*{Integration constant}
There are two obvious way in which an integration `constant' can be added. First, we can add to $\widehat G$ a weakly holomorphic modular function $g$ of weight $\frac12$. Indeed, for this new anti-derivative $\widehat h=f(\tau)(\widehat G(\tau,\bar\tau)+g(\tau))$, it remains true that  $\partial_{\bar \tau} \widehat h(\tau,\bar\tau)=\CI(\tau,\bar\tau)$. Regarding the functions $H$ and $F$ defined in Section \ref{factorisation_psi}, and their completions $\widehat H$ and $\widehat F$ defined in Appendix \ref{app:mock}, their difference is the weakly holomorphic modular form $(\jt_2^4+\jt_3^4)/\eta^3$, as stated precisely in \eqref{HtauDef} and \eqref{FhatHhat}. 
By numerically evaluating the Eichler integral and using the Chowla--Selberg formula, we find the following values\footnote{We omit the anti-holomorphic dependence in the notation.}, 
   \begin{equation}\label{Fhat_values}
    \begin{aligned}
        \widehat F(i)&=\frac12\Oi^{\frac12}, 
        &\widehat F(i+1)&=0, \\
        \widehat F(e^{\frac23\pi i})&=\frac{e^{-\frac{5}{24}\pi i}}{2^{\frac23}\sqrt3}\Oa^{\frac12},  
        &\widehat F(e^{\frac13\pi i})&=\frac{e^{\frac{5}{24}\pi i}}{2^{\frac23}\sqrt3}\Oa^{\frac12}, \\
        \widehat F(\tfrac12+\tfrac i2)&=\frac{e^{\frac38\pi i}}{2^{\frac34}}\Oi^{\frac12},\quad
        &\widehat F(\tfrac{1}{\sqrt3}e^{\frac16\pi i})&=\frac{e^{\frac{13}{24}\pi i}}{2^{\frac23}\sqrt[4]{3}}\Oa^{\frac12}.
    \end{aligned}
\end{equation}
As is apparent, all coefficients of the Chowla--Selberg periods are algebraic numbers. For example, the coefficient of $ \widehat F(e^{\frac23\pi i})$ has minimal polynomial $2^{16}3^{12}x^{24}+1$.
Since $\widehat F$ is a non-holomorphic modular form of weight $\left(\tfrac12,0\right)$ for $\Gamma^0(2)$ (which is of index 3 in $\slz$), these six values determine $\widehat F(\tn)$ for any elliptic point $\tn$ of $\slz$. In contrast to the modular form $\widehat H$, the holomorphic order of vanishing  of $\widehat F$ at any $\slz$ elliptic point is zero.

Alternatively, we can also add a weakly holomorphic function of weight 2 to $\widehat h$ directly. 
The difference between the two places where the holomorphic function is inserted is of course equivalent. 
The basis for the space of weakly holomorphic modular forms for $\psl$ is given by the derivatives of powers of $j$, $\partial_\tau j^l$, $j\in\mathbb N$. These are given by
\begin{equation}
    \partial_\tau j^l=-2\pi i\,  l \, j^l \frac{E_6}{E_4},
\end{equation}
with are holomorphic at $i$ and $\alpha$. In fact, they vanish at both $i$ and $\alpha$ (we have $\ord \partial_\tau j(\tau)^l=3l-1$ at $\tau=\alpha$ and $\ord \partial_\tau j(\tau)^l=1$ at $\tau=i$). 
Thus they do not alter the residue.

\subsection{AD contribution}
We can now combine the results from the above considerations regarding the leading behaviour of the anti-derivative $\widehat h$ \eqref{hbar_at_rho=0}. 
Recall from \eqref{bfz_def} that $\ord\rho=\frac12\ord \omega$. The Dedekind eta-function $\eta$ is nowhere vanishing in $\mathbb H$. Thus from \eqref{hbar_at_rho=0} we have
\bea\label{ordGhat}
    \ord \widehat G&=\ord \widehat H+  b_2^- \ord \rho \\
    &=1+\frac{1-\sigma}{2}\ord\omega.
\eea
From the anti-derivative \eqref{widehat_h} we are missing the exponentiated observables, $2pu(\tau)/\Lambda_{N_f}^2$ and $\bfx^2 G_{N_f}(\tau)$. Of course, for $\tau$ near $\tn$ the first exponential is regular, since $2pu(\tau)/\Lambda_{N_f}^2\to 2pu(\tn)/\Lambda_{N_f}^2$. For the surface observable, we need to study the contact term \eqref{contactterm}, which contains the three functions $E_2$, $\omega$ and $u$. In Lemma \ref{lemmaE2} in Appendix \ref{app:modularforms2} we prove that $E_2(\tn)\neq 0$ for any elliptic point $\tn$. Thus 
\begin{equation}
    \ord G_{N_f}=\ord\omega.
\end{equation} 
Since the contact term is exponentiated, $e^{\bfx^2 G_{N_f}(\tau)}$ however becomes a constant as $\tau\to\tn$.\footnote{This should be modified when the exponential is expanded and we study a particular coefficient in the series in $\bfx$. Here, we are interested mainly in the generating function.} This shows
\begin{equation}
   \ord e^{2pu(\tau)/\Lambda_{N_f}^2+\bfx^2 G_{N_f}(\tau)}=0.
\end{equation}
Combining now \eqref{ord_nu} and \eqref{ordGhat} in  \eqref{widehat_h} we have $\ord\widehat h=\ord\nu+\ord \widehat G$, which gives
\bea\label{ordhhat}\boxed{
    \ord\widehat h=\ord \ttu+\ord \omega.}
\eea
Strikingly, the dependence on the signature of the  manifold drops out. Moreover, this value is bounded from below,
\begin{equation}
\ord\widehat h \geq 1
\end{equation}
as is clear from Table \ref{orduordomega_values}.\footnote{If we allow for other elliptic surfaces with type $II$, $III$ and $IV$ singular fibres, the value of $\ord\ttu+\ord \omega$ is only bounded from below by 0, since it converges to 0 for either $o_2\to\infty$ or $o_3\to\infty$, whichever is undetermined by the Kodaira classification. For \emph{rational} elliptic surfaces, the \ws{} invariants $g_2$ and $g_3$ are polynomials of degree at most $4$ and $6$, and thus $\ord\ttu+\ord \omega$ is bounded from below by a positive rational number.} In fact, for \emph{any} elliptic surface containing a singular fibre of type $II$, $III$ or $IV$, the order of vanishing \eqref{ordhhat} is strictly positive. 

We have shown that in the absence of background fluxes, for a fixed 't Hooft flux on arbitrary four-manifolds with odd intersection form, the anti-derivative of the $u$-plane integrand has an expansion \eqref{hat_h_expansion} with 
\begin{equation}
    d_0(-1,0)=0.
\end{equation}
That is, according to \eqref{tn_contr} the $u$-plane integrals for these manifolds does not acquire a contribution from any possible AD point in $\CN=2$ SQCD,
\begin{equation}
    [\CI]_{\tn}=0.
\end{equation}
This result is not entirely unexpected, since this class of manifolds also includes the del Pezzo surfaces $dP_n$, for which there is no SW contribution and thus also no contribution from the strong coupling $I_r$ cusps of the $u$-plane.  Of course, AD points are collisions of $I_r$ and $I_{r'}$ cusps, and it is conceivable that if there is no contribution from those singularities if they are separated, their collision does not produce a contribution either.  See also Section \ref{sec:ContCusps} for the related discussion on the collision of mutually local $I_r$ and $I_r'$ singularities to an $I_{r+r'}$ singularity.

\section{Conclusion and Outlook}\label{sec:discussion}

In this part II, we formulated and evaluated topological partition functions for massive SQCD on compact four-manifolds.

As addressed in the introduction, topological correlators take the form of a continuous $u$-plane integral and a finite sum over the Coulomb branch singularities. In part I, we argued that the $u$-plane integral can itself be expressed as a sum over the CB singularities (see \eqref{integrationresult}). The SW curve fibered over the Coulomb branch constitutes an elliptic surface $\CS$, whose singular fibres correspond to the CB singularities. The possible singular fibres $F_i$ of an elliptic surface fall into Kodaira's classification. To any surface $\CS$, we can associate its configuration of singular fibres $\{F_1, \dots, F_k\}$. We may therefore express topological correlation functions $ Z_\bfmu^X$ for fixed fluxes $\bfmu$ on a given manifold $X$ as a sum
$Z_\bfmu^X[\CS]= \sum_{i=1}^k \Phi_{\bfmu,i}^J$, 
where $\Phi_{\bfmu,i}^J$ is the contribution from the singular fibre $F_i$. This proposal is valid for massive SU(2) SQCD, as well as the topological $\CN=2^*$ theory \cite{Manschot:2021qqe}.

An important consistency check of this formulation and of correlation functions in general is the invariance under reparametrisations of the theory. 
For a given 4d $\CN=2$ theory, there can be multiple SW curves describing the same local  low-energy physics \cite{Seiberg:1994aj,Moore:1997pc,Shapere:2008zf,Gaiotto:2010be,Aharony:2013hda,Manschot:2021qqe}.\footnote{This was first noticed in the pure 4d SU(2) theory, where the SW curves associated with $\Gamma(2)$ \cite{Seiberg:1994rs} and $\Gamma^0(4)$ \cite{Seiberg:1994aj} monodromy groups were understood to be related by a 2-isogeny and a change of normalisation of the periods.} Recently, the choice of SW curve has been identified with the choice of global structure of a rank one 4d $\CN=2$ theory,  i.e. the spectrum of line operators  \cite{Argyres:2022kon}. The global forms of a given local theory are related by gauging 1-form symmetries. On the level of the SW curve, the relation is provided by compositions of isogenies \cite{Closset:2023pmc}. An isogeny is given by a quotient $\CS/\alpha$ of the elliptic surface $\CS$ by an automorphism $\alpha$ induced by the 1-form symmetry group, which is a particular subgroup of the Mordell-Weil group $\text{MW}(\CS)$ \cite{Closset:2021lhd}. The Mordell-Weil group sits in a larger group $\text{Aut}(\CS)$ of automorphisms of the surface $\CS$ itself, which encodes the 1-form as well as the 0-form symmetries of the theory \cite{Caorsi:2018ahl,Caorsi:2019vex,Closset:2023pmc,ACFMM23}. Formulating $\CN=2$ theories on (especially non-spin) compact four-manifolds provides an ideal testing ground for the effects of such higher-form symmetries and their anomalies \cite{Aharony:2013hda,Gaiotto:2014kfa,Cordova:2018acb,Ang:2019txy,Brennan:2022tyl,Brennan:2023kpo,Brennan:2023vsa}. It would thus be interesting and important to explore the action of $\text{Aut}(\CS)$ on topological correlators.

Formulating superconformal field theories on four-manifolds has been proven to be a fruitful tool for understanding inherent features about the theories themselves, as well as the topological invariants of the underlying spacetime. An example of the former is the derivation of the conformal and flavour central charges of 4d $\CN=2$ SCFTs  from the topological twist, by comparing the $U(1)_R$ anomaly of the Coulomb branch measure with the conformal anomaly of the trace of the energy-momentum tensor generated by a background gravitational field
\cite{Shapere:2008zf}. Given these central charges of an SCFT, topological correlators satisfy a selection rule due to the $U(1)_R$ anomaly \cite{Aharony_2008,Xie:2013jc,Shapere:2008zf,Marino:1998bm,Moore:2017cmm,Gukov:2017,Witten:1995gf}. For rank 1 $\CN=2$ SCFTs, this selection rule identifies the anomalous $R$-charge of the vacuum with the $R$-charge of the operator whose correlation function is evaluated. 

Moreover, the existence of superconformal theories  on a four-manifold $X$ can give new structural
insights on the invariants of $X$. For the simplest Argyres-Douglas SCFT $H_0$, it was found that the regularity of topological partition functions in the critical limit imposes \emph{sum rules} on the classical Seiberg-Witten invariants \cite{Marino:1998uy}.
All other Argyres-Douglas theories, including the $H_1$ and $H_2$ theories appearing as critical points in SU(2) SQCD, are associated with noncompact Higgs branches. For those higher AD theories, postulating analogous sum rules is therefore more intricate. Nevertheless, identifying the hypermultiplet masses as equivariant deformation parameters, the Higgs branch singularities can be made precise \cite{LoNeSha,Dedushenko:2017tdw}. 
We plan to discuss the interplay between Higgs branch singularities, sum rules and selection rules in more detail in a future work \cite{AFMM:future}.

While coupling SQCD to background fluxes for the flavour group makes the theory well-defined on non-spin manifolds $X$ with arbitrary 't Hooft flux, it renders the evaluation of topological correlators more involved. One possible reason for this is that the structure of poles and zeros of the background couplings can interfere with the pole structure of the mock modular forms used in the evaluation. Another obstruction comes from the branch points of the integrand, which only in the absence of background fluxes are guaranteed to not affect the result. In Appendix \ref{sec:bkgFluxes}, we analyse the various issues and discuss   possible resolutions. It would be desirable to have a more systematic treatment of this aspect.

The methods we employ to evaluate the topological partition functions of $\CN=2$ SQCD are largely independent on the specific form of the underlying SW curves. The SW curves for other 4d rank 1 $\CN=2$ theories are (generally non-modular) elliptic surfaces, where the `remaining' Kodaira singularities  absent in SQCD are the   types $II^*$, $III^*$ and $IV^*$, which are associated with the Minahan--Nemeschansky (MN) SCFTs \cite{Minahan:1996fg,Minahan:1996cj}.  It would  therefore be interesting to formulate topological partition functions of other 4d rank 1 $\CN=2$ theories, including all rank 1 $\CN=2$ SCFTs, which are classified \cite{Argyres:2015ffa, Argyres:2015gha, Argyres:2016xmc, Argyres:2016xua}.

Another interesting avenue for future research is the 5-dimensional uplift of 4d $\CN=2$ theories, for instance the  5-dimensional $\CN=1$ SU(2) gauge theory compactified on a circle 
\cite{Nekrasov_1998,Ganor:1996pc}. More generally, a family of 4d $\CN=2$ theories is represented by the Kaluza-Klein theories obtained by a circle compactification of the 5d $\CN=1$ SU(2) theory with fundamental flavours, which have   $E_n$ flavour symmetry \cite{Seiberg:1996bd,Morrison:1996xf}. See 
\cite{Chang:2017cdx,Closset:2022vjj,kim2023,Closset:2021lhd} for related recent works on 5d topological  partition functions.

\section*{Acknowledgments}
We thank Jim Bryan, Cyril Closset, Martijn Kool, Horia Magureanu, Greg Moore, Samson Shatashvili and Edward Witten for useful discussions and comments. JM was supported by the Laureate Award 15175 ``Modularity in Quantum Field Theory and Gravity" (2018-2022) of the Irish Research Council, and the Ambrose Monell Foundation (2022-2023). EF is supported by the EPSRC grant ``Local Mirror Symmetry and Five-dimensional Field Theory''.

\appendix \addtocounter{section}{3}
\section{Modular forms II}\label{app:modularforms2}
In this Appendix, we discuss aspects of modular forms relevant for this Part II. It complements Appendix \ref{app:modularforms} of Part I.
Section \ref{chowla_selberg} gives a review of the Chowla--Selberg formula, which is used in Section \ref{sec:ADcontribution} in the expansion of the photon path integral around the AD points. 
In Section \ref{app:derivatives}, we discuss additional properties of modular forms, such as derivatives and special values.  These are used in Section \ref{sec:ADcontribution} for the expansion of modular forms around elliptic points, as well as in Section \ref{sec:uplane_int_mock} in the study of anti-derivatives of the $u$-plane integrand.  Finally, Sections \ref{app:mock} and \ref{app:mockjacobi} list properties of mock modular forms and mock Jacobi forms relevant to the evaluation of $u$-plane integrals.

\subsection{Chowla--Selberg formula}\label{chowla_selberg}
Explicit values of modular forms at the elliptic fixed points $i$ and $\alpha=e^{2\pi i/3}$ can often be expressed as specific combinations of powers of $\pi$ and the Euler gamma function $\Gamma$. This is a consequence of the Chowla--Selberg formula, which describes the value of modular forms at complex multiplication points in terms of products of gamma functions. In order to understand the final formula, we need to introduce some basic notions. We follow \cite[$\S 1$ Section 6]{Bruinier08}. Generalisations to non-holomorphic modular forms \cite{bruinier2013} and mock modular forms \cite{GUERZHOY2010,CHOI2018428}   have been explored. In physics it has been used in the context of the BCOV conjecture \cite{Eriksson:2019jhg}.
Further results can be found in \cite{Chowla1949,Shimura2010,DHoker:2022dxx,shimura1977derivatives,MORENO1983226,lerch1897,Waldschmidt2006,LI2022108264,BRUINIER201738,voight2014,Cohen1993}.

Elliptic curves over $\mathbb C$ can be viewed as quotients  $E=\mathbb C /\Lambda$, with $\Lambda$ a lattice in $\mathbb C$. If $E' = \mathbb C/\Lambda'$ is another curve and $\lambda$ a complex number with $\lambda\Lambda \subseteq \Lambda'$, then multiplication by $\lambda$ induces an algebraic map from $E$ to $E'$. In particular, if $\lambda\Lambda \subseteq \Lambda$, then we get a map from $E$ to itself. 
Elliptic curves  $E=\mathbb C /\Lambda$  which satisfy $\lambda\Lambda \subseteq \Lambda$  for some non-real value of $\lambda$ are said to admit \emph{complex multiplication}.

We can think of the upper half-plane $\mathbb H$ as a moduli space for elliptic curves. Then the points in $\mathbb H$ that correspond to elliptic curves with complex multiplication (from now on called CM points) are the numbers $\mathfrak z\in\mathbb H$ which satisfy a quadratic equation over $\mathbb Z$.

Examples of CM points are the $\slz$-images $\gamma\tau_0$ of the $\slz$ elliptic points $\tau_0$. We can easily check that for any $\gamma=\begin{psmallmatrix}a&b\\c&d\end{psmallmatrix}$, the elliptic points $\gamma i$ and $\gamma\alpha$ satisfy
\begin{equation}\begin{aligned}\label{quadratic_eq_elliptic}
   0&= (c^2 + d^2)(\gamma i)^2 - 2 (a c + b d) \gamma i + a^2 + b^2, \\
    0&=(c^2 - c d + d^2) (\gamma\alpha)^2 + (b (c - 2 d) + 
    a (d-2 c ) ) \gamma\alpha + a^2 - a b + b^2.
\end{aligned}\end{equation}
The equations are strictly quadratic, since $c^2 + d^2=|ci+d|^2$ and $c^2 - c d + d^2=|c\alpha+d|^2$.

\begin{definition}[Discriminant of CM point] The discriminant of a CM point is the smallest discriminant of the quadratic polynomial over $\mathbb Z$ of which it is a root. 
\end{definition}
For the elliptic points in \eqref{quadratic_eq_elliptic}, we can easily compute the discriminants $-4$ for $\fz=\gamma i$ and $-3$ for $\fz=\gamma\alpha$. Generally, discriminants of CM points are congruent to $0$ or $1$ modulo $4$. For example, $i=\frac12\sqrt{-4}$ and $\alpha=\frac12(-1+\sqrt{-3})$.

There is a one-to-one relation between primitive positive definite binary quadratic forms of discriminant $D$ and the set $\mathfrak Z_D$ of CM points with discriminant $D$. For fixed $D$, there are only finitely many classes of such quadratic forms with discriminant $D$. Their number is the \emph{class number} $h(D)$ of the discriminant $D$.\footnote{A generalisation of the class numbers are the Hurwitz class numbers, which are class numbers weighted by the inverse of the order of the automorphism group in $\Gamma_1$.} Due to the bijection with $\mathfrak Z_D$, the class number $h(D)=|\Gamma_1\backslash \mathfrak Z_D|$ is given by the cardinality of the set of $\Gamma_1$-equivalent elements in $\mathfrak Z_D$. 

CM points $\fz$ have many special properties. For instance, the value of $j(\fz)$ or that of any other modular function with algebraic coefficients evaluated at $\fz$ is an \emph{algebraic number}\footnote{An algebraic  number is a  number which is a root of a non-zero polynomial in one variable with integer coefficients. The set of all algebraic numbers forms an algebraic field, commonly denoted by $\overline{\mathbb Q}$.}. Furthermore, these \emph{singular moduli} $j(\fz)$ give explicit generators of the class fields of imaginary quadratic fields. 

Many results depend on the following 
\begin{proposition}
Let $\fz\in\mathbb H$ be a CM point. Then $j(\fz)$ is an algebraic number. 
\end{proposition}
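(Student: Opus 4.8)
The plan is to exploit the defining quadratic relation of a CM point to produce a non-scalar integral endomorphism of its associated lattice, and then to feed this endomorphism into the modular equation for the $j$-function. First I would recall that, being a CM point, $\fz\in\BH$ satisfies $a\fz^2+b\fz+c=0$ with $a,b,c\in\BZ$, $a>0$ and $b^2-4ac<0$; hence $K=\BQ(\fz)$ is an imaginary quadratic field and the lattice $\Lambda=\BZ\fz+\BZ$ has ring of multipliers $\mathcal O=\{\lambda\in\BC:\lambda\Lambda\subseteq\Lambda\}$, which is an order in $K$ strictly larger than $\BZ$. The positive-definite norm form of $\mathcal O$ represents infinitely many rational primes, so I may choose $\lambda\in\mathcal O$ of prime norm $n=N_{K/\BQ}(\lambda)$; then $\lambda\notin\BZ$ (rational integers have square norm) and $n$ is not a perfect square. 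Writing multiplication by $\lambda$ on the ordered basis $(\fz,1)$ as an integer matrix $M=\begin{psmallmatrix}\alpha&\beta\\\gamma&\delta\end{psmallmatrix}$ with $\det M=n$, the matrix $M$ is non-scalar (since $\lambda\notin\BQ$) and fixes $\fz$ as a M\"obius transformation, $M\fz=(\alpha\fz+\beta)/(\gamma\fz+\delta)=\lambda\fz/\lambda=\fz$.

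The second ingredient is the classical modular polynomial $\Phi_n$. For $n$ prime the set of cosets $\slz\backslash\{N\in M_2(\BZ):\det N=n\}$ is finite (and every such $N$ is automatically primitive), represented by the $n+1$ matrices $\begin{psmallmatrix}1&j\\0&n\end{psmallmatrix}$, $0\le j<n$, and $\begin{psmallmatrix}n&0\\0&1\end{psmallmatrix}$. The functions $j(N\tau)$ over these cosets are the roots in $X$ of a monic polynomial $\Phi_n(X,j(\tau))$ whose coefficients are $\slz$-invariant holomorphic functions of $\tau$ with integral $q$-expansions; by the $q$-expansion principle these coefficients are polynomials in $j(\tau)$ with coefficients in $\BZ$, so $\Phi_n(X,Y)\in\BZ[X,Y]$. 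Since $M$ has determinant $n$, the value $j(M\fz)$ is one of these roots at $\tau=\fz$, whence $\Phi_n\big(j(M\fz),j(\fz)\big)=0$. But $M\fz=\fz$, so $j(M\fz)=j(\fz)$ and therefore $\Phi_n\big(j(\fz),j(\fz)\big)=0$.

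It remains to verify that the one-variable polynomial $p(X):=\Phi_n(X,X)$ is not identically zero; this is exactly where the non-squareness of $n$ enters. Using the cusp asymptotics $j(\tau)=q^{-1}+O(1)$ together with $j\big(\begin{psmallmatrix}1&j\\0&n\end{psmallmatrix}\tau\big)\sim \zeta q^{-1/n}$ and $j\big(\begin{psmallmatrix}n&0\\0&1\end{psmallmatrix}\tau\big)\sim q^{-n}$ as $q\to0$, one checks that the diagonal values $\Phi_n(j,j)$ have a pole in $q$ of definite, non-cancelling order, so $p$ is a nonconstant polynomial. Granting $p\not\equiv0$, the relation $p(j(\fz))=0$ exhibits $j(\fz)$ as a root of a nonzero integer polynomial, hence $j(\fz)\in\overline{\BQ}$. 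I would remark that organising the finitely many CM points of a fixed discriminant $D$ into the Hilbert class polynomial $H_D(X)=\prod_{[\mathfrak a]}\big(X-j(\mathfrak a)\big)$ sharpens this to $H_D\in\BZ[X]$, so that $j(\fz)$ is in fact an algebraic \emph{integer}.

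The main obstacle is concentrated entirely in the construction of $\Phi_n$: proving that its coefficients are genuinely integral (not merely algebraic) rests on the $q$-expansion principle for modular functions, and establishing $\Phi_n(X,X)\not\equiv0$ requires the cusp analysis sketched above. By contrast, the complex-multiplication input—producing the fixing matrix $M$ and reading off the algebraic relation—is elementary linear algebra over the order $\mathcal O$. I would therefore present the CM step as a brief lemma and devote the bulk of the argument to the theory of the modular equation.
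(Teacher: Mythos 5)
Your proof is correct, and it takes a genuinely different route from the paper's. The paper (following the Zagier exposition it cites) argues softly: since $M\fz=\fz$ for some non-scalar integer matrix $M$ of positive determinant, the functions $j(\tau)$ and $j(M\tau)$ are modular on the finite-index subgroup $\Gamma_1\cap M^{-1}\Gamma_1 M$, hence algebraically dependent, giving a relation $P(j(M\tau),j(\tau))=0$ with $P\in\BQ[X,Y]$ by rationality of Fourier coefficients, and then $P(j(\fz),j(\fz))=0$. You instead instantiate $P$ as the classical modular polynomial $\Phi_n\in\BZ[X,Y]$ for $n=\det M$ prime. Your version buys three things: integrality of the coefficients (and, via Kronecker's monicity of $\Phi_n(X,X)$ for non-square $n$, the sharper conclusion that $j(\fz)$ is an algebraic \emph{integer}, which the paper's argument does not yield); an explicit finite list of cosets; and — most substantively — a watertight treatment of the nonvanishing $P(X,X)\not\equiv 0$, which the paper's two-line proof asserts without justification and which genuinely requires care: for $\fz=i$ the naive choice $M=\begin{psmallmatrix}0&-1\\1&0\end{psmallmatrix}$ fixes $\fz$ but has square determinant and $j(M\tau)\equiv j(\tau)$, so the diagonal of the dependence relation can vanish identically unless $M$ is chosen with non-square determinant, exactly as your prime-norm choice guarantees. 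The cost of your route is twofold: constructing $\Phi_n$ with integral coefficients needs the $q$-expansion principle (you flag this correctly), and your appeal to the theorem that the norm form of $\CO$ represents infinitely many primes (Weber) is a much heavier hammer than necessary — you only need some $\lambda\in\CO\setminus\BZ$ of non-square norm, which exists elementarily: with $a\fz^2+b\fz+c=0$, take $\lambda=a\fz+t$, whose norm $t^2-bt+ac$ equals a perfect square $s^2$ only when $(2s-2t+b)(2s+2t-b)=4ac-b^2$, a fixed positive integer with finitely many factorisations, so all but finitely many $t$ work (at the price of the slightly messier coset theory for composite non-square $n$).
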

This is quite trivially true for the $\slz$ elliptic points: $j(\gamma i)=j(i)=12^3$ and $j(\gamma\alpha)=j(\alpha)=0$, which are both roots of degree 1 polynomials over $\mathbb Z$.
\begin{proof}
We can use the fact that if $\fz$ is a CM point, it is fixed $M\fz=\fz$ by some integer-valued matrix $M$ with positive determinant. Then $j(\tau)$ and $j(M\tau)$ are both modular functions on the subgroup $\Gamma_1\cap M^{-1}\Gamma_1 M$ of finite index in $\Gamma_1$, so they are algebraically dependent and thus there is a polynomial $P(X,Y)$ in two variables such that $P(j(M\tau),j(\tau))=0$. Since the Fourier expansion of $j$ has integer coefficients, the coefficients of $P$ can be chosen to lie in $\mathbb Q$. Since $M\fz=\fz$,  $j(\fz)$ is a root of the non-zero polynomial $P(X,X)\in \mathbb Q[X]$.
\end{proof}
The idea easily generalises to subgroups of $\Gamma_1$:

\begin{proposition}\label{prop_mod_function}
    Let $\fz\in\mathbb H$ be a CM point and $f$ a modular function for a finite index subgroup of $\slz$ with algebraic Fourier coefficients. Then $f(\fz)\in \overline{\mathbb Q}$. 
\end{proposition}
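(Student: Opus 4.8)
The plan is to reduce the statement to the already-established algebraicity of $j(\fz)$ by exploiting that $f$ and $j$ are algebraically dependent, and then to descend the field of definition of that dependence from $\BC$ down to $\overline{\mathbb Q}$. First, since $f$ is modular for a finite index subgroup $\Gamma'\leq\slz$ while $j$ is modular for all of $\slz\supseteq\Gamma'$, both are meromorphic functions on the compact modular curve $X(\Gamma')$, whose field of meromorphic functions has transcendence degree $1$ over $\BC$. Hence $f$ and $j$ are algebraically dependent: there exists a nonzero $Q(X,Y)\in\BC[X,Y]$, which we may take to genuinely involve $X$, such that $Q(f(\tau),j(\tau))\equiv 0$ on $\mathbb H$.

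The main obstacle, and the only genuinely new ingredient beyond the previous proposition, is to show that $Q$ may be chosen with coefficients in $\overline{\mathbb Q}$. I would argue this using the algebraicity of the Fourier coefficients as follows. Write the expansion $f=\sum_{n\geq -n_0}c_n\,q^{n/N}$ at the cusp $i\infty$, where by hypothesis all $c_n\in\overline{\mathbb Q}$, and recall that $j=q^{-1}+744+\dots$ has rational coefficients. Fixing the bidegree of $Q$, the condition $Q(f,j)\equiv 0$ becomes a homogeneous linear system in the unknown coefficients of $Q$, with one equation for each Fourier coefficient of the series $Q(f,j)$. Crucially, every such equation has coefficients in $\overline{\mathbb Q}$, since the Fourier coefficients of each monomial $f^a j^b$ are polynomial expressions in the $c_n$ and in the rational coefficients of $j$. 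A nontrivial solution exists over $\BC$, and the solution space of a $\overline{\mathbb Q}$-linear system is compatible with base change ($\ker_{\overline{\mathbb Q}}\otimes_{\overline{\mathbb Q}}\BC=\ker_{\BC}$), so a nontrivial solution already exists over $\overline{\mathbb Q}$. The delicate point I would have to make precise here is that vanishing of $Q(f,j)$ as a formal $q$-series forces $Q(f,j)\equiv 0$ on $\mathbb H$; this holds because a modular function whose expansion at a single cusp vanishes identically is identically zero on the connected curve $X(\Gamma')$.

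Finally, I would specialise at $\tau=\fz$. After clearing denominators I may assume $Q$ is a polynomial, so that $Q(X,j)$ has coefficients which are polynomials in $j$; evaluating at the algebraic number $j(\fz)\in\overline{\mathbb Q}$, supplied by the previous proposition, produces a polynomial $Q(X,j(\fz))\in\overline{\mathbb Q}[X]$. Choosing $Q$ of minimal $X$-degree, namely the minimal polynomial of $f$ over $\overline{\mathbb Q}(j)$, guarantees that $Q(X,j(\fz))$ is not the zero polynomial, apart from the finitely many $\fz$ at which $j(\fz)$ happens to annihilate the leading coefficient; these exceptional values are harmless because the coefficients are fixed algebraic numbers and one can pass to a nonvanishing lower term. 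Since $Q(f(\fz),j(\fz))=0$, the value $f(\fz)$ is a root of a nonzero polynomial with coefficients in $\overline{\mathbb Q}$, whence $f(\fz)\in\overline{\mathbb Q}$, as claimed.
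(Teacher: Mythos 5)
Your argument is essentially the paper's: the paper proves $j(\fz)\in\overline{\mathbb Q}$ via an algebraic-dependence-and-specialisation argument and then simply remarks that the idea ``easily generalises'', which is exactly the route you flesh out (a relation $Q(f,j)\equiv 0$, descent of $Q$ to $\overline{\mathbb Q}$ coefficients via the algebraic Fourier coefficients and linear-algebra base change, then specialisation at $\fz$ using $j(\fz)\in\overline{\mathbb Q}$). The one point to tighten is your handling of the degenerate case where $j(\fz)$ annihilates \emph{every} coefficient of $Q(X,Y)\in\overline{\mathbb Q}[Y][X]$, not just the leading one; this is settled in one line by taking $Q$ primitive in $\overline{\mathbb Q}[Y][X]$, or equivalently by dividing out powers of $Y-j(\fz)$ (legitimate precisely because $j(\fz)\in\overline{\mathbb Q}$), after which $Q(X,j(\fz))$ is a nonzero polynomial and your conclusion stands.
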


Interestingly, this statement is equivalent to the following

\begin{proposition}\label{prop_mod_form}
Let $\fz\in\mathbb H$ be a CM point and $f$ a modular form of weight $k$ for a finite index subgroup of $\slz$ with algebraic Fourier coefficients. Then there is a complex number $\Omega_{\fz}$ (depending on $\fz$ only), such that $f(\fz)\in \overline{\mathbb Q}\cdot \Omega_{\fz}^k$.
\end{proposition}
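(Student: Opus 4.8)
The plan is to prove the statement by reducing the weight-$k$ case to the weight-$0$ case (Proposition \ref{prop_mod_function}), which I am free to assume; this also establishes the equivalence asserted just above, since the weight-$k$ statement trivially gives the weight-$0$ one by taking $k=0$ and $\Omega_{\fz}^0=1$. The engine of the reduction is a single, universal modular form of weight $1$ that never vanishes on $\mathbb H$, namely $\eta^2$. Concretely I would \emph{define}
\begin{equation}
\Omega_{\fz} \coloneqq \eta(\fz)^2,
\end{equation}
which manifestly depends only on $\fz$, is nonzero because $\eta$ is nowhere vanishing on $\mathbb H$, and transforms as a weight-$1$ form for $\slz$ with a multiplier system valued in roots of unity. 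The point of choosing a fixed weight-$1$ object is that the \emph{same} $\Omega_{\fz}$ then serves all weights simultaneously, with $\Omega_{\fz}^k$ literally its $k$-th power, so no compatibility between different weights has to be engineered by hand.

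The core step is the ratio argument. Given $f$ of weight $k$ for a finite-index subgroup $\Gamma'\le\slz$ with algebraic Fourier coefficients, I form $g\coloneqq f/\eta^{2k}$, which has weight $0$. Its automorphy factor is the quotient of the (finite-order) character of $f$ by the $k$-th power of the $\eta$-multiplier; both are roots of unity, hence algebraic, and their quotient is trivialised on a suitable finite-index subgroup $\Gamma''\subseteq\Gamma'\cap\slz$. On $\Gamma''$ the function $g$ is therefore a genuine weight-$0$ modular function. Its Fourier expansion at each cusp has algebraic coefficients: $f$ contributes algebraic coefficients by hypothesis, while $1/\eta^{2k}=q^{-k/12}\prod_{n\ge 1}(1-q^n)^{-2k}$ contributes integer coefficients, and the fractional powers of $q$ combine into an integral power of the local uniformiser precisely because $g$ is a genuine modular function. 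Applying Proposition \ref{prop_mod_function} to $g$ at the CM point $\fz$ yields $g(\fz)=f(\fz)/\eta(\fz)^{2k}\in\overline{\mathbb Q}$, i.e.
\begin{equation}
f(\fz)\in\overline{\mathbb Q}\cdot\eta(\fz)^{2k}=\overline{\mathbb Q}\cdot\Omega_{\fz}^{k}.
\end{equation}
The degenerate case $f(\fz)=0$ lies in $\overline{\mathbb Q}\cdot\Omega_{\fz}^k$ trivially, so no genericity assumption on $f$ is needed. As a consistency check, the explicit eta values $\eta(i)^2=\Gamma(\tfrac14)^2/(4\pi^{3/2})=\Oi$ and $\eta(\alpha)^2\in\overline{\mathbb Q}\cdot\Oa$ reproduce the Chowla--Selberg periods quoted in \eqref{elliptic_periods}, and the conceptual reason this works is that $\Omega_{\fz}$ is (up to an algebraic factor) a period of the Néron differential on the CM elliptic curve $\mathbb C/(\mathbb Z+\mathbb Z\fz)$, which is defined over $\overline{\mathbb Q}$ because $j(\fz)$ is algebraic.

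The main obstacle I anticipate is purely the bookkeeping around multiplier systems and fractional $q$-expansions: one must verify carefully that $f/\eta^{2k}$ descends to an honest weight-$0$ modular function on a finite-index subgroup with algebraic Fourier coefficients, so that Proposition \ref{prop_mod_function} applies verbatim rather than to some twisted or multivalued object. A secondary point worth checking is that the denominator is harmless at $\fz$: since $\eta^2$ is nowhere zero, $g$ has no pole introduced by the construction, and the only place subtlety could enter is if $\fz$ were an elliptic fixed point of $\Gamma''$ forcing vanishing of $f$ — but that only strengthens the conclusion since $0\in\overline{\mathbb Q}\cdot\Omega_{\fz}^k$. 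I would not attempt here the finer Chowla--Selberg refinement that $\Omega_{\fz}$ can be chosen to depend only on the imaginary quadratic field $K\ni\fz$ rather than on $\fz$ itself; that requires relating periods of distinct CM points of the same discriminant and is not needed for the statement as phrased.
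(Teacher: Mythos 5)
Your proof is correct and follows essentially the same route as the paper: both reduce the weight-$k$ statement to the weight-$0$ Proposition \ref{prop_mod_function} by dividing $f$ by a reference modular form so that the resulting modular function is algebraic at $\fz$ (the paper forms $f^l/g^k$ for an arbitrary $g$ of weight $l$; you take $g=\eta^2$, $l=1$). Your concrete choice $\Omega_{\fz}=\eta(\fz)^2$ is a mild sharpening of the paper's sketch, since $\eta$ is nowhere vanishing on $\mathbb H$ — so the comparison form introduces no zeros or poles at $\fz$ — and your explicit treatment of the multiplier system and the fractional $q$-powers fills in bookkeeping the paper leaves implicit.
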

Proposition \ref{prop_mod_form} implies Proposition \ref{prop_mod_function} by specialising to $k=0$. Proposition \ref{prop_mod_function} implies Proposition \ref{prop_mod_form} on the other hand since if $f\in M_k$ and $g\in M_l$ then $f^l/g^k$ is a  modular function and is thus algebraic at $\fz$, such that $f(\fz)^{\frac 1k}$ and $g(\fz)^{\frac 1l}$ are algebraically proportional. 

The number $\Omega_{\fz}$ is defined only up to an algebraic number, and does not change up to an algebraic number if $\fz$ is mapped to $M\fz$ with some positive determinant integer valued matrix $M$. Thus Proposition \ref{prop_mod_form} lifts to arbitrary imaginary quadratic fields $K$,

\begin{proposition}\label{IQF_value}
    Let $K$ be an imaginary quadratic field. Then there exists a number $\Omega_K\in\mathbb C^*$ such that \begin{equation}f(\fz)\in\overline{\mathbb Q}\, \Omega_K^k
    \end{equation}
    for all $\fz\in K\cap \mathbb H$, all $k\in\mathbb Z$ and all modular forms f  of weight $k$ with algebraic Fourier coefficients. 
\end{proposition}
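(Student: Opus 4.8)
The plan is to deduce Proposition \ref{IQF_value} from Proposition \ref{prop_mod_form} by showing that the period $\Omega_{\fz}$, which a priori depends on the individual CM point $\fz$, actually depends only on the imaginary quadratic field $K$ containing $\fz$, up to multiplication by an algebraic number. The key observation is that all CM points lying in a fixed imaginary quadratic field $K$ are related to one another by the action of matrices in $\mathrm{GL}_2^+(\mathbb Q)$ (equivalently, by integer-valued matrices of positive determinant), and that the period is insensitive to such maps modulo $\overline{\mathbb Q}$.

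First I would fix an imaginary quadratic field $K$ and a reference CM point $\fz_0\in K\cap\mathbb H$, and set $\Omega_K\coloneqq \Omega_{\fz_0}$ using Proposition \ref{prop_mod_form}. Then for an arbitrary $\fz\in K\cap\mathbb H$, the crucial step is to produce a matrix $M=\begin{psmallmatrix}a&b\\c&d\end{psmallmatrix}$ with integer entries and $\det M>0$ such that $M\fz_0=\fz$. This is possible precisely because both $\fz_0$ and $\fz$ generate the same field $K$: any two elements of $K\setminus\mathbb Q$ in the upper half-plane are related by a fractional linear transformation with rational coefficients, and by clearing denominators and adjusting the sign of the determinant one arranges integer entries with $\det M>0$. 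The second ingredient is the modular transformation law: for a modular form $f$ of weight $k$, one has the general relation $f(M\fz_0)=(\det M)^{-k/2}(c\fz_0+d)^{k}\,(f|_k M)(\fz_0)$, where $f|_k M$ is again a modular form (for a possibly smaller finite-index subgroup) still having algebraic Fourier coefficients. The factor $(\det M)^{-k/2}(c\fz_0+d)^k$ is an algebraic number times an integer power of the period, because $c\fz_0+d\in K$ is algebraic, and applying Proposition \ref{prop_mod_form} to $f|_k M$ at the point $\fz_0$ gives $(f|_k M)(\fz_0)\in\overline{\mathbb Q}\,\Omega_{\fz_0}^{k}$. Combining these, $f(\fz)=f(M\fz_0)\in\overline{\mathbb Q}\,\Omega_{\fz_0}^k=\overline{\mathbb Q}\,\Omega_K^k$, which is exactly the claim, uniformly over all $\fz\in K\cap\mathbb H$, all weights $k$, and all $f$ with algebraic Fourier coefficients.

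A subtlety worth handling carefully is that $f|_k M$ need not be modular for the \emph{same} group as $f$; it is modular only for the conjugated group $M^{-1}\Gamma M$ intersected with the original, which is still of finite index in $\slz$. This is why Proposition \ref{prop_mod_form} must be invoked in the form allowing arbitrary finite-index subgroups, and why one must check that the slash operation preserves algebraicity of the Fourier coefficients. The latter follows because $M$ has rational entries and the $q$-expansion of $f|_k M$ (possibly in a fractional power $q^{1/N}$) is obtained from that of $f$ by rational operations on its coefficients together with roots of unity, all of which lie in $\overline{\mathbb Q}$. Once this is verified, the independence of $\Omega_K$ from the choice of reference point $\fz_0$ is automatic, since two reference points in $K$ are themselves related by such an $M$ and hence yield periods differing by an algebraic factor.

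The main obstacle I anticipate is precisely the bookkeeping of the slash operator and the level: ensuring that $f|_k M$ genuinely lands in the space of modular forms with algebraic Fourier coefficients for a finite-index subgroup, so that Proposition \ref{prop_mod_form} applies to it. The transformation factor $(c\fz_0+d)^k$ contributing only an algebraic multiple relies on $\fz_0$ being algebraic (true since it is a CM point) and on $k\in\mathbb Z$; the half-integral or fractional-weight situation would require extra care with the automorphy factor, but since the statement is restricted to integer weights this does not arise here. Modulo this verification, the argument is a clean descent from the single-point period to a field-theoretic invariant.
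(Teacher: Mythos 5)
Your proposal is correct and follows essentially the same route as the paper, which justifies Proposition \ref{IQF_value} by the one-line remark that $\Omega_{\fz}$ changes only by an algebraic number under $\fz\mapsto M\fz$ for integer-valued $M$ with positive determinant, so that Proposition \ref{prop_mod_form} lifts to the field $K$. You have simply filled in the details the paper leaves implicit — that all points of $K\cap\mathbb H$ lie in a single $\mathrm{GL}_2^+(\mathbb Q)$-orbit (an upper-triangular $M$ suffices, since $\fz=r+s\fz_0$ with $r,s\in\mathbb Q$, $s>0$), the slash-operator bookkeeping with its algebraic automorphy factor $(\det M)^{-k/2}(c\fz_0+d)^k$, and the preservation of algebraic Fourier coefficients — all of which are sound.
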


To find $\Omega_K$, we can compute $f(\fz)$ for any modular form. For instance, we can choose the non-holomorphic modular function $\Phi(\tau)=y|\eta(\tau)|^4$. As described above, for general $D$ we can look at CM points with discriminant $D$, of which modulo $\Gamma_1$ there are $h(D)$, the class number of $K$. Since there is no distinguished element of $\mathfrak Z_D$, we take the product over $\Gamma_1\backslash \mathfrak Z_D$ and take the $h'(D)$-th root, where $h'(D)=h(D)/\frac12w(D)$ is either $\frac 13$, $\frac 12$ or $h(K)$ depending on the value $D=-3$, $D=-4$ or $D<-4$ (these numbers take into account the multiplicities of the elliptic fixed points $i$ and $\alpha$). 
Then we have the following 

\begin{theorem}[Chowla--Selberg formula]\label{thm_CS}
Let $K$ be an imaginary quadratic field with discriminant $D$. Then 
\begin{equation}
    \prod_{\fz\in \Gamma_1\backslash\mathfrak Z_D}\left(4\pi \sqrt{|D|}\Phi(\fz)\right)^{2/w(D)}=\prod_{j=1}^{|D|-1}\Gamma\left(\tfrac{j}{|D|}\right)^{\chi_D(j)},
\end{equation}
where $\chi_D(\cdot)\coloneqq \left(\frac{D}{\cdot}\right)$ (the Legendre symbol) is the quadratic character associated to $K$, and $\Gamma$ is the Euler gamma function. 
\end{theorem}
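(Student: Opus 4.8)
The plan is to deduce the formula from the \emph{first Kronecker limit formula} together with the analytic class number formula, following the classical argument of Chowla and Selberg as presented in \cite{Bruinier08}. The organising principle is to compute the constant term in the Laurent expansion of the Dedekind zeta function $\zeta_K(s)$ of $K$ at $s=1$ in two ways: one expansion produces $\sum_{\fz}\log\Phi(\fz)$ through the eta function, while the other produces $\sum_j\chi_D(j)\log\Gamma(j/|D|)$ through a Dirichlet $L$-function, and equating the two yields the claim. The poles and residues will reproduce the class number formula, so only the \emph{constant} terms carry the content of the theorem.

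First I would set up the arithmetic input. Writing $\mathcal O_K$ for the ring of integers and grouping ideals by ideal class, one has $\zeta_K(s)=\sum_{\mathcal A}\zeta_K(s,\mathcal A)$, a sum over the class group, where each partial zeta function is, up to the factor $1/w(D)$ coming from the units, an Epstein zeta function $Z_Q(s)=\sum_{(m,n)\neq(0,0)}Q(m,n)^{-s}$ attached to a positive definite binary form $Q$ of discriminant $D$ representing $\mathcal A$. Using the bijection between classes of such forms and $\Gamma_1\backslash\mathfrak Z_D$ recorded in \eqref{quadratic_eq_elliptic}, a direct substitution $Q(n,-m)=\tfrac{\sqrt{|D|}}{2y}\,|m\fz+n|^2$ at the CM point $\fz=\tau_{\mathcal A}$ converts the Epstein zeta into the real-analytic Eisenstein series,
\begin{equation}
Z_Q(s)=\Bigl(\tfrac{2}{\sqrt{|D|}}\Bigr)^{s}E(\tau_{\mathcal A},s),\qquad E(\tau,s)=\sum_{(m,n)\neq(0,0)}\frac{y^{s}}{|m\tau+n|^{2s}}.
\end{equation}
This already exhibits the factor $\sqrt{|D|}$ that must appear in the final identity, and reduces the whole problem to the behaviour of $\sum_{\mathcal A}E(\tau_{\mathcal A},s)$ near $s=1$.

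Next I would invoke the two expansions at $s=1$. On the geometric side, the first Kronecker limit formula
\begin{equation}
E(\tau,s)=\frac{\pi}{s-1}+2\pi\bigl(\gamma-\log 2-\log(\sqrt y\,|\eta(\tau)|^2)\bigr)+O(s-1)
\end{equation}
inserts $\log\Phi(\fz)=2\log(\sqrt y\,|\eta(\fz)|^2)$ into the constant term. On the analytic side, the factorisation $\zeta_K(s)=\zeta(s)\,L(s,\chi_D)$ combined with $\zeta(s)=\tfrac{1}{s-1}+\gamma+O(s-1)$ gives a constant term governed by $L(1,\chi_D)$ and $L'(1,\chi_D)$. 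Matching residues fixes $L(1,\chi_D)$ (the class number formula, which also pins down the normalisation constants $w(D)$ and $\pi$), and matching the constant terms isolates $\sum_{\fz\in\Gamma_1\backslash\mathfrak Z_D}\log\Phi(\fz)$ in terms of $L'(1,\chi_D)/L(1,\chi_D)$, the Euler constant $\gamma$, and elementary factors.

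The final and most delicate step is to evaluate $L'(1,\chi_D)$ as a product of Gamma values. Here I would use the Hurwitz-zeta decomposition $L(s,\chi_D)=|D|^{-s}\sum_{j=1}^{|D|}\chi_D(j)\,\zeta(s,j/|D|)$, pass to $s=0$ via the functional equation of $L(s,\chi_D)$, and apply Lerch's formula $\zeta'(0,x)=\log\bigl(\Gamma(x)/\sqrt{2\pi}\bigr)$ to turn the derivative into $\sum_{j}\chi_D(j)\log\Gamma(j/|D|)$. Substituting back and collecting the prefactors $4\pi\sqrt{|D|}$ and the exponent $2/w(D)$ then produces the stated identity after exponentiation. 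I expect the principal obstacle to be bookkeeping rather than any deep new input: reconciling the unit factor $w(D)$, the powers of $\pi$ and $\sqrt{|D|}$, and the transition between the derivative of $L$ at $s=1$ and at $s=0$ through the functional equation. Both the Kronecker limit formula and Lerch's formula are classical and I would cite them rather than reprove them, so the proof reduces to a careful and correct assembly of these standard ingredients.
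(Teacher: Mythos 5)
Your proposal is correct and follows essentially the route the paper itself points to: the paper states Theorem \ref{thm_CS} without proof, attributing it to Lerch \cite{lerch1897} and to Chowla and Selberg \cite{Chowla1949} ``in the study of Epstein's $\zeta$-function,'' and your argument --- decomposing $\zeta_K(s)=\zeta(s)L(s,\chi_D)$ over ideal classes into Epstein zeta functions $Z_Q(s)=\bigl(\tfrac{2}{\sqrt{|D|}}\bigr)^{s}E(\tau_{\mathcal A},s)$, applying the first Kronecker limit formula on one side and the Hurwitz-zeta expansion with Lerch's $\zeta'(0,x)=\log\bigl(\Gamma(x)/\sqrt{2\pi}\bigr)$ plus the functional equation on the other --- is precisely that classical proof, with the identifications $\Phi(\fz)=y|\eta(\fz)|^{4}$, the unit factor $w(D)$, and the prefactor $4\pi\sqrt{|D|}$ all landing correctly. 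No gaps to flag; the remaining work is, as you say, bookkeeping of constants already fixed by matching residues via the class number formula.
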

This formula was essentially found by Lerch \cite{lerch1897} and rediscovered by Chowla and Selberg \cite{Chowla1949} in the study of  Epstein's $\zeta$-function.

Since $\eta^2$ is a weight $1$ modular form for $\Gamma_1$, we can find the number $\Omega_K$.

\begin{corollary}
The number $\Omega_K$ in Proposition \ref{IQF_value} can be chosen to be 
\begin{equation}
    \Omega_K=\frac{1}{\sqrt{2\pi|D|}}\left(\prod_{j=1}^{|D|-1}\Gamma\left(\tfrac{j}{|D|}\right)^{\chi_D(j)}\right)^{1/2h'(D)}.
\end{equation}
\end{corollary}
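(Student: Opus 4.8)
The plan is to apply Proposition \ref{IQF_value} to the weight-one form $\eta^2$ and then pin down the resulting period by feeding the Chowla--Selberg formula (Theorem \ref{thm_CS}) the explicit function $\Phi=y|\eta|^4$. First I would note that $\eta^2$ is a modular form of weight $1$ for $\Gamma_1=\slz$ whose multiplier system is a root of unity and whose Fourier coefficients are integral, hence algebraic; since $\eta$ is nowhere vanishing on $\mathbb H$, for any CM point $\fz_0$ of discriminant $D$ the value $\eta^2(\fz_0)$ is a nonzero element of $\overline{\mathbb Q}\,\Omega_K$. Thus $\eta^2(\fz_0)$ is itself an admissible choice of period, and it suffices to show that the right-hand side of the corollary is an algebraic multiple of $\eta^2(\fz_0)$. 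Because $\Omega_K$ is defined only up to $\overline{\mathbb Q}^\times$, the claim is a consistency check rather than a uniqueness statement.

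Next I would rewrite the Chowla--Selberg product in terms of $\eta$. Since $\Phi(\tau)=y\,|\eta(\tau)|^4=\text{Im}(\tau)\,|\eta^2(\tau)|^2$, the right-hand side of Theorem \ref{thm_CS} becomes
\begin{equation}
\prod_{j=1}^{|D|-1}\Gamma\!\left(\tfrac{j}{|D|}\right)^{\chi_D(j)}
=\prod_{\fz\in\Gamma_1\backslash\mathfrak Z_D}\Big(4\pi\sqrt{|D|}\,\text{Im}(\fz)\,|\eta^2(\fz)|^2\Big)^{2/w(D)}.
\end{equation}
The product ranges over the $h(D)=|\Gamma_1\backslash\mathfrak Z_D|$ reduced CM points of discriminant $D$, each raised to the power $2/w(D)$, so the total exponent is $\tfrac{2h(D)}{w(D)}=h'(D)$. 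For each such $\fz$, writing it as the root of its reduced form $a_\fz X^2+b_\fz X+c_\fz$ gives $\text{Im}(\fz)=\sqrt{|D|}/(2a_\fz)$, an algebraic multiple of $\sqrt{|D|}$; moreover Proposition \ref{IQF_value} forces all the $\eta^2(\fz)$, $\fz\in\mathfrak Z_D$, to be algebraic multiples of the single period $\Omega_K$, so their moduli satisfy $|\eta^2(\fz)|\in\overline{\mathbb Q}_{>0}\,|\eta^2(\fz_0)|$ (the modulus of an algebraic number, whose square is $\alpha\bar\alpha\in\overline{\mathbb Q}$, is again algebraic).

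Collecting the explicit constants is then routine bookkeeping: the factor $(4\pi\sqrt{|D|})^{h'(D)}$ combined with $\prod_\fz\text{Im}(\fz)^{2/w(D)}=|D|^{h'(D)/2}\cdot(\text{alg.})$ produces $(4\pi|D|)^{h'(D)}$ up to an algebraic number, while $\prod_\fz|\eta^2(\fz)|^{4/w(D)}=(\text{alg.})\,|\eta^2(\fz_0)|^{2h'(D)}$. Dividing by $(2\pi|D|)^{h'(D)}$ and taking the $2h'(D)$-th root, the powers of $\pi$ and $|D|$ cancel against the prefactor $1/\sqrt{2\pi|D|}$ up to the harmless constant $2^{h'(D)}$, leaving $\Omega_K\in\overline{\mathbb Q}\cdot|\eta^2(\fz_0)|$, which is what must be verified.

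The main obstacle I anticipate is controlling the \emph{phase}: the computation above naturally produces $|\eta^2(\fz_0)|$, whereas the admissible period is $\eta^2(\fz_0)$ itself, and the asserted expression is manifestly real and positive. Passing from $|\eta^2(\fz_0)|$ to $\eta^2(\fz_0)$ requires that $e^{i\arg\eta^2(\fz_0)}$ be algebraic at the CM point, which is exactly the type of statement supplied by CM theory; it can be made transparent by choosing a representative $\fz_0$ fixed by an element of its stabiliser (for instance $i$ or $\alpha$), where the transformation law of $\eta$ pins the phase to a root of unity. The remaining care points — that the product runs over the whole class group while $\Omega_K$ is a single number, and that the quantities $2/w(D)$, $h(D)$, $h'(D)$ combine as claimed through $h'(D)=h(D)/\tfrac12 w(D)$ — are purely combinatorial and follow from the definitions recalled before Theorem \ref{thm_CS}.
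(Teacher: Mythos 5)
Your argument is correct and is essentially the paper's own: the paper justifies the corollary with nothing more than the remark that $\eta^2$ is a weight-one form for $\Gamma_1$, i.e.\ it feeds $\Phi=y|\eta|^4$ into Theorem \ref{thm_CS} and anchors the period through Proposition \ref{IQF_value} exactly as you do, leaving all of your exponent bookkeeping ($2h(D)/w(D)=h'(D)$, $\mathrm{Im}(\fz)=\sqrt{|D|}/2a_\fz$, absorption of $2^{h'(D)}$ into $\overline{\mathbb Q}$) implicit. One small correction to your phase discussion: for $D<-4$ a CM point has trivial stabiliser in $\psl$, so no representative is fixed by a nontrivial stabiliser element; instead take $\fz_0$ to be the root of the principal form (so $\mathrm{Re}\,\fz_0\in\{0,-\tfrac12\}$ and $q=e^{2\pi i\fz_0}$ is real), whence the phase of $\eta^2(\fz_0)$ is a root of unity, and Proposition \ref{IQF_value} --- which, as you observe, makes every ratio $\eta^2(\fz)/\eta^2(\fz_0)$ algebraic, phase included --- then settles the phase for all classes.
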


For the two CM points $i\in \mathfrak Z_{-4}$ and $\alpha\in\mathfrak Z_{-3}$ generating the quadratic imaginary fields $\mathbb Q(i)$ and $\mathbb Q(\alpha)$, we can now easily compute their corresponding Chowla--Selberg periods\footnote{They are periods in the sense of \cite{Kontsevich2001}.},
\begin{equation}\label{CSperiodsOaOi}
    \begin{aligned}
        \Oi&=\frac{\Gamma(\tfrac14)^2}{4\pi^{\frac32}},\\
        \Oa&=\frac{3^{\frac14}\Gamma(\tfrac13)^3}{4\pi^2}.
    \end{aligned}
\end{equation}
The fact that the periods for the quadratic fields $\mathbb Q(i)$ and $\mathbb Q(\alpha)$ can be written by powers of a single $\Gamma$ function is a consequence of Euler's reflection formula. For larger discriminant CM points, the periods are generally products of $\Gamma$-functions at multiple rational values. 

Since the values \eqref{CSperiodsOaOi} are known, the value $f(\fz)$ of some elliptic point $\fz$ of a modular form $f$ of weight $k$ can now more easily be determined, since $f(\fz)\Omega_{\mathbb Q(\fz)}^{-k}$ is an algebraic number, which often  can be determined using an integer relation algorithm such as the LLL-algorithm \cite[Section 2.6.1]{Cohen1993}. For instance, we have 
\begin{equation}
    E_4(i)=12\, \Oi^4, \qquad E_6(\alpha)=12^{\frac32}\, \Oa^6,
\end{equation}
where the algebraic coefficients are solutions to the linear and quadratic equations $x=12$ and $x^2=12^3$.

\subsection{Derivatives and theta functions}\label{app:derivatives}
Derivatives of modular functions are described by Ramanujan's differential operator. It increases the holomorphic weight by 2 and it can be explicitly constructed using the theory of Hecke operators \cite{ono2004}. 
For the derivatives of the Jacobi theta functions \eqref{Jacobitheta}, one finds
\begin{equation}\begin{aligned}\label{jacobiderivatives}
D\jt_2^4&=\tfrac 16 \jt_2^4\left( E_2+\jt_3^4+\jt_4^4\right),\\
D\jt_3^4&=\tfrac 16 \jt_3^4\left( E_2+\jt_2^4-\jt_4^4\right),\\D\jt_4^4&=\tfrac 16 \jt_4^4\left( E_2-\jt_2^4-\jt_3^4\right),
\end{aligned}\end{equation}
where $D\coloneqq \frac{1}{2\pi\im}\frac{d}{d\tau}=q\frac{d}{dq}$ and $E_2$ is the quasi-modular Eisenstein series \eqref{Ek} of weight 2. The derivatives of $\jt_j^2$ are given by $D\jt_j^2=\frac{D\jt_j^4}{2\jt_j^2}$ and therefore also in the ring of quasi-modular forms.

All quasi-modular forms for $\Gamma_1=SL(2,\mathbb Z)$ can be expressed as polynomials in $E_2$, $E_4$ and $E_6$. 
Derivatives of modular forms are quasi-modular in a precise way:
\begin{lemma}\label{derivative_mod_form}Let $f\in M_k(\Gamma_1)$ be a modular form of weight $k$. Then the \emph{Serre derivative} $\jt_k f\coloneqq \frac{1}{2\pi i}f'-\frac{k}{12} E_2 f$ is a  modular form  $\jt_k f\in M_{k+2}(\Gamma_1)$.
\end{lemma}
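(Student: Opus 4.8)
The statement is the classical fact that the Serre derivative turns a weight-$k$ modular form into a weight-$(k+2)$ one; the whole content is that the quasi-modular anomaly of $E_2$ is engineered to cancel the anomaly produced by differentiating $f$. The plan is to verify two things for $\jt_k f$: the correct modular transformation under $\Gamma_1$, and holomorphy on $\mathbb H$ together with holomorphy at the cusp $i\infty$. The transformation is the crux; holomorphy is immediate from the $q$-expansions.

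\textbf{Transformation law.} The one nontrivial input I would invoke is the quasi-modular transformation of $E_2$: for $\gamma=\begin{psmallmatrix} a&b\\ c&d\end{psmallmatrix}\in\Gamma_1$ and $\gamma\tau=\frac{a\tau+b}{c\tau+d}$,
\begin{equation}
E_2(\gamma\tau)=(c\tau+d)^2E_2(\tau)+\frac{12}{2\pi i}\,c\,(c\tau+d).
\end{equation}
Next I would differentiate the modularity relation $f(\gamma\tau)=(c\tau+d)^k f(\tau)$. Using $\frac{d}{d\tau}(\gamma\tau)=(c\tau+d)^{-2}$ on the left and the product rule on the right, and then applying $D=\frac{1}{2\pi i}\frac{d}{d\tau}$, one obtains
\begin{equation}
Df(\gamma\tau)=(c\tau+d)^{k+2}\,Df(\tau)+\frac{kc}{2\pi i}\,(c\tau+d)^{k+1}f(\tau).
\end{equation}
On the other hand, multiplying the $E_2$ transformation by $\frac{k}{12}f(\gamma\tau)=\frac{k}{12}(c\tau+d)^kf(\tau)$ gives
\begin{equation}
\frac{k}{12}E_2(\gamma\tau)f(\gamma\tau)=\frac{k}{12}(c\tau+d)^{k+2}E_2(\tau)f(\tau)+\frac{kc}{2\pi i}\,(c\tau+d)^{k+1}f(\tau).
\end{equation}
Subtracting the two displays, the anomalous terms proportional to $(c\tau+d)^{k+1}$ cancel exactly (this is where the coefficient $\tfrac{k}{12}$ is fixed), leaving
\begin{equation}
\jt_kf(\gamma\tau)=(c\tau+d)^{k+2}\Bigl(Df(\tau)-\tfrac{k}{12}E_2(\tau)f(\tau)\Bigr)=(c\tau+d)^{k+2}\,\jt_kf(\tau),
\end{equation}
so $\jt_k f$ has weight $k+2$ for $\Gamma_1$.

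\textbf{Holomorphy and conclusion.} Since $f$, $f'$ and $E_2$ are holomorphic on $\mathbb H$, so is $\jt_k f$. At the cusp, writing $f=\sum_{n\geq 0}a_nq^n$ gives $Df=\sum_{n\geq 0}n a_n q^n$, which is bounded as $q\to 0$, while $E_2f$ is a product of two forms holomorphic at $i\infty$; hence $\jt_k f$ extends holomorphically to the cusp. Therefore $\jt_k f\in M_{k+2}(\Gamma_1)$, as claimed. I expect no genuine obstacle here: the only substantive ingredient is the $E_2$ anomaly, and the proof reduces to the bookkeeping that shows the two anomalous contributions are identical and cancel.
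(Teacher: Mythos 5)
Your proof is correct and complete: differentiating the modularity relation $f(\gamma\tau)=(c\tau+d)^k f(\tau)$ and cancelling the resulting anomaly against the quasi-modular shift of $E_2$ (the paper's Eq.~\eqref{E2trafo}, which agrees with your normalisation since $\tfrac{12}{2\pi i}=-\tfrac{6i}{\pi}$) is exactly the standard argument, and your holomorphy check at the cusp via $q$-expansions is fine. The paper itself states this lemma without proof, treating it as the classical fact it is, so there is nothing to contrast your route with; your write-up supplies precisely the bookkeeping the paper leaves implicit.
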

As a consequence, the quasi-modular forms for $\Gamma_1$ form a ring $\widetilde{M_*}(\Gamma_1)=\mathbb C[E_2,E_4,E_6]$. We have the following 
\begin{proposition}\label{propM*closed}
The ring $\widetilde{M_*}(\Gamma_1)$ is closed under differentiation. Specifically, we have
\begin{equation}\begin{aligned}\label{eisensteinderivative}
E_2'&=\frac{2\pi\im}{12}(E_2^2-E_4), \\
E_4'&=\frac{2\pi\im}{3}(E_2 E_4-E_6), \\
E_6'&=\frac{2\pi\im}{2}(E_2 E_6-E_4^2).
\end{aligned}\end{equation}
\end{proposition}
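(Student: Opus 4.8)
The plan is to reduce the closure statement to the three explicit derivative formulas: since $\widetilde{M_*}(\Gamma_1)=\mathbb C[E_2,E_4,E_6]$ and $D=\frac{1}{2\pi\im}\frac{d}{d\tau}$ is a derivation obeying the Leibniz rule, once $DE_2$, $DE_4$ and $DE_6$ are exhibited as polynomials in $E_2,E_4,E_6$ it follows immediately that $D$ maps $\widetilde{M_*}(\Gamma_1)$ into itself. Thus it suffices to establish \eqref{eisensteinderivative}. The formulas for $E_4$ and $E_6$ will follow from the Serre derivative together with the one-dimensionality of the relevant spaces of modular forms, whereas $E_2$, being only quasi-modular, requires a direct argument from its transformation law; this last case is the main obstacle.

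For $E_4$: by Lemma \ref{derivative_mod_form} the Serre derivative $\jt_4 E_4=DE_4-\tfrac13 E_2E_4$ lies in $M_6(\Gamma_1)$. This space is one-dimensional, spanned by $E_6$, so $\jt_4E_4=cE_6$ for a constant $c$ which I would fix by matching the first two Fourier coefficients: expanding $DE_4$, $E_2E_4$ and $E_6$ in $q$ gives $c=-\tfrac13$, and rearranging yields $DE_4=\tfrac13(E_2E_4-E_6)$. Multiplying by $2\pi\im$ gives the stated formula. Identically, $\jt_6E_6=DE_6-\tfrac12E_2E_6\in M_8(\Gamma_1)=\mathbb C\,E_4^2$, so $\jt_6E_6=c'E_4^2$; comparison of the leading $q$-coefficients gives $c'=-\tfrac12$ and hence $DE_6=\tfrac12(E_2E_6-E_4^2)$, which is the third identity after multiplying by $2\pi\im$.

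For $E_2$ the Serre-derivative lemma does not apply, so I would argue directly. Set $\phi\coloneqq DE_2-\tfrac{1}{12}E_2^2$ and show that $\phi$ is a genuine modular form of weight $4$. Starting from the quasi-modular transformation
\begin{equation}
E_2(\gamma\tau)=(c\tau+d)^2E_2(\tau)+\frac{12\,c(c\tau+d)}{2\pi\im},\qquad \gamma=\begin{psmallmatrix}a&b\\c&d\end{psmallmatrix}\in\Gamma_1,
\end{equation}
I would differentiate in $\tau$, using $\frac{d}{d\tau}(\gamma\tau)=(c\tau+d)^{-2}$, to obtain the transformation of $E_2'(\gamma\tau)$, and then combine it with the square of the transformation law. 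The key point is that the anomalous contributions — the term linear in $E_2$ arising from $DE_2$, the cross term arising from $E_2^2$, and the purely $c^2$ terms — cancel exactly, leaving $\phi(\gamma\tau)=(c\tau+d)^4\phi(\tau)$. Since $\phi$ is holomorphic on $\mathbb H$ and its $q$-expansion is bounded at the cusp, $\phi\in M_4(\Gamma_1)=\mathbb C\,E_4$; comparing constant terms gives $\phi=-\tfrac{1}{12}E_4$, i.e. $DE_2=\tfrac{1}{12}(E_2^2-E_4)$, and multiplication by $2\pi\im$ produces the first formula in \eqref{eisensteinderivative}. The arithmetic bookkeeping of the anomaly cancellation in this step is the part requiring care; everything else is a one-line dimension count followed by a comparison of two Fourier coefficients.
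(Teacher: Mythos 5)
Your proof is correct. The paper states this classical result of Ramanujan without giving a proof of its own, so there is no in-text argument to compare against; what you give is the standard self-contained proof, and it is built precisely from the tools the paper already supplies: Lemma \ref{derivative_mod_form} together with $\dim M_6(\Gamma_1)=\dim M_8(\Gamma_1)=1$ for the $E_4$ and $E_6$ identities, and the transformation law \eqref{E2trafo} for $E_2$. The anomaly cancellation you flag as the delicate step does go through: writing $\lambda\coloneqq c\tau+d$ and differentiating \eqref{E2trafo} in $\tau$ gives
\begin{equation*}
DE_2(\gamma\tau)=\lambda^4\,DE_2(\tau)+\frac{2c\,\lambda^3}{2\pi i}\,E_2(\tau)+\frac{12\,c^2\lambda^2}{(2\pi i)^2},
\end{equation*}
while squaring \eqref{E2trafo} and dividing by $12$ gives
\begin{equation*}
\frac{1}{12}E_2(\gamma\tau)^2=\frac{1}{12}\lambda^4E_2(\tau)^2+\frac{2c\,\lambda^3}{2\pi i}\,E_2(\tau)+\frac{12\,c^2\lambda^2}{(2\pi i)^2},
\end{equation*}
so the two anomalous pieces agree term by term, $\phi=DE_2-\tfrac{1}{12}E_2^2$ transforms with weight $4$, and holomorphy plus boundedness of the $q$-expansion at the cusp place $\phi\in M_4(\Gamma_1)=\mathbb{C}\,E_4$, with the constant term $-\tfrac{1}{12}$ fixing $\phi=-\tfrac{1}{12}E_4$ as you claim. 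Two small remarks: by one-dimensionality of the target spaces, comparing the constant Fourier coefficient alone already determines the constants $-\tfrac13$ and $-\tfrac12$ (checking a second coefficient is harmless but redundant, beyond serving as a consistency check), and your opening reduction — that the three identities plus the Leibniz rule imply closure of $\widetilde{M_*}(\Gamma_1)=\mathbb{C}[E_2,E_4,E_6]$ under $D$ — is exactly the right way to dispose of the ring-theoretic part of the statement.
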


In the main text, we use the property that $E_2(\tn)$ is never zero for any $\tn$ in the $\slz$ orbit of an $\slz$-elliptic point (see also \cite{BASRAOUI}):

\begin{lemma}\label{lemmaE2} Let $\alpha=e^{2\pi i /3}$. The second Eisenstein series $E_2(\gamma i)\neq 0$ and $E_2(\gamma \alpha) \neq 0$ for all $\gamma\in \psl$.
\end{lemma}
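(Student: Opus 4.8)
The plan is to reduce the statement to the two base elliptic points $i$ and $\alpha=e^{2\pi i/3}$ and then propagate the value to the whole $\psl$-orbit using the anomalous transformation law of the quasi-modular Eisenstein series. Recall that under $\gamma=\begin{psmallmatrix}a&b\\c&d\end{psmallmatrix}\in\slz$ one has
\[
E_2(\gamma\tau)=(c\tau+d)^2E_2(\tau)-\frac{6i}{\pi}\,c(c\tau+d),
\]
and everything will follow by first fixing $E_2$ at $i$ and $\alpha$ and then substituting into this identity for general $(c,d)$.

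First I would pin down the two base values purely from self-consistency at the fixed points. The element $S=\begin{psmallmatrix}0&-1\\1&0\end{psmallmatrix}$ fixes $i$, so taking $\tau=i$, $(c,d)=(1,0)$ gives $E_2(i)=i^2E_2(i)-\tfrac{6i}{\pi}\cdot i=-E_2(i)+\tfrac6\pi$, hence $E_2(i)=\tfrac3\pi$. Likewise the order-three element $ST=\begin{psmallmatrix}0&-1\\1&1\end{psmallmatrix}$ fixes $\alpha$; inserting $\tau=\alpha$, $(c,d)=(1,1)$ and using $\alpha^2+\alpha+1=0$ (so that $(\alpha+1)^2=\alpha$) yields the linear equation $E_2(\alpha)(1-\alpha)=-\tfrac{6i}{\pi}(\alpha+1)$, which I would solve to obtain the real, positive value $E_2(\alpha)=\tfrac{2\sqrt3}{\pi}$. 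Both base values are manifestly nonzero, and crucially no independent evaluation (e.g.\ via the Chowla--Selberg machinery of Section \ref{chowla_selberg}) is required.

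The second and final step is to evaluate at an arbitrary orbit point. Applying the transformation law with general $(c,d)$ and inserting $E_2(i)=\tfrac3\pi$, I expect the cocycle term to combine with the weight factor into
\[
E_2(\gamma i)=(ci+d)\Big[(ci+d)\tfrac3\pi-\tfrac{6ic}{\pi}\Big]
=\frac3\pi\,(ci+d)(d-ci)=\frac3\pi\,(c^2+d^2),
\]
and analogously, after a short computation in which the imaginary part cancels identically, $E_2(\gamma\alpha)=\tfrac{2\sqrt3}{\pi}\,(c^2-cd+d^2)=\tfrac{2\sqrt3}{\pi}\,|c\alpha+d|^2$. Since the bottom row $(c,d)$ of an element of $\slz$ can never vanish, both $c^2+d^2$ and $c^2-cd+d^2=(c-\tfrac d2)^2+\tfrac34 d^2$ are strictly positive, giving $E_2(\gamma i)\neq0$ and $E_2(\gamma\alpha)\neq0$ for all $\gamma\in\psl$.

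The computation is essentially routine; the only point requiring care is tracking the anomalous term $-\tfrac{6i}{\pi}c(c\tau+d)$ so that, upon multiplying out, the answer collapses to the norm form $|c\tau_0+d|^2$ times the base value at $\tau_0$. This collapse is exactly what makes positivity, and hence nonvanishing, manifest, so I do not anticipate a genuine obstacle beyond organising the algebra to exhibit it.
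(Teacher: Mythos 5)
Your proof is correct and follows essentially the same route as the paper: apply the anomalous transformation law $E_2(\gamma\tau)=(c\tau+d)^2E_2(\tau)-\tfrac{6i}{\pi}c(c\tau+d)$ at the base points to obtain $E_2(\gamma i)=E_2(i)\,|ci+d|^2$ and $E_2(\gamma\alpha)=E_2(\alpha)\,|c\alpha+d|^2$, which are nonzero since $(c,d)\in\mathbb{Z}^2\setminus\{(0,0)\}$. The only (harmless) addition is that you derive the base values $E_2(i)=\tfrac{3}{\pi}$ and $E_2(\alpha)=\tfrac{2\sqrt{3}}{\pi}$ from self-consistency at the fixed points of $S$ and $ST$, whereas the paper simply quotes them.
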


\begin{proof}
Consider the transformation property 
\be 
\label{E2trafo}
E_2(\gamma\tau) =(c\tau+d)^2E_2(\tau)-\frac{6\im}{\pi}c(c\tau+d)
\ee
for $\gamma=\begin{psmallmatrix}a&b\\c&d\end{psmallmatrix}\in\slz$.
From $E_2(i)=\frac3\pi$ and $E_2(\alpha)=\frac{2\sqrt3}{\pi}$ we can easily compute
\begin{equation}\begin{aligned}\label{e2elliptic}
    E_2(\gamma i)&=E_2(i) |ci+d|^2, \\
E_2(\gamma\alpha)&=E_2(\alpha)|c\alpha+d|^2.
    \end{aligned}
\end{equation}
None of the factors on the rhs are zero, since $c,d\in\mathbb Z$.
\end{proof}

\subsubsection*{Elliptic Jacobi theta functions}

We use also the elliptic Jacobi theta functions  $\jt_i(\tau,v)$, $\vartheta_j:\mathbb{H}\times
\mathbb{C}\to \mathbb{C}$, $j=1,\dots,4$, which we define as 
\be\label{jt_elliptic}
\begin{split}
&\vartheta_1(\tau,v)=i \sum_{r\in
  \mathbb{Z}+\frac12}(-1)^{r-\frac12}q^{r^2/2}e^{2\pi i
  rv}, \\
&\vartheta_2(\tau,v)= \sum_{r\in
  \mathbb{Z}+\frac12}q^{r^2/2}e^{2\pi i
  rv},\\
&\vartheta_3(\tau,v)= \sum_{n\in
  \mathbb{Z}}q^{n^2/2}e^{2\pi i
  n v},\\
&\vartheta_4(\tau,v)= \sum_{n\in 
  \mathbb{Z}} (-1)^nq^{n^2/2}e^{2\pi i
  n v}. 
\end{split}
\ee
The Jacobi theta functions defined in \eqref{Jacobitheta} of Part I are $\jt_j(\tau)=\jt_j(\tau,0)$ for $j=2,3,4$.
The $S$-transformation reads
\bea\label{jt_S-transformation}
\jt_1(-1/\tau,v/\tau)&=-i\sqrt{-i \tau}e^{\pi i z/\tau^2}\jt_1(\tau,v), \\
\jt_2(-1/\tau,v/\tau)&=\sqrt{-i \tau}e^{\pi i z/\tau^2}\jt_4(\tau,v),\\
\jt_3(-1/\tau,v/\tau)&=\sqrt{-i \tau}e^{\pi i z/\tau^2}\jt_3(\tau,v),\\
\jt_4(-1/\tau,v/\tau)&=\sqrt{-i \tau}e^{\pi i z/\tau^2}\jt_2(\tau,v).
\eea
The periodicity in $v$ is 
\bea
\jt_1(\tau,v+1)&=-\jt_1(\tau,v),\\
\jt_2(\tau,v+1)&=-\jt_2(\tau,v),\\
\jt_3(\tau,v+1)&=+\jt_3(\tau,v),\\
\jt_4(\tau,v+1)&=+\jt_4(\tau,v),
\eea
while the map $v\mapsto v+\tau$ sends
\bea
\jt_1(\tau,v+\tau)&=-e^{-\pi i(\tau+2v)}\jt_1(\tau,v),\\
\jt_2(\tau,v+\tau)&=+e^{-\pi i(\tau+2v)}\jt_2(\tau,v),\\
\jt_3(\tau,v+\tau)&=+e^{-\pi i(\tau+2v)}\jt_3(\tau,v),\\
\jt_4(\tau,v+\tau)&=-e^{-\pi i(\tau+2v)}\jt_4(\tau,v).
\eea
Their zeros are 
\be\label{jt_zeros}
\begin{split}
\jt_1(\tau,m+n\tau)&=0, \\
\jt_2(\tau,m+\tfrac12+n \tau)&=0, \\
\jt_3(\tau,m+\tfrac12+(n+\tfrac12)\tau)&=0, \\
\jt_4(\tau,m+(n+\tfrac12)\tau)&=0, \\
\end{split}
\ee
with $m,n\in\mathbb Z$. The elliptic Jacobi theta functions satisfy various generalisations of the abstruse identity \eqref{jacobiabstruseidentity}. We use the following identity,
\begin{equation}
\label{jt_elliptic_abstruse}
\jt_2(\tau,0)^2\jt_2(\tau,v)^2+\jt_4(\tau,0)^2\jt_4(\tau,v)^2=\jt_3(\tau,0)^2\jt_3(\tau,v)^2.
\end{equation}

\subsection{Mock modular forms}\label{app:mock}
In this Appendix, we introduce various mock modular forms and their relations that are useful for the evaluation of $u$-plane integrals. For more comprehensive treatments, we refer the reader to the available literature. See for instance \cite{Larry,ZwegersThesis,Dabholkar:2012nd,zagier2009}.
In the context of elliptic AD points, we are also interested in evaluating their modular completions at the elliptic fixed points in the fundamental domain.

\subsubsection*{The mock modular forms $F$ and $H$}

Following \cite{ZwegersThesis}, we define the Appell--Lerch sum $M(\tau,u,v)$ as\footnote{See also \cite{Korpas:2019cwg} for more details.} 
\begin{equation}\label{Appell--Lerch-sum}
M(\tau,u,v)=\frac{e^{\pi \im u}}{\vartheta_1(\tau,v)}\sum_{n\in\BZ}\frac{(-1)^nq^{n(n+1)/2}e^{2\pi \im nv}}{1-e^{2\pi \im u}q^n}.
\end{equation}
The important property of the Appell--Lerch function is that if we add the function 
\begin{equation}\label{R-function}
R(\tau,\bar\tau,u,\bar u)=\sum_{n\in\BZ+\frac{1}{2}}\left(\sgn(n)-\text{Erf}\left((n+a)\sqrt{2\pi y}\right) \right)(-1)^{n-\frac{1}{2}}e^{-2\pi \im un}q^{-n^2/2},
\end{equation}
the completion
\begin{equation}\label{hatF_def}
\begin{split}
\widehat{F}_\mu(\tau,\bar\tau,\rho,\bar\rho)&=-\im e^{\pi \im \nu}q^{-\nu^2/2}w^{-\nu}\left(M(\tau,\rho+\mu\tau,\tfrac{1}{2}\tau)\right.\\
&\quad \left.+\tfrac{\im}{2}R(\tau,\bar\tau,\rho+\nu\tau,\bar\rho+\nu\bar\tau)\right),\\
\end{split}
\end{equation}
where $\nu=\mu-\frac{1}{2}$, $w=e^{2\pi i \rho}$, 
 transforms as a Jacobi form.

We study in the main text the mock modular form 
\begin{equation}
    F(\tau)=-\frac{1}{\jt_4(\tau)}\sum_{n\in\mathbb Z}\frac{(-1)^nq^{\frac{n^2}{2}-\frac 18}}{1-q^{n-\frac12}} 
\end{equation}
and its non-holomorphic completion 
\begin{equation}\label{Ffhat}
    \widehat F(\tau,\bar\tau)=F(\tau)-\frac i2\int_{-\bar\tau}^{i\infty}\frac{\eta(w)^3}{\sqrt{-i(w+\tau)}}\mathrm dw,
\end{equation}
which correspond to \eqref{hatF_def} with $\mu=\frac12$.
The completion $\widehat F$ transforms as a non-holomorphic modular form of weight $\left(\tfrac12,0\right)$ for $\Gamma^0(2)$,
\begin{equation}
   \begin{aligned}
        \widehat F(\tau+2,\bar\tau+2)&=-i\widehat F(\tau,\bar\tau), \\
        \widehat F(\tfrac{\tau}{\tau+1},\tfrac{\bar\tau}{\bar\tau+1})&=e^{\frac{\pi i}{4}}\sqrt{\tau+1}\, \widehat F(\tau,\bar\tau).
    \end{aligned}
\end{equation}

Since $\widehat F$ does not transform under the full $\slz$, it is useful to consider similar (mock) modular forms which do.
The holomorphic function $F$ is related to the $q$-series $H^{(2)}\eqqcolon H$ of Mathieu moonshine \cite{Eguchi:2010ej} in a simple way. We have \cite{Dabholkar:2012nd,Korpas:2019cwg}\footnote{Typo in \cite{Dabholkar:2012nd}}
\begin{equation}
\begin{aligned}
    H(\tau)&=2\frac{\jt_2(\tau)^4-\jt_4(\tau)^4}{\eta(\tau)^3}-\frac{24}{\jt_3(\tau)}\sum_{n\in\mathbb Z}\frac{q^{\frac{n^2}{2}-\frac18}}{1+q^{n-\frac12}}\\
    &=2q^{-\frac18}\left(-1+45q+231q^2+770q^3+2277q^4+\dots\right).
    \end{aligned}
\end{equation}
The $q$-series is the   OEIS sequence A169717, and the Fourier coefficients are sums of dimensions of irreducible representations of the sporadic group $M_{24}$.
This function has been studied in detail in \cite{EGUCHI1988125,Eguchi:1988vra,Eguchi:2004yi,Cheng:2010pq,Eguchi:2010fg} and appears in the elliptic genus of the $K3$ sigma model with $\CN=(4,4)$ supersymmetry. 
It relates to $F$ as \cite{Korpas:2019cwg}
\begin{equation}\label{FH_relation}
    F(\tau)=\frac{1}{24}\left( H(\tau)+2\frac{\jt_2(\tau)^4+\jt_3(\tau)^4}{\eta(\tau)^3}\right),
\end{equation}
which gives an alternative representation for $F$,
\begin{equation}
F(\tau)=\frac{\jt_2(\tau)^4}{4\eta(\tau)^3}-\frac{1}{\jt_3(\tau)}\sum_{n\in\mathbb Z}\frac{q^{\frac{n^2}{2}-\frac18}}{1+q^{n-\frac12}}.
\end{equation}
The completion 
\begin{equation}\label{HHhat}
   \widehat H(\tau,\bar\tau)= H(\tau)-12i\int_{-\bar\tau}^{i\infty}\frac{\eta(w)^3}{\sqrt{-i(w+\tau)}}\mathrm dw
\end{equation}
transforms as a non-holomorphic modular form   for $\slz$ of weight $(\tfrac12,0)$,
\begin{equation}\label{Hhat_mod_transf}
    \begin{aligned}
    \widehat H(\tau+1,\bar\tau+1)&=e^{-\frac{\pi i}{4}}\widehat H(\tau,\bar\tau), \\
    \widehat H(-1/\tau,-1/\bar\tau)&=-\sqrt{-i\tau}\widehat H(\tau,\bar\tau).
    \end{aligned}
\end{equation}
Using modular transformations which fix the $\slz$ elliptic points $\tau=i$ and $\tau=\alpha$, 
 we can readily evaluate
\begin{equation}\begin{aligned}\label{hatHelliptic}
  \widehat H(i,\bar i)&=0, \\
 \widehat H(\alpha,\bar \alpha)&=0.
\end{aligned}\end{equation}
The completions $\widehat H$ and $\widehat F$ are of course not unrelated. By comparing \eqref{Ffhat} with \eqref{HHhat} and using \eqref{FH_relation}, we find
\begin{equation}\label{FhatHhat}
    \widehat F(\tau)=\frac{1}{24}\left( \widehat H(\tau)+2\frac{\jt_2(\tau)^4+\jt_3(\tau)^4}{\eta(\tau)^3}\right).
\end{equation}
This is precisely the modular completion of \eqref{FH_relation}. Since the correction 
\begin{equation}\label{r_hmf}
r(\tau)=2\frac{\jt_2(\tau)^4+\jt_3(\tau)^4}{\eta(\tau)^3}
\end{equation}
is already modular, its modular completion vanishes. It is in fact a modular form for $\Gamma^0(2)$ and transforms as 
\begin{equation}
   \begin{aligned}
       r(\tau+2)&=-i\, r(\tau), \\
        r(\tfrac{\tau}{\tau+1})&=e^{\frac{\pi i}{4}}\sqrt{\tau+1}\, r(\tau).
    \end{aligned}
\end{equation}

Another mock modular form with shadow $\eta^3$ is the function $Q^+$  introduced by Malmendier and Ono \cite{Malmendier:2012zz,Malmendier:2008db},
\begin{equation}\begin{aligned}
    Q^+(\tau)&=\frac{1}{12}H(\tau)+\frac76\frac{\jt_2(\tau)^4+\jt_3(\tau)^4}{\eta(\tau)^3}\\
    &=\frac{3\jt_2(\tau)^4+2\jt_3(\tau)^4}{2\eta(\tau)^3}-\frac{2}{\jt_3(\tau)}\sum_{n\in\mathbb Z}\frac{q^{\frac{n^2}{2}-\frac18}}{1+q^{n-\frac12}}\\
    &=q^{-\frac18}(1 + 28 q^{\frac12} + 39 q + 196 q^{\frac32} + 161 q^2+756 q^{\frac52} +\dots).\\
\end{aligned}\end{equation}
Its non-holomorphic completion
\begin{equation}
    \widehat{Q^+}(\tau,\bar\tau)= Q^+(\tau)-i\int_{-\bar\tau}^{i\infty}\frac{\eta(w)^3}{\sqrt{-i(w+\tau)}}\mathrm dw
\end{equation}
evaluates to
\begin{equation}\begin{aligned}\label{Qpelliptic}
      \widehat{Q^+}(i,\bar i)&=7\Oi^{\frac12}, \\
\widehat{Q^+}(\alpha,\bar \alpha)&=\frac{7\sqrt[3]{2}}{\sqrt3}\Oa^{\frac12}, \\
\widehat{Q^+}(1+i,\overline{1+i})&=0,
\end{aligned}\end{equation}
where $\Oi$ and $\Oa$ are the Chowla--Selberg periods \eqref{CSperiodsOaOi}, which we introduced in Subsection \ref{chowla_selberg}.
It is related to $F$ by 
\begin{equation}
    Q^+(\tau)=2F(\tau)+\frac{\jt_2(\tau)^4+\jt_3(\tau)^4}{\eta(\tau)^3}.
\end{equation}

\subsection*{Laurent series of $H$}
The Chowla--Selberg  formula (Proposition \ref{IQF_value}) seems to be valid for the three completions $\widehat F$, $\widehat H$ and $\widehat{Q^+}$ of mock modular forms of weight $\frac12$. See equations  \eqref{Hhat_elliptic_CS}, \eqref{Fhat_values} and \eqref{Qpelliptic}.
We can also use it to compute non-holomorphic Laurent/Taylor series around special points. Regarding the $u$-plane integrand, we are interested in expansions
\be\label{exp_non_holo}
 f(\tau,\bar \tau)=\sum_{m\gg -\infty,n\geq 0} \frac{1}{m!n!}d_{\tau_0}^f(m,n)\,(\tau-\tn)^m\,(\bar \tau-\bar \tn)^n
\ee
around some point $(\tn,\bar\tn)$.
It is clear that if $f$ does not have poles in $\mathbb H$, then the series is only over $m\geq 0$. Furthermore, the Taylor coefficients are given by 
\begin{equation}
d_{\tau_0}^f(m,n)=\frac{\partial^{m+n}f(\tau,\bar\tau)}{\partial \tau^m \partial \bar\tau^n}\Big|_{\substack{\,\,\tau=\tn \\ \bar\tau=\bar\tn}}.
\end{equation}
For $\widehat H$, we can compute some of these coefficients, which we give in Table \ref{H2hcoefficients}.
\begin{table}[ht]\begin{center}
\begin{tabular}{|c|Sc|Sc|Sc|}
\hline 
\diagbox[width=1cm, height=1cm]{$m$}{$n$}&0&1&2 \\ 
\hline
0& 0&$-6\sqrt2i$ &$-6\sqrt2$\\
1& $\sim 6.22669i$&$\frac{3}{\sqrt2}$&$-\frac{9i}{2\sqrt2}$\\
2&$\sim-9.34004$&$\frac{9i}{4\sqrt2}$&$\frac{9}{2\sqrt2}$ \\
 \hline
\end{tabular}
\caption{The coefficients $d_i^{\widehat H}(m,n)\Oi^{-\frac32}$ of the completion $\widehat H$. The coefficients for $m\geq1$ and $n=0$ do not seem to be algebraic.} \label{H2hcoefficients}\end{center}
\end{table}

For some coefficients, we find closed formulas,
\begin{equation}\begin{aligned}\label{H2coef}
    d_i^{\widehat H}(m,1)&=-3i^{m+1}2^{\frac32-2m}(2m-1)!!   \, \Oi^{\frac32},\\
    d_i^{\widehat H}(m,2)&=-3 (-i)^{-m} 2^{\frac{1}{2}-2 m}(2 m-1)!! (m+2)   \, \Oi^{\frac32},
\end{aligned}\end{equation}
where $\Oi$ is the Chowla--Selberg period \eqref{elliptic_periods}, and $m\in\mathbb N_0$.
This is possible due to the fact that $\partial_{\bar\tau}\widehat H$ is proportional to $\overline{\eta(\tau)}^3/\sqrt{y}$, whose derivatives are straightforward to compute.

Similarly, for the expansion around $(\tau,\bar\tau)=(\alpha,\bar\alpha)$ we find
\begin{equation}\begin{aligned}\label{H2_alpha_coef}
    d_\alpha^{\widehat H}(m,1)&=2^{2-m} 3^{\frac{3}{4}-\frac{m}{2}} e^{\frac{1}{8} i \pi  (4 m-3)}
    (2 m-1)!!\, \Oa^{\frac32}, \\
   d_\alpha^{\widehat H}(m,2)&= 2^{2-m} 3^{\frac{1}{4}-\frac{m}{2}} e^{\frac{1}{8} i \pi  (4 m+9)}  (2 m-1)!!(m+2)\Oa^{\frac32}.
\end{aligned}\end{equation}
The coefficients for $n=1$ and $n=2$ for both $i$ and $\alpha$ are simply related by 
\begin{equation}\begin{aligned}
    d_i^{\widehat H}(m,2)&=\,\,-\frac i2\,\,(m+2) d_i^{\widehat H}(m,1), \\  d_\alpha^{\widehat H}(m,2)&=-\frac{i}{\sqrt3}(m+2)d_\alpha^{\widehat H}(m,1).
\end{aligned}\end{equation}
From Table \ref{H2hcoefficients} it is clear that the smallest value of $m$ in the expansion \eqref{exp_non_holo} of $\widehat H$ at $\tau=i$ is $m=1$. At $\tau=\alpha$, the same is true, as is clear from the nonzero constant $d_\alpha^{\widehat H}(1,1)$ in \eqref{H2_alpha_coef}. However, we have 
\bea
     d_\alpha^{\widehat H}(0,0)&=0 \\
     d_\alpha^{\widehat H}(1,0)&=0, \\
     d_\alpha^{\widehat H}(2,0)&\sim 22.8072\, e^{\frac{\pi i}{8}}.
\eea

\subsection{Mock Jacobi forms}\label{app:mockjacobi}
We can generalise the function $H=H^{(2)}$ to include an elliptic argument. For this, we need to find a suitable elliptic generalisation of the holomorphic modular form \eqref{r_hmf} which is the difference of $\widehat F$ and $\widehat H$. Recall the elliptic generalisation
\begin{equation}\label{Ftaurho}
    F(\tau,\rho)=-\frac{w^{\frac{1}{2}}}{\vartheta_4(\tau)}\sum_{n\in\BZ}\frac{(-1)^nq^{n^2/2-\frac{1}{8}}}{1-wq^{n-\frac{1}{2}}}.
\end{equation}
We define the dual modular form 
\begin{equation}
    r_D(\tau)\coloneqq -(-i\tau)^{-\frac12}  r(-1/\tau)=-\frac{\jt_4(\tau)^4+\jt_3(\tau)^4}{\eta(\tau)^3}.
\end{equation}
Then using the Jacobi identity \eqref{jacobiabstruseidentity}, we have
\begin{equation}
    r(\tau)-r_D(\tau)=6\frac{\jt_3(\tau)^4}{\eta(\tau)^3}.
\end{equation}
The difference between $\widehat F(\tau,\bar\tau)$ and $\widehat F_D(\tau,\bar\tau)$ is a weakly holomorphic modular form of weight $\frac12$, since $\widehat F$ and $\widehat F_D$ have the same non-holomorphic part. This can be seen from \eqref{FhatHhat} and the fact that $\widehat H$ is $\slz$-invariant.  This gives 
\begin{equation} \label{FminusF_D}   
\widehat F(\tau,\bar\tau)-\widehat F_D(\tau,\bar\tau)=\frac{\jt_3(\tau)^4}{4\eta(\tau)^3}=\frac{\jt_3(\tau)^3}{2\jt_2(\tau)\jt_4(\tau)}.
\end{equation}
Including the elliptic variables, we find 
\begin{equation}    \label{Fhat-FDhat}
\widehat F(\tau,\bar \tau,\rho,\bar \rho)-\widehat F_D(\tau,\bar \tau,\rho,\bar \rho)=\frac{\jt_3(\tau)^2\jt_3(\tau,\rho)}{2\jt_2(\tau,\rho)\jt_4(\tau,\rho)}.
\end{equation}
This again follows from the fact that both $\widehat F(\tau,\bar \tau,\rho,\bar \rho)$ and $F_D(\tau,\bar \tau,\rho,\bar \rho)$ have the same non-holomorphic part, and $F(\tau,\rho)-F_D(\tau,\rho)$ must transform as a Jacobi form of weight $\frac12$ and for $\rho=0$ give back \eqref{FminusF_D}. The holomorphic functions $F(\tau,\rho)$ and $F_D(\tau,\rho)$ are given in \eqref{f12function} and \eqref{f12functionD}.
The rhs of \eqref{Fhat-FDhat} is a specialisation of the Appell--Lerch sum $M(\tau,u,v)$ \eqref{Appell--Lerch-sum}, which is another way to find it.

We are thus aiming for a function $r(\tau,\rho)$ such that
\begin{equation}
    r(\tau,\rho)-r_D(\tau,\rho)=12\frac{\jt_3(\tau)^2\jt_3(\tau,\rho)}{\jt_2(\tau,\rho)\jt_4(\tau,\rho)}
\end{equation}
and such that $r(\tau,0)$ agrees with \eqref{r_hmf}. This function is given by elliptically extending the arguments of some of the Jacobi theta functions of $r(\tau)$,
\begin{equation}\label{rtaurho}
r(\tau,\rho)=4\frac{\vartheta_2(\tau)^2\vartheta_2(\tau,\rho)^2+\vartheta_3(\tau)^2\vartheta_3(\tau,\rho)^2}{\vartheta_2(\tau,\rho)\vartheta_3(\tau,\rho)\vartheta_4(\tau,\rho)}.
\end{equation}
A short calculation using the elliptic $S$-transformation \eqref{jt_S-transformation} and the elliptic abstruse identity \eqref{jt_elliptic_abstruse} confirms this. Then we know that 
\begin{equation}\label{hatHdef}
    \widehat H(\tau,\bar\tau,\rho,\bar\rho)=24\widehat F(\tau,\bar\tau,\rho,\bar\rho)-r(\tau,\rho),
\end{equation} 
where the holomorphic part reads
\begin{equation}\label{H12Def}
    \begin{aligned}
    H(\tau,\rho)=&\, 
 \,  24\, F(\tau,\rho)-4\frac{\vartheta_2(\tau)^2\vartheta_2(\tau,\rho)^2+\vartheta_3(\tau)^2\vartheta_3(\tau,\rho)^2}{\vartheta_2(\tau,\rho)\vartheta_3(\tau,\rho)\vartheta_4(\tau,\rho)}\\
        =&\, -\frac{4\,w^{1/2}}{(1+w)}q^{-1/8}+\frac{4(4+11w+15w^2+11w^3+4w^4)}{w^{3/2}(1+w)}q^{7/8}+\CO(q^{15/8}),
    \end{aligned}
\end{equation}
with $w=e^{2\pi i\rho}$. As far as checked, the $w$-dependent higher order terms in the $q$-expansion  can be expressed as positive linear combinations of $\mathfrak{su}(2)$ characters. In particular, the polynomials are palindromic. Their coefficients appear to asymptote to those of the series 
\be \begin{aligned}
(1 + w) \prod_{k=1}^\infty \frac{1 + w^k}{1 - w^k}&=\frac{(1+w)(-1;w)_\infty}{2(w;w)_\infty} \\
&=1+3w+6w^2+12w^3+22w^4+38w^5+64w^6+\CO(w^7).
\end{aligned}\ee 
It would be interesting to explore if the coefficients of this refined generating function also have an interpretation within Mathieu Moonshine. The analysis in \cite{MARGOLIN1993370, Taormina:2013mda, Taormina:2013jza} could be a possible avenue for this direction.

We have the following quasi-periodicity properties of the three elliptic functions $F$, $H$ and $r$,
\bea\label{quasi-periodicity}
F(\tau,\rho+1)&=-F(\tau,\rho),\\
    H(\tau,\rho+1)&=-H(\tau,\rho), \\
    r(\tau,\rho+1)&=-r(\tau,\rho),\\
   F(\tau,\rho+\tau)&=-e^{\pi i\tau}e^{2\pi i\rho}F(\tau,\rho)-e^{\frac34\pi i\tau}e^{\pi i\rho}, \\
     H(\tau,\rho+\tau)&=-e^{\pi i\tau}e^{2\pi i\rho}H(\tau,\rho)-24\, e^{\frac34\pi i\tau}e^{\pi i\rho}, \\
     r(\tau,\rho+\tau)&=-e^{\pi i \tau}e^{2\pi i \rho}r(\tau,\rho). \\
     \eea
The mock Jacobi forms $F$ and $H$ are symmetric in $\rho$, which gives a  palindromic property of their Fourier coefficients,
     \bea
     F(\tau,-\rho)&=F(\tau,\rho), \\
     H(\tau,-\rho)&=H(\tau,\rho).
     \eea

\subsubsection*{Poles of $F$ and $H$}
Since the functions $F$ and $H$ are quasi-periodic on the lattice  $\mathbb Z\oplus \tau\mathbb Z$, it is sufficient to study its poles in the fundamental domain of the torus $\mathbb C/\mathbb Z\oplus \tau\mathbb Z$.

The function $H(\tau,\rho)$ has  (possible) poles at 
\begin{equation}
    \rho=\frac{m}{2}+\frac{n}{2}\tau,\quad m,n\in\BZ,\quad (m,n)\neq(0,0),
\end{equation}
which come from the zeros \eqref{jt_zeros} of the Jacobi theta functions in the denominator of $r(\tau,\rho)$ (here, we can already exclude the cases where both $m$ and $n$ are even). Meanwhile,  $F(\tau,\rho)$ has poles at $\rho=(n+\tfrac12)\tau$, which follows from the definition \eqref{Ftaurho}. Restricting to the fundamental domain, the complete pole structure of both $F$ and $H$ are then $\rho\in\{0,\frac12,\frac\tau2,\frac12+\frac\tau2\}$.
Let us study  which of these elliptic values are poles of $r$, $F$ and $H$.

The zeros of the Jacobi theta functions $\jt_i$ are listed in \eqref{jt_zeros}. By construction, the functions $r$, $F$ and $H$ have a smooth limit $\rho\to 0$ and reduce to holomorphic functions with the same symbol. Thus by \eqref{quasi-periodicity}, there are no poles on the lattice $\mathbb Z\oplus \tau\mathbb Z$. 

The potential poles of $r$, $F$ and $H$ are thus on the half-lattice points. 
Consider first $\rho=\frac12$. 
We find 
\begin{equation}\label{F_exp_rho_12}
    F(\tau,\rho)=-2\pi m(\tau)(\rho-\tfrac12)+\CO((\rho-\tfrac12)^3),
\end{equation}
with a nonzero holomorphic function $m$.

Out of the four Jacobi theta functions, only $\jt_2$ has a zero at $\rho=\frac12$. We easily calculate the Taylor expansions,
\bea
\jt_2(\tau,\rho)&=-2\pi\eta(\tau)^3(\rho-\tfrac12)+\CO\left((\rho-\tfrac12)^3\right), \\
\jt_3(\tau,\rho)&=\jt_4(\tau)+\CO\left((\rho-\tfrac12)^2\right), \\
\jt_4(\tau,\rho)&=\jt_3(\tau)+\CO\left((\rho-\tfrac12)^2\right).
\eea
From \eqref{rtaurho} we then find
\begin{equation}
    r(\tau,\rho)=-\frac{4}{\pi \jt_2(\tau)}(\rho-\tfrac12)^{-1}+\CO\left((\rho-\tfrac12)^1\right).
\end{equation}
Combining with \eqref{H12Def} and \eqref{F_exp_rho_12} we thus find that $H(\tau,\rho)$ has a simple pole at $\rho=\tfrac12$ with residue
\begin{equation}
    \underset{\rho=\frac12}{\Res} \, H(\tau,\rho)=\frac{4}{\pi \jt_2(\tau)}.
\end{equation}
At $\rho=\tau$, both $F(\tau,\tau)$ and $H(\tau,\tau)$ are regular. At $\rho=\frac\tau2$, $F$, $r$ and $H$ have a pole. We find
\bea\label{F_pole_tau2}
F(\tau,\rho)=\frac{q^{\frac18}}{2\pi i\jt_4(\tau)}(\rho-\tfrac\tau2)^{-1}+\CO\left((\rho-\tfrac\tau2)^0\right)
\eea
Regarding $r(\tau,\rho)$, we calculate
\bea
\jt_2(\tau,\rho)&=q^{-\frac18}\jt_3(\tau)+\CO\left((\rho-\tfrac\tau2)^1\right), \\
\jt_3(\tau,\rho)&=q^{-\frac18}\jt_2(\tau)+\CO\left((\rho-\tfrac\tau2)^1\right), \\
\jt_4(\tau,\rho)&=2\pi iq^{-\frac18}\eta(\tau)^3(\rho-\tfrac\tau2)+\CO\left((\rho-\tfrac\tau2)^2\right).
\eea
This gives
\begin{equation}\label{r_pole_tau2}
    r(\tau,\rho)=-\frac{8iq^{\frac18}}{\pi \jt_4(\tau)}(\rho-\tfrac\tau2)^{-1}+\CO(1).
\end{equation}
Combining \eqref{F_pole_tau2} and \eqref{r_pole_tau2}, we find
\begin{equation}
    \underset{\rho=\frac\tau2}{\Res} \, H(\tau,\rho)=\frac{4q^{\frac18}}{\pi i \jt_4(\tau)}.
\end{equation}
Finally, consider $\rho_0=\frac12+\frac\tau2$. We find
\begin{equation}
    F(\tau,\rho_0)=\frac{q^{\frac18}}{2i}.
\end{equation}
The Jacobi theta functions have leading terms
\bea
\jt_2(\tau,\rho)&=-iq^{-\frac18}\jt_4(\tau)+\CO\left((\rho-\rho_0)^1\right), \\
\jt_3(\tau,\rho)&=2\pi i q^{-\frac18}\eta(\tau)^3(\rho-\rho_0)+\CO\left((\rho-\rho_0)^2\right), \\
\jt_4(\tau,\rho)&=q^{-\frac18}\jt_2(\tau)+\CO\left((\rho-\rho_0)^1\right).
\eea
This gives 
\begin{equation}
    r(\tau,\rho)=-\frac{4q^{\frac18}}{\pi\jt_3(\tau)}(\rho-\rho_0)^{-1}+\CO(1),
\end{equation}
and therefore 
\begin{equation}
    \underset{\rho=\frac12+\frac\tau2}{\Res} \, H(\tau,\rho)=\frac{4q^{\frac18}}{\pi \jt_3(\tau)}.
\end{equation}

In Table \ref{elliptic_residues}  we summarise the residues of $r$, $F$ and $H$ at all possible poles of these functions. 
In particular, the poles of $H$ are
\bea
    \underset{\rho\in\mathbb C}{\text{poles}}\, H(\tau,\rho)&=\{\tfrac m2+\tfrac n2\tau\}\, \backslash \,\mathbb Z\oplus \tau\mathbb Z \\
    &=\{\tfrac m2+\tfrac n2\tau \, | \, (\tfrac m2,\tfrac n2)\not\in\mathbb Z^2\}.
\eea

\begin{table}[ht]\begin{center}
\renewcommand{\arraystretch}{2}
\begin{tabular}{|Sc|Sc|Sc|Sc|}
\hline 
$\rho_0$ & $\underset{\rho=\rho_0}{\Res}\, F(\tau,\rho)$ & $\underset{\rho=\rho_0}{\Res}\, H(\tau,\rho)$ &$\underset{\rho=\rho_0}{\Res}\, r(\tau,\rho)$\\ \hline \vspace{-2ex}
$\displaystyle0$&0&0 &0\\ 
$\displaystyle\frac12$ & 0 & $\displaystyle\frac{4}{\pi \jt_2(\tau)}$ & $\displaystyle-\frac{4}{\pi \jt_2(\tau)}$  \\  
$\displaystyle\frac\tau2$& $\displaystyle\frac{q^{\frac18}}{2\pi i\jt_4(\tau)}$ & $\displaystyle\frac{4q^{\frac18}}{\pi i\jt_4(\tau)}$ & $\displaystyle\frac{8q^{\frac18}}{\pi i\jt_4(\tau)}$ \\  
$\displaystyle\frac12+\frac\tau2$ & 0 & $\displaystyle\frac{4q^{\frac18}}{\pi \jt_3(\tau)}$ & $\displaystyle-\frac{4q^{\frac18}}{\pi \jt_3(\tau)}$
 \\  \hline
\end{tabular}
\caption{The pole structure of the functions $F$ \eqref{Ftaurho}, $H$ \eqref{H12Def} and $r$ \eqref{rtaurho} including residues at the half-lattice points. } \label{elliptic_residues}\end{center}
\end{table}

Given the precise pole structure of $H$, we can study functions related to $H$ which have no poles for real $\rho$. To this end, we can add to $H(\tau,\rho)$ a meromorphic Jacobi form which cancels those poles of $H(\tau,\rho)$.\footnote{See \cite{Dabholkar:2012nd} for a systematic study of poles of Jacobi forms.} In this way, we arrive at
\begin{equation}
\label{wtHtaurho}
    \widetilde H(\tau,\rho)=H(\tau,\rho) - \frac{8 \eta(\tau)^3}{\theta_1(\tau,\rho)^2} \sum_{j=2,3,4} \left(\frac{\theta_j(\tau,2\rho)}{\theta_j(\tau,\rho)} - \frac{\theta_j(\tau,\rho)^3}{\theta_j(\tau)^3}\right),
\end{equation}
which does not have a pole at either $\rho=0$ or $\rho=\frac12$. Indeed, $\widetilde H(\tau,0)=H(\tau)$ and the Taylor series at $\rho=\frac12$ reads
\begin{equation}
    \widetilde H(\tau,\rho)=4\jt_2(\tau)^3\frac{\jt_3(\tau)^2-\jt_4(\tau)^2}{\jt_3(\tau)^2\jt_4(\tau)^2}+\CO\left((\rho-\tfrac12)^2\right).
\end{equation}
While $\widetilde H$ does not have poles at $\rho=0$ or $\rho=\frac12$, it has poles at $\rho=\frac\tau2$ and $\rho=\frac12+\frac\tau2$. $\widetilde H$ is periodic under $\rho\to \rho+2$. The first few terms of the $q$-expansion of $\widetilde H$ are
\be 
\begin{split} 
&\widetilde H(\tau,\rho)=q^{-1/8}(-w^{1/2}-w^{-1/2}) \\
&\quad +(-16 w^{2}+19w^{3/2}-64
w+26w^{1/2}+160 + \text{palindromic terms})\, q^{7/8}+\dots.
\end{split}
\ee 
Similarly to the refinement $H(\tau,\rho)$ of $H(\tau)$, it is an interesting question whether this refinement $\widetilde H(\tau,\rho)$  has an interpretation in the context of Mathieu moonshine.

\section{Mass expansions}\label{app:mass_expansions}
In this Appendix, we list various expansions of Coulomb branch functions required for the computation of topological correlation functions. In Subsection \ref{app:vc}, we comment on the calculation of the couplings $v_j$ and $w_{jk}$ to the background fluxes. In Appendix \ref{app:large_masses}, we give explicit series of the order parameter and the period $\frac{du}{da}$ in a large mass expansion, which are relevant particularly in Section \ref{sec:large_mass_P2}. 
Due to the modular structure, the equal mass $N_f=2$ theory is the simplest massive case where many results can be obtained exactly. We study this configuration in detail in subsection \ref{sec:equal_mass_expansion}.

\subsection{Couplings to background fluxes}\label{app:vc}
The background couplings  $v_j$ and $w_{jk}$, defined in \eqref{Defvw}, can be calculated  either directly from  the prepotential \cite{Ohta_1997,Ohta:1996hq}, or using  the hypergeometric representation of the periods. For the latter case, from \cite{Masuda:1996xj,Brandhuber:1996ng} we have 
\begin{equation}\label{periods_hyperg}
	\begin{aligned}
		\frac{\partial a}{\partial u} =\,& \frac{1}{2\cdot 3^{1/4}g_2^{1/4}}\,{}_2F_1\left[\frac{1}{12},\frac{5}{12},1;\frac{12^3}{\CJ}\right], \\
		\frac{\partial a_D}{\partial u}=\,&\frac{i}{2\cdot 3^{1/4}g_2^{1/4}}\left(3\log(12)\,{}_2F_1\left[\frac{1}{12},\frac{5}{12},1;\frac{12^3}{\CJ}\right]-F^*\left[\frac{1}{12},\frac{5}{12},1;\frac{12^3}{\CJ}\right]\right),
	\end{aligned}
\end{equation}
where
\begin{equation}
	F^*\left[\alpha,\beta,1;z\right]={}_2F_1\left[\alpha,\beta,1;z\right]\log z+\sum_{n=0}^\infty\frac{(\alpha)_n(\beta)_n}{(n!)^2}z^n\sum_{r=0}^{n-1}\frac{1}{\alpha+r}+\frac{1}{\beta+r}-\frac{2}{r+1}.
\end{equation}
Here, $\CJ=12^3g_2^3/(g_2^3-27g_3^2)$ is the $\CJ$-invariant of the curve. 

The periods can be in principle expanded around any point on the Coulomb branch, and by integration and comparison with the prepotential \eqref{prepotential} they can be found  by differentiation with respect to the masses $m_j$. In practise, for generic masses this is straightforward only for the weak coupling limit \eqref{decouplingConv}.
In the following, we give explicit expressions at weak coupling for $N_f=1$ and $N_f=2$.

\subsubsection*{$\boldsymbol{N_f=1}$}

For $N_f=1$, we find in this way the couplings $v$ and $C$, 
\begin{equation}\label{candVNf=1}\tiny 
\begin{aligned}
     v=&\frac{1}{2}+\frac{4 m }{\pi  \Lambda_1}q^{\frac16}+\frac{8  \left(\Lambda_1^3-4 m^3\right)}{\pi  \Lambda_1^3}q^{\frac36}+\frac{64 \left(374 m^5-135
   \Lambda_1^3 m^2\right)}{45 \pi  \Lambda_1^5}q^{\frac56} -\frac{64 \left(98800 m^7-49140 \Lambda_1^3 m^4+3969 \Lambda_1^6 m\right)}{567 \left(\pi 
   \Lambda_1^7\right)}q^{\frac76}  \\  
   &+\frac{64  \left(-17 \Lambda_1^9+12320 m^9-7840 \Lambda_1^3 m^6+1188 \Lambda_1^6 m^3\right)}{3 \pi  \Lambda_1^9}q^{\frac96}\\ 
   &-\frac{4096  \left(m^2 \left(-120285 \Lambda_1^9+13017172 m^9-10103049 \Lambda_1^3 m^6+2241756 \Lambda_1^6 m^3\right)\right)}{8019 \left(\pi 
   \Lambda_1^{11}\right)}q^{\frac{11}{6}}\\
   &+\frac{8 m \left(824527431 \Lambda_1^{12}+1880795176960 m^{12}-1723243991040 \Lambda_1^3 m^9+503472153600 \Lambda_1^6
   m^6-48655676160 \Lambda_1^9 m^3\right)}{85293 \pi  \Lambda_1^{13}} q^{\frac{13}{6}} \\
   &+\frac{16  \left(12393 \Lambda_1^{15}-1513521152 m^{15}+1599037440 \Lambda_1^3
   m^{12}-579532800 \Lambda_1^6 m^9+81786880 \Lambda_1^9 m^6-3533220 \Lambda_1^{12} m^3\right)}{5 \pi  \Lambda_1^{15}}q^{\frac{15}{6}}+\CO(q^{\frac{17}{6}}), \\
C=&
-4 q^{\frac16}+\frac{128 m^2 }{3 \Lambda_1^2}q^{\frac36}+\frac{128  \left(9 \Lambda_1^3 m-56 m^4\right)}{9 \Lambda_1^4}q^{\frac56}+\frac{16  \left(405 \Lambda_1^6+90112 m^6-27648 \Lambda_1^3 m^3\right)}{81 \Lambda_1^6}q^{\frac76}\\
&-\frac{4096  \left(320 m^8-144 \Lambda_1^3 m^5+11 \Lambda_1^6 m^2\right)}{3
   \Lambda_1^8}q^{\frac96}+\frac{512 m  \left(-26973 \Lambda_1^9+16187392 m^9-9584640 \Lambda_1^3 m^6+1370520 \Lambda_1^6 m^3\right)}{729 \Lambda_1^{10}}q^{\frac{11}{6}} \\
   &-\frac{16  \left(12892365 \Lambda_1^{12}+125513498624 m^{12}-92649553920 \Lambda_1^3 m^9+19568812032 \Lambda_1^6 m^6-707678208 \text{$\Lambda
   $1}^9 m^3\right)}{6561 \Lambda_1^{12}}q^{\frac{13}{6}} \\
   &+\frac{256 m^2  \left(204435 \Lambda_1^{12}+2078277632 m^{12}-1849688064 \Lambda_1^3 m^9+518246400 \text{$\Lambda
   $1}^6 m^6-35288064 \Lambda_1^9 m^3\right)}{63 \Lambda_1^{14}}q^{\frac{15}{6}}+\CO(q^{\frac{17}{6}}),\\
      \end{aligned}
\end{equation}
which we list to order $\CO(q^{\frac52})$.
We have determined them up to order $\CO(q^{4})$.

Relating the $q$-expansion of $v$ to that of the period $\frac{da}{du}$, we can compare the normalisation in the  weak coupling limit, 
\begin{equation}
    v-\frac12 \sim \frac{\sqrt{8}}{\pi i} m \frac{da}{du},
\end{equation}
as $\tau\to i \infty$.

One can similarly make a small mass expansion. For $C$, we find
\be 
\label{Csmallm}
\begin{split}
C&=-4\,q^{1/6}(1-20\,q+2250\,q^{2}+\CO(q^{3})\\
&\quad +128\,q^{5/6}(1-148\,q+27826\,q^2+\CO(q^3))\frac{m}{\Lambda_1} +\CO(m^2).
\end{split}
\ee 
For the small mass expansion of $e^{2\pi i v_1}$, we have
\be 
\label{vsmallm}
\begin{split} 
e^{2\pi i v_1}&=-1-16i\,q^{1/2}+128\,q+1408i\,q^{3/2}+\CO(q^2)\\
&\quad -8i\,q^{1/6}(1+46i\,q^{1/2}-240\,q-3200i\,q^{3/2}+\CO(q^{2}))\frac{m}{\Lambda_1}+\CO(m^2).
\end{split} 
\ee
We note that the small mass expansion commutes with the small $q$ expansion in \eqref{candVNf=1}.

As observed in \cite{Gottsche:2006}, the  coupling $v$ can also be expressed in terms of $\frac{du}{da}$ and a power series in $u$. Since under the large $u$ monodromy, $v$ transforms to $-v+1$ , we can rather consider $v-1$, which transforms to $-(v-1)$ \cite{Aspman:2022sfj}\footnote{We can confirm this directly from the $q$-series. It is also important to notice that at strong coupling, $v$ rather than $v-1$ transforms modularly \cite{Aspman:2022sfj}.}. This is also true for $\frac{du}{da}\to -\frac{du}{da}$, which  has modular weight $-1$ as well. This suggests that the quotient of $v-1$ and $\frac{du}{da}$ is invariant under any monodromy, and thus a function of $u$. A positive power series in $u$ can be excluded, as it would generate an arbitrarily large  principal part. For $N_f=1$ we can make the ansatz  
\begin{equation}
    v-\frac12=-\frac{1}{16\sqrt2 \pi i \Lambda_1}\frac{du}{da}\sum_{n=0}^\infty a_n\left(\frac{\Lambda_1^2}{u}\right)^n,
\end{equation}
where the coefficients $a_n$ depend on the mass $\mu=m/\Lambda_1$.   The first few coefficients are $a_0=0$, $a_1=8 \mu $, $a_2=\frac{8 \mu ^3}{3}-1$, 
$a_3=\frac{8 \mu ^5}{5}+2 \mu ^2$, $a_4=\frac{8 \mu ^7}{7}+\mu ^4-\frac{3 \mu }{4} $ etc. In the massless case, we have the series ($\Lambda_1=1$)
\begin{equation}\label{vdudaseries_m=0}
    -u^2\sum_{n=0}^\infty \frac{a_n}{u^n}=1-\frac{1}{24 u^3}+\frac{7}{2560 u^6}-\frac{3}{14336 u^9}+\CO(u^{-12}).
\end{equation}
It would be interesting to express this series as a hypergeometric function, such as the periods $\frac{da}{du}$ as given in \eqref{periods_hyperg}.
See also \cite{Poghosyan:2023zvy,Masuda:1996xj,suzuki2} for related formulae.

\subsubsection*{$\boldsymbol{N_f=2}$}

If we keep the masses in $N_f=2$ distinct, we  find
\begin{equation}
    v^1=-\frac{n_1}{2}-\frac{8im_1}{\pi\Lambda_2}\left(q^{1/4}+\left(\frac{16}{3\Lambda_2^2}(7m_1^2+3m_2^2)-\frac{8m_2}{m_1}\right)q^{3/4}+\CO(q^{5/4})\right).
\end{equation}
In the equal mass case, the couplings $v_1=v_2=v$, and in particular the couplings $C_{jk}$ can be found exactly. See the series \eqref{vmm}  and \eqref{CijNf2} in Section \ref{sec:equal_mass_expansion}.

\subsection{Large masses}\label{app:large_masses}
In this Appendix, we list expansions of Coulomb branch functions in the large mass limit. We consider $N_f$ hypermultiplets with equal masses $m_i=m$, $i=1,\dots, N_f$. 
When $m$ is large, the scales $\Lambda_{N_f}$ and $\Lambda_0$ are related as
\begin{equation}
    \Lambda_0^4=m^{N_f}\Lambda_{N_f}^{4-N_f}.
\end{equation}
The order parameters $u_{N_f}$ decouple precisely $u_{N_f}\to u_0$. We can find the large $m$ corrections by making the ansatz 
\be\label{uNf_large_m}
u_{N_f}=\sum_{n=0}^\infty c_n^{N_f}(u_0)m^{-2n},
\ee
and iteratively find $c_n^{N_f}$ by satisfying the relation $\CJ_{N_f}(u_{N_f})=\CJ_{0}(u_{0})$ order by order in $m^{-2}$. The fact that there are only even powers on $m$ follows by direct inspection. We list the coefficient functions $c_n^{N_f}$ in the Tables \ref{Nf1_large_m}, \ref{Nf2_large_m} and \ref{Nf3_large_m} for $N_f=1,2,3$.
\begin{table}[ht]\tiny\begin{center}
\renewcommand{\arraystretch}{1}
\begin{tabular}{|>{$\displaystyle}Sl<{$}| >{$\displaystyle}Sl<{$}|}
\hline
n&c_n^1(u)\\
\hline
 0 & u \\
 1 & \frac{1}{16} \left(3 \Lambda_0^4-4 u^2\right) \\
 2 & \frac{1}{128} \left(3 \Lambda_0^4 u-4 u^3\right) \\
 3 & \frac{1}{512} \Lambda_0^4 \left(3 \Lambda_0^4-4 u^2\right) \\
 4 & \frac{7 \left(16 u^5-40 \Lambda_0^4 u^3+21 \Lambda_0^8 u\right)}{32768} \\
 5 & \frac{9 \Lambda_0^{12}+32 u^6-60 \Lambda_0^4 u^4+15 \Lambda_0^8 u^2}{16384} \\
 6 & \frac{13 u \left(4 u^2-3 \Lambda_0^4\right) \left(-89 \Lambda_0^8+48 u^4+8 \Lambda_0^4 u^2\right)}{4194304} \\
 7 & \frac{3 \Lambda_0^4 \left(3 \Lambda_0^4-4 u^2\right) \left(\Lambda_0^8-8 u^4+9 \Lambda_0^4 u^2\right)}{131072} \\
 8 & -\frac{19 u \left(4 u^2-3 \Lambda_0^4\right) \left(5883 \Lambda_0^{12}+3520 u^6-15216 \Lambda_0^4 u^4+8244 \Lambda_0^8
   u^2\right)}{2147483648} \\
 9 & -\frac{11 \left(4 u^2-3 \Lambda_0^4\right) \left(5 \Lambda_0^{16}+32 u^8-76 \Lambda_0^4 u^6-35 \Lambda_0^8 u^4+90
   \Lambda_0^{12} u^2\right)}{16777216} \\
 10 & -\frac{5 u \left(4 u^2-3 \Lambda_0^4\right) \left(550163 \Lambda_0^{16}+396032 u^8+89344 \Lambda_0^4 u^6-2889312 \text{$\Lambda
   $0}^8 u^4+2143504 \Lambda_0^{12} u^2\right)}{274877906944} \\
 \hline
\end{tabular}
\caption{Coefficient functions $c_n^1$ of the large $m$ expansion \eqref{uNf_large_m} of $u_1$ in $N_f=1$} \label{Nf1_large_m}\end{center}
\end{table}
\begin{table}[ht]\tiny\begin{center}
\renewcommand{\arraystretch}{1}
\begin{tabular}{|>{$\displaystyle}Sl<{$}| >{$\displaystyle}Sl<{$}|}
\hline
n&c_n^2(u)\\
\hline
 0 & u \\
 1 & \frac{1}{8} \left(3 \Lambda_0^4-4 u^2\right) \\
 2 & \frac{1}{8} u \left(u^2-\Lambda_0^4\right) \\
 3 & 0 \\
 4 & -\frac{1}{128} u \left(u^2-\Lambda_0^4\right)^2 \\
 5 & 0 \\
 6 & \frac{u \left(u^2-\Lambda_0^4\right)^3}{1024} \\
 7 & 0 \\
 8 & -\frac{5 u \left(u^2-\Lambda_0^4\right)^4}{32768} \\
 9 & 0 \\
 10 & \frac{7 u \left(u^2-\Lambda_0^4\right)^5}{262144} \\
 \hline
\end{tabular}
\caption{Coefficient functions $c_n^2$ of the large $m$ expansion \eqref{uNf_large_m} of $u_2$ in $N_f=2$} \label{Nf2_large_m}\end{center}
\end{table}
\begin{table}[ht]\tiny\begin{center}
\renewcommand{\arraystretch}{1}
\begin{tabular}{|>{$\displaystyle}Sl<{$}| >{$\displaystyle}Sl<{$}|}
\hline
n&c_n^3(u)\\
\hline
0 & u \\
 1 & \frac{9 \Lambda_0^4}{16}-\frac{3 u^2}{4} \\
 2 & \frac{3}{128} u \left(20 u^2-19 \Lambda_0^4\right) \\
 3 & \frac{1}{512} \left(-35 \Lambda_0^8-128 u^4+164 \Lambda_0^4 u^2\right) \\
 4 & \frac{3 u \left(729 \Lambda_0^8+1232 u^4-1960 \Lambda_0^4 u^2\right)}{32768} \\
 5 & -\frac{21 \left(-3 \Lambda_0^{12}+32 u^6-60 \Lambda_0^4 u^4+31 \Lambda_0^8 u^2\right)}{16384} \\
 6 & \frac{u \left(-9637 \Lambda_0^{12}+42432 u^6-88816 \Lambda_0^4 u^4+56020 \Lambda_0^8
   u^2\right)}{4194304} \\
 7 & \frac{3 \Lambda_0^4 \left(\Lambda_0^4-8 u^2\right) \left(\Lambda_0^4-u^2\right) \left(3 \Lambda_0^4-4
   u^2\right)}{131072} \\
 8 & -\frac{3 u \left(206287 \Lambda_0^{16}+1230592 u^8-3662592 \Lambda_0^4 u^6+3829280 \Lambda_0^8
   u^4-1603568 \Lambda_0^{12} u^2\right)}{2147483648} \\
 9 & \frac{\left(u-\Lambda_0^2\right) \left(\Lambda_0^2+u\right) \left(4 u^2-3 \Lambda_0^4\right) \left(-59
   \Lambda_0^{12}+4576 u^6-6292 \Lambda_0^4 u^4+1759 \Lambda_0^8 u^2\right)}{16777216} \\
 10 & -\frac{3 u \left(-621219 \Lambda_0^{20}+32814080 u^{10}-105107200 \Lambda_0^4 u^8+123541632 \Lambda_0^8
   u^6-63642592 \Lambda_0^{12} u^4+13015300 \Lambda_0^{16} u^2\right)}{274877906944} \\
 \hline
\end{tabular}
\caption{Coefficient functions $c_n^3$ of the large $m$ expansion \eqref{uNf_large_m} of $u_3$ in $N_f=3$} \label{Nf3_large_m}\end{center}
\end{table}

Using the series \eqref{uNf_large_m}, we can similarly compute $\frac{du}{da}$ from the definition \eqref{dadu_def}, and expand it for large $m$ as 
\begin{equation}\label{duda_large_m_exp}
    \left(\frac{du}{da}\right)_{N_f}=\left(\frac{du}{da}\right)_{0}\sum_{n=0}^{\infty}d_n^{N_f}(u_0)m^{-2n}.
\end{equation}
We list the coefficients functions $d_n^{N_f}(u_0)$ in the following Table \ref{duda_large_m}.
\begin{table}[ht]\tiny\begin{center}
\begin{tabular}{|>{$\displaystyle}Sl<{$}| >{$\displaystyle}Sl<{$}|>{$\displaystyle}Sl<{$}|>{$\displaystyle}Sl<{$}|}
\hline
n&d_n^1(u)&d_n^2(u)& d_n^3(u)\\
\hline
0 & 1 & 1 & 1 \\
 1 & -\frac{u}{8} & -\frac{u}{4} & -\frac{u}{4} \\
 2 & -\frac{3}{256} \left(\Lambda_0^4+2 u^2\right) & \frac{1}{32} \left(u^2-2 \Lambda_0^4\right) & \frac{1}{32}
   \left(u^2-2 \Lambda_0^4\right) \\
 3 & -\frac{u \left(23 \Lambda_0^4+6 u^2\right)}{2048} & \frac{1}{128} u \left(u^2-2 \Lambda_0^4\right) &
   \frac{1}{128} u \left(u^2-2 \Lambda_0^4\right) \\
 4 & \frac{-123 \Lambda_0^8+140 u^4-876 \Lambda_0^4 u^2}{131072} & \frac{-12 \Lambda_0^8-5 u^4+12
   \Lambda_0^4 u^2}{2048} & \frac{-12 \Lambda_0^8-5 u^4+12 \Lambda_0^4 u^2}{2048} \\
 5 & \frac{3 u \left(-599 \Lambda_0^8+364 u^4-908 \Lambda_0^4 u^2\right)}{1048576} & \frac{-7 u^5+20
   \Lambda_0^4 u^3-20 \Lambda_0^8 u}{8192} & \frac{-7 u^5+20 \Lambda_0^4 u^3-20 \Lambda_0^8
   u}{8192} \\
 \hline
\end{tabular}
\caption{Coefficient functions $d_n^{N_f}$ of the large $m$ expansion \eqref{duda_large_m_exp} of $\left(\frac{du}{da}\right)_{N_f}$ for $N_f=1,2,3$.} \label{duda_large_m}\end{center}
\end{table}

\subsection{Equal mass \texorpdfstring{$N_f=2$}{Nf2}}\label{sec:equal_mass_expansion}
For the case of $N_f=2$ with equal masses, we can give explicit closed-form expressions for many Coulomb branch functions. For instance, we have \cite{Aspman:2021vhs} (see footnote \ref{ftn:sign})
\begin{equation}\begin{aligned}\label{nf2mmquantities}
		u&=-\frac{\Lambda_2^2}{8}\frac{\jt_4^8+\jt_2^4\jt_3^4+(\jt_2^4+\jt_3^4)\sqrt{f_2}}{\jt_2^4\jt_3^4}, \\
		\frac{du}{da}&=-\im\Lambda_2\frac{\sqrt{\jt_2^4+\jt_3^4+\sqrt{f_2}}}{\jt_2^2\jt_3^2}, \\
		\frac{du}{d\tau}&=\pi\im\Lambda_2^2\jt_4^8\frac{2(4\frac{m^2}{\Lambda_2^2}+1)\jt_2^4\jt_3^4+\jt_4^8+(\jt_2^4+\jt_3^4)\sqrt{f_2}}{8\jt_2^4\jt_3^4\sqrt{f_2}}, \\
		\frac{da}{d\tau}&=-\frac{\pi\Lambda_2}{8}\frac{\jt_4^8}{\jt_2^2\jt_3^2}\frac{\sqrt{f_2}(\jt_2^4+\jt_3^4)+\jt_4^8+2\jt_2^4\jt_3^4(1+4\tfrac{m^2}{\Lambda_2^2})}{\sqrt{f_2(\sqrt{f_2}+\jt_2^4+\jt_3^4)}},
\end{aligned}\end{equation}
where we defined $f_2=16\frac{m^2}{\Lambda_2^2}\jt_2^4\jt_3^4+\jt_4^8$ and we suppress the dependence on $\tau$. The physical discriminant is
\begin{equation}
	\Delta=(u-u_1^*)(u-u_2^*)(u-u_3^*)^2,
\end{equation}
where $(u_1^*,u_2^*)=(u_-,u_+)$ with 
$u_\pm=\tfrac{\Lambda_2^2}{8}\pm m\Lambda_2$,   and  $u_3^*=u_*=m^2+\tfrac{\Lambda_2^2}{8}$. Using the solution for $u$ above we can easily express also the discriminant in terms of theta functions. 

The fundamental domain was found in \cite{Aspman:2021vhs} and is shown in Fig. \ref{fig:nf2domain}. Due to the square roots there will also be branch points present, the paths of these as we vary the mass are also indicated in the Figures. 

In the equal mass case, the couplings $v^j$ to the background $\spinc$-structures are equal, $v^1=v^2\eqqcolon v$. Using the hypergeometric expressions for the periods and fixing the magnetic winding number $n=-1$, it is given by 
\begin{equation}\label{vmm}\footnotesize
	v= \frac{1}{2}+\frac{8}{\pi}\frac{m}{\Lambda_2}\left(q^{1/4}+\left(8-\frac{160}{3}\frac{m^2}{\Lambda_2^2}\right)q^{3/4}+\left(42-1536\frac{m^2}{\Lambda_2^2}+\frac{32256}{5}\frac{m^4}{\Lambda_2^4}\right)q^{5/4}+\CO(q^{7/4})\right).
\end{equation}
Furthermore, we have 
\begin{equation}\label{CijNf2}\footnotesize
	\begin{aligned}
		C_{11}=&\,C_{22}=e^{-\pi \im w^{11}}=\im\sqrt{2}\frac{\jt_1(\tau, v)^2}{\jt_4(\tau,v)\jt_4(\tau,0)}=4\im\sqrt{2}\left(q^{1/4}-64\frac{m^2}{\Lambda_2^2}q^{3/4}+\CO(q^{5/4})\right), \\
		C_{12}=&\,e^{-\pi \im w^{12}}=\frac{\jt_4(\tau,v)}{\jt_4(\tau,0)}=1+4q^{1/2}+\left(8-256\frac{m^2}{\Lambda_2^2}\right)q+\CO(q^{3/2}).
	\end{aligned}
\end{equation}
We can see that, even though $v\to \tfrac{1}{2}$ when $m\to 0$, the couplings $C_{ij}$ remain dependent on $\tau$. In the massless limit, we find
\begin{equation}
    \begin{aligned}
        &C_{11}(\tau)\to i\sqrt{2}\frac{\jt_2(\tau)^2}{\jt_3(\tau)\jt_4(\tau)},\\
        &C_{12}(\tau)\to \frac{\jt_3(\tau)}{\jt_4(\tau)},
    \end{aligned}
\end{equation}
where we denote $\jt_j(\tau)=\jt_j(\tau,0)$ as before. 

In the decoupling limit to the pure theory, $m\to \infty$, $\Lambda_2\to 0$, we expect $v\to 0$ and $C_{jk}\to 1$. We see that this agrees with the behaviour of $C_{12}$ (since then $\vartheta_4(\tau,0)=\vartheta_4(\tau,0)$), but not for $C_{11}$, where we instead find zero. 

Another important observation is that we have zeros and poles in $C_{ij}$ for special values of $v$. Namely, we have a double zero for $C_{11}$ when $v=l_1+l_2\tau$ and a pole when $v=l_1+l_2\tau+\tfrac{\tau}{2}$, where $l_j$ are any integers. The value $v=l_1+l_2\tau+\tfrac{\tau}{2}$ also corresponds to a simple zero of $C_{12}$. We thus want to exclude these values from the domain of $v$. The pole structure of $C_{ij}$ might be important for finding the correct modular completion as discussed in Appendix \ref{sec:bkgFluxes}. A similar story is occurs for the $\CN=2^*$ theory \cite{Manschot:2021qqe}, where a stronger claim can be made due to a certain relationship on theta functions involving $v$. 

\subsubsection*{Behaviour near $I_2$ singularity}
We finish this Appendix with some details of the behaviour near the $I_2$ singularities. We can also determine the behaviour of the couplings near other singularities using \eqref{nf2mmquantities}. For $da/du$ near the singularity $u_3^*$, we obtain for example 
\be 
\label{daduI2}
\begin{split}
&\left(\frac{da}{du} \right)_3(\tau_3)=\\
&\qquad \frac{i}{\Lambda_2} \frac{\vartheta_2(\tau_3)^2\vartheta_4(\tau_3)^2}{(\vartheta_2(\tau_3)^4-\vartheta_4(\tau_3)^4+(\vartheta_3(\tau_3)^8-16 (m^2/\Lambda_3^2) \vartheta_2(\tau_3)^4\vartheta_4(\tau_3)^4)^{1/2} )^{1/2}}.
\end{split}
\ee 
This gives for the leading term in the $q$-series,
\be
\label{dauq3}
\left(\frac{da}{du} \right)_3(\tau_3)=-\frac{i}{\sqrt{2\Lambda_2^2-8m^2}}+\CO(q_3^{1/2}).
\ee 
Similarly, we determine for the leading terms of $u_3$,
\be 
\label{uq3}
u_3(\tau_3)=u_3^*+\frac{(2\Lambda_2^2-8m^2)^2}{\Lambda_2^2}\,q_3^{1/2}+\CO(q_3).
\ee 
Using integration of  Matone's relation \eqref{Matone}, we find for the leading term of $a_3$,
\be 
\label{aq3}
a_3(\tau_3)=-\frac{i}{\Lambda_2^2} (2\Lambda_2^2-8m^2)^{3/2}\,q_3^{1/2}+\CO(q_3).
\ee

\section{Comments on non-vanishing background fluxes for \texorpdfstring{$\pt$}{P2}}\label{sec:bkgFluxes}
In Section \ref{sec:nonvanishing_bckg_flux}, we have studied point correlators on $\pt$ with non-vanishing background fluxes. Including background fluxes, for large masses we can use the mock Jacobi form $F_\mu(\tau,\rho)$, defined in \eqref{f12function}, for any value of the background fluxes $k_j$. For small masses however, this function does not have the right behaviour under $\slz$ monodromies. For $\mu=\frac12$, we can rather use the mock Jacobi form $H(\tau,\rho)$ \eqref{H12Def},
and determine the $u$-plane integral in a small mass expansion,
\bea
\label{Nfsmallmasses}
\Phi_{1/2}\big[e^{2pu/\Lambda_{N_f}^2}\big]=\, &\frac{(-1)^{N_f}}{2\,\Lambda_{N_f}^{12-2N_f}} {\rm Coeff}_{q^0} {\rm Ser}_{\bfm}\Bigg[ \left(\frac{du}{da}\right)^{12} \frac{\eta(\tau)^{27}}{P_{N_f}^{\text{M}}}\prod_{i,j=1}^{N_f}C_{ij}^{B(k_i,k_j)} \\
&\times \frac{1}{24}H(\tau,v_1k_1+\dots+v_{N_f}k_{N_f}) e^{2pu/\Lambda_{N_f}^2}\Bigg].
\eea

Alternatively, we can also calculate the series for finite masses, without expanding first around $\bfm=0$,
\bea
\label{Nffixedmasses}
\Phi_{1/2}\big[e^{2pu/\Lambda_{N_f}^2}\big]=\, &\frac{(-1)^{N_f}}{2\,\Lambda_{N_f}^{12-2N_f}} {\rm Coeff}_{q^0} \Bigg[ \left(\frac{du}{da}\right)^{12} \frac{\eta(\tau)^{27}}{P_{N_f}^{\text{M}}}\prod_{i,j=1}^{N_f}C_{ij}^{B(k_i,k_j)} \\
&\times \frac{1}{24}H(\tau,v_1k_1+\dots+v_{N_f}k_{N_f}) e^{2pu/\Lambda_{N_f}^2}\Bigg].
\eea

For $N_f=1$, we use the expansions  for $C=C_{11}$ and $v=v_1$ given in \eqref{Csmallm} and \eqref{vsmallm}. These expansions commute with the expansion for small $q$ in \eqref{candVNf=1}.
We list  the first few point correlators for small mass $m$, as in Eq. (\ref{Nfsmallmasses}), in Table \ref{nf1backgroundul1smallm}, while the ones for fixed mass $m$, as in Eq. (\ref{Nffixedmasses}), are listed in Table \ref{nf1backgroundul1smallmNEWEF}. We find that the small mass and fixed mass calculation agree for $k_1$ even, and is different for $k_1$ odd. We comment on this issue below. 
\begin{table}[ht]\small\begin{center}
\renewcommand{\arraystretch}{2}
\begin{tabular}{|>{$\displaystyle}Sl<{$}|  >{$\displaystyle}Sr<{$}| >{$\displaystyle}Sr<{$}| >{$\displaystyle}Sr<{$}| }
\hline
\ell& k_1=1 & k_1=2 & k_1=3 \\
\hline
  0& -1 & -\frac83 & -\frac{512}{9}\frac{m}{\Lambda_1}\\
  1& 0 & 0 & -\frac{64}{9}\Lambda_1^2-\frac{512}{9}\frac{m^3}{\Lambda_1} \\
 2& -\frac{19}{64}m \Lambda_1^3 & -m\Lambda_1^3-\frac{8}{3}m^4 & -\frac{128}{9} \Lambda_1^2 m^2 \\
 3& \frac{11}{2^{9}}\Lambda_1^6&\frac{11}{192}\Lambda_1^6+\frac{4}{3}m^3\Lambda_1^3&-\frac{8}{27}\Lambda_1^5 m-\frac{64}{3}\Lambda_1^2 m^4\\
 4& - &-\frac{73 }{64} \Lambda_1^6 m^2-3 \Lambda_1^3 m^5-\frac{8 m^8}{3}& \frac{7}{27}\Lambda_1^8-\frac{64}{81}\Lambda_1^5 m^3\\
 \hline
\end{tabular}
\caption{List of the first few $\Phi_{1/2}[u^{\ell}]$ for small mass $N_f=1$ on $\mathbb P^2$ with background flux, $k_1=1,2,3$. The expansion is determined up to $\CO(m^{5})$.
One entry is left undetermined since the $q^0$ of the integrand deviates by $\CO(q^3)$ from its leading term. 
} \label{nf1backgroundul1smallm}\end{center}
\end{table}
We can perform a similar analysis for $N_f=2$ with equal masses, where modular expressions are available (see Appendix \ref{sec:equal_mass_expansion}). See in particular \eqref{vmm} for $v$ and \eqref{CijNf2} for $C_{ij}$.
We list the results in Table \ref{nf2backgroundul1smallm}.

\begin{landscape}
\begin{table}[ht]\tiny\begin{center}
\renewcommand{\arraystretch}{1}
\begin{tabular}{|>{$\displaystyle}Sl<{$}|  >{$\displaystyle}Sr<{$}| >{$\displaystyle}Sr<{$}| >{$\displaystyle}Sr<{$}| >{$\displaystyle}Sr<{$}| }
\hline
\ell& k_1=1 & k_1=2 & k_1=3 \\
\hline
  0 & 1-\frac{\Lambda_1^3}{24 m^3} & -\frac{8}{3} & 0 \\
 1 & -\frac{\Lambda_1^6}{192 m^4}-\frac{\Lambda_1^3}{24 m} & 0 & 0 \\
 2 & -\frac{\Lambda_1^9}{1536 m^5}-\frac{\Lambda_1^6}{96 m^2}+\frac{49 \Lambda_1^3 m}{192} & -\frac{8 m^4}{3}-\Lambda_1^3 m & \frac{8 \Lambda_1^5}{9
   m} \\
 3 & -\frac{19 \Lambda_1^6}{512}-\frac{\Lambda_1^{12}}{12288 m^6}-\frac{\Lambda_1^9}{512 m^3}-\frac{\Lambda_1^3 m^3}{24} & \frac{11 \Lambda_1^6}{192}+\frac{4 \Lambda_1^3 m^3}{3} & \frac{\Lambda_1^8}{9 m^2}+\frac{64 \Lambda_1^5 m}{27} \\
 4 & \frac{2 m^{11}}{27 \Lambda_1^3}-\frac{\Lambda_1^{15}}{98304 m^7}-\frac{\Lambda_1^3 m^5}{24}-\frac{\Lambda_1^{12}}{3072 m^4}+\frac{7 \Lambda_1^6
   m^2}{64}-\frac{5031 \Lambda_1^9}{1048576 m} & -\frac{8 m^8}{3}-3 \Lambda_1^3 m^5-\frac{73 \Lambda_1^6 m^2}{64} & \frac{19 \Lambda_1^8}{27}+\frac{\Lambda_1^{11}}{72 m^3}+\frac{368 \Lambda_1^5 m^3}{81} \\
 \hline
\end{tabular}
\caption{List of the first few $\Phi_{1/2}[u^{\ell}]$ for fixed mass $N_f=1$ on $\mathbb P^2$ with background flux, $k_1=1,2,3$. For $k_1$ even, here $k_1=2$, it agrees precisely with the small mass calculation in Table \ref{nf1backgroundul1smallm}.} 
\label{nf1backgroundul1smallmNEWEF}\end{center}
\end{table}
\begin{table}[ht]\tiny\begin{center}
\renewcommand{\arraystretch}{1}
\begin{tabular}{|>{$\displaystyle}Sl<{$}|  >{$\displaystyle}Sr<{$}| >{$\displaystyle}Sr<{$}| >{$\displaystyle}Sr<{$}|  }
\hline
\ell&  (k_1,k_2)=(2,0) & (k_1,k_2)=(1,1)&(k_1,k_2)=(2,2) \\
\hline
  0& -\frac{2}{3}\frac{\Lambda_2^2}{m^2} & \frac43 +\frac{1}{12}\frac{\Lambda_2^2}{m^2}&0\\
  1&0 &-\frac{4}{3}m^2-\frac{5}{12}\Lambda_2^2+\frac{1}{96}\frac{\Lambda_2^4}{m^2} &0 \\
 2&  -\frac{2}{3}m^2\Lambda_2^2-\frac{1}{4}\Lambda_2^4+\frac{1}{768}\frac{\Lambda_2^6}{m^2} & \frac43 m^4+\frac54  m^2\Lambda_2^2+\frac{77}{384}\Lambda_2^4-\frac{1}{3072}\frac{\Lambda_2^6}{m^2}&-\frac16 \frac{\Lambda_2^6}{m^2} \\
 3&\frac{1}{3}m^2\Lambda_2^4+\frac{37}{384}\Lambda_2^6&-\frac{4}{3}m^6-\frac{25}{12}m^4\Lambda_2^2-\frac13 m^2\Lambda_2^4-\frac{259}{6144}\Lambda_2^6+\frac{251}{1572864}\frac{\Lambda_2^8}{m^2} &\frac23 \Lambda_2^6+\frac{1}{24}\frac{\Lambda_2^8}{m^2}\\
 4&-\frac{2}{3}m^6\Lambda_2^2-\frac34 m^4\Lambda_2^4-\frac{31}{384}m^2\Lambda_2^6-\frac{39}{2048}\Lambda_2^8+\frac{223}{3145728}\frac{\Lambda_2^{10}}{m^2} &\frac43 m^8+\frac{35}{12}m^6\Lambda_2^2+\frac{205}{96}m^4\Lambda_2^4+\frac{71}{512}m^2\Lambda_2^6+\frac{4609}{524288}\Lambda_2^8-\frac{251}{25165824}\frac{\Lambda_2^{10}}{m^2} &-\frac{13}{3}m^2\Lambda_2^6-\frac{25}{48}\Lambda_2^8-\frac{23}{3072}\frac{\Lambda_2^{10}}{m^2}\\
 \hline
\end{tabular}
\caption{List of the first few $\Phi_{1/2}[u^{\ell}]$ for fixed equal mass $N_f=2$ on $\mathbb P^2$ with background flux, $(k_1,k_2)=(2,0),(1,1),(2,2)$.} \label{nf2backgroundul1smallm}\end{center}
\end{table}
\end{landscape}
\subsubsection*{Discussion}
Comparing Tables \ref{nf1backgroundul1smallm} and \ref{nf1backgroundul1smallmNEWEF} for small and fixed mass in $N_f=1$, we find that the point correlators agree if $k_1$ is even, but are different when $k_1$ is odd. In the remainder of this section, we discuss this obstruction in detail.

The previous results of correlation functions in the absence of background fluxes give polynomials in the masses, see Section \ref{sec:p2}. Such expressions have a well-defined small \emph{and} large mass limit.  This is a consequence of the SW curves having a smooth massless and infinite mass limit.
If we include background fluxes, from  Table \ref{nf1backgroundul1smallmNEWEF} we can see that point correlators are polynomials in the mass if $k_1$ is even, and \emph{Laurent} polynomials if $k_1$ is odd. As one can check, these expressions do not have a consistent small or large mass limit. Thus it appears that odd  background fluxes introduce discontinuities in the mass dependence. 

Another possibility for the deviation is the branch point which could contribute to the $u$-plane integral. In Section \ref{sec:behaviour_bp}, we proved that the branch point of the integrand does not contribute for $\bfk_j=0$. Including the background fluxes, this argument needs to be revisited. However, it is not clear why this contribution would depend on the parity of $\bfk_j$.
A perhaps more evident explanation is that the anti-derivative used in the calculation is not valid for finite and small mass, for instance due to the appearance of poles or branch cuts. Let us analyse this possibility.

In Appendix \ref{app:mock}, we study the modular and analytical properties of the mock Jacobi form $H(\tau,\rho)$. The pole structure is derived in Table \ref{elliptic_residues}, with the following result: For any value $\rho=m+n\tau$ with $m,n\in\mathbb Z$, the function $H(\tau,\rho)$ is regular. At the three half-lattice points $\rho\in\{\frac12,\frac\tau2,\frac12+\frac\tau2\}$, the function $H$ has simple poles with residues 
\bea\label{H_residues}
    \underset{\rho=\frac12}{\Res}\, H(\tau,\rho)=\frac{4}{\pi \jt_2(\tau)}, \\
    \underset{\rho=\frac\tau2}{\Res}\, H(\tau,\rho)=\frac{4q^{\frac18}}{\pi i \jt_4(\tau)}, \\
    \underset{\rho=\frac12+\frac\tau2}{\Res}\, H(\tau,\rho)=\frac{4q^{\frac18}}{\pi \jt_3(\tau)}.
\eea
For $N_f=1$, we have $z=k_1 v$, where $v$ is given in \eqref{candVNf=1}.
We thus find that for $k_1$ odd there is a pole in $H(\tau,z)$ as $q\to 0$.
Thus near $q=0$, we have the asymptotics
\begin{equation}
    H(\tau,z)=\begin{cases}
    -2q^{-\frac18}\dots \quad &k_1 \text{ even}, \\
    (-1)^{\frac{k_1-1}{2}}\frac{\Lambda_1}{2k_1 m}q^{-\frac{7}{24}}\quad &k_1 \text{ odd.}
    \end{cases}
\end{equation}
The sign comes from the quasi-periodicity \eqref{quasi-periodicity} of $H$. Clearly, the pole introduces negative powers of the mass $m$, even if it is considered to be small.

By considering the smallest exponents of all factors in the integrand, we can derive a selection rule for the correlators $\Phi_{\frac12}[u^\ell]$ with given $k_1$ to not vanish. For $N_f=1$ on $X=\pt$, we have 
\begin{equation}\begin{aligned}
\nu(\tau)&\sim q^{\frac{1}{24}(4k_1^2-13)}, \\
u(\tau)&\sim q^{-\frac13}, \\
\end{aligned}\end{equation}
Then $\Phi_{\frac12}[u^\ell]$ can only be nonzero if the  leading exponent of $\nu(\tau) H(\tau,k_1v)u(\tau)^\ell$ is $\leq 0$. This gives the selection rule
\bea
    k_1^2&\leq 4+2l, \qquad k_1\text{ even}, \\
     k_1^2&\leq 5+2l, \qquad k_1\text{ odd}.
\eea
These agree in particular with Table \ref{nf1backgroundul1smallmNEWEF}. A similar argument can be made also for $N_f=2$, where some correlators vanish in Table \ref{nf2backgroundul1smallm}.
In particular, one can detect the selection rules for $k_1\geq 3$. For instance, for $N_f=1$ the partition function vanishes,
\begin{equation}
    \Phi_{\frac12,k_1}[1]=0, \qquad k_1 \geq 3.
\end{equation}

One possibility for the difference between small $m$ and large $m$ calculations is a pole of $H$ due to a zero of $v$ for odd $k_1$. In Fig. \ref{fig:vNf1} we sketch that this is indeed the case: We can numerically evaluate the function  $v(\tau)-\frac12$ for $N_f=1$ and a small mass $m$, and find zeros of this function inside the fundamental domain. For small mass $m\ll \mad$, there is a zero $v(\tn)=\frac12$ on the line $\text{Re}\, \tau_0=\frac32$. On the same line, one instance of the $N_f=1$ branch point moves to infinity if the mass $m$ is increased beyond $\mad$. For a small mass, the zero $\tn$ and the branch point $\tbp$ are thus well separated. 
\begin{figure}[ht]\centering
	\includegraphics[scale=1.5]{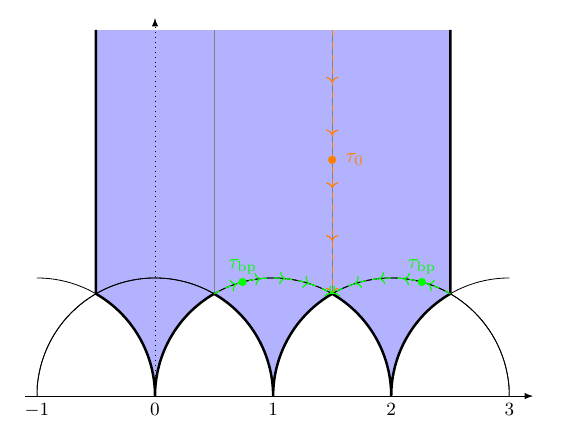}
	\caption{The fundamental domain of $N_f=1$ for a small positive mass $m$. The $\tbp$ are the two branch points (green), which move on the branch point locus (green, dashed) as the mass is varied, see \cite{Aspman:2021vhs} for a detailed discussion. Here, $\tn$ (orange) is the zero of the function $v(\tau)-\frac12$ inside the fundamental domain. As the mass is increased, this zero $\tn$ moves on the line (orange, dashed) with $\text{Re}\, \tau_0=\frac32$. This line coincides with the branch point locus for the  mass $m$ above the AD mass,  $m>\mad=\frac34$. It is thus clearly not the same as the branch point $\tbp$ for $u(\tau)$. The locus of the zero $\tn$ is determined for  $m/\Lambda_1\lesssim 0.3$. For larger masses, the zeros can not be accurately determined, since the coefficients in the $q$-series become very large.}\label{fig:vNf1}
\end{figure}

Based on this numerical evidence, let us assume that for a generic mass $m$, $v(\tau)-\frac12$ has a simple zero inside the domain $\CF_1(m)$. Assuming that $v$ is  holomorphic at $\tau=\frac12$, we can locally write  a  Taylor series
\begin{equation}
v(\tau)-\frac12=c_1(\tau-\tn)+\CO\left((\tau-\tn)^2\right).
\end{equation} 
Combining this with \eqref{H_residues}, we find that for $k_1=1$,
\begin{equation}
H(\tau,z(\tau))=\frac{4}{\pi c_1 \jt_2(\tn)} (\tau-\tn)^{-1} +\CO(1).
\end{equation}
We thus see that the integrand has a pole at the point $\tn$.

This gives an explanation for obtaining different results in a large and small mass calculation: For $m\to0$, we find $\tn\to\frac32+i\infty$. In the regularisation $F_Y(m)$ of the fundamental domain we discuss in Section \ref{sec:funddom}, we choose $Y\ll \text{Im}\,\tn$, such that in the limit $Y\to \infty$ we do not cross the pole $\tn$ of $H(\tau,z(\tau))$. For a finite mass $m$, $\text{Im}\,\tn$ is finite and sending $Y\to\infty$ picks up a contribution from the pole $\tn$. For very large $m$, we can express the integrand as a Taylor series in $1/m$, with leading term the $N_f=0$ integrand. Thus, any possible pole $\tn$ moves to infinity as well. We choose $Y<\text{Im}\,\tn$ in this case, such that there is no contribution due to the pole.
This discussion agrees with the observation that the large $m$ calculation (Table \ref{nf1backgroundul1}) has a well-defined large $m$ limit, the small mass calculation (Table \ref{nf1backgroundul1smallm}) has a well-defined $m\to0$ limit, while the generic mass calculation (Table \ref{nf1backgroundul1smallmNEWEF}) does not have either limits.

A similar numerical study for the other two poles $\rho\in\{\frac\tau2,\frac12+\frac\tau2\}$ of $H$ --- modulo the lattice $\mathbb Z\oplus \tau \mathbb Z $ --- does not result in such a zero: It seems that $v(\tau)=\frac\tau2$ and $v(\tau)=\frac12+\frac\tau2$ do not have solutions in the fundamental domain $\CF_1(m)$.

Let us also briefly discuss the cases of $N_f>1$. For equal mass $N_f=2$, we have
\begin{equation}
    v=\frac{1}{2}+\frac{8}{\pi}\frac{m}{\Lambda_2}\left(q^{1/4}+\left(8-\frac{160}{3}\frac{m^2}{\Lambda_2^2}\right)q^{3/4}+\CO(q^{5/4})\right),
\end{equation}
and $z=(k_1+k_2)v$. We thus see that for $(k_1+k_2)\in 2\BZ+1$ there is again a pole in $H(\tau,z)$ for both $m\to 0$ and $q\to 0$. The values in Table \ref{nf2backgroundul1smallm}
avoid these values. Interestingly, we see that the pole structure of $C_{11}$ for equal mass $N_f=2$, see \eqref{CijNf2}, does not have any overlap with that of $H(\tau,z)$ for any $k_1+k_2$. 

For general $N_f$, we have $v_j-\frac12 \in \CO(q^{-\frac{1}{2(4-N_f)}})$ for arbitrary masses. Then we have $z=\sum_j v_jk_j=\frac12(k_1+\dots+k_{N_f})+\CO(q^{-\frac{1}{2(4-N_f)}})$, such that $H(\tau,z)$ has a pole as $q\to 0$ precisely if  $\sum_j k_j$ is odd. This agrees with the two previous cases. On the other hand, if $\sum_j k_j$ is even, we expect there to not be any poles.

A possible way to cure the issue of the pole at $\rho=\frac12$ is to add to the anti-derivative $H(\tau,\rho)$ a holomorphic $\slz$ Jacobi form of the same weight and index as $H$, which cancels precisely the pole of $H$ but does not introduce any additional poles. As we discuss in Appendix \ref{app:mockjacobi}, $\widetilde H(\tau,\rho)$ \eqref{wtHtaurho} is such a function, and compatible with the monodromies as long as $\sum_j k_j \in 2\mathbb{Z}$. For these cases, using $\widetilde H$ we find the same point correlators as in the Tables \ref{nf1backgroundul1smallm}--\ref{nf2backgroundul1smallm}. These are however precisely the cases where the large mass and small mass calculations agree, rendering the various limits well-defined. Meanwhile, for the cases $\sum_j k_j$ odd where the two calculations disagree, using $\widetilde H$ it is not possible to test if avoiding only the pole at $\rho=\frac12$ is sufficient to obtain point correlators with a suitable infinite mass limit, since $\widetilde H$ is not applicable to those cases. We hope to clarify this issue in future work.

\section{A Kodaira invariant}\label{sec:Kodaira_invariant}

In Section \ref{sec:ADcontribution}, we studied the order of vanishing of the $u$-plane integral at an arbitrary elliptic AD point in an arbitrary configuration. See \eqref{ord_nu} for instance for the order of vanishing of the measure factor $\nu$ at any given AD point. In this Appendix, we determine the values $\ord \ttu$ and $\ord\omega$ from the data of the singular fibres of the elliptic surface containing a singularity of a given type.  Furthermore, we prove that these two orders of vanishing are not independent, the relation is fully determined  by the type of AD point. We keep the discussion completely general, and the results are universal formulas for arbitrary elliptic surfaces. 

We consider the parametrisation $u(\tau)\in\mathbb P^{1}$ such that there is an elliptic point $\tn$  with  $u(\tn)=u_0$, and we define $\ttu=u-u_0$. As a function of $\tau$, we can then study the order of vanishing $\ord\ttu$   as well as that of the holomorphic period
\begin{equation}
    \omega\coloneqq \left(\frac{du}{da}\right)^2 \propto \frac{g_3}{g_2}\frac{E_4}{E_6}.
\end{equation}
We then show that 
\begin{equation}\label{chiTdef}
o_\CT\coloneqq \frac{\ord\ttu}{\ord \omega}
\end{equation}
does \emph{not} depend on the configuration involving an AD point of Kodaira type $\CT$, but rather only on the Kodaira type $\CT$ itself. 
In this Appendix, we use the notation $\ord f$ for the order of vanishing of a function $f$ of $\tau\in\mathbb H$, while we reserve the symbol $o_f$ for the order of vanishing if $f$ is considered as a function on the base $u\in \mathbb P^1$.

\begin{table}%
\centering 
 \begin{tabular}{|Sc | Sc|Sc|Sc|Sc|} %
\hline
$\mathscr A$ & $\tau$  &$o_2$  &$o_3$ & $o_\Delta$   \\  \hline
$I_k$  & $i \infty $&  $0$ &$0$ & $k$        \\  
$I_{k}^\ast$  &$i \infty$&   $2$ &$3$ & $k+6$     \\  
$I_{0}^\ast$  &$\tau$&  $\geq 2$ &$\geq 3$ & $6$        \\  
 $II$ &$\alpha$& $\geq 1$ &$1$ & $2$          \\  
 $II^\ast$ &$\alpha$& $\geq 4$ &$5$ & $10$          \\  
  $III $&$i$  & $1$ &$\geq 2$ & $3$         \\  
$  III^\ast $&$i$& $3$ &$\geq 5$ & $9$         \\  
   $IV$ &$\alpha$& $\geq 2$ &$2$ & $4$         \\  
$ IV^\ast$ &$\alpha$ & $\geq 3$ &$4$ & $8$          \\ \hline 
\end{tabular}
\caption{The Kodaira classification of singular fibres \cite{kodaira63II,kodaira63III}. The symbol $\mathscr A$ denotes the type of singular fibre. The value $\tau\in \mathbb H$ denotes the $\slz$ orbit corresponding to $\mathscr A$. Here, $\alpha=e^{2\pi i/3}$. Finally, we define $(o_2,o_3,o_\Delta)=(\ord g_2,\ord g_3,\ord \Delta)$ as the orders of vanishing of the Weierstra{\ss} invariants $g_2$, $g_3$ and the discriminant $\Delta$ on the base of the elliptic surface.  \label{tab:kodaira}}
\end{table} 

 Recall  Kodaira's classification of singular fibres, as given in Table \ref{tab:kodaira}. Let $\mathscr C=(\mathscr A, \dots)$ be a configuration containing an additive fibre 
\begin{equation}\label{additive_fibres}
\mathscr A\in\{II, III, IV, IV^*, III^*, II^*\}
\end{equation}
not being of type $I_k^*$. For instance, in $\CN=2$ SU(2) SQCD with $N_f\leq 4$ hypermultiplets  we consider the surfaces $\mathscr C=(I_{4-N_f}^*,\mathscr A, \dots)$ with an AD point of Kodaira type $\mathscr A\in \{II, III, IV\}$. 
As is clear from Table \ref{tab:kodaira}, when $\mathscr A\in A_\alpha\coloneqq\{II,  IV, IV^*,  II^*\}$, then $o_3$ and $o_\Delta$ are fixed, while when $\mathscr A\in A_i\coloneqq\{ III,  III^*\}$ then $o_2$ and $o_\Delta$ are fixed, in both cases with  the remaining $o_j$ being bounded from below. 
When $\mathscr A\in A_\alpha$, then the singular fibre corresponds to $\tn\in \Gamma_1\cdot \alpha$, where $\alpha=e^{\frac{2\pi i}{3}}$, while when $\mathscr A\in A_i$ then $\tn\in \Gamma_1\cdot i$, with $\Gamma_1\coloneqq \psl$. Since $\CJ$ and $j$ behave differently near those points $i$ and $\alpha$, these families $A_\alpha$ and $A_i$ of singular fibres thus need to be treated separately. 

In order to compare the orders of vanishing of the various quantities, we use the following Taylor expansions of the modular $j$-invariant,
\begin{equation}\begin{aligned}\label{jinvexpansionselliptic}
    j(\tau)=\frac{1}{3!}j'''(\alpha)(\tau-\alpha)^3+\CO\left((\tau-\alpha)^4\right), \\
    j(\tau)-12^3=\frac{1}{2!}j''(i)(\tau-i)^2+\CO\left((\tau-i)^3\right),
    \end{aligned}
\end{equation}
where crucially 
\begin{equation}\begin{aligned}
j'''(\alpha)&=-2^{13}3^3\sqrt{3} i  \pi ^3 \Oa^6, \\
j''(i)&=-2^93^4 \pi^2 \Oi^4
\end{aligned}\end{equation}
are nonzero, with $\Oa$ and $\Oi$ the Chowla--Selberg periods \eqref{CSperiodsOaOi} associated with the elliptic points $\alpha$ and $i$ (see Appendix \ref{chowla_selberg} for a review).
This is due to the fact that $E_4$ has a simple zero at $\tau=\alpha$ and $E_6$ has a simple zero at $\tau=i$, which follows from the valence formula for modular forms on $\Gamma_1$ (see for example \cite{Bruinier08}).

\subsubsection*{The case $\mathscr A\in A_\alpha$}
When $\mathscr A\in A_\alpha$, then $o_3$ and $o_\Delta$ are fixed by  $\mathscr A$, while $o_2$ can vary. We expand $j(\tau)=\CJ(u)$ around $\tau=\tn$ and $u=u_0$, where $u(\tn)=u_0$. From $\CJ=12^3\frac{g_2^3}{\Delta}$ we have $o_\CJ=3o_2-o_\Delta$.
Comparing with $\ord j=3$ this gives 
\begin{equation}\label{orduAalpha}
    \ord\ttu=\frac{3}{3o_2-o_\Delta}.
\end{equation}
Regarding $\omega$, we have that $\ord \frac{E_4}{E_6}=1$, while $\frac{g_3}{g_2}\sim \ttu^{o_3-o_2}$. We can then insert  \eqref{orduAalpha}, and  find
\begin{equation}
    \ord\omega=\frac{3o_3-o_\Delta}{3o_2-o_\Delta}.
\end{equation}
Interestingly, the dependence of 
\begin{equation}\label{o_Aalpha}
   o=\frac{\ord\ttu}{\ord\omega}=\frac{3}{3o_3-o_\Delta}
\end{equation}
on $o_2$ drops out. Since for a fixed $\mathscr A\in A_\alpha$ the only variable is $o_2$, the ratio $o$ is indeed an invariant $o_{\mathscr A}$ of the singular fibre $\mathscr A\in A_\alpha$ itself. 

\subsubsection*{The case $\mathscr A\in A_i$}
 When $\mathscr A\in A_i$ we have that $j(\tn)=12^3$, such that we need to study the order of vanishing of $\CJ-12^3=12^3 \frac{g_3^2}{\Delta}$. While $\ord(j-12^3)=2$, we can compare this to $ o_{\CJ-12^3}=2o_3-o_\Delta$. This gives the order of vanishing
 \begin{equation}\label{orduAi}
     \ord\ttu=\frac{2}{2o_3-o_\Delta}
 \end{equation}
on the $\tau$-plane. Now $\ord \frac{E_4}{E_6}=-1$, while $\frac{g_3}{g_2}\sim \ttu^{o_3-o_2}$ where we insert \eqref{orduAi}. This gives
\begin{equation}
    \ord\omega=\frac{o_\Delta-2o_2}{2o_3-o_\Delta},
\end{equation}
where both numerator and denominator are positive. We find
\begin{equation}\label{o_Ai}
   o= \frac{\ord\ttu}{\ord\omega}=\frac{2}{o_\Delta-2o_2},
\end{equation}
where again the only free variable $o_3$ drops out. Thus also for a fixed $\mathscr A\in A_i$ the ratio $o$ is an invariant $o_{\mathscr A}$ of the singular fibre $\mathscr A$. 

In Table \ref{tab:ordD}, we compute $o_{\mathscr A}$ for all additive fibres $\mathscr A$ \eqref{additive_fibres}. This proves the above claim that $o_\CT$ is fixed for all given theories $\CT$ containing an equivalent AD point, and extends it to other rational elliptic surfaces as well, such as those describing the rank 1 $E_n$ theories  
\cite{Seiberg:1996bd,Morrison:1996xf,Closset:2021lhd,Magureanu:2022qym} containing Minahan--Nemeschansky (MN) theories \cite{Minahan:1996fg,Minahan:1996cj}, where $\mathscr A\in\{IV^*, III^*, II^*\}$.

\begin{table}%
\centering 
\begin{tabular}{|Sc|Sc|Sc|Sc|Sc|Sc|Sc|  } %
\hline
$\mathscr A$& $II$ & $III$ & $IV$ & $IV^*$ & $III^*$ & $II^*$ \\  \hline
$\displaystyle o_{\mathscr A}$  & $\displaystyle 3$& $\displaystyle 2$ & $\displaystyle\frac32$ & $\displaystyle \frac34$ & $\displaystyle \frac23$ & $\displaystyle \frac35$\\ 
$\displaystyle \Delta_{\mathscr A}$ & $\displaystyle \frac65$ & $\displaystyle \frac43$ & $\displaystyle \frac32$ & $\displaystyle 3$&$\displaystyle 4$&$\displaystyle 6$\\ \hline
\end{tabular}
\caption{The quotient $o_{\mathscr A}=\frac{\ord \ttu}{\ord \omega}$ depends only on the type $\mathscr A$ of additive fibre, not on the configuration $\mathscr C=(\mathscr A, \dots)$ containing it. We compare this invariant to the dimension of the CB operator of the corresponding rank-1 SCFT. \label{tab:ordD}}
\end{table} 

We can combine the results of the quotient $o$ in both cases $A_\alpha$ and $A_i$: From Table \ref{tab:kodaira} we find that for $A_\alpha$, we have $o_\Delta=2o_3$ and for $A_i$ we have $o_\Delta=3o_2$. Inserting these into \eqref{o_Aalpha} and \eqref{o_Ai} gives the much simpler formula
\begin{equation}
    o=\frac{6}{o_\Delta},
\end{equation}
which holds for all cases \ref{additive_fibres}. This can be easily confirmed from the results in Table \ref{tab:ordD}.

By comparing these ratios of orders of vanishing with the dimensions of Coulomb branch operators of the corresponding SCFT (see \cite{Caorsi:2019vex} for an overview), we find the relation 
\begin{equation}
    o_{\mathscr A}=\frac{\Delta_{\mathscr A}}{2(\Delta_{\mathscr A}-1)}.
\end{equation}
Similar formulas for the dimensions $\Delta_{\mathscr A}$ in terms of the SW curve exist \cite{Eguchi:1996vu,Caorsi:2017bnp,Cecotti:2021ouq}. For instance, the dimension $\Delta_{\mathscr A}$ is related to the order of vanishing $ o_\Delta$ of the discriminant on the base as \cite{Aharony_2008}
\begin{equation}
    \Delta_{\mathscr  A}=\frac{1}{1-\frac{o_\Delta}{12}}.
\end{equation}

\providecommand{\href}[2]{#2}\begingroup\raggedright\endgroup
\end{document}